\def\dbra#1{\mathinner{\langle\!\langle{#1}|}}
\def\dket#1{\mathinner{|{#1}\rangle\!\rangle}}
	\newcommand{\proj}[2]{\ket{#1}\bra{#2}}
\DeclareMathOperator{\tr}{tr}
\newcommand{\C}{\mathcal}
\newtheorem{defi}{Definition}
\newtheorem{lemma}{Lemma}
\newtheorem{example}{Example}
\newtheorem{remark}{Remark}
\Crefname{prop}{Proposition}{Propositions}
\crefname{prop}{Prop.}{Props.}
\Crefname{defi}{Definition}{Definitions}
\crefname{defi}{Def.}{Defs.}
\Crefname{prop}{Proposition}{Propositions}
\crefname{lemma}{Lemma}{Lemmas}
\crefname{coro}{Cor.}{Cors.}
\Crefname{coro}{Corollary}{Corollaries}
\crefname{exampe}{Ex.}{Exs.}
\Crefname{exampe}{Example}{Examples}
\crefname{remark}{Remark}{Remarks}
\crefname{conj}{Conj.}{Conjs.}
\Crefname{conj}{Conjecture}{Conjectures}
\title{Connecting indefinite causal order processes to composable quantum protocols in a spacetime}
\author{Matthias Salzger}
\date{\today}
\begin{document}
\pagenumbering{gobble}

\begin{titlepage}
\begin{flushleft}
\includegraphics[]{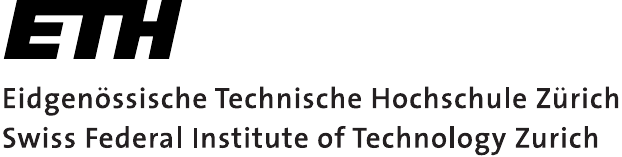}
\end{flushleft}
\hrule
\centering
\vspace*{3cm}

\begin{huge}
\textbf{Connecting indefinite causal order processes to composable quantum protocols in a spacetime}
\end{huge}

\vspace{2cm}
\begin{Large}            
Master thesis

\vspace{1cm}

Matthias Salzger

\vspace{2cm}

Supervised by:

Dr. V. Vilasini

Prof. Dr. Renato Renner

\vspace{2cm}

April 2022
\end{Large}
\vfill

\begin{large}
Quantum Information Theory Group

Institute for Theoretical Physics

ETH Zürich
\end{large}

\end{titlepage}

\begin{abstract}

Process matrices provide a framework to model quantum information processing protocols in the absence of a well-defined acyclic causal order. While they are defined without reference to a background spacetime, it is an open question to characterize the subset of processes that are physically realizable in a fixed background spacetime. A related problem is that process matrices are known to be non-composable while composability is a basic property of physical processes. On the other hand, so-called causal boxes define a framework that allows for arbitrary composition and model physical protocols defined on a fixed background space-time, which include scenarios where quantum states may be in a superposition of different spacetime locations.

To address these questions, we compare quantum circuits with quantum control of causal order (QC-QC), a subset of process matrices, which can be interpreted as generalized quantum circuits, and process boxes, a subset of causal boxes, which can be interpreted as processes. We analyze their state spaces and define a notion of operational equivalence between QC-QCs and process boxes. We then explicitly construct for each QC-QC an operationally equivalent process box. This allows us to define composition of QC-QCs in terms of composition of causal boxes which is well-defined. In doing so, we also find that resolving the composability issue requires a framework that goes beyond process matrices, in particular one that allows multiple messages and spacetime information.

We further show that process boxes admit a unitary extension which is itself a process box, and conjecture that each process box can be reduced to a simpler, but operationally equivalent process box, where only the time information is relevant. Based on this conjecture, we construct an operationally equivalent QC-QC for each process box.

Our results indicate that the only class of processes that can be physically implemented in a fixed background spacetime are those that can be interpreted as quantum circuits with quantum controlled superpositions of orders. Further, they also reveal that the composability issue can be resolved by embedding processes in a spacetime structure. This in turn sheds light on the connection between physical realizability in a spacetime and composability.
\end{abstract}

\newpage
\newgeometry{left=2in,right=2in, top=2in}
\section*{Acknowledgements}

I would like to thank Vilasini for going above and beyond as a supervisor, for the many long discussions and for so often bringing a new perspective that I had not considered before to the problem. I am grateful to Prof. Renato Renner and the QIT group for giving me the opportunity to work on this thesis. Finally, I would like to thank my family, especially my mother and Maeki and my grandparents, who always treated their support as a matter of course, even though it was what allowed me to study and succeed at ETH in the first place.


\newpage
\restoregeometry
\pagenumbering{arabic}

\tableofcontents

\newpage

\part{Introduction}\label{sec:intro}

Causal relations are one of the most fundamental notions in science. Carrying out an experiment can ultimately be viewed as asking what the causal relationship between some input variable and some output variable is. The classical view on causality is that of Reichenbach's principle \cite{reichenbach1991direction}, which states that given two correlated variables, either one of the two is the cause of the other or they share a common cause given by a third variable. The principle furthermore states that if the latter is the case, then conditioning on the common cause makes the two correlated variables independent. 

However, when considering quantum mechanics, we find that Reichenbach's principle (in particular, the latter part) cannot be so easily applied \cite{Allen_2017}. Bell's theorem \cite{Bell} tells us that there are correlations which cannot be explained by any local hidden variables (i.e. a common cause). Giving up the intuitive notion of causality in the classical picture does come with an interesting consolation prize, namely the violation of the Bell inequalities, which provides computational advantages in various tasks \cite{Brunner2014}.

The failure of Reichenbach's principle in quantum mechanics has led to significant research into quantum causal models that generalize the principles of classical causality \cite{Barrett2019, Henson_2014}. The result is that there now exist frameworks that can characterize and causally explain not only the quantum correlations arising from the Bell scenario, but those of general multipartite quantum networks, as long as the operations of the parties have a well-defined acyclic order.


One could now ask do the quantum causal models introduced in the wake of Bell's theorem account for all possible causal relations? In recent years, modeling causal relations in the absence of a well-defined acyclic causal order has been an area of interest \cite{Oreshkov_2012, Hardy2005,Zych_2019}. An example where research in this direction could be relevant is the intersection between quantum mechanics and gravity \cite{Oreshkov_2012, Hardy2005,Zych_2019}. As the gravitational effects of matter determine the spacetime geometry, then if that matter is quantum, one might expect superpositions of different geometries. We then also expect superpositions of causal orders, as causal relationships are determined by the geometry in relativistic physics. 

However, even outside the regime of quantum gravity more general causal structures can have relevance as exemplified by the quantum switch \cite{Chiribella2013}. This process involves two agents, Alice and Bob, who each apply a unitary to a target system, where the order in which they apply their unitaries depends coherently on the state of a quantum control system. This process can be used to determine, more efficiently than using quantum circuits with a fixed, acyclic order, whether a black box unitaries is commuting or anti-commuting \cite{Chiribella_2012} among a number of other tasks \cite{Colnaghi_2012, Ara_jo_2014} and has been implemented experimentally \cite{Procopio_2015,Rubino_2017,Goswami_2018}. 

There are several frameworks capable of describing the quantum switch and other general causal structures. One such framework is the process matrix framework \cite{Oreshkov_2012} which drops the assumption of a fixed background spacetime altogether. As it turns out, this allows some of these process matrices to violate so-called causal inequalities \cite{Oreshkov_2012}, which are bounds set by (convex combinations of) fixed causal orders and can be viewed as an analogue to the Bell inequalities. A common feature, which is in a sense the analogue to entanglement in the Bell scenario, of process matrices that violate these causal inequalities is that they do not admit a causal explanation in terms of a fixed, acyclic causal structure, even when allowing quantum controlled superpositions of orders \cite{Oreshkov_2012, Ara_jo_2015, Branciard_2015}. It remains an open question whether such structures are physically realizable \cite{Oreshkov_2016,Wechs_2019}. Some research in this direction has adopted a top-down approach, for example by showing that not all process matrices can be unitarized \cite{Ara_jo_2017}, something we would expect is possible for physical processes due to the connection between unitarity and the reversibility of quantum theory \cite{Chiribella_2010}. It has also been shown that process matrices, even relatively simple ones, are not composable \cite{Gu_rin_2019}, even though composability is a basic property that we expect in an experimental setting \cite{Procopio_2015, Rubino_2017}-- we can combine two physical experiments to form a new physical experiment. 

On the other hand, a bottom-up approach to the characterization of process matrices is given by the framework of quantum circuits with quantum control of causal orders (QC-QC) \cite{Wechs_2021}. As the name suggests, this framework attempts to capture such processes which can be interpreted as circuits where the causal order is determined coherently or dynamically. As it turns out, QC-QCs cannot violate causal inequalities and so whether this class captures all physical processes is an important open question \cite{Wechs_2021}.

The above approaches all operate within an information theoretic understanding of causality, which is based on the flow of information between systems. In order to address the physicality question, we must also consider embedding such an information-theoretic causal structure in a spacetime, and the relativistic notion of causality that arises therein which can be viewed as a compatibility condition between the two notions of causation \cite{Paunkovi__2020, Vilasini_2022}. One approach to instantiating the information-theoretic process framework with time information, is by considering quantum reference frames \cite{Paunkovi__2020,Castro_Ruiz_2020,Baumann_2022}. Quantum reference frames are also formulated without reference to a background spacetime, with temporal information modeled by quantum clocks which are themselves quantum systems. Here, we consider the open question of physical realizability of processes consistent with relativistic causality within a fixed spacetime. Understanding what is possible or impossible within the most general scenarios embeddable in a fixed background spacetime would also provide insights into how physics in more exotic spacetimes, physical regimes or when characterized by quantum reference frames might fundamentally differ.


A framework that models composable information processing protocols on a fixed background spacetime is given by that of causal boxes \cite{Portmann_2017}. Causal boxes can model scenarios where quantum states may be sent or received at a superposition of different spacetime locations in the background spacetime. It was originally developed for quantum cryptographical purposes \cite{Vilasini_2019,Laneve2021} and therefore an emphasis was put on the composability of the protocols described by the framework. The composition of two or more causal boxes is again a causal box \cite{Portmann_2017}. A recent work \cite{Vilasini_2022} adopts a top-down approach to disentangle information-theoretic and spacetime notions of causality and characterizes their compatibility in general scenarios where quantum states may be in a superposition of spacetime locations. 
This work suggests that causal boxes are the most general description of quantum protocols that can be implemented in a fixed spacetime such that relativistic causality is satisfied.

Now, the question of physical realizability and composability of processes in a fixed spacetime reduces to asking which subset of processes can be modeled as causal boxes. However, these two frameworks, causal boxes and process matrices, while sharing some common features, are quite different. For example, causal boxes are composable and allow multiple rounds of information processing and superpositions of different number of messages while process are not composable and only consider agents in closed labs acting once on single messages, although possible at a superposition of different times. 

In order to bridge some of the differences between the frameworks, the framework of process boxes was developed in \cite{Vilasini_2020}. This framework imposes additional constraints on causal boxes, which are on the one hand needed to restrict to single-round protocols, which correspond to the kind of setting described by process matrices. And on the other hand, these constraints are also needed to rule out trivial violations of causal inequalities, using simple causally ordered strategies (analogous to how one demands non-communicating parties to exclude trivial violations of Bell inequalities \cite{CHSH}). 
Like QC-QCs, process boxes cannot violate causal inequalities \cite{Vilasini_2020}. 

In this work, we show that process boxes and QC-QCs are equivalent in an operational sense if an additional assumption is imposed on the process box framework that ensures that agents always receive and send exactly one message. Such an assumption avoids situations where an agent is in a superposition of participating and not participating in the protocol, which is a priori allowed in the process box picture but not in the QC-QC framework. We analyze the state spaces of both frameworks and define a notion of equivalence between the operations that agents can apply in each framework. This then allows us to define operationally what it means for a QC-QC and a process box to be equivalent. Based on this definition, we show that there exists for each QC-QC an operationally equivalent process box and vice versa. We can then understand composition of QC-QCs in terms of composition of process boxes, which is well-defined. Furthermore, our results unify the top-down approach of \cite{Vilasini_2022, Vilasini_2020} and the bottom-up approaches of \cite{Wechs_2021, Purves_2021} towards the physicality problem for processes in a fixed spacetime, and suggests that the largest set of physically realizable processes in a fixed background spacetime corresponds to QC-QCs, up to slight generalizations that may be possible by relaxing assumptions of the process box framework. 

\section{Summary of contributions}\label{sec:contributions}

We begin this thesis with a review of the mathematical tools that we will need for the remainder of the thesis in \cref{sec:tools}, followed by a review of the process matrix framework in \cref{sec:pm} and the causal box framework in \cref{sec:cb}. In the respective sections, we also review the QC-QC framework (\cref{sec:circuitpm}) and the process box framework (\cref{sec:pb}).

We present our results in three parts. 

\begin{itemize}
    \item In \cref{sec:qcqctopb}, we discuss how QC-QCs can be given a process box description, starting with a discussion of what this means in the first place in and then constructing two possible ``extensions'', that is we extend the action of QC-QCs to the Fock spaces that define the overall causal box. This then allows us to resolve the composability problem of process matrices for the case of QC-QCs by composing their causal box description. 
    \item In \cref{sec:characterizing}, we characterize process boxes, showing that they admit a unitary extension and using redundancy in the framework to conjecture that we can simplify the spacetime configuration of a general process box to one where only temporal information is relevant.
    \item Finally, in \cref{sec:pbtoqcqc} we first discuss what it means to have a controlled process and argue that all process boxes are controlled. We then use this and the simplified spacetime configuration from \cref{sec:characterizing} to construct a QC-QC description of process boxes.
\end{itemize}

\newpage

\part{Review}\label{sec:review}
\section{Mathematical tools}\label{sec:tools}
\subsection{Notation}\label{sec:notation}

Throughout this thesis, we will differentiate Hilbert spaces with superscripts, e.g. $\C{H}^X$. We will use this superscript on states $\ket{\psi}^X$ to denote that it is an element of $\C{H}^X$. If it is clear from context which space a state belongs to, we sometimes drop the superscript on the state in order to avoid clutter. Additionally, we will denote the tensor product of two Hilbert spaces $\C{H}^X \otimes \C{H}^Y$ as $\C{H}^{XY}$. Given a set $\C{K}$ and an element of that set $k \in \C{T}$, we will usually write $\C{K} \backslash \{k\}$ as $\C{K} \backslash k$ to avoid clutter.

\subsection{Choi isomorphism}\label{sec:choi}

Frequently, it will be convenient for us to express channels as matrices. For this purpose, we use the Choi isomorphism. 

\begin{defi}[Choi matrix \cite{CHOI1975285}]
Given some Hilbert spaces $\C{H}^X$ with some basis $\{\ket{i}^X \}_i$ and $\C{H}^Y$ with some basis $\{\ket{i}^Y \}_i$, let $\C{M}: \C{L}(\C{H}^X) \rightarrow \C{L}(\C{H}^Y)$ be some linear map. We then define the Choi matrix of $\C{M}$ as

\begin{equation}
    M \coloneqq \C{I}^X \otimes \C{M} \dket{\mathbb{1}} \dbra{\mathbb{1}}^{XX}
\end{equation}

where $\dket{\mathbb{1}}^{XX} = \sum_i \ket{i}^X \ket{i}^X$ denotes the maximally entangled state.

\end{defi}

If $\C{M}$ is pure, i.e. there exists a linear operator $V: \C{H}^X \rightarrow \C{H}^Y$ such that $\C{M}(\rho) = V \rho V^\dagger$ for any $\rho \in \C{L}(\C{H}^X)$, we can instead work with the simpler Choi vector.

\begin{defi}[Choi vector  \cite{CHOI1975285, Royer1991}]
Given some Hilbert spaces $\C{H}^X$ with some basis $\{\ket{i}^X \}_i$ and $\C{H}^Y$ with some basis $\{\ket{i}^Y \}_i$, let $V: \C{H}^X \rightarrow \C{H}^Y$ be a linear operator. We then define the Choi vector of $V$ as 

\begin{equation}
    \dket{V} \coloneqq \mathbb{1}^X \otimes V \dket{\mathbb{1}}^{XX}.
\end{equation}
\end{defi}

Note that the maximally entangled state is the Choi vector of the identity, which justifies our use of the notation $\dket{\mathbb{1}}^{XX}$.

\subsection{Link product}\label{sec:link}

The link product \cite{Chiribella_2008, Chiribella_2009} is a useful tool when calculating Choi matrices and vectors. It allows us to calculate the Choi matrix/vector of a composition of maps from the Choi matrices/vectors of the maps themselves.

\begin{defi}[Link product for vectors \cite{Wechs_2021}]\label{def:linkvec}
Let $\C{H}^X, \C{H}^Y, \C{H}^Z$ be non-overlapping Hilbert spaces. For two vectors $\ket{a} \in \C{H}^{XY}, \ket{b} \in \C{H}^{YZ}$ we define their link product

\begin{equation}
    \ket{a} * \ket{b} \coloneqq \sum_i \ket{a_i} \otimes \ket{b_i}
\end{equation}

where $\ket{a_i} = (\mathbb{1}^X \otimes \bra{i}^Y) \ket{a}$ and $\ket{b_i} = (\bra{i}^Y \otimes \mathbb{1}^Z) \ket{b}$.

\end{defi}

Let us make a few remarks about the above definition. If the shared space $\C{H}^Y$ is trivial, the link product simplifies to the tensor product $\ket{a}^X * \ket{b}^Z = \ket{a}^X \otimes \ket{b}^Z$. If the shared space $\C{H}^Y$ is the only non-trivial space, the link product simplifies to the inner product, $\ket{a}^Y * \ket{b}^Y = \braket{\bar{a}|b}$ where the entries of $\ket{\bar{a}}$ are the complex conjugated entries of $\ket{a}$ \cite{Wechs_2021}. 

\begin{defi}[Link product for matrices \cite{Chiribella_2008, Chiribella_2009}]\label{def:linkmat}
Let $\C{H}^X, \C{H}^Y, \C{H}^Z$ be non-overlapping Hilbert spaces. For two operators $A \in \C{L}(\C{H}^{XY}), B \in \C{L}(\C{H}^{YZ})$ we define their link product

\begin{gather}
\begin{aligned}
    A * B &\coloneqq \tr_Y ((A^{T_Y} \otimes \mathbb{1}^Z)(\mathbb{1}^X \otimes B)) \\
    &= \sum_{ij} A_{ij} \otimes B_{ij}
\end{aligned}
\end{gather}
where $T_Y$ denotes the partial transpose on $\C{H}^Y$ and $A_{ij} = (\mathbb{1}^X \otimes \bra{i}^Y)A(\mathbb{1}^X \otimes \ket{j}^Y), B_{ij} = (\bra{i}^Y \otimes \mathbb{1}^Z )A(\ket{j}^Y \otimes \mathbb{1}^Z)$.
\end{defi}

Once again, let us consider how this simplifies for trivial Hilbert spaces. If $\C{H}^Y$ is trivial, we obtain the tensor product $A*B = A \otimes B$. If $\C{H}^{XZ}$ is trivial, the link product becomes the trace $A*B = \tr(A^T B)$ \cite{Chiribella_2008, Chiribella_2009}. 

The link product is commutative for both vectors and operators (up to reordering of the resulting tensor product). The $n$-fold link product over vectors $\ket{a_k} \in \C{H}^{A_k}$ or operators $M_k \in \C{L}(\C{H}^{A_k})$ is also associative provided that each Hilbert space appears at most twice, i.e. $\C{H}^{A_k} \cap \C{H}^{A_l} \cap \C{H}^{A_m} = \emptyset$ for all $k \neq l \neq m \neq k$. The link product of hermitian and/or positive semi-definite maps is hermitian and/or positive semi-definite again \cite{Chiribella_2008, Chiribella_2009}.

Let us now discuss the reasons for introducing the link product. We will often want to calculate the Choi vector of a linear operator of the form $V = (\mathbb{1}^{X'} \otimes V_2)(V_1 \otimes \mathbb{1}^{Z})$, where $V_1: \C{H}^{X} \rightarrow \C{H}^{X'Y}$ and $V_2: \C{H}^{YZ} \rightarrow \C{H}^{Z'}$. It then holds that \cite{Wechs_2021}

\begin{equation}
    \dket{V} = \dket{V_1} * \dket{V_2}.
\end{equation}

A similar statement can be made for the Choi matrix of a map $\C{M} = (\C{I}^{X'} \otimes \C{M}_2)(\C{M}_1 \otimes \C{I}^{Z})$ with $\C{M}_1: \C{L}(\C{H}^X) \rightarrow \C{L}(\C{H}^{X'Y})$ and $\C{M}_2: \C{L}(\C{H}^{YZ}) \rightarrow \C{L}(\C{H}^{Z'})$. In this case we have \cite{Chiribella_2008, Chiribella_2009}

\begin{equation}
    M = M_1 * M_2.
\end{equation}

\begin{figure}
    \centering
    \includegraphics[width=0.5\textwidth]{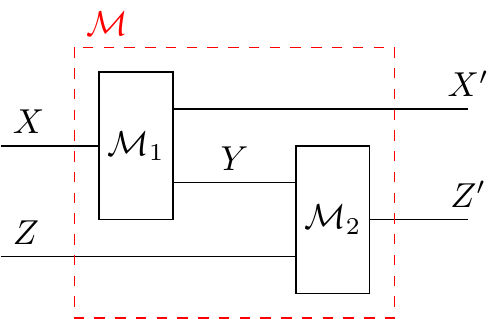}
    \caption{Diagrammatic representation of the action of the link product. The link product turns the circuits defined by the two maps $\C{M}_1, \C{M}_2$ into a new circuit $\C{M}$ from $\C{H}^{XZ}$ to $\C{H}^{X'Z'}$. The wire $\C{H}^Y$ becomes an internal wire.}
    \label{fig:link_circuit}
\end{figure}

Diagrammatically, the action of the link product can be represented by \cref{fig:link_circuit}. The circuits corresponding to the two maps are linked by connecting an output wire of one to the input wire of the other. These two wires correspond to the shared space $\C{H}^Y$.

\section{Process matrices}\label{sec:pm}
\subsection{The general process matrix framework}\label{sec:generalpm}

Standard quantum theory is not well suited to describing processes that lack a definite acyclic causal order (which are often also referred to as processes with indefinite causal order in the literature (cf. for example \cite{Oreshkov_2012, Wechs_2021, Barrett_2021})). The process matrix framework \cite{Oreshkov_2012} attempts to rectify that. The idea is to formulate a quantum theory without a notion of background spacetime. This means that, while we assume that regular quantum theory holds locally for what we will call local quantum laboratories or local agents, there is no fixed global ordering between these agents.


We consider these laboratories to be separate in the sense that a laboratory is isolated from all others and no signaling can occur while it is carrying out its experiment. As we assume that local quantum theory holds in these labs, the agents inside them can carry out any operation consistent with that. This can be either a unitary evolution of the state they receive, or a measurement carried out on this state. Both types of operation can be described mathematically with quantum instruments \cite{davies1970operational}. We can formalize this idea in the following definition:


\begin{defi}[Local agents and local operations \cite{Ara_jo_2015}]\label{def:locallabs}
A local quantum laboratory or local agent consists of an input Hilbert space $\C{H}^{A^I}$ and an output Hilbert space $\C{H}^{A^O}$. The local agent applies a local operation which is a quantum instrument. A quantum instrument is a set of completely positive (CP) maps $\C{M}_A^a = \{\C{M}_A^{x|a}\}_x$ with $\C{M}_A^{x|a}: \C{L}(\C{H}^{A^I}) \rightarrow \C{L}(\C{H}^{A^O})$ labeled by the measurement setting $a$ and the measurement outcome $x$ such that $\sum_x \C{M}_A^{x|a}$ is a completely positive trace-preserving (CPTP) map.\footnote{We could also define quantum instruments without the measurement setting $a$. Instead of choosing a measurement setting, the local agent then chooses a quantum instrument directly. The two formulations are equivalent. Picking a measurement setting $a$ out of all the possible measurement settings is equivalent to picking a quantum instrument out of all possible quantum instruments.}
\end{defi}

We will also refer to the input and output Hilbert spaces as input and output wires.

In the end, we will be interested in probabilities to obtain certain outcomes or also final states. Given a specific CP map $\C{M}_A^{x|a}$ that is part of some quantum instrument, the probability that it is realized is independent of the remaining CP maps making up the quantum instrument \cite{Oreshkov_2012, Wechs_2021}. As such, it is enough to consider a single generic CP map $\C{M}_A$ when setting up the framework. In other words, we can consider each possible measurement outcome separately. The quantum instrument can then be recovered by plugging in each $\C{M}_A^{x|a}$ of the quantum instrument for the generic CP map $\C{M}_A$ and calculating all the probabilities and/or final states separately. In light of this, we will use the term ``local operation" to refer to the generic CP map $\C{M}_A$ in addition to quantum instruments.




The process matrix then describes the outside environment that connects these local agents. It contains all the information on which agents can signal which other agents and under what circumstances. Let us try to understand how this works by considering a simple example. Alice (who applies an operation $\C{M}_A: \C{L}(\C{H}^{A^I}) \rightarrow \C{L}(\C{H}^{A^O})$) and Bob ($\C{M}_B: \C{L}(\C{H}^{B^I}) \rightarrow \C{L}(\C{H}^{B^O})$) are two local agents. For simplicity, we take their input and output spaces to be isomorphic. Alice receives a quantum state $\rho$ which is sent to Bob after Alice has carried out her measurement. The state at the end conditioned on the measurement outcomes corresponding to $\C{M}_A$ and $\C{M}_B$ can be described by

\begin{equation}
    \C{M}_B \circ \C{I}^{A^O \rightarrow B^I} \circ \C{M}_A (\rho)
\end{equation}

where $\C{I}^{A^O \rightarrow B^I}$ is the identity from $\C{L}(\C{H}^{A^O})$ to $\C{L}(\C{H}^{B^I})$ with respect to the computational bases of each Hilbert space. Taking the Choi isomorphism of the above with the help of the link product, one obtains

\begin{equation}
    M_B * \mathbb{1}^{A^O B^I} * M_A * \rho.
\end{equation}

Using commutativity of the link product, we can put everything corresponding to the local agents on one side and everything else (the outside environment) on the other, yielding

\begin{equation}\label{eq:pm_simple_example}
    (M_A \otimes M_B) * (\mathbb{1}^{A^O B^I} \otimes \rho).
\end{equation}

The process matrix is then $\mathbb{1}^{A^O B^I} \otimes \rho$ for this familiar setup. Note that \cref{eq:pm_simple_example} can also be viewed as the definition of a quantum supermap \cite{Chiribella_2008_Supermaps}. Quantum supermaps are maps that take CP maps as arguments and take them to some new CP map. In our case, the CP maps $\C{M}_A$ and $\C{M}_B$ are taken to the map whose Choi matrix is given by \cref{eq:pm_simple_example}. The new map has trivial input, and its output lies in $\C{H}^{B^O}$. 

The full process matrix framework then entails allowing a more general set of matrices in \cref{eq:pm_simple_example} than would be possible under standard quantum mechanics. 

Before formalizing the general process matrix framework, let us make a small modification to the above example. Note that the final state lies in $\C{H}^{B^O}$. This makes sense for the example as Bob is always the last person who acts on the quantum state. However, in more general examples without a definite order there might be multiple agents who could act last. We thus define the Hilbert space of the global future $\C{H}^F$ \cite{Wechs_2021} and demand that the final state is an element of that space (or when considering density matrices, of $\C{L}(\C{H}^F)$). The process matrix from our example would then be $\mathbb{1}^{B^O F} \otimes \mathbb{1}^{A^O B^I} \otimes \rho$. We can also define a Hilbert space of the global past $\C{H}^P$ \cite{Wechs_2021} which prepares the quantum state before sending it to the agents. We can, however, always take $\C{H}^P$ to be trivial (which corresponds to a fixed prepared state). The global future can also be taken to be trivial if one is not interested in the final state but, for example, only the probabilities of the classical measurement results. The global past and future can also be viewed as local quantum laboratories with trivial input respectively output.

With the definitions up to this point and some intuition in hand, we formalize the process matrix framework.

\begin{figure}
\centering
\includegraphics[width=0.5\textwidth]{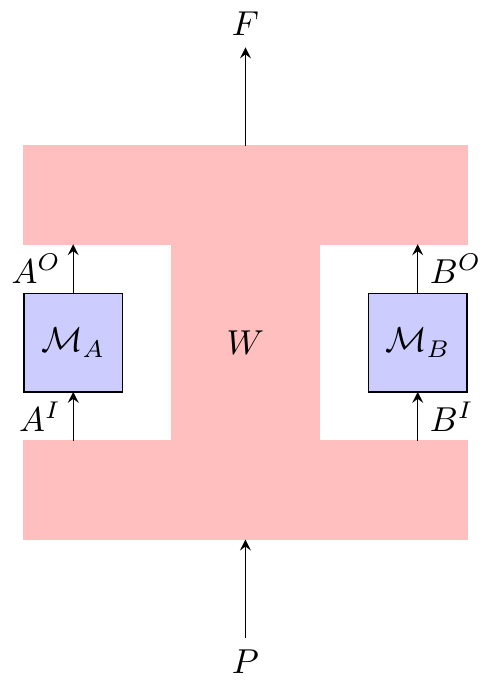}
\caption{An illustration of the process matrix framework in the bipartite case. Local operations (in blue) can be slotted in leading to an induced map from the global past $P$ to the global future $F$ depending on both the local operations and the ``outside environment" described by the process matrix $W$ (here in pink). The ordering of the local operations is not necessarily pre-defined.}
\label{fig:fig_general_W}
\end{figure}

\begin{defi}[Process matrices \cite{Oreshkov_2012}]
For $k=1,...,N$ let $\C{H}^{A^I_k}$ be the input wire and $\C{H}^{A^O_k}$ the output wire of a local agent $A_k$. The operator $W \in \C{L}(\C{H}^{P A^{IO}_{\C{N}} F})$, where $A^{IO}_{\C{N}} = A^{IO}_1 ... A^{IO}_N$ and $A^{IO}_k = A^I_k A^O_k$, is a process matrix if \cite{Ara_jo_2015, Ara_jo_2017, Wechs_2021}

\begin{gather}\label{eq:pm_cond}
\begin{aligned}
    W &\geq 0 \\
    \tr W &= d_P \prod_k d^O_k \\
    W &= L_V(W)
\end{aligned}
\end{gather}
where $L_V(W) = {}_{1 - F\prod_k (1 - A^O_k + A^{IO}_k) + PF A^{IO}_{\C{N}}} W$ and ${}_X W = \frac{1}{d_X} \mathbb{1}^X \otimes \tr_X W$.

The process matrix $W$ defines a quantum supermap \cite{Ara_jo_2017} taking local operations $\C{M}_{A_1},...,\C{M}_{A_N}$ belonging to the above local quantum laboratories to a map $\C{L}(\C{H}^P) \rightarrow \C{L}(\C{H}^F)$. The action of the supermap can be expressed in terms of the Choi matrix of the resulting map as

\begin{equation}\label{eq:supermap}
    W(\C{M}_{A_1},...,\C{M}_{A_N}) \coloneqq (M_{A_1} \otimes ... \otimes M_{A_N}) * W \in \C{L}(\C{H}^{PF}).
\end{equation}

The probability to observe an outcome corresponding to the set of local operations $\C{M}_{A_1},...,\C{M}_{A_N}$ if the input state coming from the global past is $\rho^P$ is given by the generalized Born rule

\begin{equation}\label{eq:pmprob}
    \tr_{F} (\rho^P \otimes M_{A_1} \otimes ... \otimes M_{A_N}) * W.
\end{equation}

\end{defi}

The conditions in \cref{eq:pm_cond} come from the requirement that the process matrix produces only positive and normalized probabilities for all possible local operations \cite{Oreshkov_2012, Ara_jo_2015}.

Frequently, we will consider process vectors \cite{Ara_jo_2015} instead of process matrices. Just like the process matrix is essentially the Choi matrix of the environment, the process vector can be viewed as the corresponding Choi vector. If the process vector is given by $\ket{w}$, then the process matrix is simply $W = \ket{w} \bra{w}$. Instead of \cref{eq:supermap}, we can then use

\begin{equation}
    (\dket{A_1} \otimes ... \otimes \dket{A_N}) * \ket{w}.
\end{equation}

where $\C{M}_{A_k}(\rho) = A_k \rho A_k^\dagger$.\footnote{Note that we use $A_k$ to refer to both the local agent and the single Kraus operator describing the pure operation that this agent applies. This will allow us to keep equations compact while it should be clear from context whether the agent or the Kraus operator is meant.}

Note that the process vector description requires both the process matrix $W$ and the local operations to be pure. However, it is not always possible to purify a process \cite{Ara_jo_2017}. We will discuss this in more detail in \cref{sec:unitary}.


\subsection{Device-dependent and independent notions of causality}\label{sec:causalinequalities}


The fact that process matrices are not necessarily compatible with some acyclic order of the agents allows them to violate so-called ``causal inequalities'' \cite{Oreshkov_2012}. These are bounds set by those processes that are compatible with some global order. This situation is analogous to Bell inequalities where allowing a more general set of shared states, specifically entangled states, allows agents to violate bounds set in the classical setting \cite{Bell}. 

Before we begin let us explicitly spell out a number of assumptions \cite{Oreshkov_2012, Vilasini_2022} that go into the process matrix framework that we have used implicitly before. These can be viewed as analogous to the various assumptions that rule out loopholes in the Bell scenario which allow for trivial violation of the Bell inequalities. 

\paragraph{Free choice:} The agents are free to choose the measurement setting for their quantum instrument or, equivalently, the agents are free to choose their quantum instrument.

\paragraph{Local order:} The event corresponding to the output of an agent causally precedes the event of the input. 

\paragraph{Closed laboratories:} The agents can only interact with the outside environment via the singular events that correspond to receiving an input and sending an output.

\bigskip

These assumptions are necessary to avoid trivializing any communication task. For example, if Alice and Bob each need to communicate a bit to the other, then without the local order assumption they could simply send their bit at time $t=1$ which the other would receive at some later time, say $t=2$.

We continue by dividing process matrices into several subsets which will give us the analogues to separable and entangled states in the Bell setting. We will focus on the bipartite case with trivial global past and future, for simplicity. The concepts discussed here readily generalize to the $N$-partite case.

The simplest case is given by fixed order processes. As the name suggests, these are processes where the agents act in a fixed order and as such if Alice can signal to Bob, then Bob cannot signal to Alice. Our example of Alice sending a state to Bob that we used to introduce the general process matrix framework is one such fixed order process. More formally, we have the following definition.

\begin{defi}[Fixed-order processes \cite{Ara_jo_2015,Oreshkov_2016}]\label{def:fixed_order}
Consider a bipartite process matrix $W$ with agents $A$ and $B$ who apply quantum instruments $\{\C{M}_A^{x|a}\}_x$ and $\{\C{M}_B^{y|b}\}_y$. Defining the probability for $A$ to obtain the outcome $x$ given setting $a$ and Bob to obtain the outcome $y$ given setting $b$, $P(xy|ab) \coloneqq \tr((M_A^{x|a} \otimes M_B^{y|b})*W)$, we say that $W$ is a fixed order process compatible with the order $A$ before $B$ if the probability for $A$ to obtain $x$ given settings $a$ and $b$

\begin{equation}
    P(x|ab) \coloneqq \sum_y P(xy|ab)
\end{equation}

is independent of $b$. Analogously, we say that $W$ is a fixed order process compatible with the order $B$ before $A$ if the probability for $B$ to obtain $y$ given settings $a$ and $b$

\begin{equation}
    P(y|ab) \coloneqq \sum_x P(xy|ab)
\end{equation}

is independent of $a$.\footnote{Note that instead of considering settings, we could also say that $W$ has a fixed causal order if the probability for $A$ to obtain outcome $x$ is independent of agent $B$'s choice of quantum instrument or vice versa.}

\end{defi}

There is an equivalent condition that uses just the form of the process matrix \cite{Oreshkov_2016}. We say that a process is a fixed order process with $A$ before $B$ and denoting it as $W^{A \prec B}$ if

\begin{equation}
    W^{A \prec B} = W^{A^I A^O B^I} \otimes \mathbb{1}^{B^O}
\end{equation}

where $W^{A^I A^O B^I}$ is a process matrix where the output of $B$ is trivial, $\dim \C{H}^{B^O} = 1$. We analogously define $W^{B \prec A}$.

We could now imagine each of these process matrices, $W^{A \prec B}$ and $W^{B \prec A}$, being realized with some probability. The result is a called a causally separable process.

\begin{defi}[Causal separability \cite{Ara_jo_2015,Oreshkov_2016}]\label{def:causalsep}
We call a  bipartite process matrix $W$ causally separable if 

\begin{equation}
    W = p W^{A \prec B} + (1-p) W^{B \prec A}.
\end{equation}
\end{defi}

Such process matrices can be viewed as the analogue to separable states in the Bell scenario. 

One can then show \cite{Oreshkov_2012} that the probability distributions generated from causally separable process matrices obey certain linear inequalities.\footnote{Alternatively, one could generate these inequalities from just the set of fixed order processes. In this case, the causally non-separable processes would automatically satisfy them as well because the inequalities are linear and causally separable processes are by definition convex mixtures of fixed order processes.} These linear inequalities are the causal inequalities we mentioned in the beginning of this section, and they are obtained from a ``game" similar to the CHSH game \cite{CHSH}, which yields the Bell inequalities.

The probability distributions obeying causal inequalities then give us a device-independent notion of causality instead of the device-dependent one from \cref{def:causalsep}.

\begin{defi}[Causal processes \cite{Oreshkov_2012}]
A bipartite process matrix $W$ is causal if for all choices of local operations

\begin{equation}
    P(xy|ab) = p P^{A \prec B}(xy|ab) + (1-p) P^{B \prec A}(xy|ab)
\end{equation}

where $P^{A \prec B}$ is the probability distribution of a process $W^{A \prec B}$ and analogously for $P^{B \prec A}$.
\end{defi}

The class of process matrices that are not causally separable (usually, they are called causally non-separable process matrices) can then in principle violate the causal inequalities, in which case we call them non-causal \cite{Oreshkov_2012}. This is similar to how entangled states can in principle violate Bell inequalities. However, just like not all entangled states violate Bell inequalities \cite{Werner}, there are causally non-separable process matrices that are still causal \cite{Feix_2016}. The set of causal process matrices is thus strictly larger than the set of causally separable process matrices. 


An open question is whether non-causal processes are actually physically realizable. This question is motivated by the fact that no such process has been physically implemented yet and that there is some theoretical evidence that at least some non-causal processes are non-physical \cite{Ara_jo_2017}.

\subsection{Subset of processes modeled by generalized quantum circuits}\label{sec:circuitpm}

In general, interpreting the causal structures described by process matrices is difficult. As we discussed in the previous section, the framework can model very general scenarios as it does not assume a background spacetime and a long-standing open question is to understand which process matrices can be physically realized and under what assumptions and physical regimes. The framework of quantum circuits with quantum control of causal order (QC-QC), which was developed in \cite{Wechs_2021}, adopts a bottom-up approach to this problem and aims to define a very general class of physically realizable quantum circuits, which can be mapped to a subset of process matrices. As we will see, they can be viewed as circuits in the sense that the local operations can be ``plugged in" with their causal order being coherently or classically controlled. 

A similar framework with similar results was also developed in \cite{Purves_2021} at around the same time as the QC-QC framework. We will, however, focus on QC-QCs in this section.

For the rest of the section, we consider $N$ agents, $A_1,...,A_N$ with agent $A_k$ having an input space $\C{H}^{A^I_k}$ and an output space $\C{H}^{A^O_k}$.

\subsubsection{Quantum circuits with fixed causal order}\label{sec:qcfo}

The simplest QC-QCs are so-called quantum circuits with fixed causal order (QC-FOs), also known as quantum combs \cite{Chiribella_2008, Chiribella_2009}. These are essentially generalizations of our example \cref{eq:pm_simple_example} and the fixed order processes we discussed in \cref{sec:causalinequalities} to $N$ agents. The setup is depicted in \cref{fig:QCFO}. In QC-FOs, an internal operation (a CPTP map $\C{M}_1$) sends the system to the first agent who applies their local operation. Afterwards, another internal operation $\C{M}_2$ sends the system to a second agent and this pattern repeats until all the agents have acted. As the name QC-FO suggests, the order of the agents is fixed. The internal operations may also have ancillary input and output spaces (indicated by the wires $\alpha_i$ in the figure) which act as internal memories.

\begin{figure}
    \centering
    \includegraphics[width=\textwidth]{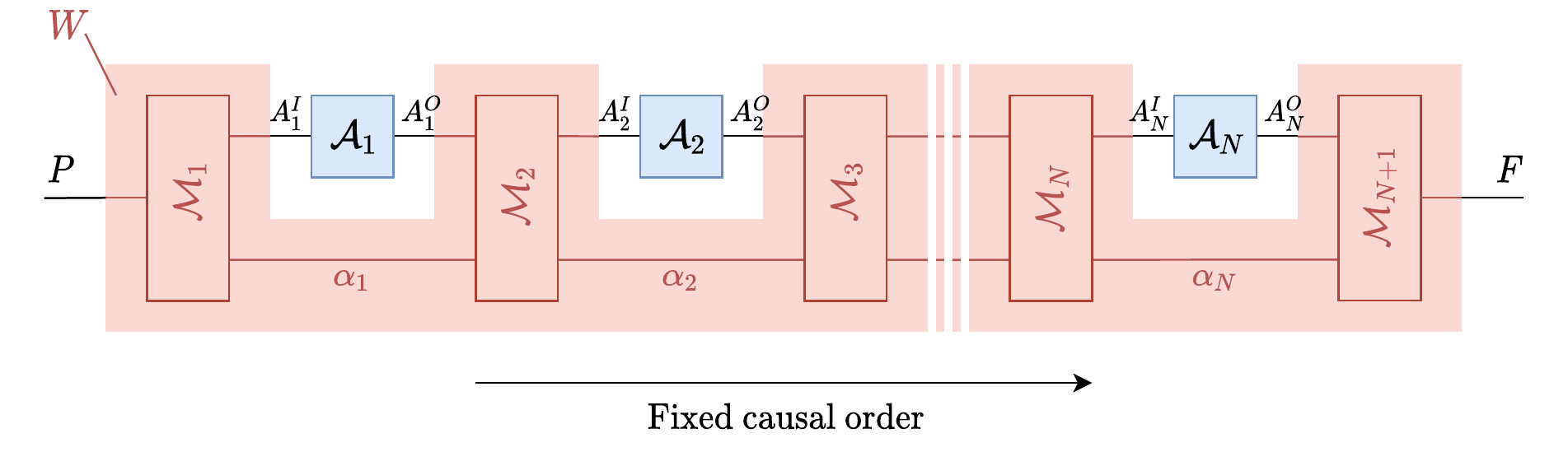}
    \caption{(Fig. 4 \protect \cite{Wechs_2021}) A general depiction of a quantum circuit with fixed causal order (QC-FO), also known as a quantum comb. The agents apply their operations in a fixed order with internal channels $\C{M}_n$ taking the target system from one agent to the next.}
    \label{fig:QCFO}
\end{figure}

\subsubsection{Quantum circuits with classical control of causal order}\label{sec:qccc}

Before going to the general case of QC-QCs, we first describe another class, quantum circuits with classical control of causal order (QC-CC). These already feature many of the properties and ideas that we will again encounter when we formulate the general QC-QC framework, but they are conceptually a bit simpler. A scheme of QC-CCs is depicted in \cref{fig:QCCC}. The order in which the local operations are applied is not pre-defined in a QC-CC but is established dynamically. The idea is that after each local operation the circuit applies a measurement internally. The outcome of this measurement dictates which agent acts next. Which measurement is applied must depend on which agents have already acted so that we can guarantee that each agent acts at most once. We can thus say that if the agents $A_{k_1}, ..., A_{k_n}$ already acted in that order, the circuit applies a quantum instrument $\{\C{M}^{\rightarrow k_{n+1}}_{(k_1,...,k_n)}\}_{k_{n+1} \in \C{N} \backslash \{k_1,...,k_n\}}$ with $\C{M}^{\rightarrow k_{n+1}}_{(k_1,...,k_n)}: \C{L}(\C{H}^{A^O_{k_n} \alpha_n}) \rightarrow \C{L}(\C{H}^{A^I_{k_{n+1}} \alpha_{n+1}})$ where $\C{N} = \{1,...,N\}$ denotes the set of all agents. If all agents acted already the circuit simply sends the system to the global future with a CPTP map $\C{M}^{\rightarrow F}_{(k_1,...,k_N)}: \C{L}(\C{H}^{A^O_{k_N} \alpha_N}) \rightarrow \C{L}(\C{H}^F)$. The causal order is thus still well-defined as it is given by the outcomes of these measurements during each time step even if it is not pre-defined because the outcomes are a priori unknown.

Formally, we can achieve the above by introducing for each time step, denoted by $t_n$ in \cref{fig:QCCC}, $1 \leq n \leq N$, a control system which is an element of a Hilbert space $\C{H}^{C_n}$ with basis states $\ket{(k_1,...,k_n)}^{C_n}$. This control system records which agents have already acted as well as their order, i.e. $\ket{(k_1,...,k_n)}^{C_n}$ means that the agents $A_{k_1},..., A_{k_n}$ already acted in that order. The control acts classically (hence the name QC-CC) so we will adopt the notation

\begin{equation}
    \llbracket (k_1,...,k_n) \rrbracket^{C_n} \coloneqq \proj{(k_1,...,k_n)}{(k_1,...,k_n)}^{C_n}.
\end{equation}

This system then incoherently controls which quantum instrument $\{\C{M}^{\rightarrow k_{n+1}}_{(k_1,...,k_n)}\}_{k_{n+1} \in \C{N} \backslash \{k_1,...,k_n\}}$ is applied as well as which agent acts next. For this purpose, we wish to formally embed the input and output spaces of each agent at a time step $t_n$ in the same Hilbert space. We assume that the input dimensions of all agents are the same, $\dim \C{H}^{A^I_k} = d^I$ for all $k$, and do the same for the output dimensions, $\dim \C{H}^{A^O_k} = d^O$ for all $k$. This can be achieved by introducing additional ancillary systems that can later be traced out. 

The various input/output spaces are now isomorphic to each other and we can introduce for each time step $t_n$ a generic input space $\C{H}^{\tilde{A}^I_n}$ and a generic output space $\C{H}^{\tilde{A}^O_n}$ which are isomorphic to $\C{H}^{A^I_{k_n}}$ and respectively $\C{H}^{A^O_{k_n}}$. 

Using this isomorphism, we can then write the local operations as maps over these generic spaces $\tilde{\C{A}}_{k_n}: \C{L}(\C{H}^{\tilde{A}^I_n}) \rightarrow \C{L}(\C{H}^{\tilde{A}^O_n})$ while the CP maps of the quantum instruments that the circuit applies can be written as $\tilde{\C{M}}^{\rightarrow k_{n+1}}_{(k_1,...,k_n)}: \C{L}(\C{H}^{\tilde{A}^I_n \alpha_n}) \rightarrow \C{L}(\C{H}^{\tilde{A}^O_n \alpha_{n+1}})$. 

We can then embed $\tilde{\C{A}}_{k_n}$ in a conditional operation $\tilde{\C{A}}_n$ such that

\begin{equation}
    \tilde{\C{A}}_n \coloneqq \sum_{(k_1,...,k_n)} \tilde{\C{A}}_{k_n} \otimes \C{P}^{C_n \rightarrow C_n}_{(k_1,...,k_n)}
\end{equation}

where $\C{P}^{C_n \rightarrow C_n}_{(k_1,...,k_n)}$ is the projector on $\llbracket (k_1,...,k_n) \rrbracket^{C_n}$. 

Similarly, for the internal circuit operations, we can write

\begin{equation}
    \tilde{\C{M}}_{n+1} \coloneqq \sum_{(k_1,...,k_n, k_{n+1})} \tilde{\C{M}}^{\rightarrow k_{n+1}}_{(k_1,...,k_n)} \otimes \C{P}^{C_n \rightarrow C_{n+1}}_{(k_1,...,k_n), k_{n+1}}
\end{equation}

where $\C{P}^{C_n \rightarrow C_{n+1}}_{(k_1,...,k_n), k_{n+1}}$ projects on $\llbracket (k_1,...,k_n) \rrbracket^{C_n}$ and then updates the control to $\llbracket (k_1,...,k_n, k_{n+1})\rrbracket^{C_{n+1}}$. Note that the control system allows us to compactly write how the internal operation chooses the right quantum instrument while also allowing us to write the measurement as a CPTP map.

Using the above definitions, their Choi matrices and the link product, we can write the resulting supermap as

\begin{gather}
\begin{aligned}
    M &= \tilde{M}_1 * \tilde{A}_1 * ... * \tilde{A}_N * \tilde{M}_{N+1} \\
    &= \sum_{(k_1,...,k_N)} \tilde{M}^{\rightarrow k_1}_{\emptyset, k_1} * \tilde{A}_{k_1} * ... * \tilde{M}^{\rightarrow k_N}_{(k_1,...,k_{N-1}} * \tilde{A}_{k_N} * \tilde{M}^{F}_{(k_1,...,k_N)} \\
    &\cong (A_1 \otimes ... \otimes A_N) * \sum_{(k_1,...,k_N)} M^{\rightarrow k_1}_{\emptyset, k_1} * ... * \tilde{M}^{\rightarrow k_N}_{(k_1,...,k_{N-1}} * \tilde{M}^{F}_{(k_1,...,k_N)}.
\end{aligned}
\end{gather}

In the second line, we contracted over the control system, using that $\llbracket (k_1,...,k_n)\rrbracket^{C_n} * \llbracket(k'_1,...,k'_n) \rrbracket^{C_n} = \delta_{k_1, k'_1} ... \delta_{k_n, k'_n}$ and in the last line we transformed back to the original input and output spaces, using the isomorphism while also using commutativity of the link product to pull the local operations out of the sum. Note that the different orders $(k_1,...,k_N)$ are in an incoherent superposition, justifying the ``classical" in the name QC-CC. 

From the above, we can readily read off the process matrix

\begin{equation}\label{eq:pmqccc}
    W = \sum_{(k_1,...,k_N)} M^{\rightarrow k_1}_{\emptyset, k_1} * ... * M^{\rightarrow k_N}_{(k_1,...,k_{N-1}} * M^{F}_{(k_1,...,k_N)}.
\end{equation}

From this equation, we can also see that QC-FOs are a subset of QC-CCs. A QC-FO is simply a QC-CC where all the terms in the above process matrix are 0 except for one. 

QC-CCs are causally separable \cite{Wechs_2021}. This can be intuitively understood as a consequence of each term in \cref{eq:pmqccc} corresponding to a definite causal order.

\paragraph{Purifying the operations:}Before we continue with QC-QCs, note that instead of considering CP maps $\C{M}^{\rightarrow k_{n+1}}_{(k_1,...,k_n)}$, we can purify the internal operations to obtain linear operators $V^{\rightarrow k_{n+1}}_{(k_1,...,k_n)}: \C{H}^{A_{k_n} \alpha_n} \rightarrow \C{H}^{A_{k_{n+1}} \alpha_{n+1}}$ such that $\C{M}^{\rightarrow k_{n+1}}_{(k_1,...,k_n)}(\rho) = V^{\rightarrow k_{n+1}}_{(k_1,...,k_n)} \rho V^{\rightarrow k_{n+1} \dagger}_{(k_1,...,k_n)}$. This can be done with the help of the ancillaries $\alpha_n$ as these are essentially arbitrary, except in the case of the final internal operation $\C{M}^{\rightarrow F}_{(k_1,...,k_N)}$ for which an additional wire $\alpha_F$ going to the global future needs to be added. This wire can simply be traced out to recover the original process. On the other hand, while we do not have ancillaries to purify the local operations of the agents, we can recover the general case of multiple Kraus operators by summing up what we obtain for single Kraus operators. We can therefore assume that the action of the local operations consists of applying a single Kraus operator $A_{k_n}$. The result of these purifications is that we can work with process and state vectors instead of process and density matrices. Additionally, it will make it easier to introduce coherent (i.e. quantum) control.

\begin{figure}
    \centering
    \includegraphics[width=\textwidth]{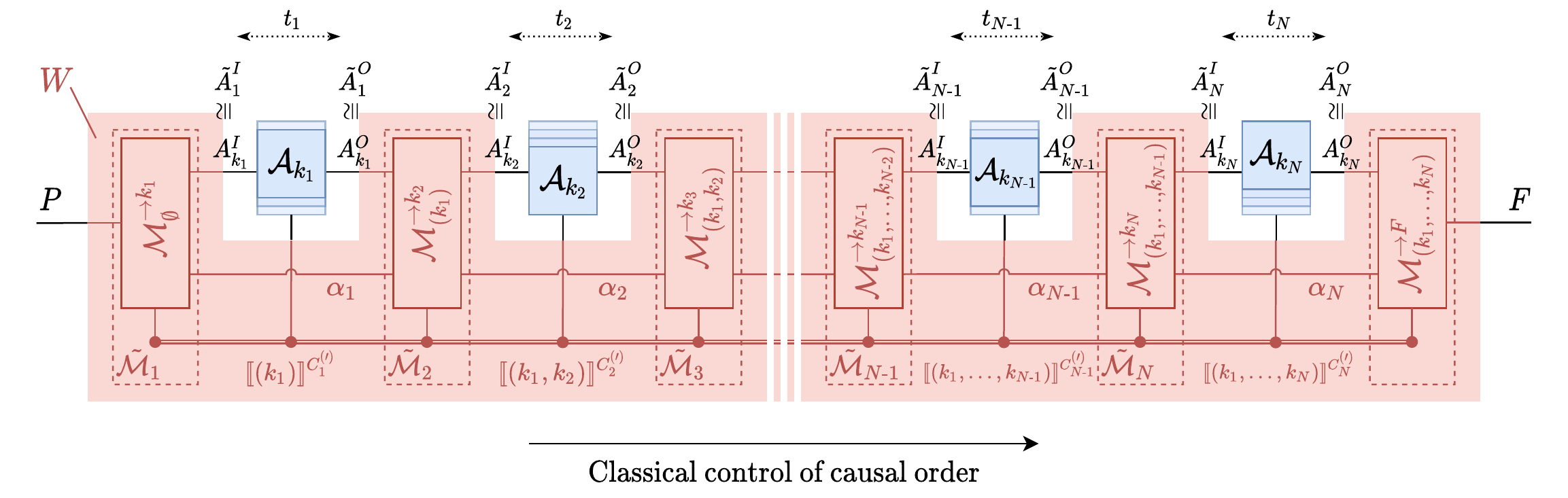}
    \caption{(Fig. 9 \protect \cite{Wechs_2021}) A general depiction of a quantum circuit with classical control of causal order (QC-CC). The causal order is established dynamically by the outcomes $k_n$ of the measurement carried out by the internal circuit operations $\C{M}^{\rightarrow k_{n+1}}_{(k_1,...,k_n)}$. The classical control system $\llbracket(k_1,...,k_n)\rrbracket^{C_n}$ records which agents have already acted and thus ensures that the correct measurement is applied by the circuit during a given time step. The input and output spaces of the agents are formally identified with generic spaces $\C{H}^{\tilde{A}^I_n}$ and $\C{H}^{\tilde{A}^O_n}$ which can be viewed as the input and output space during time step $t_n$.}
    \label{fig:QCCC}
\end{figure}

\subsubsection{Quantum circuits with quantum control of causal order}\label{sec:qcqc}

While QC-CCs control the causal order incoherently, QC-QCs do so coherently. We can achieve this with a small modification to the control system from the previous section. For QC-CCs, the control system recorded the complete order of the operations previously applied to the target system. We will relax this now such that the basis states of the control Hilbert space $\C{H}^{C_n}$ are $\ket{\{k_1,...,k_{n-1}\}, k_n}^{C_n}$. This can be written more compactly as $\ket{\C{K}_{n-1}, k_n}^{C_n}$ where $\C{K}_{n-1} = \{k_1,...,k_{n-1}\}$ will stand generically for some subset of $\C{N}$ with $n-1$ elements. Given the control $\ket{\C{K}_{n-1}, k_n}^{C_n}$, we know that the agent $A_{k_n}$ acted most recently or is about to act and also what agents acted before $A_{k_n}$, namely $A_{k_1}$,..., $A_{k_{n-1}}$ but not their exact order. This allows different orders to coherently interfere which makes the overall order of operations indefinite. At the same time, the fact that the control system does record which agents have already acted ensures that no agent acts more than once.

The control system then coherently controls the application of the local and internal operations, similarly to how the classical control system incoherently controls these operations in QC-CCs. We now take the operations to be linear maps as discussed at the end of the previous section. Additionally, due to the new form of the control system, the internal operations now take the form $V^{\rightarrow k_{n+1}}_{\{k_1,...,k_{n-1}\}, k_n}: \C{H}^{A^O_{k_n} \alpha_n} \rightarrow \C{H}^{A^I_{k_{n+1}} \alpha_{n+1}}$.

We can then embed these operations in the generic Hilbert spaces $\C{H}^{\tilde{A}^I_n}$ and $\C{H}^{\tilde{A}^O_n}$ similar to how we did this for QC-CCs. The general setup is then the following: During the first time step, the circuit applies the operation $\tilde{V}_1$ to the target system coming from the global past

\begin{equation}
    \tilde{V}_1 = \sum_{k_1} \tilde{V}^{\rightarrow k_1}_{\emptyset, \emptyset} \otimes \ket{\emptyset, k_1}
\end{equation}

with $\tilde{V}^{\rightarrow k_1}_{\emptyset, \emptyset}: \C{H}^P \rightarrow \C{H}^{\tilde{A}^I_1} \otimes \C{H}^{\alpha_1}$. We see that it sends something to every lab in a coherent superposition (instead of carrying out a measurement) and appends the appropriate control system.

The local operations then act on the state they receive. We can write the overall action as

\begin{equation}
    \tilde{A}_1 = \sum_{k_1} \tilde{A}_{k_1} \otimes \ket{\emptyset, k_1} \bra{\emptyset, k_1}.
\end{equation}

This pattern then repeats. During time step $t_n$, the internal operation is given by

\begin{equation}\label{eq:defiso}
    \tilde{V}_{n+1} = \sum_{\substack{\C{K}_{n-1},\\ k_n, k_{n+1}}} \tilde{V}^{\rightarrow k_{n+1}}_{\C{K}_{n-1}, k_n} \otimes \ket{\C{K}_{n-1} \cup k_n, k_{n+1}} \bra{\C{K}_{n-1}, k_n}
\end{equation}

and is followed by another application of the local operations

\begin{equation}
    \tilde{A}_n = \sum_{\C{K}_n, k_{n+1}} \tilde{A}_{k_{n+1}} \otimes \ket{\C{K}_n, k_{n+1}} \bra{\C{K}_n, k_{n+1}}.
\end{equation}

The sums in the first of the above equations goes over $\C{K}_{n-1} \subset \C{N}$ and $k_n, k_{n+1} \in \C{N}\backslash \C{K}_{n-1}, k_n \neq k_{n+1}$, while in the second equation it goes over $\C{K}_{n-1} \in \C{N}$ and $k_{n+1} \in \C{N} \backslash \C{K}_n$. We will assume this convention for all such sums.

Finally, after all local agents have acted, there is one last internal operation which sends the state to the global future

\begin{equation}
    \tilde{V}_{N+1} = \sum_{k_N} \tilde{V}^{\rightarrow F}_{\C{K}_{N-1}, k_N} \otimes \bra{\C{N} \backslash k_N, k_N}.
\end{equation}


We can now write down the map $V: \C{H}^P \rightarrow \C{H}^{F \alpha_F}$ which we obtain from the above procedure in terms of its Choi vector

\begin{gather}
\begin{aligned}
    \dket{V} &= \dket{\tilde{V}_1} * \dket{\tilde{A}_1} * ... * \dket{\tilde{V}_N} * \dket{\tilde{A}_N} * \dket{\tilde{V}_{N+1}} \\
    &= \sum_{(k_1,..., k_N)} \dket{\tilde{V}^{\rightarrow k_1}_{\emptyset, \emptyset}} * \dket{\tilde{A}_{k_1}} * ... * \dket{\tilde{V}^{\rightarrow k_N}_{\{k_1,...,k_{N-2}\}, k_{N-1}}} * \dket{\tilde{A}_{k_N}} * \dket{\tilde{V}^{\rightarrow F}_{\C{N}\backslash k_N, k_N}} \\
    &\cong \sum_{(k_1,..., k_N)} (\dket{A_1} \otimes ... \otimes \dket{A_N}) * \ket{w_{(k_1,...,k_N, F)}} \\
    &= (\dket{A_1} \otimes ... \otimes \dket{A_N}) * \ket{w_{\C{N}, F}}.
\end{aligned}
\end{gather}

To obtain the second equality, we contracted over the control system. The sum goes over all possible orders of the agents. In the third equality, we used commutativity of the link product and used the isomorphism between the generic input/output spaces and the local agents' input/output spaces. Finally, we introduced the vectors $\ket{w_{(k_1,...,k_N, F)}} = \dket{V^{\rightarrow k_1}_{\emptyset, \emptyset}} * ... * \dket{V^{\rightarrow F}_{\C{N}\backslash k_N, k_N}}$ and the overall process vector $\ket{w_{\C{N}, F}} = \sum_{(k_1,..., k_N)} \ket{w_{(k_1,...,k_N, F)}}$. The process matrix can then be written as

\begin{equation}
    W = \tr_{\alpha_F} \ket{w_{\C{N}, F}} \bra{w_{\C{N}, F}}.
\end{equation}

The requirement that the internal operations $V_1,..., V_{N+1}$ are isometries yields a useful characterization of QC-QCs.

\begin{defi}[Characterization of QC-QCs \cite{Wechs_2021}]\label{def:qcqcchar}
The operator $W \in \C{L}(\C{H}^{P A^{IO}_{\C{N}} F})$ is the process matrix of a QC-QC if and only if there exist positive semidefinite matrices $W_{(\C{K}_{n-1}, k_n)} \in \C{L}(\C{H}^{P A^{IO}_{\C{K}_{n-1}} A^I_{k_n}})$ for all $\C{K}_{n-1} \subsetneq \C{N}$ and $k_n \in \C{N} \backslash \C{K}_{n-1}$ that fulfill

\begin{gather}\label{eq:qcqccond}
\begin{aligned}
& \sum_{k_1 \in \C{N}} \tr_{A_{k_1}^I} W_{(\emptyset,k_1)} = \mathbb{1}^P \\
& \forall \emptyset \subsetneq \C{K}_n \subsetneq \C{N}, \sum_{k_{n+1} \in \C{N} \backslash \C{K}_n} \tr_{A_{k_{n+1}}^I} W_{(\C{K}_n,k_{n+1})} = \sum_{k_n \in \C{K}_n} W_{(\C{K}_n \backslash k_n,k_n)}\otimes \mathbb{1}^{A_{k_n}^O}, \\
& \textup{and} \tr_F W = \sum_{k_N \in \C{N}} W_{(\C{N} \backslash k_N,k_N)}\otimes \mathbb{1}^{A_{k_N}^O}.
\end{aligned}
\end{gather}

\end{defi}

The positive semidefinite matrices can be written as $W_{(\C{K}_n,k_{n+1})} = \tr_{\alpha_n} \ket{w_{(\C{K}_{n-1}, k_n}} \bra{w_{(\C{K}_{n-1}, k_n}}$ with $\ket{w_{(\C{K}_{n-1}, k_n}} = \sum_{(k_1,...,k_{n-1})} \dket{V^{\rightarrow k_1}_{\emptyset, \emptyset}} * ... * \dket{V^{\rightarrow k_n}_{\{k_1,...,k_{n-2}\}, k_{n-1}}}$ where the sum goes over all orders of $\C{K}_{n-1}$.

Finally, we note that QC-QCs cannot violate causal inequalities despite the fact that they are not causally separable in general \cite{Wechs_2021}.

\begin{figure}
    \centering
    \includegraphics[width=\textwidth]{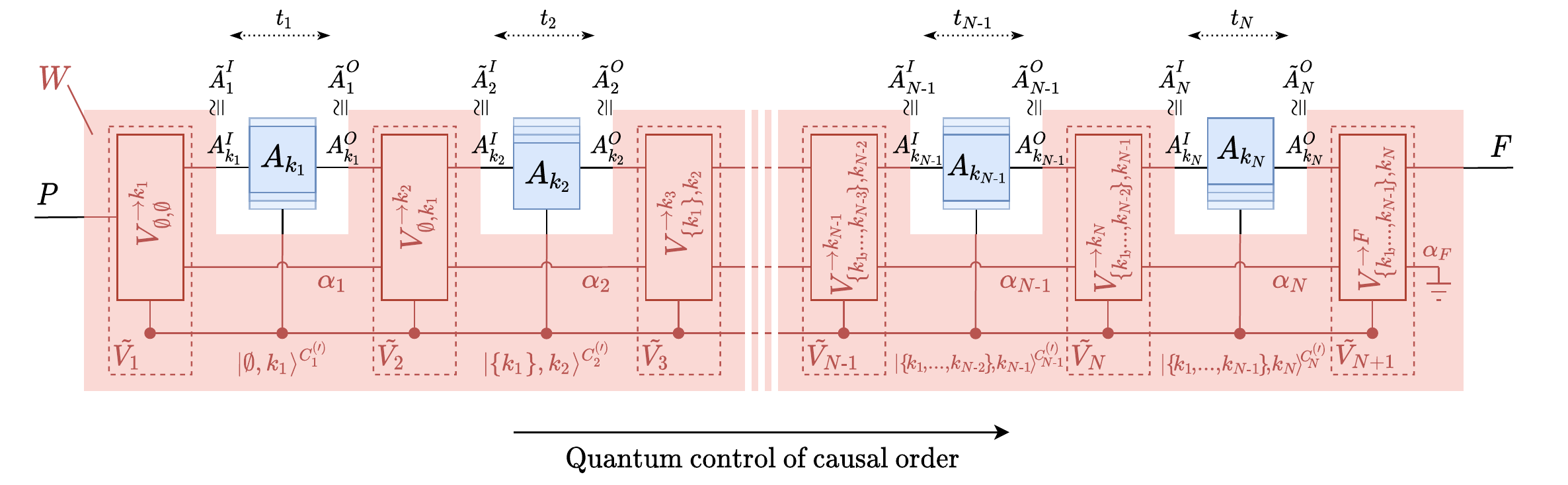}
    \caption{(Fig. 10 \protect \cite{Wechs_2021}) A general depiction of a quantum circuit with quantum control of causal order (QC-QC). The control system $\ket{\{k_1,...,k_{n-1}\}, k_n}$ now only records which agents have already acted but not their order and which agent is about to act or just acted. The causal order is now established coherently via linear operators $V^{\rightarrow k_{n+1}}_{\{k_1,...,k_{n-1}\}, k_n}$ instead of the measurement used in QC-CCs.}
    \label{fig:QCQC}
\end{figure}

\subsection{An example of a QC-QC with dynamical causal order}\label{sec:dynamicalswitch}

In this section, we will discuss a process, the dynamical switch, that was first described in \cite{Wechs_2021} as an illustration of their QC-QC framework. Unlike the frequently discussed standard quantum switch, the dynamical switch features dynamical control of causal order. This means that the causal order depends not just on some fixed control bit but on the outputs of the local agents. Additionally, the dynamical switch is not causally separable and as such it cannot simply be viewed as the superposition of processes with well-defined causal orders.

\begin{figure}
\centering
\includegraphics[width=0.8\textwidth]{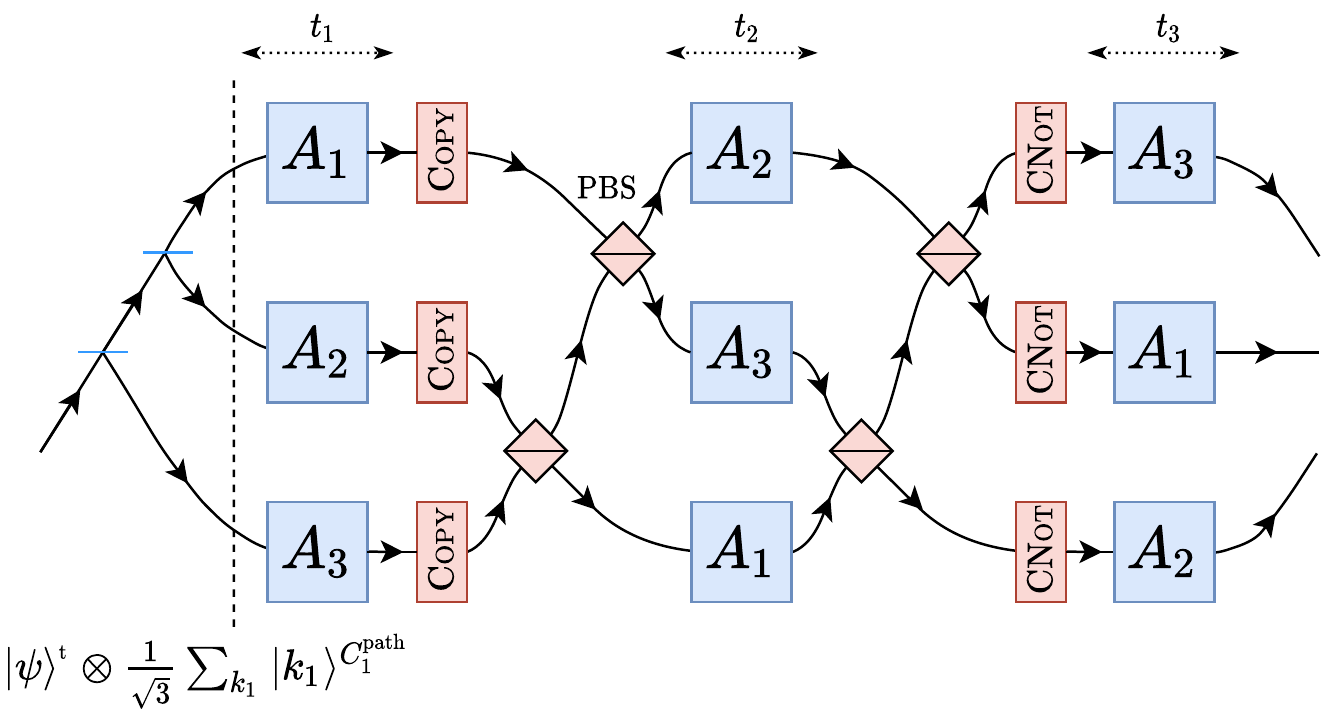}
\caption{(Fig. 12 \protect \cite{Wechs_2021}) Schematic of an experimental implementation of the dynamical switch using photons. Initially, the photon is sent to all agents in superposition with the path acting as the control bit. In the next step, the state of the target is copied onto the polarization via the COPY gates. This then allows the polarizing beam splitters (PBS) to guide the photon to the correct agent. Finally, PBS along with CNOT gates acting jointly on the target and the polarization implement the internal operation $V_3$.}
\label{fig:new_QCQC}
\end{figure}

The dynamical switch is a process between three agents. The target system under consideration is simply a qubit with computational basis $\ket{0}, \ket{1}$ and therefore the input and output spaces of the agents are $\C{H}^{A^{I/O}_n} = \mathbb{C}^2$ for all $n$. The global past and future are taken to be trivial, $\dim \C{H}^P = \dim \C{H}^F = 1$. The process is then defined in the QC-QC framework via the operators

\begin{gather}\label{eq:switchkraus}
\begin{aligned}
    &V^{\rightarrow k_1}_{\emptyset, \emptyset} = \frac{1}{\sqrt{3}} \ket{\psi}^{A^I_{k_1}} \\
    &V^{\rightarrow k_2}_{\emptyset, k_1} = \begin{cases}  \ket{0}^{A^I_{k_2}} \bra{0}^{A^O_{k_1}}, k_2 = k_1 + 1 \pmod 3 \\ \ket{1}^{A^I_{k_2}} \bra{1}^{A^O_{k_1}}, k_2 = k_1 + 2 \pmod 3 \end{cases} \\
    &V^{\rightarrow k_3}_{\{k_1\}, k_2} = \begin{cases}  \ket{0}^{A^I_{k_3}} \ket{0}^{\alpha_3} \bra{0}^{A^O_{k_2}} + \ket{1}^{A^I_{k_3}} \ket{1}^{\alpha_3} \bra{1}^{A^O_{k_2}}, k_2 = k_1 + 1 \pmod 3 \\ \ket{0}^{A^I_{k_3}} \ket{1}^{\alpha_3} \bra{0}^{A^O_{k_2}} + \ket{1}^{A^I_{k_3}} \ket{0}^{\alpha_3} \bra{1}^{A^O_{k_2}}, k_2 = k_1 + 2 \pmod 3 \end{cases}\\
    &V^{\rightarrow F}_{\{k_1, k_2\}, k_3} = \mathbb{1}^{A^O_{k_3} \alpha_3 \rightarrow \alpha_F^{(1)}} \otimes \ket{k_3}^{\alpha^{(2)}_F}.
\end{aligned}
\end{gather}

We introduced several ancillary systems, $\alpha_3$ which is two-dimensional, $\alpha_F^{(1)}$ which is four-dimensional and $\alpha_F^{(2)}$ which is three-dimensional. Additionally, we define $\C{H}^{\alpha_F} = \C{H}^{\alpha_F^{(1)} \alpha_F^{(2)}}$. 

The action of this QC-QC on the target system can be viewed as follows: the operations $V^{\rightarrow k_1}_{\emptyset, \emptyset}$ prepare some state $\ket{\psi}$ and send it to the agent $A_{k_1}$. Next, the operations $V^{\rightarrow k_2}_{\emptyset, k_1}$ send the output of that agent coherently to one of the remaining two agents. It sends the component in the state $\ket{0}$ to $A_{k_1 + 1}$ and the component in the state $\ket{1}$ to $A_{k_1 + 2}$. The operations $V^{\rightarrow k_3}_{\{k_1\}, k_2}$ then send the state to $A_{k_3}$, attaching an ancillary state $\ket{0}^{\alpha_3}$ if $k_2 = k_1 + 1$ or $\ket{1}^{\alpha_3}$ if $k_2 = k_1 + 2$ and applying a CNOT gate to the ancillary with the target system acting as the control. Finally, the operations $V^{\rightarrow F}_{\{k_1, k_2\}, k_3}$ send the output of $A_{k_3}$ along with $\ket{k_3}$ to the ancillary $\alpha_F$. Both the second and third step in this process represent dynamical control of the causal order.

The process matrix and process vector of this QC-QC are

\begin{gather}
\begin{aligned}
    W &= \tr_{\alpha_F} \ket{w} \bra{w} \\
    \ket{w} &= \sum_{(k_1, k_2, k_3)} \dket{V^{\rightarrow k_1}_{\emptyset, \emptyset}} * \dket{V^{\rightarrow k_2}_{\emptyset, k_1}} * \dket{V^{\rightarrow k_3}_{\{k_1\}, k_2}} * \dket{V^{\rightarrow F}_{\{k_1, k_2\}, k_3}}
\end{aligned}
\end{gather}

A possible experimental implementation of this QC-QC using photons is depicted in \cref{fig:new_QCQC}. As all input and output spaces are two-dimensional, the target system can be taken to belong to some two-dimensional Hilbert space $\C{H}^t$ which describes some quantum state of the photon. The controls $C_1$ and $C_3$ can be encoded via the path the photon takes with $\ket{\emptyset, k_1}^{C_1} = \ket{k_1}^{C_1^{\text{path}}}$ and $\ket{\{k_1, k_2\}, k_3}^{C_3} = \ket{k_3}^{C_3^{\text{path}}}$. On the other hand, $C_2$ is completely determined if we know $k_2$ and whether $k_2 = k_1 + 1$ or $k_2 = k_1 + 2$ and so it can be encoded in the path and some two-dimensional system, $\ket{\{k_1\}, k_2}^{C_2} = \ket{\alpha}^{\alpha} \otimes \ket{k_2}^{C_2^{\text{path}}}$. We can encode $\alpha$ in the polarization degree of freedom of the photon and take $\ket{\alpha}^{\alpha} = \ket{0}^\alpha$ if $k_2 = k_1 + 1$ and $\ket{\alpha}^{\alpha} = \ket{1}^\alpha$ if $k_2 = k_1 + 2$.

The isometry $V_2$ is then achieved as depicted in \cref{fig:new_QCQC}. First a COPY gate $V_{\text{COPY}} = \sum_{i=0,1} \ket{i}^{t} \ket{i}^{\alpha} \bra{i}^t $ is applied which copies the state of the target onto the polarization, contributing to the control system $C_2$. Polarizing beam splitters (PBS) then use the polarization to send the system to the correct agent $A_{k_2}$. Meanwhile, $V_3$ is implemented by first applying PBS as depicted in the figure and then a CNOT gate $V_{\text{CNOT}} = \sum_{ij} \ket{i}^t \ket{i \oplus j}^\alpha \bra{i}^t \bra{j}^\alpha$ and the system $\alpha$ switches from being part of the control to being the ancillary $\alpha_3$.

\section{Causal boxes}\label{sec:cb}

The causal box framework \cite{Portmann_2017} models information processing protocols satisfying relativistic causality in a fixed background spacetime (such as Minkowski spacetime), where quantum messages may be exchanged in superpositions of different spacetime locations. This allows for quantum systems to be delocalized over a fixed background spacetime and can model physical scenarios where quantum operations are in a superposition of being applied at different spacetime locations. The framework is, however, very different from the process matrix or QC-QC approaches as it allows for multiple rounds of information processing, is closed under arbitrary composition and does not divide protocols into local operations coupled with processes. Additionally, it explicitly models sending ``nothing" with a vacuum state $\ket{\Omega}$. The existence of such a state and the possibility of sending superpositions of ``something" and ``nothing", $\alpha \ket{\psi} + \beta \ket{\Omega}$, has physical relevance. For example, the controlled application of an unknown unitary to a target system is not possible without the vacuum state \cite{Friis_2014, Zhou_2011}.

\subsection{Messages and Fock spaces}\label{sec:fock}

Let us now begin to formally define the causal box framework. We begin by defining the space of ordered messages. 

\begin{defi}[Messages \cite{Portmann_2017}]
Let $\C{H}^A$ be a Hilbert space and $\C{T}$ a partially ordered set. A message is a state $\ket{\psi}^A \otimes \ket{t}$ with $\ket{\psi}^A \in \C{H}^A$ and $t \in \C{T}$. The space of a single message is $\C{H}^A \otimes l^2(\C{T})$ where $l^2(\C{T})$ is the sequence space with bounded 2-norm of $\C{T}$. 
\end{defi}

The set $\C{T}$ can be physically interpreted as containing spacetime points (or simply the time information if the spatial positions of the agents are fixed) and we will also call it the set of positions. We see that, unlike in the process matrix framework, we model a message as a pair $\ket{\psi}^A \otimes \ket{t}$ where $\ket{\psi}^A$ essentially contains the content of the message while $\ket{t}$ contains the position information of when and possibly where the message was sent or received. This approach allows us to ensure causality is respected and is therefore also a necessary ingredient for the composability of causal boxes. We will frequently write $\ket{\psi, t}^A$ instead of $\ket{\psi}^A \otimes \ket{t}$.

For the process matrix, we modeled the state space of a wire as a Hilbert space. As mentioned earlier, the causal box framework allows the sending of multiple messages or, in other words, there can be any number of messages on a wire. To capture this, we model the state space of the wire in the causal box framework as the symmetric Fock space of the single message space 

\begin{equation}
    \C{F}(\C{H}^A \otimes l^2(\C{T})) = \bigoplus_{n=0}^\infty \vee^n (\C{H}^A \otimes l^2(\C{T}))
\end{equation}

where $\vee^n (\C{H}^A \otimes l^2(\C{T}))$ is the symmetric subspace of $(\C{H}^A \otimes l^2(\C{T}))^{\otimes n}$. The one-dimensional space $(\C{H}^{A} \otimes l^2(\C{T}))^{\otimes 0}$ corresponds to the vacuum state $\ket{\Omega}$.

The reason we use the symmetric subspace is that all the ordering information is already contained in the position label $t \in \C{T}$. There is thus no difference between the state $\ket{0, t_1} \otimes \ket{1, t_2}$ and $\ket{1, t_2} \otimes \ket{0, t_1}$ for $t_1, t_2 \in \C{T}$.

The above notation is quite cumbersome so we will often abbreviate it by writing $\C{F}^{\C{T}}_{A} = \C{F}(\C{H}^A \otimes l^2(\C{T}))$.

Let us also explicitly define the inner product in a Fock space, as this will make the discussion in the results section and several of the proofs of the propositions and lemmas there easier to follow. In general, the spaces $\vee^n (\C{H}^A \otimes l^2(\C{T}))$ for different $n$ are orthogonal to each other. Physically, this means we can always perfectly distinguish between states with different numbers of messages. We can therefore define the inner product separately for each $n$ which is induced by the inner product over $\vee^n (\C{H}^A \otimes l^2(\C{T}))$ and then linearly extend that to the entire Fock space. In turn, we can consider $\vee^n (\C{H}^A \otimes l^2(\C{T}))$ to be a subspace $(\C{H}^A \otimes l^2(\C{T}))^{\otimes n}$ and simply use the inner product on the general product space restricted to the symmetric subspace. Let us consider how this works out explicitly. A state in the space $\vee^n (\C{H}^A \otimes l^2(\C{T}))$ can be written as

\begin{equation}
    \bigodot_{k=1}^n \ket{\psi_k, t_k} \coloneqq \frac{1}{\sqrt{n!}} \sum_{\pi \in S^n} \bigotimes_{k=1}^n \ket{\psi_{\pi(k)}, t_{\pi(k)}}
\end{equation}

with $\psi_k \in \C{H}$, $t_k \in \C{T}$ and $S^n$ is the set of permutations of $n$ elements. Given another state $\bigodot_{l=1}^n \ket{\phi_l, t'_l} \in \vee^n (\C{H}^A \otimes l^2(\C{T}))$, we can then write their inner product as an inner product over $(\C{H}^A \otimes l^2(\C{T}))^{\otimes n}$

\begin{gather}\label{eq:fock_inner_n}
\begin{aligned}
    \bigodot_{k=1}^n \bra{\psi_k, t_k} \bigodot_{l=1}^n \ket{\phi_l, t'_l} &= \frac{1}{\sqrt{n!}} \frac{1}{\sqrt{n!}} \sum_{\pi \in S^n} \bigotimes_{k=1}^n \bra{\psi_{\pi(k)}, t_{\pi(k)}} \sum_{\pi' \in S^n} \bigotimes_{l=1}^n \ket{\phi_{\pi'(l)}, t'_{\pi'(l)}} \\
    &= \frac{1}{n!} \sum_{\pi, \pi' \in S^n} \prod_{k=1}^n \braket{\psi_{\pi(k)}, t_{\pi(k)}|\phi_{\pi'(k)}, t'_{\pi'(k)}} \\
    &= \sum_{\pi \in S^n} \prod_{k=1}^n \braket{\psi_k, t_k|\phi_{\pi(k)}, t'_{\pi(k)}} \\
    &= \sum_{\pi \in S^n} \prod_{k=1}^n \braket{\psi_k|\phi_{\pi(k)}} \delta_{t_k, t'_{\pi(k)}}.
\end{aligned}
\end{gather}

We can summarize the above in the following definition.

\begin{defi}[Inner product in Fock spaces \cite{bhatia1996matrix}]
Let $\ket{\psi} = a \ket{\Omega} + \sum_{n=1}^\infty \bigodot_{k=1}^n \ket{\psi_{n,k}, t_{n,k}}, \ket{\phi} = b \ket{\Omega} + \sum_{n=1}^\infty \bigodot_{k=1}^n \ket{\phi_{n,k}, t'_{n,k}} \in \C{F}(\C{H}^A \otimes l^2(\C{T}))$. Then their inner product is given by 

\begin{equation}\label{eq:fock_inner}
    \braket{\psi|\phi} = \overline{a} b + \sum_{n=1}^\infty \sum_{\pi \in S^n} \prod_{k=1}^n \braket{\psi_{n,k}|\phi_{n, \pi(k)}} \delta_{t_{n,k}, t'_{n,\pi(k)}}.
\end{equation}

\end{defi}

\paragraph{Wire isomorphisms:} There exist two useful isomorphisms regarding the splitting of wires. For any $\C{H}^A = \C{H}^B \oplus \C{H}^C$ and any $\C{T}' \subseteq \C{T}$, we have

\begin{gather}\label{eq:wireiso}
\begin{aligned}
    \C{F}^{\C{T}}_{A} &\cong \C{F}^{\C{T}}_{B} \otimes \C{F}^{\C{T}}_{C} \\
    \C{F}^{\C{T}}_{A} &\cong \C{F}^{\C{T}'}_{A} \otimes \C{F}^{\C{T} \backslash \C{T}'}_{A}.
\end{aligned}
\end{gather}

The first isomorphism implies that two wires, one carrying messages from one Hilbert space $\C{H}^B$ and the other carrying messages from another Hilbert space $\C{H}^C$, is equivalent to a single wire carrying messages from the direct sum of the two Hilbert spaces. This will allow us to formally define causal boxes with just a single input and a single output wire. 

The second isomorphism states that it does not matter whether there is one wire that carries all messages or a wire for each $t \in \C{T}$.

\begin{example}[Norm of a Fock space state]
Consider a wire which has a superposition of no messages and two messages at time $t$, one in state $\ket{0}$ and the other in state $\ket{1}$, on it such that the probability to obtain either outcome when measuring the number of messages is equal. The overall state on the wire can be written as $\ket{\psi} = \frac{1}{\sqrt{2}} \ket{\Omega} + \frac{1}{\sqrt{2}} \ket{0, t} \odot \ket{1, t} = \frac{1}{\sqrt{2}} \ket{\Omega} + \frac{1}{2} (\ket{0, t} \otimes \ket{1, t} + \ket{1, t} \otimes \ket{0, t})$. We can check that this state is normalized either by using the form of $\ket{\psi}$ where we expanded the symmetric tensor product

\begin{gather}
\begin{aligned}
    \braket{\psi|\psi} &= (\frac{1}{\sqrt{2}} \bra{\Omega} + \frac{1}{2} (\bra{0, t} \otimes \bra{1, t} + \bra{1, t} \otimes \bra{0, t}))(\frac{1}{\sqrt{2}} \ket{\Omega} + \frac{1}{2} (\ket{0, t} \otimes \ket{1, t} + \ket{1, t} \otimes \ket{0, t})) \\
    &= \frac{1}{2} \braket{\Omega|\Omega} + \frac{1}{4} (\braket{0, t|0,t} \braket{1,t|1,t} + \braket{1, t|0,t} \braket{0,t|1,t} + \braket{0, t|1,t} \braket{1,t|0,t} + \braket{1, t|1,t} \braket{0,t|0,t}) \\
    &= \frac{1}{2} + \frac{1}{4} 2 = 1
\end{aligned}
\end{gather}
or by using \cref{eq:fock_inner} with $a = b = \frac{1}{\sqrt{2}}, \ket{\psi_{1, 1}} = \ket{\phi_{1, 1}} = \frac{1}{\sqrt{2}} \ket{0}$ and $\ket{\psi_{1, 2}} = \ket{\phi_{1, 2}} = \frac{1}{\sqrt{2}} \ket{1}$

\begin{gather}
\begin{aligned}
    \braket{\psi|\psi} &= \frac{1}{2} \delta_{t, t}+ \frac{1}{2} (\braket{0|0} \braket{1|1} \delta_{t,t} + \braket{0|1} \braket{1|0} \delta_{t,t} \\
    &= \frac{1}{2}  + \frac{1}{2} = 1.
\end{aligned}
\end{gather}

One can also check that the probabilities to find the state to be the vacuum $\ket{\Omega}$ or $\ket{0,t} \odot \ket{1,t}$ are $\frac{1}{2}$ each as required by taking the square of the inner product between these states and $\ket{\psi}$.

\end{example}

\begin{example}[Fock space inner product of a multi-system state]
Consider two wires $A$ and $B$. Let the state on wire $A$ be $\ket{0, t}^A \odot \ket{1, t}^A$ and the state on wire $B$ be $\ket{0, t}^B \odot \ket{0, t'}^B$ with $t \neq t'$. When writing the overall state it does not matter whether we write the state on $A$ or $B$ first, $\ket{0, t}^A \odot \ket{1, t}^A \otimes \ket{0, t}^B \odot \ket{0, t'}^B = \ket{0, t}^B \odot \ket{0, t'}^B \otimes \ket{0, t}^A \odot \ket{1, t}^A$. We just have to make sure that when we take the inner product of such multi-wire states, we separately calculate the inner product of the states on each wire and then take the product over the wires. For example, consider the inner product of the above state with the state where the wires are exchanged

\begin{gather}
\begin{aligned}
    (\bra{0, t}^A& \odot \bra{1, t}^A \otimes \bra{0, t}^B \odot \bra{0, t'}^B)(\ket{0, t}^B \odot \ket{1, t}^B \otimes \ket{0, t}^A \odot \ket{0, t'}^A) \\
    &= (\bra{0, t}^A \odot \bra{1, t}^A)(\ket{0, t}^A \odot \ket{0, t'}^A)(\bra{0, t}^B \odot \bra{0, t'}^B)(\ket{0, t}^B \odot \ket{1, t}^B) = 0
\end{aligned}
\end{gather}
and not 1 as one might expect if one just looked at the order and not the system labels. It is thus important to be aware what system a message belongs to when considering states over multiple wires. 

\end{example}

\begin{remark}[Different conventions for Fock space states]
When doing calculations and in particular when trying to interpret the inner product as a tool to find the probability of obtaining a certain measurement result, one needs to be a bit careful. This is because the symmetric tensor product of two normalized vectors $\ket{\psi} \odot \ket{\phi}$ in the way we defined it here is not necessarily normalized itself. For example, $\ket{0} \odot \ket{0} = \frac{1}{\sqrt{2}} (\ket{0} \otimes \ket{0} + \ket{0} \otimes \ket{0}) = \sqrt{2} \ket{0} \otimes \ket{0}$. The symmetric tensor product has norm 1 only when the individual tensor factors are orthogonal to each other. We could define the symmetric tensor product $\odot$ in such a way that the norm is always the product of the norms of the individual tensor factors (for example, by normalizing the symmetric tensor products of all basis states). However, this would make the expanded expression of a general symmetric tensor product $\bigodot_{k=1}^n \ket{\psi_n}$ more complicated. It will be much more useful later to keep this expression simple than it is to be able to easily write down a normalized multi-message state. Additionally, note that the distinction is physically meaningless. The expression $\ket{\psi} \odot \ket{\phi}$ is ultimately just a label for an element of the Fock space and as such has no physical meaning except for that which is induced by said element.
\end{remark}

\begin{remark}[Inner product and the permanent of a matrix \cite{bhatia1996matrix}]
The symmetric tensor product in \cref{eq:fock_inner_n} is also the permanent of the matrix $A$ with elements $A_{ij}:=\braket{\psi_i, t_i|\phi_j, t_j}$.
\end{remark}

\subsection{Cuts and causality}\label{sec:causality}

Given what we discussed up until now, a causal box can be viewed as a map from an input wire with Fock space $\C{F}^{\C{T}}_{X}$ to an output wire with Fock space $\C{F}^{\C{T}}_{Y}$. Defining such a map could, however, turn out difficult if the set $\C{T}$ under consideration is infinite. For example, we could consider a map that outputs a fixed state for each $t \in \C{T}$ which would yield an infinite tensor product of this state as the overall output. However, if we consider the output up to a certain point, it is finite and there are no issues. So instead of viewing the causal box as a single map we should view it as a collection of maps, each of which describes the behavior up to a certain point. Additionally, the causal box should respect causality, that is for $t_1, t_2 \in \C{T}$ with $t_1 \leq t_2$ an input at $t_2$ should not influence an output at $t_1$.

In order to formalize these requirements, we introduce the concept of cuts of $\C{T}$ which formalize our notion of ``up to a certain point".

\begin{defi}[Cuts \cite{Portmann_2017}]
A cut of $\C{T}$ is a subset $\C{C} \subseteq \C{T}$ such that $\C{C} = \bigcup_{t \in \C{C}} \C{T}^{\leq t}$ where $\C{T}^{\leq t} = \{p \in \C{T}: p \leq t\}$. A cut $\C{C}$ is bounded if there exists a point $t \in \C{T}$ such that $\C{C} \subseteq \C{T}^{\leq t}$.
The set of all cuts of $\C{T}$ is denoted as $\mathfrak{C}(\C{T})$ and the set of all bounded cuts as $\overline{\mathfrak{C}}(\C{T})$.
\end{defi}

The requirement of causality can be formalized by requiring that for each causal box, there exists a causality function $\chi$ which tells us what set of points in $\C{T}$ can have an influence on the output ``up to a certain point". 

\begin{defi}[Causality function \cite{Portmann_2017}]
A function $\chi: \mathfrak{C}(\C{T}) \rightarrow \mathfrak{C}(\C{T})$ is a causality function if it satisfies the following conditions:

\begin{gather}
\begin{aligned}
    &\forall \C{C}, \C{D} \in \mathfrak{C}(\C{T}), \chi(\C{C} \cup \C{D}) = \chi(\C{C}) \cup \chi(\C{D}) \\
    &\forall \C{C}, \C{D} \in \mathfrak{C}(\C{T}), \C{C} \subseteq \C{D} \implies \chi(\C{C}) \subseteq \chi(\C{D}) \\
    &\forall \C{C} \in \overline{\mathfrak{C}}(\C{T}) \backslash \{\emptyset\}, \chi(\C{C}) \subsetneq \C{C} \\
    &\forall \C{C} \in \overline{\mathfrak{C}}(\C{T}), \forall t \in \C{C}, \exists n \in \mathbb{N}, t \not \in \chi^n(\C{C})
\end{aligned}
\end{gather}
\end{defi}

The first three conditions capture our intuition that only inputs coming before the output can influence the output. The meaning of the last condition is less immediately apparent. This condition, however, is necessary to avoid another case of infinite outputs. Consider a system that outputs $\ket{0}$ at $1-t/2, t \in [0, 1)$ if it receives an input $\ket{0}$ at time $1-t$ and additionally it outputs $\ket{0}$ at $t=0$. If we loop its output back to its input, the system outputs a message at $t=1/2, 3/4, 7/8,...$. However, the outputs in the cut $[0,1]$ then depend on inputs that are arbitrarily close to $t=1$. This means that $\chi([0,1]) = [0,1)$ and also $\chi([0,1)) = [0,1)$. The fourth condition is not fulfilled. Hence, we see that the fourth condition allows us to exclude systems that produce an infinite number of outputs in a finite amount of time.

\subsection{Definition of causal boxes}\label{sec:defcb}

Having defined the message space as well as cuts and the causality function, we can now finally formally define causal boxes.

\begin{defi}[Causal boxes \cite{Portmann_2017}]\label{def:causalbox}
A causal box is a system with an input wire $X$ and an output wire $Y$, defined by a set of mutually consistent, CPTP maps

\begin{equation}
    \Phi = \{\Phi^{\C{C}}: \mathfrak{T}(\C{F}^{\chi(\C{C})}_X) \rightarrow \mathfrak{T}(\C{F}^{\C{C}}_Y)\}_{\C{C} \in \overline{\mathfrak{C}}}
\end{equation}

where $\chi$ is a causality function and $\mathfrak{T}(\C{F})$ denotes the trace class operators over the space $\C{F}$. Furthermore, these maps must fulfill

\begin{gather}\label{eq:cbreqs}
\begin{aligned}
    \Phi^{\C{C}} &= \tr_{\C{D} \backslash \C{C}} \circ \Phi^{\C{D}} \\
    \Phi^{\C{C}} &= \Phi^{\C{C}} \circ \tr_{\C{T} \backslash \chi(\C{C})}
\end{aligned}
\end{gather}
where $\tr_{\C{D} \backslash \C{C}}$ corresponds to tracing out the messages in positions in $\C{D} \backslash \C{C}$ and analogously for $\tr_{\C{T} \backslash \chi(\C{C})}$.

\end{defi}

The first of \cref{eq:cbreqs} states that the various maps making up the causal box must be consistent with each other. The second equation encodes the causality condition, i.e. that we can calculate the outputs in $\C{C}$ from the inputs in $\chi(\C{C})$ which is equivalent to saying that inputs after $\chi(\C{C})$ cannot influence outputs in $\C{C}$.

\subsection{Representations of causal boxes}\label{sec:repcb}

The causal box admits two alternative representations \cite{Portmann_2017}, the Choi representation and the sequence representation. These are often easier to deal with than \cref{def:causalbox}. In particular, we will use the Choi representation to define composition of causal boxes while the sequence representation offers us a convenient way to check whether something is a causal box.

\subsubsection{Choi representation}\label{sec:choicb}

As the maps $\{\Phi^{\C{C}}\}_{\C{C} \in \overline{\mathfrak{C}}}$ are CPTP maps, they admit a Choi representation

\begin{equation}\label{eq:CB_choi}
    R_{\Phi^{\C{C}}} = \sum_{i,j} \proj{i}{j} \Phi^{\C{C}}(\proj{i}{j}).
\end{equation}

with some appropriate basis $\ket{i}$ of the Fock space. The causal box given by the set of CPTP maps $\{\Phi^{\C{C}}\}_{\C{C} \in \overline{\mathfrak{C}}}$ can thus be equivalently represented by the set of Choi matrices of this set, $\{R_{\Phi^{\C{C}}}\}_{\C{C} \in \overline{\mathfrak{C}}}$. However, note that the Fock space is infinite-dimensional which means that the above operator can be unbounded. This can be dealt with by using a more general definition of the Choi representation which simplifies to \cref{eq:CB_choi} whenever \cref{eq:CB_choi} is bounded. We will, however, not discuss this here as we will later only consider causal boxes where we restrict to some finite-dimensional subspace of the Fock space. 

\subsubsection{Sequence representation}\label{sec:seqrep}

The sequence representation is based on the Stinespring dilation \cite{Stinespring_1955} which states that for a CPTP map $\Phi: \C{L}(\C{H}^A) \rightarrow \C{L}(\C{H}^B)$, there exists an isometry $U: \C{H}^A \rightarrow \C{H}^{B \alpha}$ for some Hilbert space $\C{H}^\alpha$ such that $\Phi (\rho^A) = \tr_\alpha (U \rho^A U^\dagger)$. 

This can be adapted to the case of causal boxes \cite{Portmann_2017}. One can find Stinespring representations $\{U^{\C{C}}: \C{F}^{\chi(\C{C})}_X \rightarrow \C{F}^{\C{C}}_Y \otimes \mathcal{H}^{\alpha_\C{C}}\}_{\C{C} \in \mathfrak{C}(\C{T})}$ for the maps making up a causal box where for any $\C{C}, \C{D} \in \overline{\mathfrak{C}}(\C{T}), \C{C} \subseteq \C{D}$ there exists an isometry $V: \C{H}^{\alpha_{\C{C}}} \otimes \C{F}^{\C{D} \backslash \chi(\C{C})}_X \rightarrow \C{F}^{\C{D} \backslash \C{C}}_Y \otimes \C{H}^{\alpha_{\C{D}}}$ such that 

\begin{equation}\label{eq:stinespring}
    U^{\C{D}} = (\mathbb{1}^{\C{C}}_Y \otimes V)(U^{\C{C}} \otimes \mathbb{1}^{\C{D} \backslash \chi(\C{C})}_X)
\end{equation}

where $\mathbb{1}^{\C{C}}_Y$ is the identity on states on wire $Y$ with position in $\C{C}$.

\begin{figure}
    \centering
    \includegraphics[width=\textwidth]{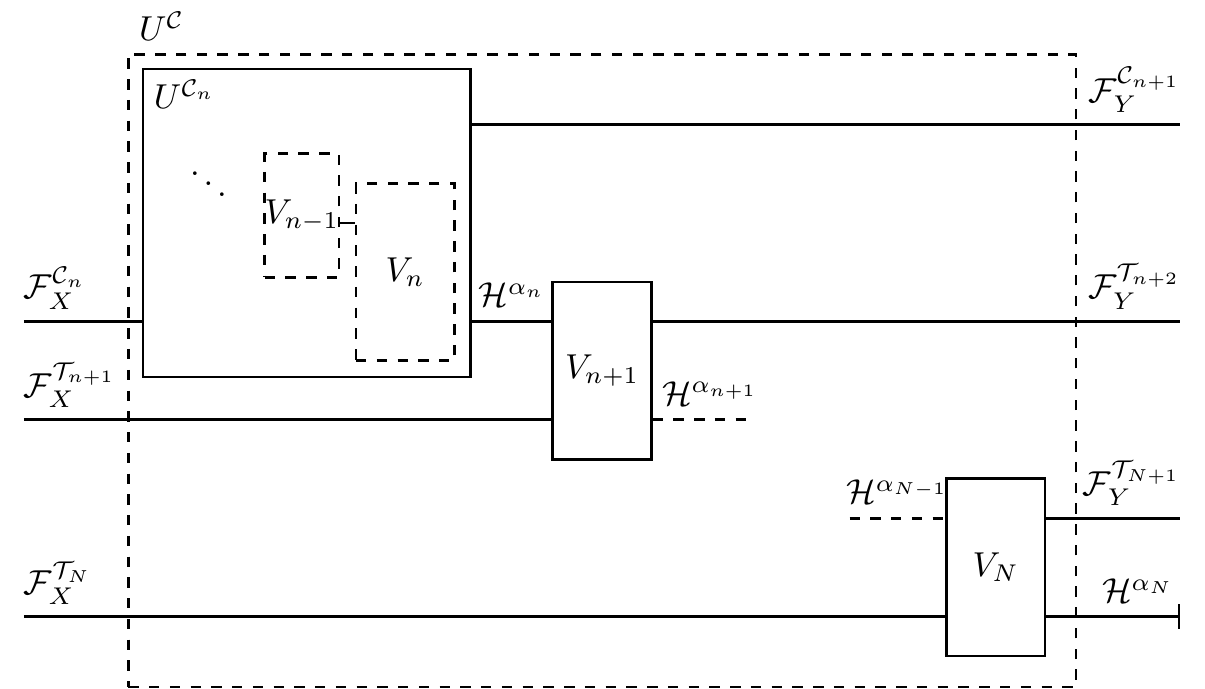}
    \caption{(Fig. 10 \protect \cite{Portmann_2017} \textcopyright 2017 IEEE, with small adaptations) Decomposition of a map into a sequence of isometries $V_n$. A causal box can thus be understood via its action during each time/position slice.}
    \label{fig:sequencerep}
\end{figure}

Note now that a causal box restricted to some cut is again a causal box. Thus, we can find a Stinespring representation of the restriction and plug it into \cref{eq:stinespring}. By doing this recursively as depicted in \cref{fig:sequencerep}, we obtain the sequence representation.

\begin{defi}[Sequence representation \cite{Portmann_2017}]
Let $\C{C}_1 \subseteq ... \subseteq \C{C}_N \subseteq \C{C}_{N+1} = \C{C}$ for $N \in \mathbb{N}$ be a sequence of cuts such that $\bigcap_{n=1}^{N+1} \C{C}_n = \emptyset$ and we define $\C{T}_n := \C{C}_n \backslash \C{C}_{n-1}$ for $n \geq 2$ and $\C{T}_1 = \C{C}_1$. A sequence representation of a causal box is given by such a sequence of cuts and a set of isometries 

\begin{equation}\label{eq:seqisom}
    \{V_n: \C{H}^{\alpha_{n-1}} \otimes \C{F}^{\C{T}_n}_X \rightarrow \C{F}^{\C{T}_{n+1}}_Y \otimes \C{H}^{\alpha_n}\}_{n=1}^N
\end{equation}

such that for all $n \geq 2$

\begin{equation}\label{eq:seqrep}
    U^{\C{C}} = (\mathbb{1}^{\C{C}_{N}}_Y \otimes V_N) ... (\mathbb{1}^{\C{C}_{n+1}}_Y \otimes V_{n+1} \otimes \mathbb{1}^{\C{C}_{N-1} \backslash \C{C}_{n+1}}_X)(U^{\C{C}_n} \otimes \mathbb{1}^{\C{C}_{N-1} \backslash \C{C}_n}_X).
\end{equation}

\end{defi}

The sequence representation can be viewed as a causal unraveling of the causal box. Looking at the set $\C{T}$ as a time parameter, each isometry $V_n$ tells us what the causal box does during some time span.

It is possible that the sequence of cuts in the above definition is not finite. If that is the case, our definition is slightly awkward as there is actually no ``first" isometry $V_1$. There is always another isometry that acts on even earlier inputs.\footnote{In the original definition of the sequence representation in \cite{Portmann_2017} the order of the isometries is reversed, that is $V_1$ denotes the last isometry. We choose the opposite convention here because we later wish to establish a connection between the isometry $V_n$ of the sequence representation and the isometry $V_n$ from the QC-QC framework.} However, we will be mostly concerned with finite and discrete sets $\C{T}$ in which case the sequence of cuts is always finite. 

A very useful fact is that not only does every causal box have a sequence representation, where the set of cuts is given recursively by $\C{C}_{n-1} = \chi(\C{C}_n)$, but also any set of isometries as in \cref{eq:seqisom} is the sequence representation of a causal box \cite{Portmann_2017}. This means if we want to find a causal box description of some process, we can define the action of the process for each time step and as long as this yields isometries, we can be sure that the final result is a causal box.


\subsection{Composition of causal boxes}\label{sec:composition}

We previously mentioned that causal boxes can be composed. In this section, we will discuss how to do this.

Composition is essentially achieved by taking the output wire of one causal box and connecting it to the input wire of another causal box. Subwires can also be used instead of considering all inputs and outputs. The set $\C{T}$ and the causality function $\chi$ then ensure that this is well defined, even if we make loops by also connecting the output wire of the second causal box to the input wire of the first causal box. Note that this picture of composition is also what we intuitively view the action of the link product to be. We will prove this intuition in \cref{sec:statespace}.


Composition of causal boxes can be defined as a two-step process. The first step is called parallel composition \cite{Portmann_2017}. It can be viewed simply as the union of the inputs/outputs of two causal boxes. This is equivalent to just taking the tensor product. The second step is called loop composition. It is an operation on a single CP map $\Phi$ and involves looping an output (sub-)wire of $\Phi$ back to one of its input (sub-)wires. 

We will once again only consider the finite-dimensional case. Loop composition is then given by the following definition.

\begin{defi}[Loop composition \cite{Portmann_2017, }]\label{def:comp}
Consider a CP map $\C{M}: \mathcal{L}(\mathcal{H}^{AB})\rightarrow\mathcal{L}(\mathcal{H}^{CD})$ with input systems $A$ and $B$ and output systems $C$ and $D$ with $\C{H}^B \cong \C{H}^C$. Let $\{\ket{k}^C\}_k$ be any orthonormal basis of $\mathcal{H}^C$, and denote with $\{\ket{k}^B\}_k$ the corresponding basis of $\mathcal{H}^B$ i.e., for all $k$, $\ket{k}^C\cong\ket{k}^B$. The new system resulting from looping the output system $C$ to the input system $B$, $\C{M}^{C\hookrightarrow B}$ is given as 
\begin{equation}
    \label{eq:loopfinite}
    \C{M}^{C\hookrightarrow B} (\ket{\psi}^A \bra{\phi}^A) = \sum_{k,l} \bra{k}^C\C{M}(\ket{\psi}^A\ket{k}^B\bra{l}^B\bra{\phi}^A)\ket{l}^C.
\end{equation}
\end{defi}

The sequential composition of two CP maps $\C{M}_A: \C{L}(\C{H}^A) \rightarrow \C{L}(\C{H}^C)$ and $\C{M}_B: \C{L}(\C{H}^B) \rightarrow \C{L}(\C{H}^D)$ is then given by $(\C{M}_A \otimes \C{M}_B)^{C\hookrightarrow B}$.

\begin{remark}[Basis dependence of the loop composition] Note that the composition is basis-dependent in the same sense that the Choi isomorphism and the link product are basis dependent. The bases we use for composition should thus be the same bases that we use to calculate Choi matrices and link products to obtain consistent results.
\end{remark}

\subsection{Process boxes}\label{sec:pb}

While there is some similarity between the process matrix and the causal box framework, they are nonetheless quite different. Process matrices do not assume a fixed background spacetime while causal boxes do so via the set $\C{T}$. Furthermore, the process matrix framework distinguishes between local labs and the environment, the latter being described by the process matrix itself, while causal boxes only have systems which can be composed to form other systems. The frameworks also treat messages rather differently. While the process matrix framework considers only a single message that is passed between the various local labs, the causal box framework allows for an arbitrary number of messages, including no message which is explicitly modeled by the vacuum state. Additionally, these messages can be looped back from the output of the causal box to the input at some later time. 

The process box framework \cite{Vilasini_2020} attempts to bridge this gap. The framework is obtained by imposing constraints on causal boxes such that they act like process matrices, i.e. acting as supermaps on other maps.

In the process box framework, we take the local labs from the process matrix framework and view them as causal boxes. In the process matrix framework, agents only act once, receiving a single message and then sending a single message. We can simulate this with causal boxes by restricting the input and output spaces to the 0- and 1-message spaces (or alternatively, starting with the process matrix picture and adding a vacuum state and position labels to the input and output spaces in \cref{def:locallabs}) and demanding that any output must come after some input. 


Furthermore, let us consider the set of position labels $\C{T}$, which we take to be discrete and finite as we are interested in processes that take place in some finite spatial region over a finite amount of time. It might be that an agent $A_i$ cannot send a non-vacuum state for some $t \in \C{T}$. Similarly, there might be some other $t' \in \C{T}$, for which that agent never receives a message. We can thus designate the set $\C{T}^I_i$ as the set of potential input positions and the set $\C{T}^O_i$ as the set of potential output positions for each agent $A_i$. By taking the union over all agents, we can also define $\C{T}^I = \bigcup_i \C{T}^I_i$ and $\C{T}^O = \bigcup_i \C{T}^O_i$. These are the sets of positions for which some agent could receive or respectively send a non-vacuum state. 

This now gives us the state spaces of the input wires $A^I_i$ and output wires $A^O_i$ of the agents as well as an additional wire $A^x_i$ carrying the measurement outcome.\footnote{For now we consider the full quantum instrument for the local operations of the agents, using the result wire $A^x_i$ to write it as a CPTP map. In \cref{sec:statespace}, we will discuss what the operation corresponding to a specific outcome looks like. Afterwards, we will mostly consider these, just like we did for the process matrix framework.}

\begin{gather}
\begin{aligned}
    \C{F}_{A^I_i} = \bigotimes_{t \in \C{T}^I_i} \C{F}^{t}_{A^I_i} = \bigotimes_{t \in \C{T}^I_i} (\underbrace{\ket{\Omega} \oplus \C{H}^{A^I_i}}_{\eqqcolon \C{H}^{\bar{A}^I_i}}) \otimes \ket{t} \\
    \C{F}_{A^O_i} = \bigotimes_{t \in \C{T}^O_i} \C{F}^{t}_{A^O_i} = \bigotimes_{t \in \C{T}^O_i} (\underbrace{\ket{\Omega} \oplus \C{H}^{A^O_i}}_{\eqqcolon \C{H}^{\bar{A}^O_i}}) \otimes \ket{t} \\
    \C{F}_{A^x_i} = \bigotimes_{t \in \C{T}^O_i} \C{F}^{t}_{A^x_i} = \bigotimes_{t \in \C{T}^O_i} (\underbrace{\ket{\Omega} \oplus \C{H}^{A^x_i}}_{\eqqcolon \C{H}^{\bar{A}^x_i}}) \otimes \ket{t}
\end{aligned}
\end{gather}

We can further formalize the above assumptions by introducing the following three conditions \cite{Vilasini_2020}.

\paragraph{Wire-space restriction (WSR):} The input and output wire spaces are associated with the restricted Fock spaces that allow only for zero or one message with positions in $\C{T}^{I/O}_i$. These are

\begin{gather}
\begin{aligned}
    \C{F}_{A^I_i}^{0/1} = \bigoplus_{n=0}^1 (\C{H}^{A^I_i} \otimes l^2(\C{T}^I_i))^{\otimes n} = \C{H}^{\bar{A}^I_i} \otimes l^2(\C{T}^I_i) \\
    \C{F}_{A^O_i}^{0/1} = \bigoplus_{n=0}^1 (\C{H}^{A^O_i} \otimes l^2(\C{T}^O_i))^{\otimes n} = \C{H}^{\bar{A}^O_i} \otimes l^2(\C{T}^O_i) \\
    \C{F}_{A^x_i}^{0/1} = \bigoplus_{n=0}^1 (\C{H}^{A^x_i} \otimes l^2(\C{T}^O_i))^{\otimes n} = \C{H}^{\bar{A}^x_i} \otimes l^2(\C{T}^O_i)
\end{aligned}
\end{gather}

Note that strictly speaking the second equality in each line does not hold as $(\C{H}^{A^I_i} \otimes l^2(\C{T}^I_i))^{\otimes 0} = \ket{\Omega}$ is one-dimensional while $\ket{\Omega} \otimes l^2(\C{T}^I_i)$ is $|\C{T}^I_i|$-dimensional and similarly for the output and measurement outcome spaces. However, we identify $\ket{\Omega, t} \cong \ket{\Omega}$ for all $t \in \C{T}$.

Using the wire isomorphism \cref{eq:wireiso}, we can identify the above spaces with the spaces spanned by the vectors $\{\ket{\psi, t_i} \otimes \ket{\Omega, \cancel{t_i}}: \ket{\psi} \in \C{H}^{\bar{A}^{I/O}_i}, t_i \in \C{T}^{I/O}_i\}$ with $\ket{\Omega, \cancel{t_i}}$ indicating vacuum states at all positions in $\C{T}^{I/O}_i$ except for $t_i$ and similarly for the result space $\C{F}_{A^x_i}^{0/1}$. While this notation is in principle ambiguous as to which agent's input or output positions this refers to, context will usually make this clear, especially if we add superscripts indicating the wire. For example, if we write $\ket{\psi, t_i}^{\bar{A}^I_i} \otimes \ket{\Omega, \cancel{t_i}}^{\bar{A}^I_i}$ it is clear that $\cancel{t_i} = \C{T}^I_i \backslash t_i$. This description makes the entanglement with the vacuum explicit. It is now also clear why we identify $\ket{\Omega, t} \cong \ket{\Omega}$ by plugging in $\Omega$ for $\psi$ in the above.

\paragraph{Local order (LO):} For each agent $A_i$ with non-trivial input and for each $t \in \C{T}^I_i$ the corresponding local operation $\C{M}^{a_i}_{\bar{A}_i}$, where $a_i$ denotes a measurement setting, acts on vacuum states as $\C{M}_{\bar{A}_i} \ket{\Omega, t}^{\bar{A}^I_i} = \ket{\Omega, \C{O}_i (t)}^{\bar{A}^O_i} \otimes \ket{\Omega, \C{O}_i (t)}^{\bar{A}^x_i}$ where $\C{O}_i: \C{T}^I_i \rightarrow \C{T}^O_i$ is an invertible function with $\C{O}_i(t) > t$ for all $t \in \C{T}^I_i$.

\bigskip

The LO assumption formalizes the notion that a non-vacuum output must be preceded by a non-vacuum input. In particular, LO implies that there is always an agent with trivial input which we can identify with the global past or there exists for all $t' \in \C{T}^O$ a $t \in \C{T}^I$ such that $t < t'$ (intuitively speaking, the process box is always the first to act if there are no agents with trivial input). The existence of the bijective maps $\C{O}_i$ also implies that $|\C{T}^I_i| = |\C{T}^O_i|$. 



Finally, as the process matrix framework does not contain any time dependency, one might expect that the agents in the process box framework should also act independently of the set $\C{T}$. However, this assumption can be slightly relaxed. It is enough to demand that the output at some $t' \in \C{T}^O_i$ of an agent $A_i$ only depends on the input at a single $t \in \C{T}^I_i$. We formalize this with the following constraint:

\paragraph{Operation-space restriction (OSR):} For each agent $A_i$, the corresponding local operation $\C{M}^{a_i}_{\bar{A}_i}$ is of the form $\C{M}^{a_i}_{\bar{A}_i}: \bigotimes_{t \in \C{T}^I_i} \C{M}_{\bar{A}_i}^{t, a_i}$ with $\C{M}_{\bar{A}_i}^{t, a_i}: \C{L}(\C{F}^t_{A^I_i}) \rightarrow \C{L}(\C{F}^{\C{O}_i(t)}_{A^O_i}) \otimes \C{L}(\C{F}^{\C{O}_i(t)}_{A^x_i}).$

\bigskip

Note that all these constraints are imposed on the local agents and not the process box. The process box is in principle free to violate them. For example, it could send non-vacuum states for some position $t$ even if it has not received a non-vacuum state at an earlier position $t'$ (as we noted earlier, this must be the case if there are no agents with trivial input). The process box only needs to have a well-defined composition with any set of local operations that respect WSR, LO and OSR. This essentially means that the process box only sends at most one message during the entire run to each agent.

A single additional constraint is imposed on the process box itself. It states that there are no ``passive" agents which receive only vacuum states during each $t \in \C{T}^I_i$ and send only vacuum states during each $t' \in \C{T}^O_i$. This is achieved by requiring that

\begin{equation}
    \tr_{\bar{A}^I_i} \circ \hat{\Phi} \circ \tr_{\bar{A}^O_i} \neq \hat{\Phi}
\end{equation}

for all $i$ where $\hat{\Phi}$ is the process box and $\tr_{\bar{A}^{I/O}_i}$ is the trace over the input/output space of agent $i$. Note that this constraint still allows for an agent to receive no messages in some branch of the superposition. 


Similar to QC-QCs, process boxes cannot violate causal inequalities \cite{Vilasini_2020}. In fact, this could be taken as a first hint that these two frameworks are closely related as we will show later in the results section.

\subsubsection{The effective Choi representation}\label{sec:effective}

In general, the Choi representation $\Phi$ of a process box $\hat{\Phi}$ takes the form

\begin{equation}
    \Phi = \C{I}^{\C{F}^O} \otimes \hat{\Phi} \dket{\mathbb{1}} \dbra{\mathbb{1}}^{\C{F}^O \C{F}^O} \in \C{L}(\C{F}^O \otimes \C{F}^I)
\end{equation}

where $\C{F}^{I/O} = \bigotimes_{i=1}^N \C{F}_{A^{I/O}_i}$. However, this space is much more general than what can actually be achieved given the constraints we put on the local agents. In particular, it contains states that correspond to agents receiving or sending multiple messages at different times. We can therefore define the effective Choi representation as

\begin{equation}
    \Phi^{\text{eff}} = \C{P}^{\text{eff}} (\C{I}^{\C{F}^O} \otimes \hat{\Phi} \dket{\mathbb{1}} \dbra{\mathbb{1}}^{\C{F}^O \C{F}^O})
\end{equation}

where $\C{P}^{\text{eff}}$ is the projector on

\begin{equation}\label{eq:Heff}
    \C{H}^{\text{eff}} = \text{Span}[\{\bigotimes_{i=1}^N \ket{j_i, t_i}^{\bar{A}^I_i} \otimes \ket{\Omega, \C{T}^I_i \backslash t_i}^{\bar{A}^I_i} \otimes \ket{k_i, \C{O}_i(t_i)}^{\bar{A}^O_i} \otimes \ket{\Omega, \C{T}^O_i \backslash \C{O}_i(t_i)}^{\bar{A}^O_i}\}_{\substack{j_i, k_i, t_i: \\ j_i = \Omega \implies k_i = \Omega}}].
\end{equation}

This is essentially the space one obtains by imposing WSR, LO and OSR on $\C{F}^I \otimes \C{F}^O$. The consequence is that if the process box is composed with local agents fulfilling the three assumptions, it does not matter whether we use the normal or the effective Choi representation \cite{Vilasini_2020}. They are operationally indistinguishable which means that the probabilities are the same or more generally, the output state is the same after composition

\begin{equation}\label{eq:pbeffcomp}
    (M_{\bar{A}_1} \otimes ... \otimes M_{\bar{A}_N}) \cdot \Phi = (M_{\bar{A}_1} \otimes ... \otimes M_{\bar{A}_N}) \cdot \Phi^{\text{eff}}.
\end{equation}

Here, $\cdot$ denotes complete composition, i.e. composition of the input of one with the output of the other and vice versa.

\subsubsection{Additional assumptions}

In the QC-QC framework, agents are assumed to be fully active, that is they receive and send a state in all branches of the superposition. Meanwhile, in the process box framework as described above it is possible for an agent to receive only the vacuum for all positions, as long as this does not happen with probability 1. Additionally, agents can send vacuum states on their output wires during all positions even if they receive a non-vacuum state at some point. 

In order to remove this disconnect between the QC-QC and the process box framework, we introduce two additional constraints, not made in \cite{Vilasini_2020}, one on the agents and one on the process box itself.



\paragraph{Active agents (AA):} For each agent $A_i$ and for each $t \in \C{T}^I_i$, it holds for the corresponding local operation $\C{M}^{a_i}_{\bar{A}_i}$ that $\C{M}^{a_i}_{\bar{A}_i} \ket{\psi, t}^{A^I_i} = \ket{\phi, \C{O}_i(t)}^{A^O_i} \otimes \ket{x, \C{O}_i}^{A^x_i}$ (or a coherent or incoherent sum of such terms). 

\bigskip

This constraint can be viewed as the converse of LO and implies that when an agent receives a non-vacuum input, they produce a non-vacuum output and a non-vacuum measurement result. Note that we use $A^{I/O/x}_i$ instead of $\bar{A}^{I/O/x}_i$ to indicate that these states are elements of $\C{H}^{A^{I/O/x}_i} \otimes \C{T}^{I/O/x}_i$, that is one-message states. 

The second constraint consists of assuming that there are no passive agents stronger. Instead of assuming that the probability that an agent acts is non-zero, we assume that it is one.

Note that this constraint is related to AA. Both together ensure that for every agent the probability that they only ever receive or send vacuum states is zero. We will thus generally refer to both of them together as the constraint of fully active agents (FAA).

Our results in the following section should thus be understood with FAA in mind. An alternative to using FAA would be adding explicit vacuum states to the QC-QC framework which would allow QC-QCs to have passive agents or agents that send nothing even though they received a vacuum state. This was done for the quantum switch in \cite{Paunkovi__2020}.




\newpage

\part{Results}\label{sec:results}

\section{Mapping QC-QCs to process boxes}\label{sec:qcqctopb}

\subsection{The state spaces of QC-QCs and process boxes}\label{sec:statespace}

Our overall goal will be to map QC-QCs to process boxes. If a QC-QC and a process box are related via this mapping, then they should in some sense describe the same thing. This means we need to establish a notion of equivalence between a QC-QC and a process box. The most reasonable way is to do this in an operational sense, that is if the local agents apply the same local operations, the QC-QC and the process box should predict the same probabilities or, more generally, their actions as supermaps should be the same. However, the QC-QC framework and the process box framework define their agents and state spaces in slightly different ways. Clarifying how these agents and state spaces relate to each other is thus a necessary step in formulating such an equivalence relation.

In the QC-QC picture, we had agents $A_k$ with input and output spaces $\C{H}^{A^I_k}$ and $\C{H}^{A^O_k}$. The QC-QC then models temporal superpositions by identifying these with the generic spaces $\C{H}^{\tilde{A}^I_n}$ and $\C{H}^{\tilde{A}^O_n}$ which correspond to a specific time step $t_n$ with the label $n$ being determined by a classical or quantum control system. 

We can then view these generic spaces as corresponding to the state space in the process box picture at a specific time. For simplicity, we will assume $\C{T}$ to be totally ordered and associate even times with inputs, $\C{T}^I = \{2, 4,..., 2N+2\}$, and odd times with outputs $\C{T}^O = \{1, 3,..., 2N+1\}$ and take $\C{O}_i(t) = t+1$ with $t=1$ only being a valid output time for a single agent which we identify with the global past and similarly for $t=2N+2$ and a global future agent. Later, when we attempt to construct a mapping of process boxes to QC-QCs, we will give some arguments that suggest that this is without loss of generality. However, for the mapping of QC-QCs to process boxes making restrictions to which process boxes we consider is a priori not a problem (as long as we do not exclude all process boxes a specific QC-QC could be mapped to).

\begin{figure}[t!]
    \centering
    \begin{subfigure}{1\textwidth}
    \centering
    \includegraphics[width=\textwidth]{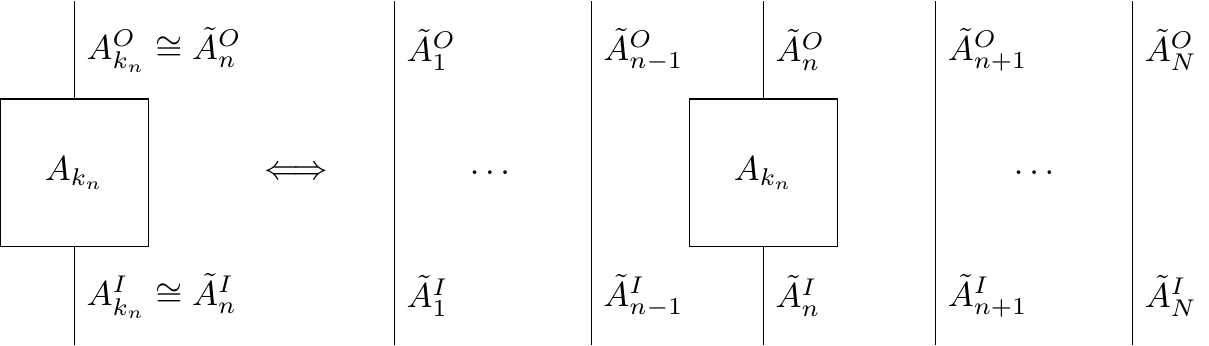}
    \caption{In the QC-QC framework the agent is initially associated with a single input and output wire in between which they apply their operation (left). The output spaces are then identified with generic spaces, for which there is one for each time step, yielding the picture to the right. In a given branch of the superposition, the agent acts during one time step and is inactive for all others.}
    \label{fig:qcqcstatespaces}
    \end{subfigure}
    \begin{subfigure}{1\textwidth}
    \centering
    \includegraphics[width=\textwidth]{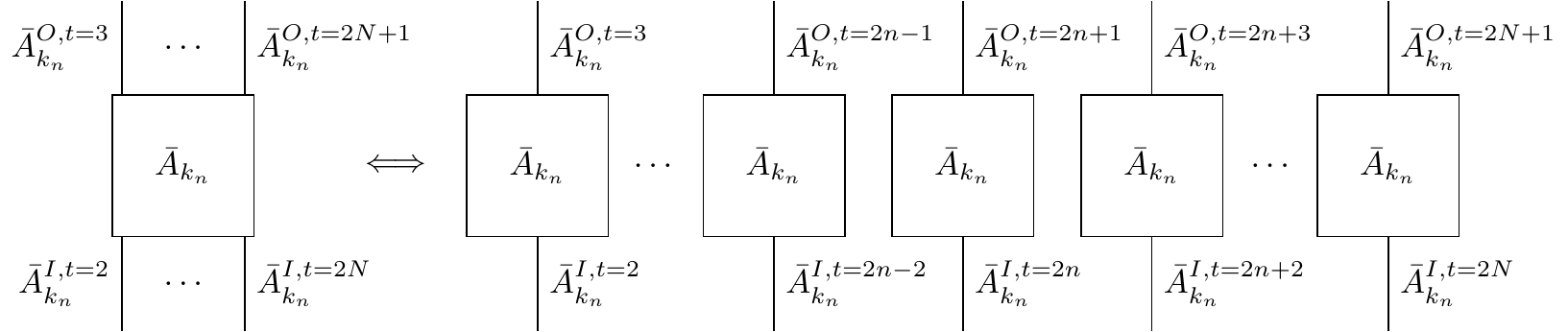}
    \caption{In the process box framework the agent can be viewed as having a wire for each time step (left) and acts during all time steps which yields the picture to the right. But as they receive only a single non-vacuum state under WSR all but one of the operations to the right are simply the vacuum projector. }
    \label{fig:pbstatespaces}
    \end{subfigure}
    \caption{Not acting during a time step and acting trivially on the vacuum can be understood as the same action. The overall action in the two frameworks is therefore effectively equivalent as long as the agents in either framework receive exactly one message, which is always the case in the QC-QC framework and requires the imposing of WSR and FAA (cf. \cref{sec:pb}) in the process box framework.}
\end{figure}


\Cref{fig:qcqcstatespaces,fig:pbstatespaces} then illustrate the correspondence. The input space $\C{H}^{A^I_k}$ is isomorphic to all $N$ generic input spaces $\C{H}^{\tilde{A}^I_n}$, but in a given branch of the temporal superposition, we only identify it with one of them. This is because of the assumption that each agent only acts once during a run of the experiment modeled by the QC-QC. On the other hand, the agent $A_k$ in the process box picture can be viewed as having $N$ input wires, one for each $t \in \C{T}^I$. However, once again in a given branch of the temporal superposition only one of these wires will contain a non-vacuum state. This time the reason is WSR. We thus find the correspondences

\begin{gather}
\begin{aligned}
    \C{H}^{\tilde{A}^I_n} &\leftrightarrow \C{H}^{\bar{A}^I_k} \otimes \ket{t=2n} \\
    \C{H}^{\tilde{A}^O_n} &\leftrightarrow \C{H}^{\bar{A}^O_k} \otimes \ket{t=2n+1}
\end{aligned}
\end{gather}
where the second line relating the output spaces of the two frameworks is obtained via analogous reasoning to that which yielded the first line.

On the level of states, we can then make the following identification

\begin{gather}
\begin{aligned}
    \ket{\psi}^{\tilde{A}^I_n} \otimes \ket{\C{K}_{n-1}, k_n} &\leftrightarrow \ket{\psi, t=2n}^{\bar{A}^I_{k_n}} \ket{\C{K}_{n-1}, k_n}^{C_n} \otimes \ket{\Omega, \C{T}^I \backslash 2n}^{\bar{A}^I_{k_n}} \ket{\Omega}^{C_n} \\
    \ket{\psi}^{\tilde{A}^O_{n}} \otimes \ket{\C{K}_{n-1}, k_n} &\leftrightarrow \ket{\psi, t=2n+1}^{\bar{A}^O_{k_n}} \ket{\C{K}_{n-1}, k_n}^{C_n} \otimes \ket{\Omega, \C{T}^O \backslash 2n+1}^{\bar{A}^O_{k_n}} \ket{\Omega}^{C_n}
\end{aligned}
\end{gather}
where we now also added the control system which allows us to make the previous correspondence one-to-one.

Given these correspondences, let us now consider how the agents apply their local operations in each framework. In the QC-QC framework, we can view the agent $A_{k_n}$ as applying their local operation during the time step $t_n$, or, to use the same time stamps as for the process box, during $t=2n$ and outputting the result during $t=2n+1$. During all other time steps, they are essentially inactive.

In the process box framework, the agents act during all input times which can be seen from the form of their local operation $\C{M}_{\bar{A}_i} = \bigotimes_{t \in \C{T}^I} \C{M}_{\bar{A}_i}^t$ where we dropped the setting label $a$ to avoid clutter. They could therefore, in principle, obtain some measurement result for each time step. However, due to WSR the agent only receives a non-vacuum state once, although this state may arrive in a temporal superposition of different time steps. The LO constraint then tells us that the agent acts trivially on vacuum states, obtaining the outcome $\Omega$ for all other time steps. Therefore, we can write the effective action of the agent $A_i$ who receives a non-vacuum state at $t \in \C{T}^I_i$ and obtains the outcome $x$ as

\begin{equation}\label{eq:kraustx}
    \bar{A}^{t, x}_i \otimes \proj{\Omega, \C{O}_i(\cancel{t})}{\Omega, \cancel{t}} 
\end{equation}

where $\bar{A}^{t, x}_i: \C{F}_{A^I_i}^{t} \rightarrow \C{F}_{A^O_i}^{\C{O}_i(t)} \otimes \C{F}_{A^x_i}^{\C{O}_i(t)}$ is the Kraus operator of $\C{M}_{\bar{A}_i}^t$ associated with the outcome $x$ while the vacuum projector $\proj{\Omega, \C{O}_i(\cancel{t})}{\Omega, \cancel{t}}$ acts between $\C{F}_{A^I_i}^{t'}$ and $\C{F}_{A^O_i}^{\C{O}_i(t')} \otimes \C{F}_{A^x_i}^{\C{O}_i(t')}$ for all $t' \neq t$. Note that due to FAA we can assume that $x \neq \Omega$ as the agent cannot obtain the outcome $\Omega$ for all positions.

Acting trivially on the vacuum could alternatively be achieved by simply not acting at all during these time steps. We can thus view \cref{eq:kraustx} as the equivalent action in the process box framework to an agent applying their measurement during the corresponding time step in the QC-QC framework.

Further, we are not interested in the $t$ for which the outcome $x$ was obtained. The probability to obtain outcome $x$ is then simply the probability to obtain that outcome for any $t$. Additionally, we do not wish to collapse the temporal superposition, which is why the operation associated with obtaining $x$ (at any $t$) is the sum over all $t \in \C{T}^I_i$ of \cref{eq:kraustx}
\begin{equation}\label{eq:notime}
    \bar{A}^x_i \coloneqq \sum_{t \in \C{T}^I_i} \bar{A}^{t, x}_i \otimes \proj{\Omega, \C{O}_i(\cancel{t})}{\Omega, \cancel{t}}.
\end{equation}


The operator $\bar{A}^x_i$ can then be understood as the Kraus operator associated to agent $i$ obtaining the outcome $x$. Note that this definition only makes sense if we impose WSR as otherwise the agent could obtain multiple outcomes at different times.

Given FAA, we can assume that the state on the input wire of the agent has no overlap with the state that has the vacuum for all input positions, $\ket{\Omega, \C{T}^I_i}^{\bar{A}^I_i}$. We can then get rid of the sum in \cref{eq:notime} by rewriting it as

\begin{equation}\label{eq:local_pb}
    \bar{A}^x_i = \bigotimes_{t \in \C{T}^I_i} (\bar{A}^{t, x}_i + \proj{\Omega, \C{O}_i(t)}{\Omega, t}).
\end{equation}

We can see that these two equations are equivalent on the WSR restricted space by direct calculation,


\begin{gather}
\begin{aligned}
    (\sum_{t \in \C{T}^I_i}& \bar{A}^{t, x}_i \otimes \proj{\Omega, \C{O}_i(\cancel{t})}{\Omega, \cancel{t}}) \ket{\psi, t_i}^{\bar{A}^I_i} \ket{\Omega, \cancel{t_i}}^{\bar{A}^I_i}  \\
    =& \bar{A}^{t_i, x}_i \ket{\psi, t_i}^{A^I_i} \otimes \proj{\Omega, \C{O}_i(\cancel{t_i}))}{\Omega, \cancel{t_i}} \ket{\Omega, \cancel{t_i}}^{\bar{A}^I_i} \\
    =& (\bar{A}^{t_i, x}_i + \proj{\Omega, \C{O}_i(t_i)}{\Omega, t_i}) \ket{\psi, t_i}^{\bar{A}^I_i} (\bar{A}^{\cancel{t_i}, x}_i + \proj{\Omega, \C{O}_i(\cancel{t_i})}{\Omega, \cancel{t_i}}) \ket{\Omega, \cancel{t_i}}^{\bar{A}^I_i} \\
    =& \bigotimes_{t\in \C{T}^I_i}  (\bar{A}^{t, x}_i + \proj{\Omega, \C{O}_i(t)}{\Omega, t}) \ket{\psi, t_i}^{\bar{A}^I_i} \ket{\Omega, \cancel{t_i}}^{\bar{A}^I_i}
\end{aligned}
\end{gather}
where we used that $\braket{\psi|\Omega} = 0$ and that $\bar{A}^{t, x}_i \ket{\Omega, t} = 0$ for all $t \in \C{T}^I_i$. The latter is due to LO which imposes $\bar{A}^{t, \Omega} \ket{\Omega, t} = \ket{\Omega, \C{O}_i(t)}$. The probability to obtain the outcome $\Omega$ when receiving the vacuum is thus 1, which means all other measurement outcomes have probability 0. This is equivalent to saying the Kraus operators associated with all other measurement outcomes must annihilate the vacuum. 


In the QC-QC framework, we assume the action of the agents to be independent of time. Let us therefore do the same for the action in the process box framework, i.e. we assume that $\bar{A}^{t, x}_i$ decomposes into a part acting on $\C{H}^{\bar{A_i}}$ which is independent of $t$ and a projector acting on the time stamp $\proj{\C{O}_i(t)}{t}$. 

Additionally, let us drop the superscript denoting the measurement outcome $x$ as it is enough to consider each outcome separately as we already did for the process matrix framework.\footnote{Note that we defined the local agents in the process box framework as causal boxes, which we defined as CPTP maps. However, defining causal boxes as CP maps, so-called subnormalized causal boxes, is also possible \cite{Portmann_2017}} We then also no longer need the wire for the measurement outcome $A^x_i$. 

We then define equivalence between a local operation in the QC-QC framework and one in the process box framework as follows.

\begin{defi}[Equivalence of local operations]\label{def:local_equivalence}
Let $A_i$ be a local agent in the QC-QC framework who applies a local operation $A_i: \C{H}^{A^I_i} \rightarrow \C{H}^{A^O_i}$. We define the equivalent agent $A_i$ in the process box framework with input/output space $\C{F}^{0/1}_{A_i^{I/O}} = \C{H}^{\bar{A}_i^{I/O}} \otimes l^2(\C{T}_i^{I/O})$. We then call the local operation $A_i$ in the QC-QC framework operationally equivalent to the local operation $\bar{A}_i$ in the process box framework if $\bar{A}_i = \bigotimes_{t \in \C{T}^I_i} (A_i \otimes \proj{\C{O}_i(t)}{t} + \proj{\Omega, \C{O}_i(t)}{\Omega, t})$ for some invertible $\C{O}_i: \C{T}^I_i \rightarrow \C{T}^O_i$ with $\C{O}_i(t) > t$. 
\end{defi}




Understanding now what it means for an operation in the QC-QC and the process box framework to be the ``same", we can also formulate what it means for a QC-QC and a process box to be equivalent. However, this will be easier to do if we first show that sequential composition in the causal box framework is the same as the link product as we already mentioned in \cref{sec:composition}. 

\begin{restatable}[Composition is equivalent to the link product]{lemma}{compislink}
\label{lemma:compislink}
Let $\C{M}_A: \C{L}(\C{H}^A) \rightarrow \C{L}(\C{H}^C)$ and $\C{M}_B: \C{L}(\C{H}^B) \rightarrow \C{L}(\C{H}^D)$ be two CP maps and denote their parallel composition as $\C{M} = \C{M}_A \otimes \C{M}_B$. Then, the Choi matrix of their sequential composition $\C{M}^{C \hookrightarrow B} = (\C{M}_A \otimes \C{M}_B)^{C \hookrightarrow B}$ can be expressed with the link product as

\begin{equation}
    M^{C \hookrightarrow B} = M_A * M_B
\end{equation}
if one takes $\C{H}^B = \C{H}^C$ for the purposes of the link product.
\end{restatable}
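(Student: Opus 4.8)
The plan is to compute both sides of the claimed identity directly from their definitions and check that they coincide. The left-hand side is the Choi matrix of the loop-composed map $\C{M}^{C\hookrightarrow B}$, which I would compute from \cref{def:comp}; the right-hand side is the link product $M_A * M_B$ from \cref{def:linkmat}. Since the two maps $\C{M}_A$ and $\C{M}_B$ act on disjoint systems before looping (their parallel composition is just a tensor product), the Choi matrix of $\C{M} = \C{M}_A \otimes \C{M}_B$ factorizes as $R_{\C{M}} = R_{\C{M}_A} \otimes R_{\C{M}_B} = M_A \otimes M_B$, living in $\C{L}(\C{H}^{ACBD})$. The loop composition then acts on this object, and my job is to show that its effect reproduces exactly the partial-trace-over-$\C{H}^C$ structure (with the partial transpose) that defines the link product.

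First I would fix orthonormal bases of all four spaces and write out the Choi matrix of $\C{M}^{C\hookrightarrow B}$ explicitly. Using \cref{eq:CB_choi}, this is $\sum_{a,a'} \proj{a}{a'}^A \otimes \C{M}^{C\hookrightarrow B}(\proj{a}{a'}^A)$, where $\proj{a}{a'}^A$ ranges over basis operators of the surviving input space $\C{H}^A$. Plugging in \cref{eq:loopfinite} for $\C{M}^{C\hookrightarrow B}$ introduces a sum $\sum_{k,l} \bra{k}^C \C{M}(\ket{a}^A\ket{k}^B \bra{l}^B\bra{a'}^A) \ket{l}^C$. The key step is to feed the factorized action $\C{M}(\rho^A \otimes \sigma^B) = \C{M}_A(\rho^A) \otimes \C{M}_B(\sigma^B)$ into this expression, so that the $B$-input $\ket{k}^B\bra{l}^B$ is processed by $\C{M}_B$ and the $A$-input $\ket{a}^A\bra{a'}^A$ by $\C{M}_A$. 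The sandwiching $\sum_{k,l}\bra{k}^C(\cdots)\ket{l}^C$ then couples the $C$-output of $\C{M}_A$'s Choi matrix to the $B$-input index of $\C{M}_B$'s Choi matrix.

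The heart of the argument is a bookkeeping identity: I would show that the coefficients $(\bra{k}^C \cdot \ket{l}^C)$ appearing in the loop, together with the $\ket{k}^B \leftrightarrow \ket{k}^C$ correspondence from \cref{def:comp}, reproduce precisely the blocks $A_{ij}$, $B_{ij}$ of \cref{def:linkmat} summed against each other, $\sum_{ij} A_{ij}\otimes B_{ij}$. Concretely, identifying $\C{H}^B$ with $\C{H}^C$ (as the statement permits) makes $\ket{k}^B$ and $\ket{k}^C$ the ``same'' index, so that looping contracts the $C$-index of $M_A$ against the $B$-index of $M_B$ — exactly the contraction $\tr_C$ with partial transpose that the link product performs. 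I would verify the partial-transpose placement carefully: the asymmetry between $\ket{k}$ and $\bra{l}$ on the output side versus $\ket{k}^B$, $\bra{l}^B$ on the input side is what produces the $T_C$ in \cref{def:linkmat}, and getting this transpose on the correct factor is the one place where a sign-free but easy-to-misplace computation occurs.

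The main obstacle I anticipate is purely index-management rather than conceptual: matching the basis conventions so that the loop's $\sum_{k,l}\bra{k}^C(\cdot)\ket{l}^C$ aligns with the link product's $A_{ij} = (\mathbb{1}\otimes\bra{i}^C)A(\mathbb{1}\otimes\ket{j}^C)$ blocks, including verifying that the remark on basis-dependence (both Choi/link and loop composition are defined with respect to the same fixed bases) guarantees consistency. Once the bases are pinned down and the factorization $R_{\C{M}}=M_A\otimes M_B$ is in hand, the two expressions should collapse onto each other termwise, completing the proof.
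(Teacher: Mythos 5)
Your proposal is correct and follows essentially the same route as the paper's proof: both plug the loop-composition definition into the Choi/link-product machinery, use the factorization of the parallel composition, and match the $\sum_{k,l}\bra{k}^C(\cdot)\ket{l}^C$ contraction against the block form $\sum_{ij} A_{ij}\otimes B_{ij}$ of the link product. The only cosmetic difference is that the paper evaluates the composed map on an arbitrary input $\proj{\psi}{\phi}^A$ and invokes $\C{M}(\rho)=M*\rho$ rather than summing over basis operators $\proj{a}{a'}^A$ directly, which is the same index bookkeeping.
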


This allows us to write the composition of a process box with local agents as

\begin{equation}
    (M_{\bar{A}_1} \otimes ... \otimes M_{\bar{A}_N}) * \Phi^{\text{eff}}
\end{equation}

or in terms of the effective Choi vector

\begin{equation}
    (\bar{A}_1 \otimes ... \otimes \bar{A}_N) * \dket{\bar{V}}.
\end{equation}

Finally, we introduce the definition of operational equivalence between a QC-QC and a process box.

\begin{defi}[Equivalence of QC-QCs and process boxes]\label{def:pbqceq}
Let $\ket{w_{\C{N}, F}}$ be the process vector of an $N$-partite QC-QC and $\dket{\bar{V}}$ the effective Choi vector of an $N$-partite process box whose set of positions is totally ordered. We say that the QC-QC and the process box are operationally equivalent if for any set of local operations of the QC-QC $\{A_1,...,A_N\}$ and any set of local operations of the process box $\{\bar{A}_1,...,\bar{A}_N\}$ such that $A_n$ and $\bar{A}_n$ are operationally equivalent for all $n \in \C{N}$, it holds that

\begin{equation}
    (\dket{A_1} \otimes .... \otimes \dket{A_N}) * \ket{w_{\C{N}, F}} \cong (\dket{\bar{A}_1} \otimes ... \otimes \dket{\bar{A}_N}) * \dket{\bar{V}}.
\end{equation}

where $\cong$ should be understood as a unitary isomorphism between the global past spaces as well as between the global future spaces of the QC-QC and the process box.

\end{defi}

The isomorphisms that relate the global past/future spaces will usually be rather simple, for example dropping/adding the time stamp $\C{H}^{P} \cong \C{H}^{P} \otimes \ket{t=1}$ and $\C{H}^F \cong \C{H}^{F} \otimes \ket{t=2N+2}$. We exclude the vacuum from the output/input space of the global past/future which is without loss of generality due to FAA.

For simplicity, we just consider pure processes in the above definition. This is not a problem as we have seen that all QC-QCs and all causal boxes are purifiable. Therefore, when we speak of the global future from now on this should be understood as including any purifying ancillaries $\alpha_F$. 

\subsection{Dynamical switch as a process box}\label{sec:switchtopb}

Our goal for this section is to find a process box description of the dynamical switch which we discussed in \cref{sec:dynamicalswitch}. This concrete example will help us understand how QC-QCs can be mapped to process boxes in general.

We will consider two different constructions. Our first construction will consider the components of the experimental setup depicted in \cref{fig:new_QCQC}. We will show that these components are isometries for the types of inputs that are possible in the process box framework. As compositions and tensor products of isometries are isometries again, we obtain a sequence representation this way. For the second approach we will construct a sequence representation using the internal operators $V^{\rightarrow k_{n+1}}_{\C{K}_{n-1}, k_n}$ of the QC-QC. 

\subsubsection{Component-wise construction}\label{sec:component}

The COPY gate in \cref{fig:new_QCQC} implements the operation $V_{\text{Copy}} = \sum_{i=0,1} \ket{i}^{t} \ket{i}^{\alpha} \bra{i}^t$. This is an isometry as $V_{\text{Copy}}^\dagger V_{\text{Copy}} = \sum_{i,j=0,1} \ket{i}^t \braket{i|j}^t \braket{i|j}^\alpha \bra{j}^t = \sum_{i=0,1} \ket{i} \bra{i}^t = \mathbb{1}^t$. Similarly, for the CNOT gates, which apply $V_{\text{CNOT}} = \sum_{ij} \ket{i}^t \ket{i \oplus j}^\alpha \bra{i}^t \bra{j}^\alpha$, we find $V_{\text{CNot}}^\dagger V_{\text{CNot}} = \sum_{i,j, k, l=0,1} \ket{i}^t \ket{j}^\alpha \braket{i|k} \braket{i \oplus j | k \oplus l} \bra{k}^t \bra{l}^\alpha = \sum_{i,j = 0,1} \ket{i}^t \ket{j}^\alpha \bra{i}^t \bra{j}^\alpha = \mathbb{1}^{t \alpha}$ showing that these gates are isometries as well.

For a causal box representation, we also need to consider the possibility of multiple messages as well as no messages arriving at one of these gates. Are the COPY and CNOT gates still isometries in this case? The gates should act on each message individually, i.e. we apply $V_{\text{Copy}}$ and $V_{\text{CNot}}$ to each tensor factor, $V_{\text{Copy}} \bigodot_i \ket{\psi^i}^t = \bigodot_i V_{\text{Copy}} \ket{\psi^i}^t$ and $V_{\text{CNot}} \bigodot_i \ket{\psi^i}^t \ket{\phi^i}^\alpha = \bigodot_i V_{\text{CNot}} \ket{\psi^i}^t \ket{\phi^i}^\alpha$. In other words, the gates for $n$ messages are simply an $n$-fold tensor product of the gates for a single message. Thus, they remain isometries even for multiple messages. If no messages arrive, the gates simply do nothing, i.e. we posit that the gates leave the vacuum invariant. As the vacuum is orthogonal to all other states, this extension is also compatible with the gates being isometries.

Next, we need to show that beam splitters are isometries. A beam splitter consists of two input wires $I^0$ and $I^1$ and two output wires $O^0$ and $O^1$, each of which can carry 2-dimensional messages. The action of the beam splitter is that it reflects one type of polarization (say photons in state $\ket{0}$) and transmits those of the orthogonal polarization ($\ket{1}$). We can thus say that a polarizing beam splitter sends a photon in state $\ket{i}$ arriving on wire $I^j$ to the output wire $O^{i \oplus j}$. If there is only one photon, it is clear that the beam splitter acts as an isometry. However, we cannot view the action of the beam splitter on $n$ photons as a simple $n$-fold tensor product due to there being two input and output spaces, which have to be symmetrized separately.

Instead, we construct orthogonal bases of input and output Fock spaces and use them to generalize the beam splitter. A basis for the inputs is given by the vectors 
\begin{equation}
\ket{m, n, k, l}^I = (\underbrace{\ket{0}^{I^0} \odot ... \odot \ket{0}^{I^0}}_{m \text{ times}} \odot \underbrace{\ket{1}^{I^0} \odot ... \odot \ket{1}^{I^0}}_{n \text{ times}}) \otimes (\underbrace{\ket{0}^{I^1} \odot ... \odot \ket{0}^{I^1}}_{k \text{ times}} \odot \underbrace{\ket{1}^{I^1} \odot ... \odot \ket{1}^{I^1}}_{l \text{ times}})
\end{equation}
with $m, n, k, l \in \mathbb{N}$ (if $m=n=0$ or $k=l=0$, we take the state on the respective wire to be the vacuum). In other words, the state $\ket{m, n, k, l}$ is the one with $m$ photons with polarization $\ket{0}$ and $n$ photons with polarization $\ket{1}$ on the $0$ wire and  $k$ photons with polarization $\ket{0}$ and $l$ photons with polarization $\ket{1}$ on the $1$ wire. We also define a basis for the output state $\ket{m, n, k, l}^O$ completely analogously. 

What does the beam splitter map $\ket{m,n,k,l}$ to? The $\ket{0}$ photons are mapped to the output wire with the same label as the respective input wire while the $\ket{1}$ photons are sent to the output wire with the opposite label. Thus, we define the action of the beam splitter as $\ket{m,n,k,l}^I \mapsto \ket{m,l,k,n}^O$. The norm of $\ket{m,n,k,l}^{I/O}$ is $\sqrt{m! n! k! l!}$ which is invariant under exchange of any two of $m, n, k, l$.\footnote{The norm is the product of the norms of the state on each wire. We use \cref{eq:fock_inner} to find the norm of $\underbrace{\ket{0}^{I^0} \odot ... \odot \ket{0}^{I^0}}_{m \text{ times}} \odot \underbrace{\ket{1}^{I^0} \odot ... \odot \ket{1}^{I^0}}_{n \text{ times}}$. We have a sum over permutations in $S^{n+m}$ and as $\braket{0|1}=0$ only permutations that map $\{1,...,n\}$ and $\{n+1,...,n+m\}$ to themselves contribute and they contribute 1 to the norm squared. These permutations can be viewed as the product of $S^n$ with $S^m$ and as such there are exactly $n!m!$ such permutations. Repeating the same argument for the other input wire and also the output wires gives us the norm as in the main text.} The beam splitter thus conserves the norm of basis states. Additionally, different basis states remain orthogonal as we map an orthogonal basis of the input space injectively to an orthogonal basis of the output space. Thus, the beam splitter conserves the inner product between basis states which means it is an isometry (see \cref{sec:isometries} for further information on isometries).


Finally, we need to add the time stamps. We can do so after composing the gates and beam splitters into the various isometries that take the outputs of the local agents to their inputs. If $\bar{V}$ is such an isometry we can replace it with $\bar{V} \otimes \C{O}$ where $\C{O}$ simply increases the time by 1, $\C{O}\ket{t} = \ket{t+1}$. The operator $\C{O}$ is an isometry and therefore $\bar{V} \otimes \C{O}$ is an isometry as well. 

In conclusion, arbitrary compositions and tensor products of COPY and CNOT gates and polarizing beam splitters are isometries. Any sequence representation made up of these components is thus a causal box. 

However, it is not immediately clear that such causal boxes are also process boxes. In fact, the WSR requirement seems to be violated as the beam splitter sends a $\ket{0}$ photon on one input wire and a $\ket{1}$ photon on the other input wire to the same output wire. Note however, that all the components we discussed keep the number of messages constant. Thus, WSR is fulfilled as long as each component receives at most a single message. Due to LO and inductive reasoning this is the case if at the first possible input time only one of the local agents receives a non-vacuum state (superpositions being allowed). This is the case for the dynamical switch and is in fact the case for any process that can be described as a QC-QC. 

\subsubsection{Construction using the QC-QC operators}\label{sec:constructionqcqcoperators}

We will now present another approach that utilizes the operators from the QC-QC picture. First of all, we define the set $\C{T} = \{2,3,4,5,6,7,8\}$ as the time steps during which the local and internal operations can act. We do not include $t=1$ as the global past is trivial and so we do not have to model it as an agent. We split the set $\C{T}$ into the set of input times $\C{T}^I = \{2, 4, 6, 8\}$ and the set of output times $\C{T}^O = \{3, 5, 7\}$ where $t=8$ is an input time only for the global future. Next, considering the experimental setup in \cref{fig:new_QCQC}, we note that the target, the ancillary and the control are all part of the same overall system as they are all encoded in different degrees of freedom of the photon. Importantly, this means that the ancillary and control are both sent to local agents, but they only act trivially on it. The input/output spaces of the agents will thus be $\C{F}^{0/1}_{\bar{A}^{I/O}_i \bar{\alpha} \bar{C}}$, but we only allow local operations that leave $\C{H}^{\bar{\alpha}}$ and $\C{H}^{\bar{C}}$ invariant. This is essentially equivalent to routing these two systems past the local operations and directly to the next internal operation, which is not really different from sending it internally (cf. \cref{fig:routingpast}). We can thus still consider agents whose input and output Hilbert spaces do not include the ancillary and control Hilbert spaces and contract over these spaces when calculating the Choi vector.

\begin{figure}
\centering
\includegraphics[width=\textwidth]{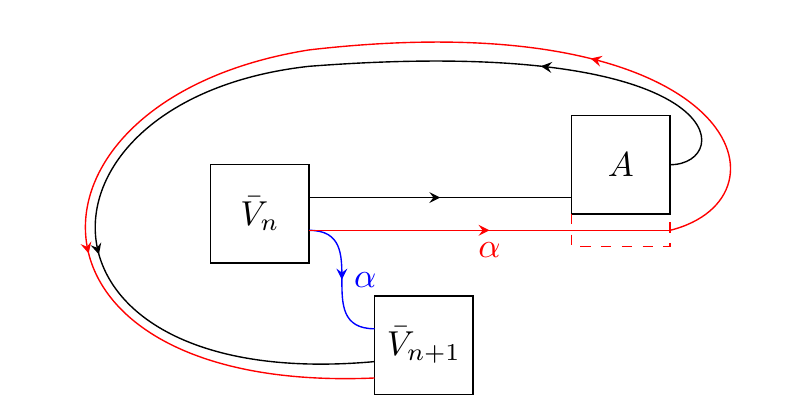}
\caption{Diagrammatic representation of how a process box can send an ancillary $\alpha$ from one isometry of the sequence representation to the next. The isometry $\bar{V}_n$ can either send it internally to $\bar{V}_{n+1}$ (blue) or send it to the local agent $A$ along with the target system (black line) who then sends it to $\bar{V}_{n+1}$ (red). If $A$ acts trivially on $\alpha$ (indicated by the red, dashed extension of $A$), then the two options are essentially equivalent.}
\label{fig:routingpast}
\end{figure}

By again considering the experimental setup from \cref{fig:new_QCQC}, we see that if we intervene in the circuit and place a photon directly after one of the local operations but before the internal operations, the control automatically corresponds to that local operation. That is the state on the output wires is given by $\ket{\psi, t=2}^{A^O_{k_1}} \ket{\emptyset, k_1}^{C_1}, \ket{\psi, t=4}^{A^O_{k_2}} \ket{\{k_1\}, k_2}^{C_2}$ or $\ket{\psi, t=6}^{A^O_{k_3}} \ket{\{k_1, k_2\}, k_3}^{C_3}$\footnote{We did not put the line over the agent label as we specifically consider non-vacuum states. The other two output wires contain vacuum states.} depending on when exactly we intervened. This is a consequence of the fact that the control is encoded in the path degree of freedom (in the case of $C_2$, $k_1$ is encoded in the polarization, but $k_2$ is encoded in the path). We thus do not have to wonder how the internal operations act on states such as $\ket{\psi}^{A^O_1} \ket{\emptyset, 2}^{C_1}$ as they can never occur.

Another question we could ask is what happens if we place photons after multiple local operations. For example, we could place a photon on each of the output wires of agent 1 and agent 2 directly after their first application. The corresponding state would be $\ket{\psi, t=2}^{A^O_1} \ket{\emptyset, 1}^{C_1} \ket{\phi, t=2}^{A^O_2} \ket{\emptyset, 2}^{C_1} \ket{\Omega}^{A^O_3} \ket{\Omega}^{C_1}$. As argued previously, beam splitters, COPY gates and CNOT gates all act on each photon they receive independently from any other photon that may be in the circuit (besides the need to symmetrize). Thus, the photon coming from agent 1 can be sent to agent 2 or 3, while the photon coming from agent 2 can be sent to agent 1 or 3 depending dynamically on the state of each photon. In general, we expect a superposition of all these possibilities

\begin{gather}
\begin{aligned}\label{eq:2photons}
    \bar{V}_2 &((\alpha_1 \ket{0, t=3}^{A^O_1} + \beta_1 \ket{1, t=3}^{A^O_1}) \ket{\emptyset, 1}^{C_1} (\alpha_2 \ket{0, t=3}^{A^O_2} + \beta_2 \ket{1, t=3}^{A^O_2}) \ket{\emptyset, 2}^{C_1} \ket{\Omega, t=3}^{\bar{A}^O_3} \ket{\Omega}^{\bar{C}_1}) \\
    &= \alpha_1 \alpha_2 \ket{\Omega, t=4}^{\bar{A}^I_1 \bar{C}_2} \ket{0, t=4}^{A^I_2} \ket{\{1\}, 2}^{C_2} \ket{0, t=4}^{A^I_3} \ket{\{2\}, 3} \\
    &+ \alpha_1 \beta_2 \ket{1, t=4}^{A^I_1} \ket{\{2\}, 1}^{C_2} \ket{0, t=4}^{A^I_2} \ket{\{1\}, 2}^{C_2} \ket{\Omega, t=4}^{\bar{A}^I_3 \bar{C}_2} \\
    &+ \beta_1 \beta_2 \ket{1, t=4}^{A^I_1} \ket{\{2\}, 1}^{C_2} \ket{\Omega, t=4}^{\bar{A}^I_2 \bar{C}_2} \ket{1, t=4}^{A^I_3} \ket{\{1\}, 3}  \\
    &+ \beta_2 \alpha_1 \ket{\Omega, t=4}^{\bar{A}^I_1 \bar{C}_2} \ket{\Omega, t=4}^{\bar{A}^I_2 \bar{C}_2} ((\ket{1, t=4}^{A^I_3} \ket{\{1\}, 3}^{C_3}) \odot (\ket{0, t=4}^{A^I_3} \ket{\{2\}, 3}^{C_3}))
\end{aligned}
\end{gather}
where we placed a line over $V_2$ to differentiate it from the internal operation $V_2$ in the QC-QC picture.

The last term corresponds to both photons being sent to agent 3 and therefore has to be symmetrized, indicated by the symmetric tensor product $\odot$, to be a valid element of the Fock space. Inspecting \cref{eq:2photons}, we see that it is achieved by applying the internal operations to each non-vacuum tensor factor while adding vacuum states and symmetrizing where needed. 

To be able to write down the action of the isometries with these ideas in mind, we introduce the operation $symm$ which symmetrizes tensor products in the same Hilbert space and leaves tensor products over different Hilbert spaces invariant, $symm(\C{H}^X \otimes \C{H}^X) = \C{H}^X \odot \C{H}^X$ and $symm(\C{H}^X \otimes \C{H}^Y) = \C{H}^X \otimes \C{H}^Y$ for $X \neq Y$. More generally, for $N$ Hilbert spaces $\C{H}^{X_n}$ such that $X_n \neq X_m$ if $n \neq m$, we define it as

\begin{equation}
    symm(\bigotimes_{n=1}^N (\ket{\psi^{n, 1}}^{X_n} \otimes \ket{\psi^{n,2}}^{X_n} \otimes ...)) \coloneqq \bigotimes_{n=1}^N (\ket{\psi^{n, 1}}^{X_n} \odot \ket{\psi^{n,2}}^{X_n} \odot ...).
\end{equation}

\begin{example}[Symmetrization]
Consider three systems, $A, B, C$ and the tensor product of states on these systems $\ket{0}^A \ket{1}^A \ket{0}^B \ket{1}^C \ket{1}^C$. The action of $symm$ on this state is

\begin{equation}
    symm(\ket{0}^A \ket{1}^A \ket{0}^B \ket{1}^C \ket{1}^C) = \ket{0}^A \odot \ket{1}^A \otimes \ket{0}^B \otimes \ket{1}^C \odot \ket{1}^C.
\end{equation}

Note that $symm$ gives the same state when applying it to $\ket{1}^A \ket{0}^A \ket{0}^B \ket{1}^C \ket{1}^C$ because $\ket{0}^A \odot \ket{1}^A = \ket{1}^A \odot \ket{0}^A$. In fact, because the order of the tensor factors belonging to different systems does not matter, $\ket{0}^A \ket{1}^B = \ket{1}^B \ket{0}^A$, we can freely exchange any two tensor factors and still obtain the same state after applying $symm$. This is a particularly useful feature. In fact, it is the reason we introduce $symm$ because it allows us to ensure states are properly symmetrized without forcing us to group all messages belonging to the same wire, like we would have to if we used the symmetric tensor product $\odot$.
\end{example}

Additionally, we adopt the convention that if a wire does not show up explicitly, it contains the vacuum. We can then define the action of the four isometries via their actions on states with only a single photon


\begin{gather}
\begin{aligned}\label{eq:novel_pb}
    &\bar{V}_1 = \frac{1}{\sqrt{3}} \sum_{k_1} \ket{\psi, t=2}^{A^I_{k_1}} \ket{\emptyset, k_1}^{C_1} \ket{\Omega, t=2}^{\bar{A^I}_{k_1 +1} \bar{C}_1} \ket{\Omega, t=2}^{\bar{A^I}_{k_1 +2} \bar{C}_1} \\
    &\bar{V}_2(\bigotimes_{k_1} \ket{\psi^{k_1}, t=3}^{A^O_{k_1}} \ket{\emptyset, k_1}) = symm(\bigotimes_{k_1} V_2 (\ket{\psi^{k_1}, t=4}^{A^O_{k_1}} \ket{\emptyset, k_1})) \\
    &\bar{V}_3(\bigotimes_{k_2} \ket{\psi^{k_2}, t=5}^{A^O_{k_2}} \ket{\{k_1^{k_2}\}, k_2}) = symm(\bigotimes_{k_2} V_3 \ket{\psi^{k_2}, t=6}^{A^O_{k_2}} \ket{\{k_1^{k_2}\}, k_2}) \\
    &\bar{V}_4(\bigotimes_{k_3} \ket{\psi^{k_3}, t=7}^{A^O_{k_3}} \ket{\alpha_3^{k_3}}^{\alpha_3} \ket{\{k_1^{k_3}, k_2^{k_3}\}, k_3}) = symm(\bigotimes_{k_3} V_4 \ket{\psi^{k_3}, t=8}^{A^O_{k_3}} \ket{\alpha_3^{k_3}}^{\alpha_3} \ket{\{k_1^{k_3}, k_2^{k_3}\}, k_3}).
\end{aligned}
\end{gather}

It should be noted that $symm$ should not be viewed as an operation that is actually carried out by the process box. It is simply how we denote the fact that messages that are sent from different agents to the same agent end up in the symmetric Fock space of that agent's input space. 

In future equations, we will not explicitly write the state on any ancillary wires $\alpha_n$ to make the equations more compact. As such any state on an input or output wire should be understood as being accompanied by some state on the appropriate ancillary wire if it exists, i.e. $\ket{\psi}^{A^I_{k_n}}$ should be understood as some possibly bipartite state on the wires $A^I_{k_n}$ and $\alpha_n$ and analogously for a state on an output wire.

The action of the isometries on single photons is well defined in the QC-QC picture and we simply have to add time stamps and vacuum states to obtain the process box picture

\begin{gather}
\begin{aligned}\label{eq:novel_single}
    \bar{V}_{n+1}&(\ket{\psi, t=2n+1}^{A^O_{k_n}} \ket{\C{K}_{n-1}, k_n} \ket{\Omega, t=2n+1}^{\bar{A}^O_{k_n+1} \bar{C}_n} \ket{\Omega, t=2n+1}^{\bar{A}^O_{k_n+2} \bar{C}_n}) \\
    &= V_{n+1}(\ket{\psi}^{A^O_{k_n}} \ket{\C{K}_{n-1}, k_n}) \ket{t=2n+2} \\
    &= \sum_{k_{n+1}} V^{\rightarrow k_{n+1}}_{\C{K}_{n-1}, k_n}(\ket{\psi}^{A^O_{k_n}}) \ket{t=2n+2} \ket{\C{K}_{n-1} \cup k_n, k_{n+1}} \ket{\Omega, t=2n+2}^{\bar{A}^I_{k_{n+1}+1} \bar{A}^I_{k_{n+1}+2}  \bar{C}_{n+1}}.
\end{aligned}
\end{gather}

Additionally, if all wires contain the vacuum, $\bar{V}_{n+1}$ leaves the state invariant besides incrementing the time by 1

\begin{equation}\label{eq:isovac}
    \bar{V}_{n+1}(\bigotimes_{i=1}^3 \ket{\Omega,t=2n+1}^{\bar{A}^O_i \bar{C}_n}) = \bigotimes_{i=1}^3 \ket{\Omega,t=2n+2}^{\bar{A}^I_i \bar{C}_{n+1}}.
\end{equation}

We now wish to show that the operators we defined are indeed isometries.

\begin{restatable}[Sequence representation of the dynamical switch]{prop}{switchiso}
\label{prop:3switchiso}
The operators $\bar{V}_1, \bar{V}_2, \bar{V}_3, \bar{V}_4$ as defined by \cref{eq:novel_pb} and \cref{eq:novel_single} are isometries.
\end{restatable}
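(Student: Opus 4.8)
The plan is to prove each $\bar V_n$ is an isometry by showing it preserves the Fock-space inner product \cref{eq:fock_inner}, which by linearity is equivalent to $\bar V_n^\dagger \bar V_n = \mathbb{1}$. The conceptual core is that, after using the wire isomorphism \cref{eq:wireiso} to regard the tensor product over the various agents' wires as a single symmetric Fock space, each $\bar V_{n+1}$ is nothing but the Fock extension (second quantization) of the single-message operator $V_{n+1}$ from the QC-QC picture, as encoded by \cref{eq:novel_pb,eq:novel_single}: it applies $V_{n+1}$ to each message, increments every time stamp, inserts vacua on the untouched wires, and symmetrizes via $symm$. I would therefore first record that the single-message operators $V_2, V_3, V_4$ are themselves isometries. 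This is guaranteed in general by the QC-QC framework (the internal operations are required to be isometries) and is immediate to check here from \cref{eq:switchkraus}: $V_2$ maps the orthonormal pair $\ket{0}^{A^O_{k_1}},\ket{1}^{A^O_{k_1}}$ to outputs tagged by the distinct control labels $\ket{\{k_1\},k_1+1},\ket{\{k_1\},k_1+2}$, $V_3$ appends an ancilla and applies a CNOT, and $V_4$ is the identity tensored with the label $\ket{k_3}^{\alpha_F^{(2)}}$. I would dispose of $\bar V_1$ directly: since the global past is trivial it sends a one-dimensional space to $\frac{1}{\sqrt3}\sum_{k_1}(\cdots)$, whose three summands carry the mutually orthogonal controls $\ket{\emptyset,k_1}^{C_1}$, so the image is a unit vector.

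For $\bar V_2,\bar V_3,\bar V_4$ I would first note that every message is mapped to exactly one message and, by \cref{eq:isovac}, the all-vacuum state is mapped to the all-vacuum state; hence the number-of-messages grading is preserved, and since distinct $m$-message sectors are mutually orthogonal it suffices to check inner-product preservation sector by sector. Fixing $m$, I take two basis inputs, each a symmetric product $\bigodot_{i=1}^m \ket{\psi_i, t}^{A^O_{k_i}}\ket{\C{K}_i,k_i}$ of single-message states living on distinct wires, apply the definition so the outputs are $\bigodot_i V_{n+1}\ket{\psi_i,\ldots}$ (with incremented time stamps and added vacua), and expand the inner product with \cref{eq:fock_inner} as a sum over permutations $\pi\in S^m$. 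Each factor is $\braket{V_{n+1}\psi_i | V_{n+1}\phi_{\pi(i)}}=\braket{\psi_i | \phi_{\pi(i)}}$ because $V_{n+1}$ is an isometry, while the matched time stamps and vacua contribute Kronecker deltas equal to one; the permutation sum on the output side therefore reproduces term by term the permutation sum \cref{eq:fock_inner} for the inputs, yielding equality of the two inner products.

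The step demanding the most care is the symmetrization, precisely the situation in the last line of \cref{eq:2photons} where two messages are routed to the same agent: there the norm of the output is \emph{not} the product of the norms of its factors, so one cannot argue componentwise. The resolution is exactly the permutation-sum identity above — because $V_{n+1}$ preserves every single-message inner product, it maps orthogonal inputs to orthogonal outputs, so the off-diagonal permutation terms vanish in lockstep on the input and output sides and the symmetric tensor product is norm-preserving even when the images overlap on a common wire. Consistency of the whole picture is underwritten by WSR, which forbids any agent from emitting two non-vacuum messages, so the input messages to each $\bar V_{n+1}$ always sit on distinct wires. Assembling the sectors then gives inner-product preservation for $\bar V_2,\bar V_3,\bar V_4$, and together with $\bar V_1$ this shows all four operators are isometries.
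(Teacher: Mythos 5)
Your proof is correct and follows essentially the same route as the paper's: decompose by message number using number conservation and the vacuum-to-vacuum action, reduce everything to the single-message isometry property of $V_{n+1}$ \emph{including the control} (which is what makes messages originating from different agents orthogonal on both the input and output sides), and handle the symmetrization via the permutation-sum inner-product formula. The only organizational difference is that you invoke the Gram-matrix/permanent identity once per $m$-message sector, whereas the paper expands the output over destination assignments and separately checks that cross terms vanish and that same-destination symmetrization is harmless; your packaging is slightly more uniform but rests on the same key fact, namely \cref{eq:orthog}, i.e.\ that $V^{\rightarrow k_{n+1}\dagger}_{\C{K}_{n-1},k_n}V^{\rightarrow k_{n+1}}_{\C{K}'_{n-1},k'_n}=0$ for $k_n\neq k'_n$, which for the dynamical switch holds term by term.
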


The main idea of the proof, which can be found in \cref{sec:proofs}, is to write the inner product of states in \cref{eq:novel_pb} in such a way that it is the product of the inner products of the individual messages (and not the product of the inner products of the state on each wire, which contains cross terms between the individual messages). Then, one can use that $\bar{V}_1, \bar{V}_2, \bar{V}_3, \bar{V}_4$ are isometries on one-message states which follows straightforwardly from the fact that $V_1, V_2, V_3, V_4$ are isometries in the QC-QC picture and \cref{eq:novel_single}.

\Cref{prop:3switchiso} gives us a sequence representation which means that the dynamical switch can be described by a causal box. We need to check that this causal box is also a process box. Note that $\bar{V}_1$ outputs a single-message state and $\bar{V}_2, \bar{V}_3, \bar{V}_4$ conserve the number of messages. Thus, the only way WSR could be violated is if that message is sent to the same local agent multiple times. However, the control system $\ket{\C{K}_{n-1}, k_n}$ of a message that was sent to agent $k$ at some earlier time must fulfill either $k \in \C{K}_{n-1}$ or $k = k_n$ which means the message cannot be sent to $k$ by $\bar{V}_{n+1}$. This means that the causal box given by the sequence representation $\bar{V}_1, \bar{V}_2, \bar{V}_3, \bar{V}_4$ is a process box when composed with agents obeying WSR, LO and OSR.

This process box is then indeed operationally equivalent to the dynamical switch according to our notion of equivalence given in \cref{def:pbqceq}.

\begin{restatable}[Equivalence of the QC-QC and process box description of the dynamical switch]{prop}{switchequivalence} \label{prop:switchequivalence}
The QC-QC description of the dynamical switch is operationally equivalent to its process box description. 
\end{restatable}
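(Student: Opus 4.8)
The plan is to reduce both sides of the equivalence in \cref{def:pbqceq} to a branch-by-branch comparison, exploiting the fact that WSR forces only a single non-vacuum message to propagate through the process box at any time. By \cref{lemma:compislink}, the composition of the process box with the local agents is the link product $(\dket{\bar{A}_1} \otimes \dket{\bar{A}_2} \otimes \dket{\bar{A}_3}) * \dket{\bar{V}}$, and by the composition rule for Choi vectors the effective Choi vector factorizes as $\dket{\bar{V}} = \dket{\bar{V}_1} * \dket{\bar{V}_2} * \dket{\bar{V}_3} * \dket{\bar{V}_4}$. The QC-QC side is the analogous link product $(\dket{A_1} \otimes \dket{A_2} \otimes \dket{A_3}) * \ket{w_{\C{N}, F}}$ with $\ket{w_{\C{N}, F}} = \sum_{(k_1,k_2,k_3)} \dket{V^{\rightarrow k_1}_{\emptyset,\emptyset}} * \dket{V^{\rightarrow k_2}_{\emptyset, k_1}} * \dket{V^{\rightarrow k_3}_{\{k_1\}, k_2}} * \dket{V^{\rightarrow F}_{\{k_1,k_2\}, k_3}}$. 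The goal is to show these two contractions agree up to the global-past/future isomorphism that drops the time stamps $t=1$ and $t=2N+2=8$ together with the vacuum factors on the unused wires.

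First I would unfold the process box contraction one time step at a time, following a single photon. The isometry $\bar{V}_1$ emits the state $\frac{1}{\sqrt{3}}\sum_{k_1} \ket{\psi, t=2}^{A^I_{k_1}} \ket{\emptyset, k_1}$ with vacuum on the other two agents' input wires (\cref{eq:novel_pb}), so after stripping the time stamp this is exactly $\sum_{k_1} \dket{V^{\rightarrow k_1}_{\emptyset,\emptyset}}$. By \cref{def:local_equivalence}, the equivalent local operation $\bar{A}_{k_1}$ acts as $A_{k_1}$ on the single non-vacuum message (advancing its time stamp to $t=3$) and as the vacuum projector elsewhere, while $\bar{A}_j$ for $j \neq k_1$ receives only the vacuum and acts trivially. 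Feeding the result into $\bar{V}_2$, the defining relation \cref{eq:novel_single} reproduces $V_2 = \sum_{k_2} V^{\rightarrow k_2}_{\emptyset, k_1}$ on the single-message component and, by \cref{eq:isovac}, leaves the all-vacuum component invariant up to the time increment. Iterating this through $\bar{V}_3$, the second and third applications of the local operations, and $\bar{V}_4$, I obtain in the branch labelled $(k_1, k_2, k_3)$ precisely the chain $\dket{V^{\rightarrow k_1}_{\emptyset,\emptyset}} * \dket{A_{k_1}} * \dket{V^{\rightarrow k_2}_{\emptyset, k_1}} * \dket{A_{k_2}} * \dket{V^{\rightarrow k_3}_{\{k_1\}, k_2}} * \dket{A_{k_3}} * \dket{V^{\rightarrow F}_{\{k_1,k_2\}, k_3}}$, dressed with time stamps and with vacuum states on every unused input/output wire.

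Summing over the branches $(k_1,k_2,k_3)$ then yields exactly $(\dket{A_1}\otimes\dket{A_2}\otimes\dket{A_3}) * \ket{w_{\C{N}, F}}$, tensored with the leftover vacuum factors and carrying the time stamps $t=1$ at the input and $t=8$ at the output. These leftover vacuum factors and time stamps are precisely what the global-past/future isomorphism $\cong$ in \cref{def:pbqceq} absorbs, the vacuum on the unused wires being excluded from the global past and future by FAA, which establishes the claimed equivalence.

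The main obstacle I anticipate is the vacuum and symmetrization bookkeeping in the contraction rather than any deep structural fact: I must verify that every vacuum tensor factor threaded through the $\bar{V}_n$ and the vacuum-preserving parts of the $\bar{A}_i$ contracts to an overall identity contributing only the past/future isomorphism, and that the $symm$ operation never acts nontrivially. The latter is guaranteed because WSR together with the single-message output of $\bar{V}_1$ ensures that in no branch do two messages reach the same agent's input wire, since the control label $\ket{\C{K}_{n-1}, k_n}$ already records which agents have acted, so the symmetric tensor products appearing in \cref{eq:novel_pb} reduce to ordinary tensor products and can be dropped. Once this is checked, the term-by-term agreement in each branch follows directly from the single-message correspondence \cref{eq:novel_single} that was built into the construction in \cref{prop:3switchiso}.
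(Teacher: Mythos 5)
Your proposal is correct and follows essentially the same route as the paper's proof: both reduce the comparison to the single-message branches labelled by $(k_1,k_2,k_3)$, use the single-message correspondence of \cref{eq:novel_single} together with \cref{def:local_equivalence} to match each branch with the chain $\dket{V^{\rightarrow k_1}_{\emptyset,\emptyset}} * \dket{A_{k_1}} * \dots * \dket{V^{\rightarrow F}_{\{k_1,k_2\},k_3}}$, and invoke FAA plus the vanishing of $\dket{A_{k_n}}$ against the all-vacuum term to kill the remaining contributions. The only difference is bookkeeping order — the paper expands the effective Choi vector in the basis of \cref{eq:effHeff} and computes coefficients before contracting, whereas you propagate the single photon forward through $\bar{V}_1,\dots,\bar{V}_4$ — and your observation that WSR and the control labels render $symm$ trivial is exactly the argument the paper gives just before the proposition.
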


\subsection{Mapping general QC-QCs to process boxes}\label{sec:generalqcqctopb}

In this section, we wish to generalize our mapping from the QC-QC formulation of the dynamical switch to its process box formulation to arbitrary QC-QCs. This would then show that all QC-QCs can be viewed as process boxes.

There are two reasonable ways to go about this and they differ in how we model the control system as well as how we deal with multi-message states. The first way is a generalization of what we did for the dynamical switch. In this case, the control system is a part of the target, and the process box acts independently on each message of a multi-message state.

The second way considers the control system to be part of an internal memory. The control system then tells the process box on which wire it should expect an incoming message in the next time step based on which wires it sent messages on in the previous time step. If this expectation does not match what actually arrives on the wires, the process box aborts. This happens in particular when multiple messages arrive at the same time.

\subsubsection{Control system as part of the target system}\label{sec:controlintarget}


When we constructed a process box formulation of the dynamical switch, we actually did not use many features that are unique to this specific QC-QC. So, for the most part, we just have to generalize from the 3-partite to the $N$-partite case. We introduce the set $\C{T} = \{1,...,2N+2\}$ which splits into $\C{T}^O = \{1,3,...,2N+1\}$ and $\C{T}^I = \{2,4,...,2N+2\}$. Once again, $t=1$ is a valid output time only for the global past and $t=2N+2$ is a valid input time only for the global future. Given the set of operators of a QC-QC

\begin{gather}
\begin{aligned}\label{eq:Kraus}
    V^{\rightarrow k_1}_{\emptyset, \emptyset}&: \C{H}^{P} \rightarrow \C{H}^{A^I_{k_1} \alpha_1} \\
    V^{\rightarrow k_{n+1}}_{\C{K}_{n-1}, k_n}&: \C{H}^{A^O_{k_n} \alpha_n} \rightarrow \C{H}^{A^I_{k_{n+1}} \alpha_{n+1}} \\
    V^{\rightarrow k_N}_{\C{N}\backslash k_N, k_N}&: \C{H}^{A^O_{k_N} \alpha_N} \rightarrow \C{H}^{F \alpha_F},
\end{aligned}
\end{gather}
the basic idea is again that we use these to define the single-message action of the isometries of the sequence representation by adding time stamps and the vacuum state and then declare that the isometries essentially act independently on each individual non-vacuum state in a multi-message state. Up to this point, we have


\begin{gather}
    \bar{V}_1: \C{H}^P \otimes \ket{t=1} \rightarrow \bigotimes_{k_1} \C{F}(\C{H}^{A^I_{k_1} \alpha_1 C} \otimes \ket{t=2}) \nonumber \\
    \bar{V}_{n+1}: \bigotimes_{k_n} \C{F}(\C{H}^{A^O_{k_n} \alpha_n C} \otimes \ket{t=2n+1}) \rightarrow \bigotimes_{k_{n+1}} \C{F}(\C{H}^{A^I_{k_{n+1}} \alpha_{n+1} C} \otimes \ket{t=2n+2}) \\
    \bar{V}_{N+1}: \bigotimes_{k_N} \C{F}(\C{H}^{A^O_{k_N} \alpha_N C} \otimes \ket{t=2N+1}) \rightarrow \bigotimes_{k_{n+1}} \C{F}(\C{H}^{F \alpha_F} \otimes \ket{t=2N+2}). \nonumber
\end{gather}

The action on states on the output wire is then

\begin{gather}
\begin{aligned}\label{eq:pb_ext}
    \bar{V}_1 \ket{\psi, t=1}^P &= V_1 \ket{\psi}^P \ket{t=2} = \sum_{k_1} V^{\rightarrow k_1}_{\emptyset, \emptyset} \ket{\psi}^P \ket{t=2} \ket{\emptyset, k_1} \\
    \bar{V}_{n+1}(\bigotimes_{k_n} \ket{\psi^{k_n}, t=2n+1}^{A^O_{k_1}} \ket{\C{K}^{k_n}_{k_{n-1}}, k_n}) &= symm(\bigotimes_{k_n} \bar{V}_{n+1} (\ket{\psi^{k_n}, t=2n+1}^{A^O_{k_n}} \ket{\C{K}^{k_n}_{n-1}, k_n})) \\
    &= symm(\bigotimes_{k_n} \sum_{k^{k_n}_{n+1}} V^{\rightarrow k^{k_n}_{n+1}}_{\C{K}^{k_n}_{n-1}, k_n} \ket{\psi^{k_n}}^{A^O_{k_n}} \ket{t=2n+2} \ket{\C{K}^{k_n}_{n}, k_{n+1}^{k_n}}) \\
    \bar{V}_{N+1}(\bigotimes_{k_N} \ket{\psi^{k_N}, t=2N+1}^{A^O_{k_N}} \ket{\C{N} \backslash k_N, k_N})
    &= symm(\bigotimes_{k_N} \bar{V}_{N+1} \ket{\psi^{k_N}, t=2N+1}^{A^O_{k_N}} \ket{\C{N} \backslash k_N, k_N} \\ &= \bigodot_{k_n} V^{\rightarrow F}_{\C{K}_{\C{N} \backslash k_N}, k_N} \ket{\psi^{k_N}} \ket{t=2N+2}
\end{aligned}
\end{gather}
where $\C{K}^{k_n}_n \coloneqq \C{K}^{k_n}_{n-1} \cup k_n$. For the 0-message case, we use \cref{eq:isovac} as the definition, replacing the tensor product over 3 agents with one over $N$ agents.

Note that we use the symmetric tensor product directly instead of using the more complicated $symm$ in the last line, because all the states are sent to the same agent, the global future, in the last time step.

We now wish to prove that these maps actually define isometries.

\begin{restatable}[Sequence representation for QC-QCs]{prop}{symmiso}
\label{prop:symmiso}
The maps $\bar{V}_1,...,\bar{V}_{N+1}$ as defined by \cref{eq:pb_ext} are isometries.
\end{restatable}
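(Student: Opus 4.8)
The plan is to recognize each $\bar{V}_{n+1}$ as the symmetric-Fock lift (second quantization) of the single-message isometry induced by the QC-QC internal operation $V_{n+1}$, and to prove that such a lift preserves inner products via the permutation-sum form of the Fock inner product, exactly as in the proof of \cref{prop:3switchiso}. Since each $\bar{V}_{n+1}$ in \cref{eq:pb_ext} sends every single message to exactly one message and the vacuum to the vacuum (cf.\ \cref{eq:isovac}), it preserves the number of messages; because distinct message-number sectors of a Fock space are orthogonal under the inner product \cref{eq:fock_inner}, it suffices to verify $\braket{\bar{V}_{n+1}\psi|\bar{V}_{n+1}\phi} = \braket{\psi|\phi}$ separately on each $m$-message sector.

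First I would dispose of the single-message case. On a one-message input the action of $\bar{V}_{n+1}$ is, by the single-message specialization of \cref{eq:pb_ext} (cf.\ \cref{eq:novel_single}), obtained by stripping the time stamp, applying $V_{n+1}$ together with the control update, re-attaching the fixed time stamp $\ket{t=2n+2}$, and padding the remaining input wires with vacuum. Re-attaching a fixed unit time stamp and adjoining vacuum factors are manifestly inner-product preserving (each contributes a factor $\braket{\Omega|\Omega}=1$ or $\braket{t|t}=1$), so the single-message action inherits the isometry property directly from the fact that the QC-QC operator $V_{n+1}$ is an isometry. In particular $\braket{\bar{V}_{n+1} m | \bar{V}_{n+1} m'} = \braket{m|m'}$ for \emph{every} pair of single messages $m, m'$, including those carried on different output wires with different control labels, for which both sides vanish. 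The boundary maps $\bar{V}_1$ and $\bar{V}_{N+1}$ are handled identically, using that $V_1$ and $V_{N+1}$ are isometries.

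For $m \geq 2$, I would write each input state as a symmetric product of single messages and invoke the definition \cref{eq:pb_ext}, whereby $\bar{V}_{n+1}$ acts as $symm$ of the tensor product of the single-message images. The crucial step is to evaluate the Fock inner product in the unified permutation-sum form of \cref{eq:fock_inner_n}: using the wire isomorphism \cref{eq:wireiso} to view all agents' wires as a single Fock space over a direct sum, the inner product of two $m$-message states becomes one sum over $\pi \in S^m$ of products $\prod_k \braket{\cdot|\cdot}$ of \emph{single-message} inner products, rather than a product over wires. Since the same sum over $S^m$ indexes both the input and the output inner products, and since $\braket{\bar{V}_{n+1} m_k|\bar{V}_{n+1} m'_{\pi(k)}} = \braket{m_k|m'_{\pi(k)}}$ term by term from the single-message case, the two sums agree and $\braket{\bar{V}_{n+1}\psi|\bar{V}_{n+1}\phi}=\braket{\psi|\phi}$ follows.

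The main obstacle is precisely this bookkeeping of the symmetrization. Unlike on the input side, where the control forces the messages onto distinct wires, $\bar{V}_{n+1}$ may route several messages to the same agent's input wire, so the output genuinely lives in a symmetric subspace and its wire-by-wire inner product contains cross terms. Organizing everything message-by-message (legitimate because single messages on distinct wires lie in orthogonal summands, so cross-wire permutation terms vanish automatically) is what aligns the input and output permutation sums and lets the single-message isometry be applied term by term. The remaining care is in the vacuum accounting: when the single-message images are recombined under $symm$, the vacuum factors adjoined on the non-active wires must merge consistently to the vacuum, which holds under the convention that an absent wire carries $\ket{\Omega}$; verifying this is routine once the permutation-sum form is in place.
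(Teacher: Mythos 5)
Your proof is correct, and it reaches the same destination as the paper's by a noticeably cleaner route. The paper's proof works destination-by-destination: it expands the tensor product of sums over receiving agents into a sum over all message distributions, matches permutations within each receiver's symmetric subspace, enlarges the permutation set from $\bigtimes_{k_{n+1}} S^{k_{n+1}}$ to all bijections between sender sets, and only at the very end invokes \cref{lemma:krausiso} to kill the cross terms. You instead package all of that into two standard facts: (i) the single-message map $W$ (which is just the conditional isometry $V_{n+1}$ of \cref{eq:defiso}, conjugated by the time-stamp/vacuum embedding) is an isometry on the \emph{entire} single-message space, and (ii) the second quantization $\Gamma(W)$ of a single-message isometry is an isometry on the symmetric Fock space, via the permanent formula \cref{eq:fock_inner_n} and the wire isomorphism \cref{eq:wireiso}. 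Be aware of where the paper's \cref{lemma:krausiso} is hiding in your argument: your claim that $\braket{Wm|Wm'}=\braket{m|m'}=0$ for single messages with distinct but length-compatible controls ($\C{K}_{n-1}\cup k_n=\C{L}_{n-1}\cup l_n$, $k_n\neq l_n$) is exactly the statement $\sum_{k_{n+1}} V^{\rightarrow k_{n+1}\dagger}_{\C{K}_{n-1},k_n}V^{\rightarrow k_{n+1}}_{\C{L}_{n-1},l_n}=0$; unlike in the dynamical switch (\cref{prop:3switchiso}) the individual summands do \emph{not} vanish here, only their sum does. Your formulation never needs the individual terms, which is precisely what makes the reduction work, but you should state this identity (or cite the isometry $V_{n+1}^\dagger V_{n+1}=\mathbb{1}$ in block form) explicitly rather than leaving it at ``inherits directly.'' The one remaining bookkeeping point — that $symm$ applied to the tensor product of the single-message images coincides, under \cref{eq:wireiso}, with $\bigodot_k W(m_k)=W^{\otimes m}(\bigodot_k m_k)$ with the paper's unnormalized convention for $\odot$ — deserves a sentence of verification, but it does hold and your proof goes through.
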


In order to prove this, we will need the following lemma.

\begin{restatable}[Isometry condition for internal operations]{lemma}{krausiso}
\label{lemma:krausiso}
For $\C{K}_{n-1}, \C{L}_{n-1} \subsetneq \C{N}$ and $k_n \in \C{N} \backslash \C{K}_{n-1}, l_n \in \C{N} \backslash \C{L}_{n-1}$ with $\C{K}_{n-1} \cup k_n = \C{L}_{n-1} \cup l_n$, it holds that

\begin{equation}
\sum_{k_{n+1}} V^{\rightarrow k_{n+1} \dagger}_{\C{K}_{n-1}, k_n} V^{\rightarrow k_{n+1}}_{\C{L}_{n-1}, l_n} = \mathbb{1}^{A^O_{k_{n}} \alpha_n} \delta_{\C{K}_{n-1}, \C{L}_{n-1}} \delta_{k_n, l_n}.
\end{equation}
\end{restatable}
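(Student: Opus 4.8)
The plan is to derive this identity directly from the defining requirement of a QC-QC, namely that the embedded internal operation $\tilde{V}_{n+1}$ of \eqref{eq:defiso} is an isometry, $\tilde{V}_{n+1}^\dagger \tilde{V}_{n+1} = \mathbb{1}$. The key observation is that the control system in $\tilde{V}_{n+1}$ bundles the individual Kraus operators $V^{\rightarrow k_{n+1}}_{\C{K}_{n-1}, k_n}$ together with orthonormal control states, so that computing $\tilde{V}_{n+1}^\dagger \tilde{V}_{n+1}$ and matching coefficients against the identity reproduces exactly the claimed relation, with the constraint $\C{K}_{n-1} \cup k_n = \C{L}_{n-1} \cup l_n$ emerging naturally from the orthonormality of the \emph{output} control states.

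First I would write $\tilde{V}_{n+1}^\dagger = \sum_{\C{L}_{n-1}, l_n, l_{n+1}} \tilde{V}^{\rightarrow l_{n+1}\dagger}_{\C{L}_{n-1}, l_n} \otimes \proj{\C{L}_{n-1}, l_n}{\C{L}_{n-1} \cup l_n, l_{n+1}}$ and multiply it by \eqref{eq:defiso}. The inner product of the output control states, $\braket{\C{L}_{n-1} \cup l_n, l_{n+1} | \C{K}_{n-1} \cup k_n, k_{n+1}} = \delta_{\C{L}_{n-1} \cup l_n, \C{K}_{n-1} \cup k_n} \delta_{l_{n+1}, k_{n+1}}$, collapses the sum over $l_{n+1}$ and forces the two accumulated sets to coincide, yielding
\begin{equation}
\tilde{V}_{n+1}^\dagger \tilde{V}_{n+1} = \sum_{\substack{\C{L}_{n-1}, l_n \\ \C{K}_{n-1}, k_n}} \Big( \sum_{k_{n+1}} \tilde{V}^{\rightarrow k_{n+1}\dagger}_{\C{L}_{n-1}, l_n} \tilde{V}^{\rightarrow k_{n+1}}_{\C{K}_{n-1}, k_n} \Big) \delta_{\C{L}_{n-1} \cup l_n, \C{K}_{n-1} \cup k_n} \otimes \proj{\C{L}_{n-1}, l_n}{\C{K}_{n-1}, k_n}.
\end{equation}
Since the input control states $\{\ket{\C{K}_{n-1}, k_n}\}$ are orthonormal, the identity on the domain of $\tilde{V}_{n+1}$ decomposes as $\mathbb{1} = \sum_{\C{K}_{n-1}, k_n} \mathbb{1}^{\tilde{A}^O_n \alpha_n} \otimes \proj{\C{K}_{n-1}, k_n}{\C{K}_{n-1}, k_n}$, and the operators $\proj{\C{L}_{n-1}, l_n}{\C{K}_{n-1}, k_n}$ are linearly independent. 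Matching the coefficient of each such operator in $\tilde{V}_{n+1}^\dagger \tilde{V}_{n+1} = \mathbb{1}$ then gives, whenever $\C{L}_{n-1} \cup l_n = \C{K}_{n-1} \cup k_n$, the relation $\sum_{k_{n+1}} \tilde{V}^{\rightarrow k_{n+1}\dagger}_{\C{L}_{n-1}, l_n} \tilde{V}^{\rightarrow k_{n+1}}_{\C{K}_{n-1}, k_n} = \mathbb{1}^{\tilde{A}^O_n \alpha_n} \delta_{\C{L}_{n-1}, \C{K}_{n-1}} \delta_{l_n, k_n}$. Relabelling $\C{K}_{n-1} \leftrightarrow \C{L}_{n-1}$, $k_n \leftrightarrow l_n$ (both the right-hand side and the union constraint are symmetric under this swap) and transporting the generic space $\tilde{A}^O_n$ back to the agent space $A^O_{k_n}$ through the canonical isomorphism yields the statement of the lemma.

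The step I expect to carry the real content is the vanishing of the off-diagonal terms: when the accumulated sets agree, $\C{K}_{n-1} \cup k_n = \C{L}_{n-1} \cup l_n$, but the most recently acting agent differs, $(\C{K}_{n-1}, k_n) \neq (\C{L}_{n-1}, l_n)$, the corresponding coefficient in $\mathbb{1}$ is zero, so the isometry condition forces $\sum_{k_{n+1}} \tilde{V}^{\rightarrow k_{n+1}\dagger}_{\C{L}_{n-1}, l_n} \tilde{V}^{\rightarrow k_{n+1}}_{\C{K}_{n-1}, k_n} = 0$. This is the substantive part of the lemma — it says that histories visiting the same set of agents but ending at different agents contribute orthogonally — whereas the diagonal case $(\C{K}_{n-1},k_n)=(\C{L}_{n-1},l_n)$ is just the single-branch isometry normalization. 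The only mild technical care needed is to confirm that $\tilde{V}_{n+1}^\dagger \tilde{V}_{n+1}$ equals the identity on exactly the correct domain, spanned by the valid control configurations $\ket{\C{K}_{n-1}, k_n}$ with $\C{K}_{n-1} \subsetneq \C{N}$ and $k_n \in \C{N} \setminus \C{K}_{n-1}$, so that no spurious control states pollute the coefficient matching.
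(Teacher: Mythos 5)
Your proposal is correct and follows essentially the same route as the paper's proof: expand $\tilde{V}_{n+1}^\dagger \tilde{V}_{n+1}$ from the definition in \cref{eq:defiso}, use orthonormality of the output control states to enforce $\C{K}_{n-1}\cup k_n = \C{L}_{n-1}\cup l_n$, and match coefficients of the linearly independent control projectors against the block-diagonal decomposition of the identity. The paper performs exactly this coefficient matching, including the observation that the off-diagonal ($k_n\neq l_n$) terms must vanish, so there is nothing to add.
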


This is a direct consequence of the operators $V_n$ being isometries in the QC-QC picture. The lemma is necessary to get rid of certain cross terms in the inner product that appear as a result of having to symmetrize when there are multiple messages on a single wire.

Given \cref{prop:symmiso}, what we constructed is therefore a causal box and it is also a process box by the same argument we gave for the dynamical switch.

Finally, the process box we constructed is also equivalent to the original QC-QC.

\begin{restatable}[Process box description of QC-QCs]{prop}{symmequivalence} \label{prop:symmequivalence}
The QC-QC described by the process vector $\ket{w_{\C{N},F}} = \dket{V_1} * ... * \dket{V_N}$ is equivalent to the process box whose sequence representation is given by the isometries $\bar{V}_1,..., \bar{V}_{N+1}$ from \cref{eq:pb_ext}.
\end{restatable}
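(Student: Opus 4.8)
The plan is to verify \cref{def:pbqceq} directly by computing both supermap actions and showing they agree up to the global past/future isomorphisms. By \cref{lemma:compislink}, sequential composition of a sequence representation is the link product of the individual Choi vectors, so the effective Choi vector of the constructed process box is
\begin{equation}
    \dket{\bar{V}} = \dket{\bar{V}_1} * \dket{\bar{V}_2} * \cdots * \dket{\bar{V}_{N+1}}.
\end{equation}
Fixing operationally equivalent local operations $A_i \leftrightarrow \bar{A}_i$ as in \cref{def:local_equivalence}, the goal is to show that $(\dket{\bar{A}_1} \otimes \cdots \otimes \dket{\bar{A}_N}) * \dket{\bar{V}} \cong (\dket{A_1} \otimes \cdots \otimes \dket{A_N}) * \ket{w_{\C{N},F}}$, where $\cong$ strips the time stamps $\ket{t=1}$ and $\ket{t=2N+2}$ on the global past/future.

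The first key step is to reduce everything to the single-message action. Under FAA the global past emits a single message, and under WSR each agent receives exactly one non-vacuum message. Recalling that $\bar{A}_i = \bigotimes_{t \in \C{T}^I_i}(A_i \otimes \proj{\C{O}_i(t)}{t} + \proj{\Omega,\C{O}_i(t)}{\Omega,t})$, this operator acts as $A_i$ on the content of the incoming message while sending every other time slot vacuum-to-vacuum. Since each $\bar{V}_{n+1}$ conserves message number and the message passed to any already-acted agent is blocked by its control label (as argued just above \cref{prop:switchequivalence} for the switch), the multi-message and $symm$ branches of \cref{eq:pb_ext} are never populated in the operational composition. Only the single-message action \cref{eq:novel_single} of each $\bar{V}_{n+1}$ is probed, and by construction that action equals the QC-QC internal operation $V_{n+1}$ up to appending the time stamp $\ket{t=2n+2}$ and the spectator vacuum wires.

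The second step is to match the two link products term by term. I would use the state correspondence established at the start of \cref{sec:statespace}, identifying $\ket{\psi}^{\tilde{A}^I_n}\otimes\ket{\C{K}_{n-1},k_n}$ with the process-box state carrying a single message at $t=2n$ on wire $\bar{A}^I_{k_n}$ and vacua elsewhere (and analogously for outputs). Under this identification, contracting $\dket{\bar{V}}$ against $\dket{\bar{A}_1}\otimes\cdots\otimes\dket{\bar{A}_N}$ reproduces exactly the QC-QC chain
\begin{equation}
    \dket{V_1} * \dket{A_1} * \cdots * \dket{V_N} * \dket{A_N} * \dket{V_{N+1}} = (\dket{A_1}\otimes\cdots\otimes\dket{A_N}) * \ket{w_{\C{N},F}},
\end{equation}
with the time stamps and spectator vacua carried along as inert labels that are precisely what the global-past/future isomorphism of \cref{def:pbqceq} removes. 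The argument is naturally phrased as an induction on $n$: at each time step the incoming state is a single message with a definite control label, so the equality of single-message actions from the first step applies verbatim and advances the induction.

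The main obstacle is justifying that the Fock-space inner products in the process-box link product collapse to the generic-space inner products of the QC-QC without spurious contributions from symmetrized multi-message sectors. This is exactly where the vacuum-annihilation identity $\bar{A}^{t}_i\ket{\Omega,t}=0$ (a consequence of LO, established just above \cref{def:local_equivalence}) and \cref{lemma:krausiso} do the work: the former kills all cross-terms between the genuine message and the vacuum spectator wires, while the latter guarantees that the only surviving contraction at each step is the one where control labels agree, matching the generic-space computation of the QC-QC. Organizing the induction so that the incoming state at every step is a labelled single message keeps these two ingredients available at each stage and prevents the delocalized multi-message structure of $\dket{\bar{V}}$ from ever interfering.
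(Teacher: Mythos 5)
Your proposal is correct and follows essentially the same route as the paper, whose proof defers to that of \cref{prop:switchequivalence}: there too the argument restricts the effective Choi vector to the single-message sector (using message-number conservation and the control labels to rule out multi-message and all-vacuum terms) and then matches the contraction with the operationally equivalent local operations term by term against the QC-QC chain, with FAA/LO annihilating the vacuum cross terms. The only slight misstep is that \cref{lemma:krausiso} is not needed here --- it serves only to cancel symmetrized multi-message cross terms in the isometry proof of \cref{prop:symmiso}, whereas in the equivalence computation the surviving contractions are selected by plain orthonormality of the control basis states, which is all your induction requires once the multi-message sectors are ruled out.
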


The proof is completely analogous to the proof of \cref{prop:switchequivalence}. Generalizing from the 3-partite to the $N$-partite case does not change anything about the proof. Additionally, while we considered a trivial global past in the case of the dynamical switch, we can reduce to that case even if the global past is non-trivial. There is an obvious isomorphism between the global past in the QC-QC and in the process box picture that consists of appending/dropping the time stamp $\ket{t=1}$. We can then compose the Choi/process vector with an arbitrary element of the respective global past and then proceed as if the global past were trivial.

\subsubsection{Control system as an internal memory}\label{sec:internalcontrol}

In the previous part as well as in our example of the dynamical switch, we explicitly constructed extensions to allow for multiple messages being sent to the process box at the same time. However, in the proofs (see \cref{sec:proofs}) of the various propositions we also see that when the process box is actually composed with local agents, there is at any time at most one non-vacuum state on the input wires of the process box. One could therefore take the position that what happens with multiple messages does not matter. Our goal for this section is therefore to construct a process box that is only defined on states that can actually occur after composition and that aborts if it receives some other state.

An additional advantage, which this formulation will have, is that we can turn the control and ancillary systems into an internal wire and do not have to view it as part of the system. The isometries are then

\begin{gather}\label{eq:abortiso}
\begin{aligned}
    \bar{V}_1&: \C{F}(\C{H}^{P} \otimes \ket{t=1}) \rightarrow \bigotimes_{k_1} \C{F}(\C{H}^{A^I_{k_1}} \otimes \ket{t=2}) \otimes \C{H}^{\bar{\alpha}_1} \otimes \C{H}^{\bar{C}_1} \\
    \bar{V}_{n+1}&: \bigotimes_{k_n} \C{F}(\C{H}^{A^O_{k_n}} \otimes \ket{t=2n+1}) \otimes \C{H}^{\bar{\alpha}_{n}} \otimes \C{H}^{\bar{C}_n}  \rightarrow \bigotimes_{k_{n+1}} \C{F}(\C{H}^{A^I_{k_{n+1}}} \otimes \ket{t=2n+2}) \otimes \C{H}^{\bar{\alpha}_{n+1}} \otimes \C{H}^{\bar{C}_{n+1}} \\
    \bar{V}_{N+1}&: \bigotimes_{k_N} \C{F}(\C{H}^{A^O_{k_N}} \otimes \ket{t=2N+1}) \otimes \C{H}^{\bar{\alpha}_{N}} \otimes \C{H}^{\bar{C}_N} \rightarrow \C{F}(\C{H}^{F} \otimes \ket{t=2N+2}) \otimes \C{H}^{\bar{\alpha}_F}.
\end{aligned}
\end{gather}

It is then possible that the process box receives invalid inputs even if only one message is sent. That is the control system has the form $\ket{\C{K}_n, k_{n+1}}$, but there is an input on a wire that does not correspond to $k_{n+1}$. This will actually be very common if we do not compose the system. Imagine a process box that during the first time step simply forwards the target system in a superposition to every lab. The control system is then $\sum_{k_1 \in \C{N}} \ket{\emptyset, k_1}$. If we then input a non-vacuum state $\ket{\psi}$ on a single wire (say wire 1, for example) in the next time step, the isometry $V_2$ will receive $\sum_{k_1 \in \C{N}} \ket{\psi, t=3}^{\bar{A}^O_1} \otimes \ket{\emptyset, k_1}$. Only the term corresponding to $k_1 = 1$ is valid. How should we deal with the other terms? A reasonable solution is to say the process box aborts when it encounters an unexpected input, i.e. it outputs the state $\ket{abort}$. More precisely, before applying the isometry $V_{n+1}$, the circuit applies a projective measurement $\{\C{P}^{n}_{accept}, 1 - \C{P}^n_{accept}\}$, with outcomes $\{accept, abort\}$, where $\C{P}^n_{accept}$ projects on the subspace

\begin{equation}\label{eq:accept}
    \C{H}^n_{accept} = \text{Span}[\{\ket{\psi, t=2n+1}^{A^O_{k_n}} \otimes \bigotimes_{i \in \C{N}\backslash k_n} \ket{\Omega, t=2n+1}^{\bar{A}^O_i} \otimes \ket{\alpha}^{\alpha_n} \ket{\C{K}_{n-1}, k_n}\}_* \cup \ket{\Omega}^{\bar{A}^{O}_\C{N} \bar{\alpha}_n \bar{C}_n}].
\end{equation}


The asterisk ($*$) is a placeholder symbol for the conditions on this set, namely $\C{K}_{n-1} \subseteq \C{N}, k_n \in \C{N} \backslash \C{K}_{n-1}, \ket{\psi} \in \C{H}^{A^O_{k_n}}, \ket{\alpha} \in \C{H}^{\alpha_n}$.

The accepted states are thus exactly those that have a single non-vacuum state on a single wire with a control bit that corresponds to that wire.

If the outcome is $abort$, the circuit outputs the state $\ket{abort}$. If the outcome is $accept$, the circuit applies the isometry $\bar{V}_{n+1}$. We define their action on states in $\C{H}^n_{accept}$ to be

\begin{gather}
\begin{aligned}\label{eq:abortdef}
    \bar{V}_{n+1}&\ket{\psi, t=2n+1}^{A^O_{k_n}} \otimes \bigotimes_{i \in \C{N}\backslash k_n} \ket{\Omega, t=2n+1}^{\bar{A}^O_i} \otimes \ket{\alpha}^{\alpha_n} \otimes \ket{\C{K}_{n-1}, k_n} \\
    &= \sum_{k_{n+1}} V^{\rightarrow k_{n+1}}_{\C{K}_{n-1}, k_n} (\ket{\psi}^{A^O_{k_n}} \otimes \ket{\alpha}^{\alpha_n}) \otimes \ket{t=2n+2} \otimes \bigotimes_{i \in \C{N}\backslash k_{n+1}} \ket{\Omega, t=2n+2}^{\bar{A}^I_i} \otimes \ket{\C{K}_{n-1} \cup k_n, k_{n+1}}
\end{aligned}
\end{gather}
if $\psi \neq \Omega$ and analogously for the edge cases $\bar{V}_1$ and $\bar{V}_{N+1}$. This is essentially the action that these isometries have in the QC-QC picture once we add appropriate time stamps and vacuum states. If $\psi = \Omega$, then we define the action similar to \cref{eq:isovac}

\begin{equation}
    \bar{V}_{n+1}(\bigotimes_{i=1}^N \ket{\Omega,t=2n+1}^{\bar{A}^O_i} \ket{\Omega}^{\bar{\alpha}_n} \ket{\Omega}^{\bar{C}_n}) = \bigotimes_{i=1}^N \ket{\Omega,t=2n+2}^{\bar{A}^I_i} \ket{\Omega}^{\bar{\alpha}_{n+1}} \ket{\Omega}^{\bar{C}_{n+1}}.
\end{equation}

From the above equations one can see that the image of $V_{n+1}$ lies in 

\begin{equation}
    \C{H}^{n+1}_{accepted} = \text{Span}[\{\ket{\psi, t=2n+1}^{A^I_{k_{n+1}}} \otimes \bigotimes_{i \in \C{N}\backslash k_{n+1}} \ket{\Omega, t=2n+2}^{\bar{A}^I_i} \otimes \ket{\alpha}^{\alpha_{n+1}} \otimes \ket{\C{K}_{n}, k_{n+1}}\}_* \cup \ket{\Omega}^{\bar{A}^{I}_\C{N} \bar{\alpha}_{n+1} \bar{C}_{n+1}}].
\end{equation}


This time the asterisk should be interpreted as the conditions $\C{K}_{n} \subseteq \C{N}, k_{n+1} \in \C{N} \backslash \C{K}_{n}, \ket{\psi} \in \C{H}^{A^I_{k_n}}, \ket{\alpha} \in \C{H}^{\alpha_{n+1}}$.

Note that the measurement is perfectly non-disturbing, and the outcome is always $accept$ on states obtained from composition with local operations. The reason for this is that if the internal (local) operations receive a state in $\C{H}^n_{accept}$ ($\C{H}^{n+1}_{accepted}$), they output a state in $\C{H}^{n+1}_{accepted}$ ($\C{H}^{n+1}_{accept}$). This is formalized in the proposition below.

\begin{restatable}[No aborts]{lemma}{accept} \label{lemma:accept}
If the process box given by the sequence representation $\bar{V}_1,...,\bar{V}_{N+1}$ from \cref{eq:abortdef} is composed with local operations obeying WSR, LO and OSR, then the projective measurements yield the outcome $accept$ with unit probability.
\end{restatable}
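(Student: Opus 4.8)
The plan is to argue by induction on the time step $n$, following the global state of the composed circuit as it alternates between the process-box isometries $\bar{V}_{n+1}$ and the local operations, and showing it never leaves the subspaces on which the measurement accepts. I would isolate two interlocking claims: (A) each isometry $\bar{V}_{n+1}$ maps $\C{H}^n_{accept}$ into $\C{H}^{n+1}_{accepted}$, and (B) the local operations map $\C{H}^{n}_{accepted}$ into $\C{H}^{n}_{accept}$. Starting from the global past and chaining (A) and (B) alternately shows that immediately before each measurement $\{\C{P}^n_{accept}, 1-\C{P}^n_{accept}\}$ the state lies wholly in $\C{H}^n_{accept}$; hence $\C{P}^n_{accept}$ acts as the identity on it, the measurement is non-disturbing, and $accept$ occurs with probability one.

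Claim (A) is read off the definition of the isometries. By \cref{eq:abortdef}, $\bar{V}_{n+1}$ sends a single-message basis state of $\C{H}^n_{accept}$ (one non-vacuum message on $A^O_{k_n}$ with matching control $\ket{\C{K}_{n-1}, k_n}$) to $\sum_{k_{n+1}} V^{\rightarrow k_{n+1}}_{\C{K}_{n-1}, k_n}(\cdots)\otimes \ket{\C{K}_{n-1}\cup k_n, k_{n+1}}$ with vacuum elsewhere, which is a superposition of basis states of $\C{H}^{n+1}_{accepted}$; the all-vacuum state is handled by the vacuum case of the same definition. As both target spaces are defined as spans, linearity extends this from basis states to arbitrary superpositions, and the base case $\bar{V}_1$ mapping the global past into $\C{H}^1_{accepted}$ is the analogous edge case of \cref{eq:abortdef}.

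Claim (B) is the heart of the argument and is where the agent constraints enter. A basis state of $\C{H}^n_{accepted}$ carries a single non-vacuum message on the agent-input wire $A^I_{k_n}$ at time $2n$, vacuum on all other agent inputs, and an internal control/ancilla factor the agents never touch. I would first invoke OSR to factor each $\bar{A}_i$ over time, so that only the piece $\bar{A}_i^{t=2n}$ is relevant at this step. WSR ensures that $A^I_{k_n}$ is the only wire carrying a non-vacuum message, so every agent $i \neq k_n$ acts on the vacuum and, by LO, outputs $\ket{\Omega, \C{O}_i(2n)} = \ket{\Omega, 2n+1}$. The active agent $k_n$ receives a genuine message and, by AA (in force as part of the standing FAA assumption), outputs a non-vacuum $\ket{\phi, 2n+1}^{A^O_{k_n}}$. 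The result is exactly one non-vacuum message on $A^O_{k_n}$ at time $2n+1$ with the control unchanged at $\ket{\C{K}_{n-1}, k_n}$ -- a basis state of $\C{H}^n_{accept}$, as required.

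The step that needs the most care, and the one I would treat as the main obstacle, is propagating claim (B) through superpositions without leaking out of the accept subspace. A general element of $\C{H}^n_{accepted}$ superposes branches in which different agents $k_n$ are active over different already-acted subsets $\C{K}_{n-1}$; since the joint operation $\bar{A}_1 \otimes \cdots \otimes \bar{A}_N$ is linear and sends each branch into the linear subspace $\C{H}^n_{accept}$, the entire superposition lands in $\C{H}^n_{accept}$, and there is no genuine cross-term because the agents act wire-by-wire while the control/ancilla label rides along untouched. I would also explicitly flag the indispensability of AA here: without it an agent could annihilate its incoming message, producing a non-vacuum control over all-vacuum agent wires -- a state that is neither a single-message accept state nor the all-vacuum state and would trigger an abort -- so the conclusion genuinely rests on FAA even though the statement lists only WSR, LO and OSR.
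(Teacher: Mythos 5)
Your proof is correct and follows essentially the same route as the paper's: an induction alternating between the claim that each $\bar{V}_{n+1}$ maps $\C{H}^n_{accept}$ into $\C{H}^{n+1}_{accepted}$ (by definition of the isometries) and the claim that WSR, LO and OSR force the local operations to map $\C{H}^{n}_{accepted}$ back into $\C{H}^{n}_{accept}$, with the base case supplied by the global past. Your explicit observation that AA/FAA is also needed in the inductive step (otherwise an agent could annihilate its message and leave a non-vacuum control over all-vacuum wires, triggering an abort) is a point the paper's proof leaves implicit under its standing FAA assumption, and is a worthwhile clarification.
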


We note that the isometries are thus only isometries on valid states. However, if we compose the process box with local operations, there are only ever valid states on the wires. 

Finally, we can show that this process box indeed corresponds to the QC-QC.


\begin{restatable}[Process box description of QC-QCs: Abort extension]{prop}{abortequivalence} \label{prop:abortequivalence}
The QC-QC described by the process vector $\ket{w_{\C{N},F}} = \dket{V_1} * ... * \dket{V_N}$ is equivalent to the process box whose sequence representation is given by the isometries $\bar{V}_1,..., \bar{V}_{N+1}$ from \cref{eq:abortdef}.

\end{restatable}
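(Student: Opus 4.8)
The plan is to verify the equivalence criterion of \cref{def:pbqceq} directly: fix any local operations $A_1,\dots,A_N$ of the QC-QC together with operationally equivalent process-box operations $\bar{A}_1,\dots,\bar{A}_N$ in the sense of \cref{def:local_equivalence}, and show that the two composed supermaps agree up to the trivial time-stamp isomorphism on the global past and future. The decisive simplification is \cref{lemma:accept}: whenever the process box is composed with local operations obeying WSR, LO and OSR, every internal projective measurement yields $accept$ with certainty. This lets me discard the abort branch entirely and treat the process box as acting purely through the accept-branch isometries of \cref{eq:abortdef}.

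First I would use \cref{lemma:accept} to reduce the computation. Starting from an input on the global past wire and interleaving the valid local operations, the states reaching each internal stage always lie in the accept subspaces $\C{H}^n_{accept}$, so the measurements are non-disturbing and can be dropped. Consequently the effective Choi vector of the process box, when contracted against valid agents, is simply the link product $\dket{\bar{V}} = \dket{\bar{V}_1} * \cdots * \dket{\bar{V}_{N+1}}$ of the accept-branch isometries, with the control wires $\bar{C}_n$ and ancillae $\bar{\alpha}_n$ contracted internally.

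Next comes the core computation, which by \cref{lemma:compislink} is a link product. I would expand $(\dket{\bar{A}_1} \otimes \cdots \otimes \dket{\bar{A}_N}) * \dket{\bar{V}}$ by contracting stage by stage. Each $\bar{A}_i = \bigotimes_{t \in \C{T}^I_i}(A_i \otimes \proj{\C{O}_i(t)}{t} + \proj{\Omega, \C{O}_i(t)}{\Omega, t})$ contributes its nontrivial factor $A_i$ only on the single non-vacuum message routed to agent $i$ at its active time, while its vacuum-projector factors link trivially with the vacuum-and-time-increment structure built into the $\bar{V}_{n+1}$. By construction in \cref{eq:abortdef}, the accept-branch isometries are exactly the QC-QC operators $V^{\rightarrow k_{n+1}}_{\C{K}_{n-1},k_n}$ dressed with the time stamps $\ket{t=2n+2}$, with vacuum states on the inactive wires, and with the control labels $\ket{\C{K}_{n-1}\cup k_n, k_{n+1}}$. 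Under the state correspondence of \cref{sec:statespace} these dressings are inert in the contraction, so the whole link product collapses to the QC-QC contraction $\dket{V_1} * \dket{A_1} * \cdots * \dket{A_N} * \dket{V_{N+1}}$, which is precisely $(\dket{A_1}\otimes\cdots\otimes\dket{A_N}) * \ket{w_{\C{N},F}}$. The only residual difference is the time stamp $\ket{t=1}$ on the global past and $\ket{t=2N+2}$ on the global future, which is exactly the isomorphism $\cong$ permitted in \cref{def:pbqceq}.

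The argument is structurally identical to the proofs of \cref{prop:switchequivalence} and \cref{prop:symmequivalence}, but is in fact simpler: because the abort extension restricts each stage (through the measurement) to single non-vacuum messages, multi-message states never arise, so the symmetrization bookkeeping and the cross-term cancellation supplied by \cref{lemma:krausiso} in the symmetric-extension proof are unnecessary here. The main point that must be established with care is therefore \cref{lemma:accept} itself --- that no invalid or partially-routed state can leak through across the induction over time steps --- since once the abort branch is guaranteed inactive, the remaining equivalence is the same term-by-term matching of the dressed QC-QC operators that was carried out for the dynamical switch.
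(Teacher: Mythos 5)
Your proposal matches the paper's proof: it likewise invokes \cref{lemma:accept} to drop the measurements and restrict to the accept subspaces, computes the Choi vector as the link product of the isometries from \cref{eq:abortdef}, and contracts with the equivalent local operations so that the time stamps, vacuum states and control labels are absorbed, leaving $(\dket{A_1}\otimes\cdots\otimes\dket{A_N})*\ket{w_{\C{N},F}}$ up to the global-future time-stamp isomorphism. Your observation that this extension avoids the symmetrization and \cref{lemma:krausiso} bookkeeping needed for the symmetric extension is also consistent with how the paper's argument proceeds.
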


\subsection{Resolving the composability problem}\label{sec:resolvingcomp}

In an experimental setting, systems are usually composable \cite{Procopio_2015,Rubino_2017}. We can take the output of one experiment and use it as the input of another or intervene at some point to add another input. Meanwhile, process matrices in general are not composable \cite{Gu_rin_2019}. This can be demonstrated by an example, taken from \cite{Gu_rin_2019} and depicted in \cref{fig:pmcomp}. We consider two processes, $W$ with input wires $A^O, B^O$ and output wires $A^I, B^I$ and $W'$ with input wires $A'^O, B'^O$ and output wires $A'^I, B'^I$. Both of these processes are fixed order processes with $W$ sending a state $\rho$ first to $A^I$ and then sending the input from $A^O$ to $B^I$ and $W'$ essentially doing the reverse, sending $\rho$ to $B'^I$ and sending the output from $B'^O$ to $A'^I$.

\begin{figure}
    \centering
    \includegraphics[width=0.5\textwidth]{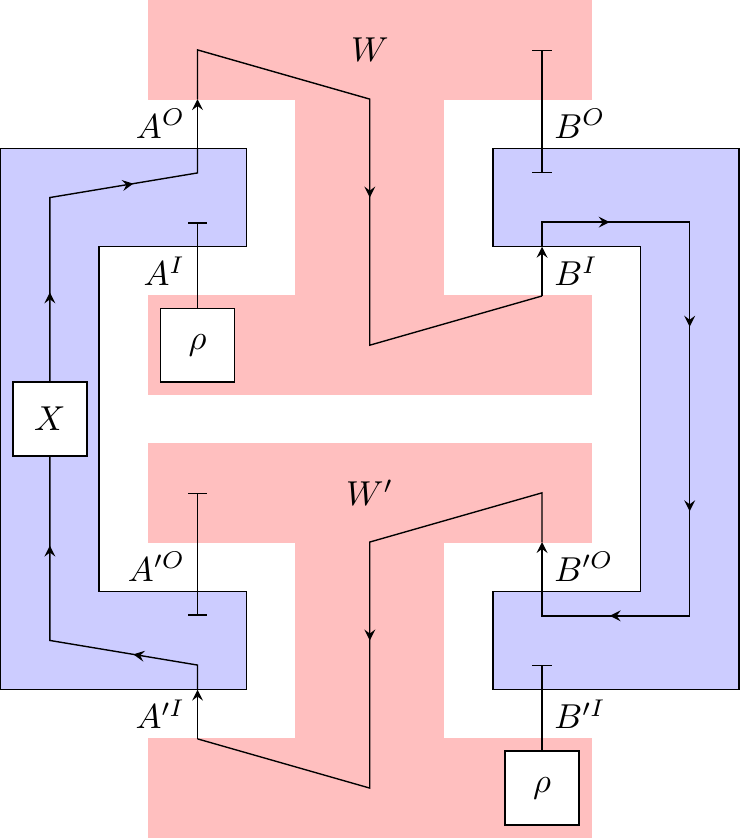}
    \caption{Composition of two process matrices $W$ and $W'$. $W$ sends a state to Alice and then sends her output to Bob, while $W'$ does the same but with the order of Alice and Bob reversed. The composition then includes a loop and if Alice applies the Pauli-$X$ unitary while Bob applies the identity the generalized Born rule yields a probability of 0, even though it should have yielded 1 if this composition were well-defined.}
    \label{fig:pmcomp}
\end{figure}

We then consider agents Alice and Bob for the composition of $W$ and $W'$. Alice's input wire is the combined system $A^I A'^I$ while her output is $A^O A'^O$. Similarly, Bob's input is $B^I B'^I$ and his output is $B^O B'^O$.

Alice then applies the Pauli-$X$ unitary to the state she receives on the wire $A'^I$ and sends the result along $A^O$, denoted in \cref{fig:pmcomp} with an $X$. Bob, meanwhile, simply sends the state he receives on wire $B'^I$ back on wire $B^O$. As clearly seen from the figure, this choice of local operations leads to a loop. The probability according to the generalized Born rule is not 1 as should be the case for a deterministic operation but 0. The reason is that the Pauli-$X$ unitary makes it so the value of the qubit on $A^O$ is bit-flipped compared to the qubit on $A'^O$, but, on the other hand, the two qubits are also supposed to be the same if one follows the path from $A^O$ over Bob to $A'^I$. The composition of $W$ and $W'$ is therefore not a valid process matrix.

Our result that QC-QCs are process boxes now resolves this problem of composability. Let us consider $W$ and $W'$ as process boxes. In this case, each would send a state $\rho$ at the earliest time, which is $t=2$ with our convention (as the global past is trivial), to one of the two agents, $W$ to Alice, $W'$ to Bob. The agents then discard this system. The loop, meanwhile, is never actually realized as there is no initial input to it. Even if we assumed that some non-vacuum state, let us say the state $\ket{0}$, arrived somehow on Alice's wire $A'^I$, this would not be an issue. The state would be associated with some time stamp $t_1$, $\ket{0, t_1}$ and Alice would send her output at $t_2 = \C{O}_A(t_1) > t_1$, $\ket{1, t_2}$. The process box $W$ then sends this state from Alice to Bob, i.e. Bob receives the state $\ket{1, t_3}$ with $t_3 \geq t_2$. Bob then applies his identity channel outputting $\ket{1, t_4}$ with $t_4 = \C{O}_B(t_3) > t_3$. The process box $W'$ then sends this state back to Alice and the whole cycle repeats. 

We thus see that the composition of process boxes may also lead to cycles, but these are well-defined. Thanks to adding time stamps, they simply correspond to sending a system back and forth between agents indefinitely. Such a description would not be possible without the use of Fock spaces (or a similar space that models sending of multiple messages), justifying the effort we put into working out the technicalities of these spaces and extensions of QC-QCs to Fock spaces. However, note that if such loops show up, the resulting composition is not a process box as the same agent receives a non-vacuum state multiple times, violating WSR. It is thus exactly these assumptions, WSR, LO and OSR, which we explicitly stated for the process box framework, but which are to some degree implicit in the process matrix framework that led to process matrices and process boxes not being closed under composition. Specifically, we can say that in the process matrix framework for each local agent a single time (or a superposition of times) is chosen, and the local agent applies their operations only at that time (cf. \cref{fig:qcqcstatespaces}). This works as long as WSR is satisfied but becomes a problem when there are multiple messages or even when a single message is sent to the same agent multiple times. This is then the source of the composability issues outlined in this section and \cite{Gu_rin_2019}. On the other hand, the process box treats all time steps equally a priori and only because we impose WSR on the statespace can we say that the action of the agents is effectively equivalent to the one in the QC-QC picture (cf. \cref{fig:pbstatespaces}). We can thus recover composability in the process matrix framework by recognizing that arbitrary compositions of process boxes can take us outside the WSR restricted statespace and by, if necessary, defining an appropriate extension. The boxes are still well-defined, composable physical systems.

In conclusion, we know now how to define composition, even if not for the full process matrix framework, then at least for QC-QCs. If $W$ and $W'$ are QC-QCs, we convert them to process boxes with an appropriate extension. We detailed two such extensions, but others are possible. Which extension is the appropriate one will ultimately depend on the experimental setup under consideration. The two process boxes operationally equivalent to $W$ and $W'$ are then composed according to the composition rules for causal boxes. This is always possible as causal boxes are closed under arbitrary composition \cite{Portmann_2017}, but the resulting system is not necessarily a process box as this subset of causal boxes does not have this feature \cite{Vilasini_2020}.











\section{Characterizing process boxes}\label{sec:characterizing}

\subsection{Process boxes are unitarily extensible}\label{sec:unitary}

Unitarity is a useful property in quantum information theory due to its connection to the reversibility of quantum mechanics \cite{Chiribella_2010}. Thanks to the Stinespring dilation \cite{Stinespring_1955}, any completely positive map can be regarded as a unitary operator by considering additional ancillary systems. We say they are purifiable or unitarily extensible. It is then natural to ask whether process matrices and causal boxes are also unitarily extensible. 

This question becomes even more interesting as it is argued in \cite{Ara_jo_2017} that unitary extensibility is a necessary condition for physical processes, noting that not all processes have this property. Therefore, we would expect that process boxes admit a unitary extension as we believe them to be physically implementable by virtue of being special cases of causal boxes, which model physical scenarios \cite{Portmann_2017, Vilasini_2020}.

Let us first define what it means for these objects to be unitary. We consider a process to be unitary if the supermap it is associated with yields unitary maps whenever it acts on agents applying unitary local operations, even if these agents have arbitrary ancillaries.


\begin{defi}[Unitary process matrices \cite{Ara_jo_2017}]
Let $W \in \C{L}(P A^{IO}_{\C{N}} F)$ be a process matrix. We call $W$ unitary if for all ancillaries $A'^{I/O}_n$, $n=1,...,N$ and all local operations $\C{M}_{A_n}: \C{L}(\C{H}^{A^I_n A'^I_n}) \rightarrow \C{L}(\C{H}^{A^O_n A'^O_n})$ that are unitary and CPTP, the induced map

\begin{equation}
    (M_{A_1} \otimes ... \otimes M_{A_N}) * W
\end{equation}

is unitary and CPTP.
\end{defi}

Based on this, we postulate an analogous definition for unitarity of process boxes.

\begin{defi}[Unitary process boxes]
Let $\Phi \in \C{L}(\C{H}^{\textup{eff}})$ be the effective Choi matrix of a process box. We call $\Phi$ unitary if for all ancillaries $A'^{I/O}_n$, $n=1,...,N$ and all local operations $\C{M}_{\bar{A}_n}: \C{L}(\C{F}_{A^I_n A'^I_n}) \rightarrow \C{L}(\C{F}_{A^O_n A'^O_n})$ that are unitary and CPTP, the induced map

\begin{equation}
    (M_{\bar{A}_1} \otimes ... \otimes M_{\bar{A}_N}) * \Phi
\end{equation}

is unitary and CPTP.

\end{defi}

\begin{figure}
    \centering
    \includegraphics[width=\textwidth]{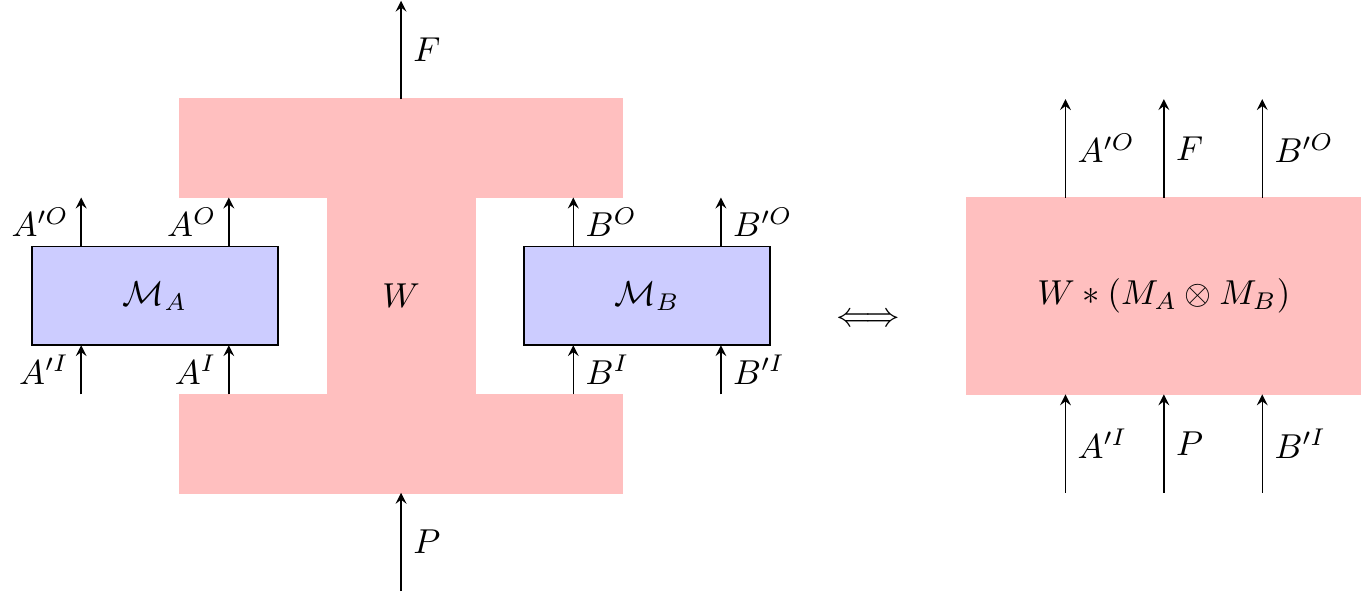}
    \caption{Composing the process with local operations with arbitrary ancillaries (primed spaces) leads to a map from (in the bipartite case depicted here) $P A'^I B'^I$ to $F A'^O B'^O$. If this map is unitary for all unitary local operations, then we say that the process is unitary.}
    \label{fig:pmwithancillary}
\end{figure}

Note that the ancillary $A'^I_n$ can be viewed as a generalization of the measurement setting $a$ while the ancillary $A'^O_n$ can be viewed as a generalization of the measurement outcome $x$ (cf. \cref{def:locallabs}). In particular, we can use these ancillaries to write a quantum instrument as a CPTP map as we did when introducing the agents in the process box picture.

There is an alternate and equivalent definition. We can view process matrices and process boxes as channels from their inputs to their outputs (or equivalently from the outputs to the inputs of the local agents). A process matrix or causal box is then unitary if that channel is unitary.

\begin{lemma}[Unitarity condition \cite{Ara_jo_2017}]
A process matrix $W$ is unitary iff $W = \dket{U} \dbra{U}$ for some unitary $U$. A causal box is unitary iff its Choi representation $\Phi$ can be written as $\Phi = \dket{\bar{U}} \dbra{\bar{U}}$ for some unitary $\bar{U}$.
\end{lemma}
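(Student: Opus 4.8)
The plan is to prove both implications for process matrices, and then observe that the causal-box statement follows by the same argument carried out over Fock spaces. The crucial reframing is to regard the process not as a supermap but as an ordinary channel $\C{W}\colon \C{L}(\C{H}^{P A^O_{\C{N}}}) \rightarrow \C{L}(\C{H}^{F A^I_{\C{N}}})$ whose Choi matrix is exactly $W$. The normalization $\tr W = d_P \prod_k d^O_k$ from \cref{eq:pm_cond} is precisely the statement that $W$ is the Choi matrix of a trace-preserving map with those input/output spaces, and positivity makes it completely positive. I will also invoke the standard fact that a CPTP map is unitary if and only if its Choi matrix is a rank-one projector $\dket{U}\dbra{U}$ onto the Choi vector of its unique, isometric, dimension-preserving Kraus operator $U$ (the Choi rank equals the minimal Kraus number, which is one exactly for reversible channels).

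For the ``if'' direction, suppose $W = \dket{U}\dbra{U}$ with $U$ unitary. Plugging in unitary local operations $M_{A_n} = \dket{A_n}\dbra{A_n}$ (with the $A_n$ isometric once ancillaries are included), the induced map has Choi vector $(\dket{A_1} \otimes \cdots \otimes \dket{A_N}) * \dket{U}$, which by the composition property of the link product is the Choi vector of a sequential composition of isometries, hence itself isometric; matching input/output dimensions it is unitary, so $W$ is unitary. For the ``only if'' direction I would extract the channel $\C{W}$ from $W$ by choosing, for each agent, the \emph{swap} local operation $A_n\colon \C{H}^{A^I_n A'^I_n} \rightarrow \C{H}^{A^O_n A'^O_n}$ that routes the process input $A^I_n$ to the ancilla output $A'^O_n$ and the ancilla input $A'^I_n$ to the process output $A^O_n$. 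This is unitary, so by hypothesis the induced map from $P A'^I_{\C{N}}$ to $F A'^O_{\C{N}}$ is a unitary channel. A short link-product computation shows that, since $\dket{\text{SWAP}}$ is a product of maximally entangled states $\dket{\mathbb{1}}$, linking it with $W$ merely relabels $A^I_n \to A'^O_n$ and $A^O_n \to A'^I_n$; hence the Choi matrix of the induced channel equals $W$ up to this relabeling. Therefore $W$ is the Choi matrix of a unitary channel, and by the standard fact above $W = \dket{U}\dbra{U}$ with $U$ unitary.

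The causal-box statement is proved identically: restricted to the effective space, $\Phi$ is the Choi representation of a channel $\bar{\C{W}}$ between the agents' Fock-space outputs together with the global past and their Fock-space inputs together with the global future, and the same swap argument exposes $\bar{\C{W}}$ as a unitary channel, yielding $\Phi = \dket{\bar{U}}\dbra{\bar{U}}$. The main obstacle I anticipate is not the algebra but the care demanded by the infinite-dimensional Fock-space setting: one must verify that the swap operations are admissible process-box local operations respecting WSR, LO and OSR, which is exactly why the swap must incorporate the timestamp shift sending an input received at time $t$ to an ancilla output and an ancilla input to the process output at time $\C{O}_i(t) > t$; and one must interpret ``unitary'' correctly, since the ambient Fock space is infinite-dimensional and the Choi operator may a priori be unbounded. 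Working with the effective Choi representation $\Phi^{\text{eff}}$, where only finitely many message configurations survive, is what lets the finite-dimensional reasoning and the rank-one conclusion go through.
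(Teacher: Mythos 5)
Your argument is correct and follows essentially the same route as the proof the paper relies on (the paper does not prove this lemma itself but defers to \cite{Ara_jo_2017} and asserts the argument carries over to process boxes): the swap local operations read out $W$ as the Choi matrix of the induced channel, and the rank-one-Choi characterization of unitary CPTP maps gives both directions. One small caution: in the ``if'' direction the network formed by $U$ and the local operations is cyclic, so the justification ``Choi vector of a sequential composition of isometries'' is not literally available — the correct argument is the one you already have in hand, namely that the link product of rank-one Choi matrices is rank one and the induced map is CPTP because $W$ is a valid process matrix, whence your standard fact (together with the dimension count $d_P\prod_k d'^I_k = d_F\prod_k d'^O_k$) yields unitarity.
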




The equivalence of these two definitions is proven in the case of process matrices in \cite{Ara_jo_2017}. Their proof also works for process boxes. From this we can define unitary extensibility.

\begin{defi}[Unitary extensibility of process matrices \cite{Ara_jo_2017}]\label{def:pm_purification}
A process with process matrix $W$ is unitarily extensible if there exist Hilbert spaces $\C{H}^{P'}$ and $\C{H}^{F'}$ and a unitary  process with process matrix $W'$ such that $W = W' * (\ket{0}\bra{0}^{P'} \otimes \mathbb{1}^{F'})$. 
\end{defi}

The Hilbert spaces $\C{H}^{P'}$ and $\C{H}^{F'}$ can be viewed as additional global past and future spaces.

Using these definitions, Araújo et al. \cite{Ara_jo_2017} then show that there are process matrices that are not unitarily extensible. Interestingly, while it is generally possible to find some unitary dilation, that unitary dilation is not always a valid process itself because of the additional assumptions imposed on process matrices to guarantee normalized and positive probabilities. 

When translating \cref{def:pm_purification} to process boxes, we would like $P'$ and $F'$ to actually be a global past and a global future. That is to say, we want the time stamp of the output of $P'$ to be earlier or equal to all time stamps in $\C{T}$ and the time stamp of the input to $F'$ to be later or equal to all time stamps in $\C{T}$.

\begin{defi}[Unitary extensibility of process boxes]
\label{def:pb_purification}
A process box described by an effective Choi matrix $\Phi$ with set of positions $\C{T}$ is unitarily extensible if there exist Hilbert spaces $\C{H}^{P'}$ and $\C{H}^{F'}$ and a process box with effective Choi matrix $\Phi'$ such that $\Phi = \Phi' * (\ket{0, t_i}\bra{0, t_i}^{P'} \otimes \mathbb{1}^{F'} \otimes \proj{t_f}{t_f})$ such that $t_i \leq t \leq t_f$ for all $t \in \C{T}$. 
\end{defi}


This definition then allows us to show what we already suspected, that process boxes are unitarily extensible.

\begin{restatable}{prop}{PBunitary} \label{prop:PBunitary}
Process boxes are unitarily extensible.
\end{restatable}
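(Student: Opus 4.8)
The plan is to exploit the fact that a process box is a causal box and therefore already comes equipped with a causally faithful Stinespring dilation, namely its sequence representation. Since \cref{def:pbqceq} and the surrounding discussion let us restrict to pure processes, I would start from the effective Choi vector $\dket{\bar V}$, where $\bar V$ is the isometry obtained by composing the sequence-representation isometries $\bar V_1,\dots,\bar V_{N+1}$ over the (finite) cut $\C{C}=\C{T}$. Concretely $\bar V$ maps the global-past output and the agents' output spaces into the agents' input spaces together with the global future and a dilation ancilla $\alpha$, i.e. $\bar V:\C{F}^{P}\otimes\bigotimes_i\C{F}_{A_i^O}\to\bigotimes_i\C{F}_{A_i^I}\otimes\C{F}^{F}\otimes\C{H}^{\alpha}$, and $\bar V^\dagger\bar V=\mathbb 1$ because each slice is an isometry and isometries compose to isometries. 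This is the step where causality comes for free: the sequence representation is by construction the \emph{causal} Stinespring dilation, so whatever we build on top of $\bar V$ inherits a valid causality function.

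Next I would complete $\bar V$ to a unitary by adjoining a fixed-state ancillary input. Writing $\mathcal H^{\mathrm{in}}$ and $\mathcal H^{\mathrm{out}}$ for its domain and codomain (so $\dim\mathcal H^{\mathrm{out}}\ge\dim\mathcal H^{\mathrm{in}}$, and after padding with dummy dimensions we may assume $\dim\mathcal H^{\mathrm{in}}$ divides $\dim\mathcal H^{\mathrm{out}}$), standard linear algebra yields a unitary $\bar U:\mathcal H^{\mathrm{in}}\otimes\mathcal H^{\mathrm{anc}}\to\mathcal H^{\mathrm{out}}$ with $\bar V=\bar U(\mathbb 1^{\mathrm{in}}\otimes\ket 0^{\mathrm{anc}})$: extend the orthonormal image of $\bar V$ to an orthonormal basis of $\mathcal H^{\mathrm{out}}$ and send the remaining basis vectors $\ket i^{\mathrm{in}}\ket{j}^{\mathrm{anc}}$, $j\ne 0$, onto them. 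I would then absorb $\mathcal H^{\mathrm{anc}}$ into the global past, defining the extended past $P'$ to carry the original global-past input together with this ancilla and to be emitted at the earliest time $t_i$, and absorb the dilation ancilla $\alpha$ into the global future $F'$, received at the latest time $t_f$, with time stamps chosen so that $t_i\le t\le t_f$ for all $t\in\C{T}$. By the Unitarity condition lemma, $\Phi'=\dket{\bar U}\dbra{\bar U}$ is a unitary process; it is a causal box because the (now unitary) slices still form a set of isometries, which is always the sequence representation of some causal box, and a fixed state pushed forward from the earliest time together with an ancilla collected at the latest time never violates causality; and it satisfies the process-box constraints since WSR, LO and OSR concern only the unchanged agents while the no-passive-agent condition is inherited from $\hat\Phi$. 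Plugging $\ket{0,t_i}^{P'}$ back in and linking out $F'$ recovers $\bar V$ and hence $\hat\Phi$, which is exactly the relation $\Phi=\Phi'*(\proj{0,t_i}{0,t_i}^{P'}\otimes\mathbb 1^{F'}\otimes\proj{t_f}{t_f})$ required by \cref{def:pb_purification}, reading composition as the link product via \cref{lemma:compislink}.

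The main obstacle, and the only place where the argument could fail for a weaker framework, is establishing that the unitary completion is genuinely a process box rather than merely a unitary dilation. In the process-matrix setting \cite{Ara_jo_2017} the analogous dilation need not be a valid process, because there is no room to legitimately source the added ancilla without spoiling the normalization and positivity conditions. For process boxes the spacetime and Fock structure is exactly what rescues us: the fixed ancilla input only ever flows forward from a global past that is earliest in $\C{T}$, and the dilation ancilla only ever flows into a global future that is latest in $\C{T}$, so no backward influence is created and the causality function of $\hat\Phi$ extends verbatim. I expect the fiddly bookkeeping to be (i) checking that the dimension-divisibility padding does not interfere with the Fock and position labelling, and (ii) verifying that the internal routing of the constant ancilla state inside $\bar U$ can be time-stamped consistently between $t_i$ and the slice that consumes it; both are routine once one commits to a single whole-box isometry rather than unitarizing slice by slice.
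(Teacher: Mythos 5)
Your high-level strategy matches the paper's — start from the sequence representation, unitarize, and push the completion ancilla into a global past at the earliest time $t_i$ and the dilation ancilla into a global future at $t_f$ — but the step where you actually perform the unitarization contains a genuine gap. You compose the slices into a single isometry $\bar V$ and then complete it to a unitary $\bar U$ by an \emph{arbitrary} extension of the orthonormal image of $\bar V$ to an orthonormal basis of the codomain. The resulting $\bar U$ agrees with $\bar V$ on the subspace $\mathbb 1\otimes\ket{0}^{\mathrm{anc}}$, but on the complement (ancilla components $\ket{j}$, $j\neq 0$) it is completely unconstrained: nothing prevents the arbitrary completion from making an output at an early position depend on an agent's output at a later position. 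Since \cref{def:pb_purification} requires the extension $\Phi'$ to itself be a process box — hence a causal box satisfying the causality condition of \cref{def:causalbox} for \emph{all} states on its input wires, not just $\ket{0,t_i}^{P'}$ — your $\bar U$ is not certified to be a valid system at all. Your justification, ``the (now unitary) slices still form a set of isometries,'' is exactly the assertion that needs proof and does not follow from your construction: a monolithic unitary completion of the composite isometry need not factor into time-slices, so you cannot invoke the fact that any set of slice isometries defines a causal box.

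The paper closes this gap by unitarizing \emph{slice by slice}: each $\bar V_n$ is extended to a unitary $\bar U_n:\C{H}^{\alpha_{n-1}}\otimes\C{F}^{\C{T}_n}_X\otimes\C{H}^{P_n}\to\C{F}^{\C{T}_{n-1}}_Y\otimes\C{H}^{\alpha_n}$, so the causal slice structure is preserved by construction, and then the per-slice ancillas $P_n$ are re-routed to the front by tensoring $\bar U'_n:=\bar U_n\otimes\mathbb 1^{P_{n+1}\dots P_N}$, i.e.\ by identity internal wires, which manifestly respects causality; only then are the $P_n$ consolidated into a single global past $P=\bigotimes_n\C{H}^{P_n}$ emitted at $t_i$. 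This is precisely the ``fiddly bookkeeping (ii)'' you defer — time-stamping the routing of the constant ancilla to the slice that consumes it — and it is not routine given a whole-box unitary, because that object has no internal routing to speak of. To repair your argument you would either have to prove that a causality-respecting unitary completion of the composite $\bar V$ exists (which in practice means constructing it slice-wise, i.e.\ reverting to the paper's proof), or restrict the claim to a unitary dilation rather than a unitary \emph{process-box} extension, which is weaker than what \cref{prop:PBunitary} asserts and is exactly the distinction that makes the analogous statement fail for general process matrices in the reference you cite.
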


The proposition is essentially a consequence of the existence of the sequence representation. 

\begin{remark}[Unitary extensions of general causal boxes]
We could also consider general causal boxes and ask if they admit a unitary extension. On the one hand, the existence of the sequence representation makes this statement somewhat trivial as we can simply unitarize the isometries. However, we cannot necessarily assign the global past and future singular position labels as this might then lead to an infinite number of outputs during these positions.
\end{remark}

Finally, for a QC-QC and process box that are operationally equivalent, their unitary extensions can be chosen such that they are operationally equivalent as well.

\begin{restatable}[Operational equivalence of unitary extensions]{coro}{unitaryequivalence}
\label{coro:unitaryequivalence}
Let $\dket{\bar{V}}$ be the effective Choi vector of a process box and $\ket{w_{\C{N}, F}}$ the process vector of a QC-QC such that they are operationally equivalent. If $\dket{U}$ is a unitary extension of $\ket{w_{\C{N}, F}}$, then there exists an effective Choi vector $\dket{\bar{U}}$ that is operationally equivalent to $\dket{U}$ and is a unitary extension of $\dket{\bar{V}}$.
\end{restatable}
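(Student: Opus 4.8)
The plan is to obtain $\dket{\bar U}$ by feeding the unitary extension $\dket U$ through the very same QC-QC-to-process-box construction of \cref{sec:generalqcqctopb} that produced $\dket{\bar V}$ from $\ket{w_{\C N, F}}$. Since $\dket U$ is a unitary extension of a QC-QC, it is itself a unitary QC-QC, now over the enlarged global past $\C H^{PP'}$ and global future $\C H^{FF'}$: its internal operations are the unitary completions of the isometries $V^{\rightarrow k_{n+1}}_{\C K_{n-1}, k_n}$, with the purifying input/output ancillae collected into $P'$ and $F'$ and the control system $\ket{\C K_{n-1}, k_n}$ left untouched. First I would apply the mapping of \cref{eq:pb_ext} (or equivalently the abort extension of \cref{eq:abortdef}) to these unitaries, adding time stamps and vacuum states exactly as before, obtaining isometries $\bar U_1, \dots, \bar U_{N+1}$ whose sequence representation defines a process box with effective Choi vector $\dket{\bar U}$.

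By \cref{prop:symmequivalence} (resp. \cref{prop:abortequivalence}) applied to the QC-QC $\dket U$, the process box $\dket{\bar U}$ is operationally equivalent to $\dket U$, which settles the first requirement. For unitarity I would note that the construction differs from the isometry case only in that each building block is now a unitary rather than a mere isometry; composing the $\bar U_n$ over all time steps via the control system reproduces, on the effective (WSR/LO/OSR) subspace, the action of the overall unitary $U \colon \C H^{PP'} \to \C H^{FF'}$, so the channel induced by $\dket{\bar U}$ is unitary on that subspace. Hence $\dket{\bar U}$ is the Choi vector of a unitary process box in the sense of the unitarity condition.

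It then remains to check that $\dket{\bar U}$ extends $\dket{\bar V}$. Choosing an output time $t_i$ for $P'$ earlier than every $t \in \C T$ and an input time $t_f$ for $F'$ later than every $t \in \C T$ (e.g.\ $t_i = 1$ and $t_f = 2N+2$ in the convention $\C T = \{1, \dots, 2N+2\}$), I would form the restriction $\dket{\bar U} * (\ket{0, t_i}^{P'} \otimes \mathbb{1}^{F'} \otimes \ket{t_f})$. Using \cref{lemma:compislink} together with associativity and commutativity of the link product, composing this restriction with any operationally equivalent local operations equals $\big[(\dket{\bar A_1} \otimes \cdots \otimes \dket{\bar A_N}) * \dket{\bar U}\big] * (\ket{0,t_i}^{P'} \otimes \mathbb{1}^{F'} \otimes \ket{t_f})$; by the operational equivalence of $\dket{\bar U}$ and $\dket U$ and the defining relation $\ket{w_{\C N, F}} = \dket U * (\ket 0^{P'} \otimes \mathbb{1}^{F'})$, this coincides (up to the past/future isomorphisms, which touch only the $P,F$ time stamps and leave $P',F'$ intact) with $(\dket{A_1} \otimes \cdots \otimes \dket{A_N}) * \ket{w_{\C N, F}}$. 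Since $\dket{\bar V}$ is operationally equivalent to $\ket{w_{\C N, F}}$, the restriction of $\dket{\bar U}$ is operationally equivalent to $\dket{\bar V}$, as required.

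The main obstacle I anticipate is reconciling this operational statement with the literal equality $\Phi = \Phi' * (\cdots)$ demanded by \cref{def:pb_purification}: the construction naturally yields a restriction that is only operationally equivalent to $\dket{\bar V}$, not manifestly identical to it. I would address this by working throughout on the effective subspace, where two pure process boxes with the same input/output statespaces that agree as supermaps on all WSR/LO/OSR local operations (with ancillae) coincide up to a global phase, so that operational equivalence upgrades to equality of effective Choi vectors; and by verifying that the time-stamp choice $t_i \le t \le t_f$ keeps $\bar U_1$ consistent with LO (the extra past emits before, and the extra future receives after, every agent), so that the extension is a bona fide process box rather than merely a causal box.
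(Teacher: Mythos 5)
Your route is genuinely different from the paper's. The paper starts from $\dket{\bar V}$: by \cref{prop:PBunitary} its sequence representation can be unitarized wire by wire, with the purifying inputs consolidated into a single global past $P'$ emitting at $t_i$, which yields a \emph{literal} unitary extension $\dket{\bar U}$ of $\dket{\bar V}$ in the sense of \cref{def:pb_purification}; the remaining work is to argue that this $\dket{\bar U}$ is operationally equivalent to the given $\dket{U}$, which the paper does by noting $F'$ is trivial in both cases (purity), matching the $P'$ spaces by a dimension count, and appealing to the equivalence of $\ket{w_{\C{N},F}}$ and $\dket{\bar V}$. You instead push $\dket{U}$ forward through the QC-QC-to-process-box map, which buys you the equivalence $\dket{\bar U}\cong\dket{U}$ for free from \cref{prop:symmequivalence}, but shifts all of the difficulty into showing that the result extends $\dket{\bar V}$.

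That shifted difficulty is where the gaps sit. First, the hypothesis hands you an \emph{arbitrary} unitary extension $\dket{U}$ of $\ket{w_{\C{N},F}}$ in the sense of \cref{def:pm_purification}; nothing guarantees that it decomposes into unitarily completed internal operations $V^{\rightarrow k_{n+1}}_{\C{K}_{n-1},k_n}$ with $P'$ routed through the ancillae $\alpha_n$, i.e.\ that it is itself a QC-QC to which \cref{eq:pb_ext} applies. (One can argue that \emph{some} unitary extension of a QC-QC has this form, but the corollary quantifies over the given $\dket{U}$.) Second, and more seriously, your construction produces a unitary extension of the particular process box that \cref{prop:symmequivalence} associates to $\dket{U}$ --- a box on the specific statespace $\C{T}=\{1,\dots,2N+2\}$ with $\C{O}_i(t)=t+1$ --- whereas \cref{def:pb_purification} demands the literal identity $\Phi=\Phi'*(\ket{0,t_i}\bra{0,t_i}^{P'}\otimes\mathbb{1}^{F'}\otimes\proj{t_f}{t_f})$ for the \emph{given} $\dket{\bar V}$, whose set of positions, functions $\C{O}_i$ and wire structure are unconstrained beyond being operationally equivalent to $\ket{w_{\C{N},F}}$ (cf.\ \cref{def:pbqceq}). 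Your proposed repair --- that two pure process boxes with the same statespaces which agree on all WSR/LO/OSR instruments coincide up to a phase --- presupposes exactly the matching of statespaces that fails here, so the restriction of your $\dket{\bar U}$ need not even live on the same Fock space as $\dket{\bar V}$. To close this you would have to either transport the extension along the (unspecified) isomorphism relating the two operationally equivalent process boxes, or do what the paper does and unitarize $\dket{\bar V}$'s own sequence representation.
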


\subsection{Simplifying the set of positions}\label{sec:simplifying}

The process box framework, like the more general causal box framework, assumes the existence of some background spacetime which is described by the set $\C{T}$. Meanwhile, QC-QCs strictly speaking do not assume the existence of such a set but are intuitively interpreted as applying a single operation during each time step of a totally ordered time set. When mapping QC-QCs to process boxes, we simply picked a set $\C{T}$ to match this intuition. However, if we try to go in the other direction, it is less clear how we should deal with $\C{T}$. How does the QC-QC we construct from a process box depend on the process box's set of positions $\C{T}$? 

In order to answer this question, we need to first understand the role that $\C{T}$ plays in the process box framework more thoroughly. As we discussed in \cref{sec:fock}, it orders the messages and thus allows us to define a notion of causality. The causality condition ultimately depends on the order relations between the position labels and not on the labels themselves. Simply relabeling the position labels while not changing any of the order relations should therefore not change the process box in an operational sense. This is also what we expect from special relativity. The labels correspond to spacetime coordinates, which are frame dependent, while the order relations correspond to the frame-independent light cone structure. We then expect physics to be the same in all inertial reference frames, which means that the operational predictions should not depend on the labels. Different labels could also refer to the same reference frame but be due to agents using different units to measure the same spacetime distances. This should also not affect any final operational results, beyond needing to convert them between the different units.

Therefore, we see that there are different process boxes that are operationally equivalent. We can thus define a notion of operational equivalence between process boxes just as we did in \cref{sec:statespace} between a process box and a QC-QC.

\begin{defi}[Equivalence of process boxes]\label{def:pb_eq}
Let $\hat{\Phi}$ and $\hat{\Phi}'$ be two $N$-partite process boxes. We say that $\hat{\Phi}$ and $\hat{\Phi}'$ are operationally equivalent if there exists for each agent $A_i$ of $\hat{\Phi}$ a unique agent $A'_i$ of $\hat{\Phi}'$ and a map that relates the possible local operations of the agents

\begin{equation}
    \C{M}_{\bar{A}_i} \mapsto \C{M}'_{\bar{A}'_i}
\end{equation}

such that 

\begin{equation}
    \bigotimes_{i=1}^N M_{\bar{A}_i} * \Phi \cong \bigotimes_{i=1}^N M'_{\bar{A}'_i} * \Phi' 
\end{equation}

where $\cong$ should be understood as a unitary isomorphism between the global past spaces as well as between the global future spaces of the process boxes.

\end{defi}

\begin{remark}[Relating process box local operations]
We note that the mapping between local operations does not have to be an isomorphism. For example, consider adding an input time $0^I$ and an output time $0^O$ to a process box such that $0^I < 0^O$ and $\C{O}(0^I) = 0^O$ and both are in the past of all other time stamps. The new process box just sends the vacuum during $0^I$ and as such what the agents do during $0^I$ does not actually matter because they effectively always apply the vacuum projector. We can thus identify any operation $\C{M}_{A_i}^{0^I} \otimes \C{M}_{A_i}$ with $\C{M}_{A_i}$ in the original process box without the new time steps. 
\end{remark}

The above definition then essentially defines equivalence classes of process boxes that all produce the same measurement statistics, or more generally have the same (up to unitary isomorphisms) behavior as supermaps. We can then formalize the previous idea about relabeling not mattering in the following lemma.

\begin{restatable}[Invariance of process boxes under relabeling]{lemma}{relabeling}
\label{lemma:relabeling}
Let $\hat{\Phi}$ be a process box with positions $\C{T}$. Let $\C{T}'$ be another set positions and $\C{R}: \C{T} \rightarrow \C{T}'$ an isomorphic map that respects the order relations of $\C{T}$, i.e. for any $t_1, t_2 \in \C{T}$ with $t_1 < t_2$, it holds that $\C{R}(t_1) < \C{R}(t_2)$. Define then the process box $\Phi'$ via the relabeling of position labels with the help of $\C{R}$ from $\Phi$, that is we define

\begin{equation}
    \hat{\Phi}' \coloneqq \C{R} \circ \hat{\Phi} \circ \C{R}^{-1}
\end{equation}

where $\C{R}(\ket{\psi, t}) = \ket{\psi, \C{R}(t)}$ and analogously for multi-message states.

Then $\hat{\Phi}$ and $\hat{\Phi}'$ are operationally equivalent.

\end{restatable}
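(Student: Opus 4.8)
The plan is to realize the relabeling as a unitary on the Fock spaces and to exploit the fact that complete composition contracts only the agents' wires, so that a consistent relabeling of both the process box and the local operations cancels on the internal wires and survives only on the global past and future. Since $\C{R}:\C{T}\to\C{T}'$ is an order-isomorphism, the assignment $\ket{\psi,t}\mapsto\ket{\psi,\C{R}(t)}$ defines a unitary on each single-message space, which extends to a unitary $U_\C{R}:\C{F}^\C{T}_A\to\C{F}^{\C{T}'}_A$ because a bijection of positions maps symmetric tensor products to symmetric tensor products and fixes the vacuum. Using the wire isomorphisms \cref{eq:wireiso}, $U_\C{R}$ factorizes across all wires (including $P$ and $F$), acting on each wire as the identity on the content register tensored with a permutation $P_\C{R}$ of the position basis $\{\ket{t}\}$. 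In particular $U_\C{R}$ is real in this basis, so $\overline{U_\C{R}}=U_\C{R}$ and $U_\C{R}^{T}=U_\C{R}^\dagger=U_\C{R}^{-1}$; this is what will make the later Choi/link-product bookkeeping transparent. I would also verify that $\hat{\Phi}'=\C{R}\circ\hat{\Phi}\circ\C{R}^{-1}$ is again a process box: $\C{R}$ sends cuts to cuts and conjugates the causality function to a valid $\chi'=\C{R}\circ\chi\circ\C{R}^{-1}$, while order-preservation guarantees $\C{O}'_i:=\C{R}\circ\C{O}_i\circ\C{R}^{-1}$ still satisfies $\C{O}'_i(t')>t'$, so LO is preserved; WSR and OSR concern only the tensor and cardinality structure of the wire spaces, which the bijection $\C{R}$ leaves intact, and the no-passive-agent condition survives because $\C{R}$ is invertible.

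Next I would specify the correspondence required by \cref{def:pb_eq}. The agent $A'_i$ of $\hat{\Phi}'$ is the one carrying position sets $\C{R}(\C{T}^I_i)$ and $\C{R}(\C{T}^O_i)$, and the map on local operations is conjugation by the relabeling restricted to that agent's wires, $\C{M}_{\bar{A}_i}\mapsto\C{M}'_{\bar{A}'_i}:=\C{R}_i\circ\C{M}_{\bar{A}_i}\circ\C{R}_i^{-1}$. By the previous paragraph this map sends a WSR/LO/OSR-valid operation to a valid one with admissible $\C{O}'_i$, so the correspondence is well defined and, since $\C{R}$ is a bijection, unique.

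The core step is the composition identity. Using \cref{lemma:compislink}, complete composition is the link product over the agents' wires $\{A^I_i,A^O_i\}$, leaving $P$ and $F$ uncontracted. Writing $\Phi'$ in terms of $\Phi$, conjugation of $\hat{\Phi}$ by $\C{R}$ multiplies its Choi operator by $U_\C{R}$ on the output side and by $\overline{U_\C{R}}=U_\C{R}$ on the input side, while each relabeled local operation contributes the inverse factors on exactly the same shared wires. Because the link product contracts precisely these wires, the relabeling factors on $\{A^I_i,A^O_i\}$ cancel pairwise, where the reality of $U_\C{R}$ removes any ambiguity from the partial transpose implicit in \cref{def:linkmat}; what remains is $U_\C{R}$ acting on the uncontracted $P$ and $F$ registers. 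This is exactly a unitary isomorphism between the global past spaces and between the global future spaces, so $\bigotimes_i M_{\bar{A}_i}*\Phi\cong\bigotimes_i M'_{\bar{A}'_i}*\Phi'$, which is the content of \cref{def:pb_eq}.

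I expect the main obstacle to be the careful tracking of the complex conjugation and partial transpose that the Choi isomorphism and the link product introduce on the input wires, and confirming that the relabeling factors on the contracted wires genuinely cancel rather than combining into a residual operator. This is precisely why I single out at the outset that $U_\C{R}$ is a permutation, hence real orthogonal, on the position labels: it forces $\overline{U_\C{R}}=U_\C{R}$ and $U_\C{R}^T=U_\C{R}^{-1}$, collapsing the conjugation/transpose subtleties and making the cancellation on the shared wires routine.
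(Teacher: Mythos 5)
Your proposal is correct and follows essentially the same route as the paper's proof: conjugate the local operations by $\C{R}$, observe that the relabeling acts as an (essentially trivial) identification of the Fock spaces over $\C{T}$ and $\C{T}'$, and note that the link product over the agents' wires cancels the relabeling factors on the contracted wires, leaving only an isomorphism on the global past and future. Your treatment is in fact more explicit than the paper's on two points it glosses over --- verifying that $\hat{\Phi}'$ still satisfies WSR/LO/OSR and the no-passive-agent condition, and tracking the conjugation/partial-transpose bookkeeping via the reality of the permutation $U_\C{R}$ --- but these are refinements of, not departures from, the same argument.
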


However, it is not just relabeling that produces operationally equivalent process boxes. For example, we could model a process where each agent receives a state $\ket{\psi}$ and there is no way for the agents to influence each other using several different sets $\C{T}$: the process box could send $\ket{\psi}$ to all agents at the same time, it could send at different spatially separated spacetime points, or it could send the state $\ket{\psi}$ to one agent after another with the outputs of the agents being sent straight to the global future. 

In particular when considering the last implementation, we see that there is a lot of redundancy in the process box framework. The structure of $\C{T}$ has an impact on which messages can influence each other but so do the internal operations of the process box. Furthermore, the position label can also be used to send information in addition to the message itself. A simple example is a setup where all Hilbert spaces are trivial and $\C{T} \subseteq \mathbb{N}$. In this case, the parity of $t \in \C{T}$ could be used to encode a bit. We could thus also describe such a process with a process box with half as many time steps that uses qubits. 





Another example of redundancy in the process box framework is the possibility of time-dependent action. As a simple example, consider a process using qubits as the messages and $\C{T}^I_i = \{1,...,M\}$ for all agents. An agent could then decide to measure in the computational basis if they receive a message at an even time $t$ while they measure in the plus/minus basis if the message's time stamp is odd. However, since the agent has no influence over when they receive a message, the same could also be achieved by a time-independent measurement if the process box sends an additional bit which tells the agent what measurement to apply. More generally, if $|\C{T}^I_i| = M$, then an agent could have $M$ different operations based on the time stamp of the message. This could be alternatively realized by a time-independent action if the process box sends an additional $M$-dimensional system. 

The question is now can we restrict $\C{T}$ to some less general form while still being able to describe all process boxes in the sense of \cref{def:pb_eq}.

We make the following conjecture.

\begin{restatable}[Simplifying the set of positions]{conj}{simplifying}
\label{conjecture:simplifying}

Any process box $\Phi$ with set of positions $\C{T}$, input positions $\C{T}^I_i$ and output positions $\C{T}^O_i$ and functions $\C{O}_i: \C{T}^I_i \rightarrow \C{T}^O_i$ is equivalent to a process box $\Phi'$ whose set of positions has the following properties:

\begin{enumerate}
    \item $\C{T}' = \{1,2,...,M\}$
    \item $\C{T}'^I_i = \C{T}'^I = \{2,4,...\}$ and $\C{T}'^O_i = \C{T}'^O = \{3,5,...\}$ for agents with non-trivial input and $\C{T}'^O_i = \{1\}$ for agents with trivial input
    \item $\C{O}(t) = \C{O}'_i(t) = t+1$ for all $t \in \C{T}'^I$
\end{enumerate}

\end{restatable}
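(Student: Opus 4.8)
The plan is to build the canonical process box $\Phi'$ from $\hat{\Phi}$ by a sequence of transformations, each of which preserves operational equivalence and is justified either by \cref{lemma:relabeling} or by one of the redundancy observations discussed above. I would organize these into three stages matching the three target properties: first totally ordering and relabeling $\C{T}$ onto an initial segment of the integers, then equalizing the input/output position sets across all agents, and finally forcing the alternating, consecutive structure $\C{O}'_i(t) = t+1$, before checking that the resulting object is indeed a process box equivalent to $\hat{\Phi}$ in the sense of \cref{def:pb_eq}.

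First I would pick a linear extension $\leq'$ of the partial order on $\C{T}$, which exists since $\C{T}$ is finite, and observe that the identity map from $\C{T}$ to $(\C{T}, \leq')$ is order-respecting in the sense required by \cref{lemma:relabeling}. Refining the order only adds comparabilities and never creates new influences between messages, since these are fixed by the internal operations of $\hat{\Phi}$; hence the box remains causal with respect to $\leq'$ and \cref{lemma:relabeling} yields an operationally equivalent totally ordered box. Composing with the unique order-isomorphism onto $\{1,2,\dots,M\}$, again via \cref{lemma:relabeling}, delivers property (1).

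Next I would equalize the agents' position sets. Using the remark on relating local operations, I can pad any agent that has fewer input/output times with extra time steps at which $\hat{\Phi}$ only ever sends the vacuum; the agent's action at such a step is forced to be the vacuum projector, so this changes nothing operationally. After padding all agents to a common set of input and output times, a further relabeling interleaves them so that inputs and outputs strictly alternate, assigning inputs the even labels and outputs the odd labels, which gives property (2). For property (3), the consecutive condition $\C{O}'_i(t) = t+1$, I would absorb any residual delay into the internal structure of the box: whenever an agent's output at $\C{O}_i(t)$ is not the immediate successor of its input at $t$, the process box can buffer the outgoing message internally and release it at the required later slot, and any genuinely time-dependent agent operation can be replaced by a time-independent one by having $\Phi'$ transmit the relevant timing information as an additional control system, exactly as in the parity and $M$-dimensional examples discussed above. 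Each such modification is an internal reshuffling that leaves the induced supermap unchanged.

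The final step is to verify that the object $\Phi'$ assembled this way is genuinely a process box — that its redefined internal maps are still isometries, that causality holds with respect to the canonical order, and that no agent has become passive — and that the composite relabeling-and-reshuffling map between local operations witnesses operational equivalence in the sense of \cref{def:pb_eq}. I expect the main obstacle to be precisely this last consecutive/buffering stage: while the individual redundancies are transparent on examples, turning the slogan ``the internal operations can simulate the required timing'' into a uniform, fully general construction of $\Phi'$, together with a proof that the resulting isometries compose correctly and reproduce the original statistics for \emph{all} admissible local operations, is delicate. It is essentially this gap that keeps the statement at the level of a conjecture rather than a proposition.
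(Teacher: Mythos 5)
Your proposal follows essentially the same route as the paper's own justification (which is likewise only a sketch, since the statement is left as a conjecture): a linear extension of the partial order à la Szpilrajn combined with the relabeling lemma for property 1, vacuum-padding and interleaving of input/output times for property 2, encoding the original timing data into the message or an auxiliary control system so that $\C{O}'_i(t)=t+1$ for property 3, and the same honestly acknowledged gap at the final stage of explicitly constructing the modified internal isometries and verifying the result is still a process box. The one ingredient you omit is the treatment of agents with trivial input, which the paper handles separately by consolidating them into a single output at $t=1$ (shipping an $m$-dimensional ``intended release time'' system along with their message so the box can forward it internally); this is needed to obtain the clause $\C{T}'^O_i = \{1\}$ in property 2.
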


We do not give a full proof of this statement but give some additional justification in \cref{sec:justification}, which should contain the main ideas for a proof in a future work.

\begin{remark}[Physical significance of redundancy]
Note that the above conjecture does not imply that process boxes can only describe settings as described in the conjecture but that any setting that can be described by process boxes can be identified with the setting in the conjecture. This is somewhat reminiscent of the quantum switch, which can either be implemented gravitationally \cite{Zych_2019}, using a superposition of spacetimes (complicated spacetime, simple internal operations), or optically \cite{Taddei_2021}, with a control bit (simple spacetime, complicated internal operation). 

Furthermore, the redundancy of the process box framework we discussed in this section and used to formulate the conjecture, is something we would expect physically. A given abstract information processing protocol may be physically implementable in different ways. The process box framework would then give different descriptions to the different implementations as can be seen from some of the examples in this section. Note also that if we then switch to a different context, the details that were previously redundant may become relevant and the different implementations may then behave differently from each other. For instance, the different causal box extensions of a QC-QC are operationally equivalent when considering composition with agents obeying WSR, LO and OSR, but in other scenarios, for example when considering composition of processes, the choice of extension matters.
\end{remark}

\section{Mapping process boxes to QC-QCs}\label{sec:pbtoqcqc}
\subsection{Role of control}\label{sec:role}

If a mapping from process boxes to QC-QCs exists, this would mean that process boxes can only model controlled superpositions of orders. However, a priori it is not clear why this would be the case. On the other hand, in \cite{Costa_2022} it was shown that pure superpositions between causal orders cannot exist (at least for certain classes of process matrices). This means that the control bit in the quantum switch is an indispensable part of the process. 

We wish to understand the nature of the control in more depth. What does it mean to have a controlled superposition of orders? For example, in the case of the quantum switch the control bit determines the causal order of Alice and Bob and by measuring it an agent in the global future can learn who acted first. If we generalize the quantum switch to the $N$-switch with $N$ agents, we find that the control must record the full causal order for the process to be a valid QC-QC \cite{Wechs_2021}. 

It is also possible that the control is an implicit part of the target system. One can always rewrite a tensor product of the target and control Hilbert spaces as a single Hilbert space that is isomorphic to the tensor product. In a certain sense, when the process has dynamical control, this is also the case. We can see this in the dynamical switch. Who acted last is explicitly encoded in $\alpha_F^{(2)}$ (cf. \cref{eq:switchkraus}). However, who acted first and who acted second is not explicitly encoded anywhere. Nevertheless, an agent in the global future can still obtain this information. Note that just before the last agent acts the system is either in the state $a \ket{0}^{A^I_{k_3}} \ket{0}^{\alpha_3}  + b \ket{1}^{A^I_{k_3}} \ket{1}^{\alpha_3}$ or in the state $c \ket{0}^{A^I_{k_3}} \ket{1}^{\alpha_3}  + d \ket{1}^{A^I_{k_3}} \ket{0}^{\alpha_3}$ depending on the order of the other two agents and for some $a, b, c, d \in \mathbb{C}$. These two states are orthogonal and if the agents apply unitaries they remain orthogonal when they arrive in the global future. This means that there exists some measurement that allows the agent in the global future to perfectly distinguish between the two causal orders (although, the agent needs to know the unitaries that the other agents applied to know the measurement. Additionally, if the agents actually apply a measurement, the agent in the global future would also need access to the purifying degrees of freedom that make these measurements unitary). 

It is tempting to conclude that for any QC-QC the agent in the global future can somehow tell the exact causal order, either by reading it off of some explicit control system in the global future or by some measurement on the target system. However, this is not the case, as demonstrated by the following example. 

\begin{example}[Double quantum switch]
Consider the process depicted in \cref{fig:doubleQS}, which we will call the double quantum switch because it is essentially a quantum switch involving agents 1 and 2 followed by another quantum switch involving agents 3 and 4. 

\begin{figure}
\centering
\includegraphics[width=\textwidth]{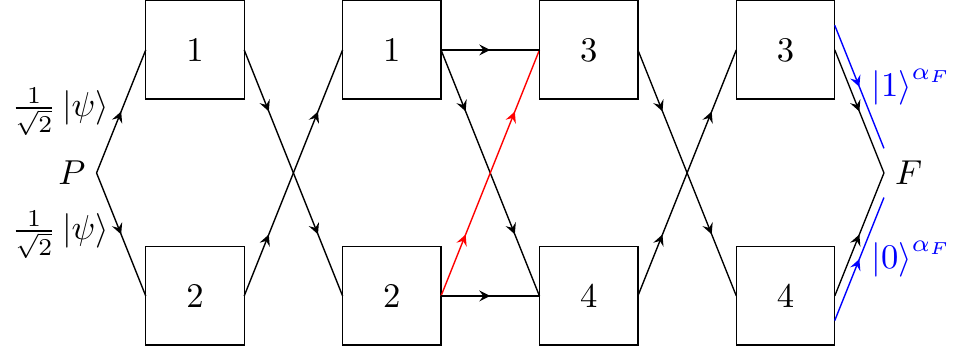}
\caption{The double quantum switch. Essentially two quantum switches in series, one occurring between agents 1 and 2 and the other occurring between agents 3 and 4. However, only the second switch requires an explicit control system (indicated by the blue lines). As a result, an agent in the global future cannot tell the full causal order simply by some measurement on the state arriving in the global future. The black lines indicate identities (with some appropriate normalization) while the red line which indicates negative identity, i.e. a phase flip, which is necessary to have a well-defined QC-QC according to \cref{def:qcqcchar}.}
\label{fig:doubleQS}
\end{figure}

A system is prepared in the global past and then sent to agents 1 and 2 in superposition. Agent 1 then sends their system to agent 2 and vice versa. Next, agent 1 sends their system to agents 3 and 4 in superposition. Agent 2 does essentially the same, however, the system they send to agent 3 has its phase flipped, i.e. the operator is $V^{\rightarrow 3}_{\{1\}, 2} = -\frac{1}{\sqrt{2}} \mathbb{1}$. There is then another swap operation between agents 3 and 4 before the system is sent to the global future along with a control bit that encodes whether agent 3 or 4 applied the last operation. One can now check that this process fulfills the QC-QC conditions. The phase flip turns out to be crucial because it causes unwanted cross terms to cancel. All in all, the internal operations are given by

\begin{gather}
\begin{aligned}
    V^{\rightarrow k_1}_{\emptyset, \emptyset} &= \frac{1}{\sqrt{2}} \ket{\psi}^{A^I_{k_1}} \\
    V^{\rightarrow k_2}_{\emptyset, k_1} &= \mathbb{1}^{A^O_{k_1} \rightarrow A^I_{k_1}} \\
    V^{\rightarrow k_3}_{\{k_1\}, k_2} &= \begin{cases} -\frac{1}{\sqrt{2}} \mathbb{1}^{A^O_{k_2} \rightarrow A^I_{k_3}}, &k_2=2 \text{ and } k_3=3 \\ \frac{1}{\sqrt{2}} \mathbb{1}^{A^O_{k_2} \rightarrow A^I_{k_3}}, &\text{else} \end{cases} \\
    V^{\rightarrow k_4}_{\{k_1, k_2\}, k_3} &= \mathbb{1}^{A^O_{k_3} \rightarrow A^I_{k_4}} \\
    V^{\rightarrow F}_{\{k_1, k_2, k_3\}, k_4} &= \mathbb{1}^{A^O_{k_4} \rightarrow F} \otimes \ket{k_4 \text{ mod } 2}^{\alpha_F}
\end{aligned}
\end{gather}
where we assume that $k_1, k_2 \in \{1,2\}$ and $k_3, k_4 \in \{3, 4\}$ while all other internal operations are assumed to vanish.

While the control bit $\alpha_F$ allows the agent in the global future to know whether agent 3 acted before 4 or vice versa, they have in general no way of knowing whether agent 1 or agent 2 acted first. This can be easily seen when assuming that all agents act trivially on the system. There are four possible causal orders $(1234), (2134), (1243), (2143)$. Given an input state $\ket{\psi}$, the final states for these causal orders are respectively $-\frac{1}{2} \ket{\psi} \ket{0}, \frac{1}{2} \ket{\psi} \ket{0}, \frac{1}{2} \ket{\psi} \ket{1}, \frac{1}{2} \ket{\psi} \ket{1}$. We see that there is destructive interference between the two causal orders where agent 3 acts before 4. The states resulting from the other two causal orders are exactly the same and thus there is no way to distinguish them.
\end{example}

Given the above example, if one wishes to determine the causal order, the agents need to cooperate. If all of them (or in the particular case of the double quantum switch, even just agents 1 and 2) measure the control system of the QC-QC $\ket{\C{K}_{n-1}, k_n}$, they can reconstruct the causal order after the process. Alternatively, if the control system is copied at each step to some external system (using an $n$-dimensional COPY gate), an agent in the global future could also measure these systems to reconstruct the causal order without the cooperation. This second option would also not collapse the superposition between the different orders until the process is finished, unlike measuring the control while the process is still ongoing.

We see that it is difficult to define what it means for a process to be controlled in the absence of an explicit control system. We would expect that any QC-QC is controlled, but the only hint to this is the control system $\ket{\C{K}_{n-1}, k_n}$ which does not show up explicitly in the process vector as it is contracted over. A particular implementation of a QC-QC might therefore not make use of this control system.

However, we can give a sufficient condition. We noticed that control is often but not always connected to orthogonality. We can thus say that if two process or Choi vectors are orthogonal, a superposition of them is controlled. The idea is that if the vectors are orthogonal, we could add control systems to the process similar to how control $\ket{\C{K}_{n-1}, k_n}$ is added when embedding the internal and local operations of the agents in the QC-QC framework.

\subsection{Spacetime position labels as control in process boxes}\label{sec:timestamps}

Let us consider $N$ agents. For fixed orders, we can represent the order relations between the agents as directed acyclic graphs (DAG) over the agents, that is an agent $A_i$ can signal an agent $A_j$ if the node representing $A_j$ in the DAG can be reached from the node representing $A_i$ by following directed edges. Process boxes with fixed orders can then in turn be represented by such DAGs. Note that a specific DAG $\C{G}$ can be implemented by different process boxes as we can relabel the positions as long as the order relations are still respected due to \cref{lemma:relabeling}. Consider now two DAGs $\C{G}$ and $\C{G}'$ over $N$ agents such that there exists at least one pair of agents $A_i$ and $A_j$ for whom the two graphs specify a different order relation. Then the effective Choi vector of any process box implementing $\C{G}$ and that of any process box implementing $\C{G}'$ are orthogonal. This is a consequence of the relativistic causality condition of causal boxes. In each of the two process boxes, the agents $A_i$ and $A_j$ are associated with a position label when they receive a non-vacuum state. Let us call this position label $t^i_{\C{G}}$ for $A^i$ in the implementation of $\C{G}$ and analogously for $t^j_{\C{G}}, t^i_{\C{G}'}$ and $t^j_{\C{G}'}$. If the Choi vectors of the two process boxes are not orthogonal, then $t^i_{\C{G}} = t^i_{\C{G}'}$ and $t^j_{\C{G}} = t^j_{\C{G}'}$. Then, due to the different order relationships of $A_i$ and $A_j$ in the two DAGs, we have w.l.o.g. $t^i_{\C{G}} < t^j_{\C{G}}$ and $t^j_{\C{G}'} < t^i_{\C{G}'}$. But if we assume that $t^j_{\C{G}} = t^j_{\C{G}'}$, we find that $t^i_{\C{G}} < t^j_{\C{G}} = t^j_{\C{G}'} < t^i_{\C{G}'}$ and therefore $t^i_{\C{G}} \neq t^i_{\C{G}'}$. Thus, the Choi vectors must be orthogonal. 

In general, we could have superpositions of different position labels for which $A_i$ and $A_j$ receive non-vacuum state. In this case, we can apply the above argument in each term of the superposition to find that the overall Choi vectors are still orthogonal.

We thus find that the superposition of two process boxes that implement different DAGs is controlled. More generally, we can say that a superposition of any number of process boxes implementing different DAGs is controlled.

While this does not show that process boxes in general are controlled, it gives us the physical intuition that control in the process box framework comes from the position labels. In the next two subsections, we will formalize this intuition and show that when assuming \cref{conjecture:simplifying}, one can construct an operationally equivalent QC-QC for every process box.

\subsection{An example of a process with dynamical control of parallel operations}\label{sec:dynamicalparallel}

A remaining problem with finding a QC-QC formulation for every process box, even after all the steps we have taken so far to bring the framework closer to what we intuitively interpret QC-QCs to be, is that the process box allows for multiple agents to receive a message at the same time. For example, Alice could send a message to both Bob and Charlie. This can in principle be described in the QC-QC framework with the help of QC-PARs, quantum circuits with operations used in parallel \cite{Wechs_2021}. These are circuits where all agents act in parallel instead of sequentially. Parallel operation is, however, equivalent to sequential operation if the output of the agents is directly sent to the global future via internal wires such that different agents cannot influence each other. QC-PARs are thus a subclass of QC-FOs. 

In general, the possible classical orders can be viewed as DAGs as discussed in the previous section. Each of these graphs can be written as a QC-PAR (more precisely, we need to extend the idea of QC-PARs a bit so that we can have both parallel and sequential operations in a single circuit). Thus, any controlled superposition of such graphs can be written as a controlled superposition of QC-PARs.

\begin{figure}
\centering
\includegraphics[width=\textwidth]{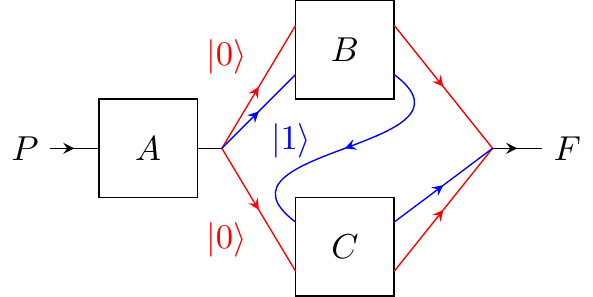}
\caption{Alice receives a state from the global past. Her output is then sent to both Bob and Charlie or just to Bob depending dynamically on whether the state of her output is $\ket{0}$ (red path) or $\ket{1}$ (blue path). modeling such a process with dynamical control of the number of messages as a process box is straightforward, but modeling it as a QC-QC illustrates some of the difficulties that we have to overcome to find a mapping from process boxes to QC-QCs.}
\label{fig:dynamicalparallel}
\end{figure}

However, we could also have dynamical control of order or more specifically of the number of message recipients. Consider the situation depicted in \cref{fig:dynamicalparallel} with Alice, Bob and Charlie. At the initial time $t=2$, Alice receives a qubit. After she acts on it at $t=3$, the process box sends the $\ket{0}$ component to both Bob and Charlie and the $\ket{1}$ component to just Bob at $t=4$. In the former case, the process box then sends both Bob's and Charlie's state at $t=5$ to the global future where it arrives at $t=8$ and in the latter case it sends Bob's state to Charlie and afterwards outputs Charlie's state at $t=7$ to the global future, also at $t=8$.

Using the sequence representation, we can write this process as

\begin{gather}
\begin{aligned}
    \bar{V}_1 &= \ket{\psi}^{A^I} \ket{\Omega}^{B^I} \ket{\Omega}^{C^I} \\
    \bar{V}_2 &= \ket{\Omega}^{A^I} \ket{0}^{B^I} \ket{0}^{C^I} \ket{0}^{\alpha} \bra{0}^{A^O} \bra{\Omega}^{B^O} \bra{\Omega}^{C^O} + \ket{\Omega}^{A^I} \ket{1}^{B^I} \ket{\Omega}^{C^I} \ket{1}^{\alpha} \bra{1}^{A^O} \bra{\Omega}^{B^O} \bra{\Omega}^{C^O} \\
    \bar{V}_3 &= \ket{\Omega}^{A^O} ( \mathbb{1}^{B^O C^O \rightarrow \alpha_3} \ket{0}^\alpha \bra{0}^{\alpha} \bra{\Omega}^{A^O} + \mathbb{1}^{B^O \rightarrow C_I} \ket{\Omega}^{\alpha_3} \ket{1}^{\alpha} \bra{1}^{\alpha} \bra{\Omega}^{A^O} \bra{\Omega}^{C^O}) \\
    \bar{V}_4 &= \mathbb{1}^{\alpha_3 \rightarrow F^{(0)}} \bra{0}^{\alpha} \bra{\Omega}^{A^O} \bra{\Omega}^{B^O} \bra{\Omega}^{C^O} + \mathbb{1}^{C^O \rightarrow F^{(1)}} \bra{1}^{\alpha} \bra{\Omega}^{A^O} \bra{\Omega}^{B^O} 
\end{aligned}
\end{gather}
where superscripts $A^I, B^I, C^I$ designate states in the input spaces of Alice, Bob, Charlie while $A^O, B^O, C^O$ designate states in the output spaces and we introduced two 2-dimensional ancillary Hilbert spaces $\C{H}^{\alpha}$ and $\C{H}^{\alpha_3}$. The Hilbert space of the global future is $\C{H}^F = \C{H}^{F^{(0)}} \oplus \C{H}^{F^{(1)}}$ with $\C{H}^{F^{(0)}} = \mathbb{C}^2 \otimes \mathbb{C}^2$ and $\C{H}^{F^{(1)}} = \mathbb{C}^2$. The ancillaries and the partition of the global future are necessary to obtain isometries. We dropped the time stamps as every isometry $\bar{V}_n$ acts on states at time $t=2n-1$ and outputs states at time $t=2n$. Note that the isometries as defined here are only isometries on the types of states that can actually be produced by local operations. One could extend the above definitions to be isometries on all states, however, this is not necessary when our goal is to find a QC-QC description as all possible extensions would yield the same QC-QC.

We now convert the above process box into a QC-QC by giving the internal operators

\begin{gather}\label{eq:dynpar_kraus}
\begin{aligned}
    V^{\rightarrow A}_{\emptyset, \emptyset} &= \ket{\psi}^{A^I} \\
    V^{\rightarrow B}_{\emptyset, A} &= \ket{0}^{\alpha_2} \ket{0}^{B^I} \bra{0}^{A^O} + \ket{2}^{\alpha_2} \ket{1}^{B^I} \bra{1}^{A^O} \\
    V^{\rightarrow C}_{\{B\}, A} &= \mathbb{1}^{B^O \rightarrow \alpha_3} \ket{0}^{C^I} \bra{0}^{\alpha_2} + \mathbb{1}^{B^O \rightarrow C_I} \ket{2}^{\alpha_3} \bra{2}^{\alpha_2} \\
    V^{\rightarrow F}_{\{A, B\}, C} &= \mathbb{1}^{C^O \alpha_3' \rightarrow F^{(0)}} + \mathbb{1}^{C^O \rightarrow F^{(1)}} \bra{2}^{\alpha_3}
\end{aligned}
\end{gather}
where $\C{H}^{\alpha_3} = \C{H}^{\alpha_3'} \oplus \ket{2}$. All other operators vanish. It can be checked with the characterization condition given in \cref{eq:qcqccond} that this defines a QC-QC.

We can now compare the Choi vector of the process box and the process vector of the QC-QC. We first calculate the Choi vector

\begin{gather}\label{eq:dynpar}
\begin{aligned}
    \dket{\bar{V}_4 \bar{V}_3 \bar{V}_2 \bar{V}_1} =& \dket{\bar{V}_4} * \dket{\bar{V}_3} * \dket{\bar{V}_2} * \dket{\bar{V}_1} \\
    =& (\ket{\psi}^{A^I_2} \ket{\Omega}^{B^I_2} \ket{\Omega}^{C^I_2}) \\
    &(\ket{0}^{A^O_3} \ket{\Omega}^{B^O_3} \ket{\Omega}^{C^O_3} \ket{\Omega}^{A^I_4} \ket{0}^{B^I_4} \ket{0}^{C^I_4} \ket{\Omega}^{A^O_5} \\
    &\dket{\mathbb{1}}^{B^O_5 C^O_5 F^{(0)}_8} \ket{\Omega}^{A^I_6} \ket{\Omega}^{B^I_6} \ket{\Omega}^{C^I_6} \ket{\Omega}^{A^O_7} \ket{\Omega}^{B^O_7} \ket{\Omega}^{C^O_7} \\
    &+\ket{1}^{A^O_3} \ket{\Omega}^{B^O_3} \ket{\Omega}^{C^O_3} \ket{\Omega}^{A^I_4} \ket{1}^{B^I_4} \ket{\Omega}^{C^I_4} \ket{\Omega}^{A^O_5} \ket{\Omega}^{C^O_5}\\
    &\dket{\mathbb{1}}^{B^O_5 C^I_6} \ket{\Omega}^{A^I_6} \ket{\Omega}^{B^I_6} \ket{\Omega}^{A^O_7} \ket{\Omega}^{B^O_7} \dket{\mathbb{1}}^{C^O_7 F^{(1)}_8}).
\end{aligned}
\end{gather}

Here, we moved the time stamps so that the equation is more compact, i.e. $\ket{\psi}^{A^I_t} = \ket{\psi, t}^{A^I}$.

The process vector of the QC-QC is then

\begin{gather}
\begin{aligned}
    \ket{w_{\{A, B, C\}, F}} &= \ket{w_{(A, B, C, F)}} \\
    &= \dket{V^{\rightarrow A}_{\emptyset, \emptyset}} * \dket{V^{\rightarrow B}_{\emptyset, A}} * \dket{V^{\rightarrow C}_{\{B\}, A}} * \dket{V^{\rightarrow F}_{\{A, B\}, C}} \\
    &= \ket{\psi}^{A^I} (\ket{0}^{A^O} \ket{0}^{B^I} \ket{0}^{C^I} \dket{\mathbb{1}}^{B^O C^O F^{(0)}} + \ket{1}^{A^O} \ket{1}^{B^I} \dket{\mathbb{1}}^{B^O C_I} \dket{\mathbb{1}}^{C^O F^{(1)}}).
\end{aligned}
\end{gather}

We see that by dropping vacuum states and time labels the effective Choi vector of the process box is the same as the process vector of the QC-QC. This is also what happens when we compose the Choi vector of the process box with local operations as we considered them in \cref{sec:statespace} in \cref{eq:local_pb} and \cref{def:local_equivalence}. If we only contract over the vacuum states and the time stamps, we obtain

\begin{gather}
\begin{aligned}
    (\dket{\bar{A}} \otimes \dket{\bar{B}} \otimes \dket{\bar{C}})*\dket{\bar{V}} =& \bigotimes_{i=1}^3 ((\dket{A} + \ket{\Omega} \ket{\Omega}) \otimes (\dket{B} + \ket{\Omega} \ket{\Omega}) \otimes \\
    &(\dket{C} + \ket{\Omega} \ket{\Omega}) \otimes \ket{t=2i+1} \ket{t=2i}) * \dket{\bar{V}} \\
    =& (A \otimes B \otimes C) * \dket{w_{\{A, B, C\}, F}}.
\end{aligned}
\end{gather}

The QC-QC and the process box are thus operationally equivalent. We see that a specific process box where whether operations are parallel or sequential is dynamically controlled can be written as a QC-QC. On the one hand, this may be expected as QC-QCs are capable of describing parallel operation and dynamical control when they appear separately. On the other hand, it remains unclear how to achieve this in the general case. When we were trying to find a process box description of every QC-QC, we were able to define $\bar{V}_n$ from just $V_n$. This was possible because we were able to interpret $V_n$ as taking inputs at $t=2n-1$ and producing outputs at $t=2n$. This is not possible for this example. To obtain the QC-QC operation $V_3$ here, which is essentially just $V^{\rightarrow C}_{\{B\}, A}$, we had to look at both $\bar{V}_2$ and $\bar{V}_3$. On the other hand, we also had to ``stretch out" the action of $\bar{V}_2$ over $V_2$ and $V_3$. This was a consequence of the dynamical control of the number of messages sent.

These problems can be much more severe for general process boxes. We can have dynamical control of how many messages are sent to the agents during each time step. If this is the case, the number of messages sent during a later time step may depend on the number of messages sent during an earlier time step. This then makes it more difficult to define the QC-QC operation $V_n$. Informally speaking, the more of this type of dynamical control the process box has, the more isometries of the sequence representation we need to look at to define $V_n$ and also the more internal operations a single isometry gets ``stretched out" over. For a general process box, the internal operation $V_n$ could depend on any number of isometries $\bar{V}_m$. The difficulty is then finding a general scheme to construct $V_n$ from the set of $\bar{V}_m$.

Additionally, there could be time steps where it is possible that no agent receives anything. This is a problem because QC-QCs assume that during each step an agent receives something. However, we cannot just ignore such time steps because whether or not a message is sent could again be dynamically controlled.

In the end, we also need to check that the QC-QC we constructed is actually equivalent to the initial process box which becomes more difficult, the less related $\bar{V}_n$ and $V_n$ are.



\subsection{Mapping process boxes to QC-QCs}\label{sec:pbtoqcqcsub}

Our goal for this subsection will be to prove that general process boxes, with the additional constraint of FAA, can be mapped to QC-QCs under the assumption that \cref{conjecture:simplifying} holds.

\begin{restatable}[QC-QC description of process boxes]{prop}{pbtoqcqc}
\label{prop:pbtoqcqc}
For every process box of the form of \cref{conjecture:simplifying} there exists a QC-QC such that they are operationally equivalent.
\end{restatable}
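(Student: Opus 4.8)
The plan is to construct the QC-QC directly from the sequence representation of the process box and then verify the two defining requirements: that the extracted data satisfy the characterization conditions of \cref{def:qcqcchar}, and that the resulting process vector is operationally equivalent to the process box in the sense of \cref{def:pbqceq}. By \cref{prop:PBunitary} I may work with a unitary extension, so the process box is given by isometries $\bar{V}_1,\dots,\bar{V}_M$ acting on the simplified position set $\C{T}' = \{1,\dots,M\}$ of \cref{conjecture:simplifying}, where every agent receives its unique non-vacuum message at an even time and sends at the following odd time. The guiding principle, established in \cref{sec:timestamps}, is that the position label at which an agent receives its message plays the role of the QC-QC control: receiving at time $t=2n$ records that the agent acts in the $n$-th layer, and branches corresponding to distinct causal orders are mutually orthogonal, so they can be recombined coherently into a control system $\ket{\C{K}_{n-1},k_n}$.

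First I would make precise the stripping map observed to work in the examples of \cref{sec:switchtopb} and \cref{sec:dynamicalparallel}: given the effective Choi vector $\dket{\bar{V}}$, one removes the vacuum tensor factors and the time stamps, retaining the order information in a freshly introduced QC-QC control register. Because FAA and WSR guarantee that, after composition with admissible local operations, exactly one non-vacuum message sits on the wires at any time, this map is well defined on the subspace that actually occurs. I would then read off the internal operators $V^{\rightarrow k_{n+1}}_{\C{K}_{n-1},k_n}$ by asking, in each branch where the agents in $\C{K}_{n-1}$ have already received their messages and $k_n$ has just sent its output, to which agent $k_{n+1}$ the process box routes the message next and with what amplitude; the ancillary wires $\alpha_n$ absorb any internal memory carried along.

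The genuinely hard part is that this read-off is not a one-step-to-one-layer correspondence. As \cref{sec:dynamicalparallel} shows, a process box may exhibit dynamical control of the number of recipients, may send to several agents in parallel at a single even time, and may have time steps at which no agent receives anything; consequently a single QC-QC operator can depend on several consecutive isometries $\bar{V}_m$, while a single $\bar{V}_m$ may be \emph{stretched} across several layers. I would handle parallel recipients by sequentialising them in the QC-PAR manner: within a layer, fix an arbitrary order for the simultaneously-addressed agents and route their outputs to internal memory so that they cannot signal one another, which is operationally harmless since in the process box they already do not. Empty time steps are contracted away, and the bookkeeping of which isometries feed which layer is organised by induction on the even times, with the set $\C{K}_{n-1}$ of already-addressed agents as the inductive invariant. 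Producing a single uniform regrouping scheme that preserves the orthogonality of branches underlying the coherent control is where the main effort lies.

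Finally, with the operators $V^{\rightarrow k_{n+1}}_{\C{K}_{n-1},k_n}$ in hand, I would verify \cref{def:qcqcchar}: the normalisation and marginal relations of \cref{eq:qcqccond} should follow from the isometry relations satisfied by the $\bar{V}_n$, using the reverse of the argument by which \cref{lemma:krausiso} was applied in the forward direction. Operational equivalence is then checked as in \cref{prop:switchequivalence} and the example of \cref{sec:dynamicalparallel}: composing the process box with any local operations $\bar{A}_i$ equivalent (per \cref{def:local_equivalence}) to QC-QC operations $A_i$, and contracting only the vacuum states and time stamps, should reproduce $(\dket{A_1} \otimes \dots \otimes \dket{A_N}) * \ket{w_{\C{N},F}}$ up to the simple global-past/future isomorphisms that append or drop the extremal time stamps.
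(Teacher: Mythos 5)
Your proposal follows essentially the same route as the paper's proof: extract the QC-QC internal operations from the sequence representation by making the time-stamp-induced control explicit as an internal wire, deal with the multi-message and empty-step pathologies, and verify equivalence by a term-by-term Choi-vector computation against $\ket{w_{\C{N},F}}$. There are two organizational differences worth noting. First, you propose to serialize simultaneously-addressed agents inside the QC-QC layers in QC-PAR fashion, whereas the paper performs this serialization on the process-box side (\cref{lemma:stretching}): it first shows the process box is operationally equivalent to one that sends at most one message per time step, which costs only a relabeling argument (\cref{lemma:relabeling}) and makes the subsequent extraction uniform. Second, the step you flag as ``the main effort'' --- a uniform regrouping scheme across empty and dynamically-controlled steps --- is resolved in the paper not by deleting empty steps but by splitting each $\bar{V}_{n+1}$ into restrictions $\bar{V}^{m,0}_{n+1}$ and $\bar{V}^{m,1}_{n+1}$ (control length $m$, mapped to zero- or one-message states) and defining $V_{m+1}$ as a direct sum over all chains $V^{m,1}_{n_{m+1}}\circ V^{m,0}_{n_{m+1}-1}\circ\cdots\circ V^{m,0}_{n'_m+1}$; since whether a step is empty can be in coherent superposition, one must sum coherently over when the next message fires, and the isometry property of $V_{m+1}$ then comes for free from the mutual orthogonality of the chains' domains and images (condition 4 of \cref{lemma:isochar}) rather than from checking \cref{eq:qcqccond} as you suggest. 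Your outline is correct, but these two pieces of machinery are precisely what turn your declared intentions into a proof.
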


However, after outlining the complications that appear when trying to do this, let us consider a slightly different approach to what we did for the example in the last section. 

In principle, the isometry $\bar{V}_n$ could send zero, one or multiple messages to the agents with all these possibilities potentially being dynamically controlled as we discussed already. However, the internal operation $V_n$ sends exactly one message to the agents. In the example of the previous section, we essentially ``stretched out" the operation of $\bar{V}_2$ over $V_2$ and $V_3$ while also ``compressing" parts of the action of $\bar{V}_2$ and $\bar{V}_3$ into $V_3$ such that each of the QC-QC internal operations only sends a single message. As we already argued, this is difficult to do in the general case where we do not have an explicit expression for $\bar{V}_n$. However, notice that we actually have two problems here. The first problem is that $\bar{V}_n$ could send multiple messages. The second is that it could send no messages.

We can thus try to solve the first problem by ``stretching out" $\bar{V}_n$ in the process box framework. This is easier to do as the process box has no problem with sending nothing and we thus do not have to worry about the second problem right away.

\begin{restatable}[``Stretching" the internal operations]{lemma}{stretching}
\label{lemma:stretching}
Any process box of the form of \cref{conjecture:simplifying} is operationally equivalent to a process box which is again of the form of \cref{conjecture:simplifying} with the additional property that it sends and receives at most one message during each time step. 
\end{restatable}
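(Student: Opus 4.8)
The plan is to construct the stretched process box directly at the level of its sequence representation, exploiting the fact (recalled in \cref{sec:seqrep}) that \emph{any} family of isometries of the appropriate signature is automatically the sequence representation of a causal box. The conceptual input is the observation already used in \cref{sec:dynamicalparallel}: when an internal operation sends messages to several agents ``in parallel'' at one time step, the choice of what to send to each recipient has been fixed by the earlier isometries before that step, so these sends can be released one at a time across a finer time grid while the agents' responses are buffered in an internal memory, without ever letting an earlier recipient's output influence a later recipient's input. Parallel sending is thereby turned into sequential sending with the intermediate outputs routed forward, which does not change the process operationally. It will be convenient to first replace $\Phi$ by its unitary extension (\cref{prop:PBunitary}), so that we may work with a single family of isometries $\bar V_1,\dots$ and reinstate the fixed global past/future at the end.

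Concretely, I would refine the totally ordered position set $\C{T}'=\{1,\dots,M\}$ of \cref{conjecture:simplifying} by replacing each output time (where the box may emit up to $N$ messages) together with its following input time by a block of $N$ consecutive output/input pairs, keeping inputs at even and outputs at odd positions and preserving $\C{O}_i(t)=t+1$. On this finer grid I define new isometries $\bar W$ as follows. At the start of a block I apply the original $\bar V_n$ \emph{internally}, storing the messages it would emit in an internal buffer together with a control register $\ket{\C{K}}$ recording which agents are scheduled to receive (this register is analogous to the QC-QC control). Then, for $j=1,\dots,N$ in a fixed order, one refined output step coherently releases agent $j$'s buffered message onto its input wire if $j$ is scheduled and emits the vacuum otherwise, and the following refined input step absorbs agent $j$'s output into a separate incoming buffer (or the vacuum if $j$ was not scheduled). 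After all $N$ sub-steps the incoming buffer holds exactly the outputs $\bar V_{n+1}$ expects, and the block hands off to the next one. By construction at most one non-vacuum message is sent and at most one is received per refined time step, and the global-future reception is staggered in the same way.

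I would then verify the four required properties. \textbf{Isometry:} each $\bar W$ is a composition of the original isometry $\bar V_n$ with controlled ``release/absorb'' maps that move states between orthonormal internal-buffer registers conditioned on the orthonormal control register and on vacuum-versus-nonvacuum; orthogonality of the control values (which encode the scheduled set, just as in \cref{lemma:krausiso} and \cref{prop:symmiso}) makes the isometry check routine, and any family of isometries is a valid sequence representation. \textbf{Process box:} the control register guarantees each agent is served exactly once and never re-served, so WSR survives the restaggering, while LO and OSR are inherited because $\C{O}_i(t)=t+1$ is preserved and the box still acts trivially on vacua. \textbf{Form of \cref{conjecture:simplifying}:} the refined positions are totally ordered with inputs even, outputs odd and $\C{O}_i(t)=t+1$, and a final order-preserving relabeling to $\{1,\dots,M'\}$ via \cref{lemma:relabeling} puts it in the desired form. \textbf{Operational equivalence:} I map each local operation of $\Phi$ to the operation of $\Phi'$ with the same content acting trivially on the added staggered vacuum slots, exactly the shape of \cref{def:local_equivalence}; because no within-block send depends on any earlier within-block response, contracting the refined effective Choi vector with these operations reproduces the original supermap output up to the position-relabeling isomorphism, which acts as a unitary isomorphism on the global past and future, as \cref{def:pb_eq} demands.

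The hard part will be handling the \emph{dynamical} and coherent nature of the scheduling. A single order-preserving relabeling of positions (i.e.\ a bare application of \cref{lemma:relabeling}) cannot suffice, because the set of agents that receive ``simultaneously'' at a given step generally differs from branch to branch of the superposition, so the staggering must be branch-dependent and realized through the internal control register rather than through a static change of labels. The crux is therefore to show that this coherent buffering commutes correctly with the superposition --- that releasing messages one at a time creates no cross-branch interference and introduces no new signaling from earlier to later recipients within a block --- so that the stretched box really is isometric and really reproduces the original action. Once this coherent ``parallel-equals-sequential'' equivalence is established, the remaining manipulations with time stamps and vacuum states are bookkeeping of the same kind already carried out in \cref{prop:symmiso} and \cref{prop:switchequivalence}.
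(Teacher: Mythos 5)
Your construction would work, but it takes a genuinely different and noticeably heavier route than the paper, and the difficulty you single out as the crux is largely an artifact of your own design choice. The paper does not buffer messages and release them under a branch-dependent control register; instead it staggers \emph{statically by agent index}: after opening a gap of $2N-2$ positions via \cref{lemma:relabeling}, it replaces $\bar{V}_n$ by $\bar{V}'_n = \bigotimes_k \C{D}^I_k \circ \bar{V}_n$, where $\C{D}^I_k$ simply moves agent $k$'s input wire from time $2n$ to time $2n+2k-2$ in \emph{every} branch (delivering the vacuum there in branches where agent $k$ gets nothing), and symmetrically routes the staggered outputs to $\bar{V}_{n+1}$ through an internal wire. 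This is a relabeling of which (wire, time) pair each tensor factor $\C{H}^{\bar{A}^I_k}\otimes\ket{t}$ is attached to --- not a relabeling of $\C{T}$ alone, which is why your correct observation that \cref{lemma:relabeling} by itself cannot suffice does not force you into dynamical scheduling. Under the static assignment there is no scheduled-set register to keep coherent, no within-block signaling to rule out, and operational equivalence is immediate because the link product is blind to how the tensor factors are labeled; the paper then recovers the canonical one-pair-per-isometry form by re-decomposing $\bar{V}''_n$ into its own sequence representation, and gets ``receives at most one message'' for free from LO. Your version buys a picture closer in spirit to the QC-QC control system (which may be conceptually appealing given \cref{sec:pbtoqcqcsub}), and your preliminary unitarization via \cref{prop:PBunitary} is harmless but unnecessary since a sequence representation exists for any process box. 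However, as written your proposal defers precisely the step that carries all the weight --- proving that the coherent release/absorb maps are isometries and introduce no cross-branch interference --- so if you keep your route you must actually supply that argument (orthogonality of the buffer and control registers makes it doable, but it is not bookkeeping), whereas adopting the static agent-indexed staggering dissolves it.
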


The basic idea is that we can add additional time steps over which the messages that $\bar{V}_n$ sends are spread out. These time steps do not matter as they simply get contracted over when taking the composition with local agents, which as we argued previously apply time-independent local operations.

As we only have to consider 0- and 1-message states from now on, in the rest of the section we will use the wire isomorphism \cref{eq:wireiso} to write $\ket{\psi, t}^{\bar{A}^{I/O}_n}$ instead of $\ket{\Omega, t}^{\bar{A}^{I/O}_1} ... \ket{\psi, t}^{\bar{A}^{I/O}_n} ... \ket{\Omega, t}^{\bar{A}^{I/O}_N}$ to simplify our notation. Additionally, if $\psi = \Omega$, we will simply write $\ket{\Omega, t}^{I/O}$.

Let us now outline the proof idea of \cref{prop:pbtoqcqc}, the details of which can be found once again in \cref{sec:proofs}. We wish to add internal wires that carry the control system from the QC-QC to the process box. This will help us to define the internal operator $V_{n+1}$ of the QC-QC as it is supposed to act on states where the control length is $n$.\footnote{When we say controls with length $n$, we mean states in $\C{H}^{C_n}$ because they have the form $\ket{\C{K}_{n-1}, k_n}$ and $|\C{K}_{n-1} \cup k_n| = n$.} We take the position that the control is already implicitly contained in the process box as discussed in \cref{sec:role,sec:timestamps}. Making the control explicit can then be viewed as defining some function $C$ that appends the correct control to each state, $C \ket{\psi} = \ket{\psi} \ket{i}^C$. $C$ obviously should not make states orthogonal if they were not orthogonal before (however, just because two states are orthogonal does not mean that their explicit controls must be orthogonal). In other words, $C$ must be an isometry. 

\begin{example}[Adding the control]
We consider two examples where the control is rather explicit already to see how this works: The function $C$ could duplicate basis states $C \ket{i} = \ket{i} \ket{i}^C$ or it could use the lab as the control $C \ket{\psi}^A = \ket{\psi}^A \ket{A}^C$. In both cases, we do not add any additional information.
\end{example}




\begin{figure}
    \centering
    \includegraphics[width=\textwidth]{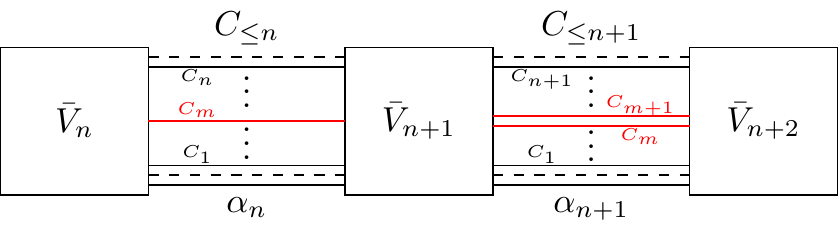}
    \caption{The internal wires of the sequence representation after adding internal wires that correspond to the control in the QC-QC picture. The isometry $\bar{V}_{n+1}$ has internal wires $\alpha_n$ and $\alpha_{n+1}$ that were already there before adding the control. The wires $C_i, i=1,...,n+1$ are new internal wires and carry controls of length $i$. On both the incoming and outgoing side, they are consolidated into a single wire $C_{\leq n}$ and $C_{\leq n+1}$, respectively, as indicated by the dashed lines. The red wires indicate that a state whose control is on $C_m$ is mapped to a state whose control is either on $C_m$ or $C_{m+1}$ by $\bar{V}_{n+1}$.}
    \label{fig:addcontrol}
\end{figure}

The idea is therefore to start out with an arbitrary sequence representation of the process box and then adding additional internal wires between the isometries that carry the control. This then still yields the same process box. We formalize this idea in \cref{lemma:addcontrol} in \cref{sec:proofs}. The resulting isometries are depicted in \cref{fig:addcontrol}. Importantly, due to the fact that previous isometries may have sent no messages to the agents, a control arriving at the isometry $\bar{V}_{n+1}$ need not have length $n$. 

What is therefore the effective input space of $\bar{V}_{n+1}$? This isometry can receive either 0 or 1 messages from local agents, it can receive a state via an internal wire $\alpha_n$ and a control of length $m \leq n$ via an internal wire which we will denote with $C_m$ as we did for QC-QCs.\footnote{In the interest of simplifying notation we will use $C_m$ for all isometries $\bar{V}_{n+1}$ with $m \leq n$. It should be understood that the wires are different wires, but the Hilbert spaces can be taken to be the same.} On the other hand, the output can consist of 0 or 1 messages, a state on an internal wire $\alpha_{n+1}$ and a control $C_m$ or $C_{m+1}$ depending on the number of messages sent. We can thus write


\begin{gather}\label{eq:isomdirectsum}
\begin{aligned}
    \bar{V}_{n+1}: &\bigoplus_{m=0}^{n} \C{H}^{\bar{A}^O}_{t=2n+1} \otimes \C{H}^{\alpha_n} \otimes \C{H}^{C_m} \rightarrow \bigoplus_{m=0}^{n} \C{H}^{\bar{A}^I}_{t=2n+2} \otimes \C{H}^{\alpha_{n+1}} \otimes \C{H}^{C_{m+1}}
\end{aligned}
\end{gather}
where

\begin{equation}
\C{H}^{\bar{A}^{I/O}}_t \coloneqq (\bigoplus_{k=1}^N  \C{H}^{A^{I/O}_k} \oplus \ket{\Omega}^{I/O}) \otimes \ket{t}
\end{equation}
is the Hilbert space with zero or one messages on the input/output wires at time $t$.

In order to define the isometries from the QC-QC framework, we need to look at different $\bar{V}_{n+1}$. The control system that we added will be a great help to us in this as the action of $V_{n+1}$ will essentially be the action of the $\bar{V}_{m+1}$ on states where the control has length $n$. The main complication is the fact that the isometries in the process box picture can send no messages while the QC-QC isometries cannot do that. This requires looking at different $\bar{V}_{m+1}$ again, but the steps we have taken up until now allow us to do this in a systematic fashion.

Following these ideas, we can show \cref{prop:pbtoqcqc} and thus find a mapping from process boxes to QC-QCs.

\newpage

\part{Conclusion}\label{sec:conclusion}

In summary, we constructed a mapping from QC-QCs to process boxes such that a QC-QC and the process box it is mapped to are operationally equivalent. This required a careful analysis of the underlying state spaces of both frameworks as they model temporal superpositions in different ways. The process box framework uses time stamps and vacuum states such that agents act during all time steps but only receive a non-vacuum state during one. Meanwhile, the QC-QC framework, like the general process matrix framework, chooses for each agent a single special time step during which the agent acts. Otherwise, the agents are inactive. This is similar to other frameworks that use time-delocalization to model indefinite causal order \cite{Oreshkov_2019, https://doi.org/10.48550/arxiv.2201.11832}. In the course of this analysis, we defined an identification of the local operations that can be applied in each framework and argued that a QC-QC and a process box are operationally equivalent if the measurement statistics are the same under this identification, or, more generally, the behavior of the QC-QC and the process box as supermaps is isomorphic. Our mapping also resolves the composability issue of process matrices \cite{Gu_rin_2019} in the case of QC-QCs as it allows us to understand composition of QC-QCs as the composition of causal boxes, which is well-defined, even if the composition does not necessarily yield a process box. This is a consequence of process boxes being, by construction, a subset of causal boxes \cite{Vilasini_2020}, which in turn are closed under composition \cite{Portmann_2017}. Composability is therefore recovered by constructing an extension for each QC-QC, which requires embedding the state space of the QC-QC into a Fock space. We have also seen that different extensions are possible when mapping a QC-QC to a process box and that the chosen extension has physical relevance for composition. The extension must thus be defined with the experimental setup in mind.

We can also ask whether composability is a necessary condition for physicality of processes, now that we know that this condition would not exclude QC-QCs, which we expect to be physical \cite{Wechs_2021}. Future research could therefore go in the following directions: firstly, one could develop new frameworks that generalize process matrices, loosening the assumptions that prevent composition by incorporating features from the causal box framework. For example, the process matrix framework can be generalized such that it can model multi-round processes by including a local order relation for each agent \cite{Hoffreumon_2021}, although this on its own without a global order relation (i.e. background spacetime) seems insufficient to resolve the composability issue. Considering more general frameworks, like the post-selected CTC framework \cite{Lloyd_2011} or the multi-time states framework \cite{Aharonov1964}, that include process matrices as a linear subset \cite{Ara_jo_2017_CTC, Silva_2017} would be another possible approach that does not rely on a background spacetime to recover composability. Investigating composition via such more general frameworks is somewhat similar to what we did in this work. Composition of QC-QCs, which we take to be the composition of process boxes, is only well-defined if we consider it as composition in the more general causal box framework. Secondly, one could ask whether QC-QCs are the largest class of processes that admit a physical interpretation and also the largest class that admit a causal box description (which we expect to describe anything we can do over a fixed, acyclic spacetime). As we have shown that QC-QCs and process boxes are operationally equivalent, we can pose this as the question of whether we can drop any of the constraints in the process box framework. FAA is one obvious candidate, but OSR is also a somewhat unnatural assumption, that is largely imposed to simplify the framework. Thirdly, one could attempt to find no-go results that show that no extensions to causal boxes can exist for some or even all non-QC-QC process matrices.

An interesting feature that came out while developing the extensions of the QC-QCs to process boxes was that symmetrization and the use of Fock spaces was necessary for the extension defined in \cref{sec:controlintarget}. The initial motivation to use Fock spaces for causal boxes was to indicate the fact that all the ordering information is contained in the position labels \cite{Portmann_2017}. However, the extension of the dynamical switch in \cref{sec:switchtopb} and its generalization to arbitrary QC-QCs in \cref{sec:controlintarget} were physically motivated by the idea of viewing a QC-QC as a circuit in which photons propagate independently from each other. Symmetrization then turns out to be necessary to obtain isometries for the internal operations. This may be viewed as a consequence of viewing the target systems as photons, which are bosons. The use of Fock spaces may thus allow us to use causal boxes to describe processes in a quantum field theory (QFT) setting. We could also ask if we could realize causal boxes with antisymmetric Fock spaces in order to describe fermions. Using process matrices in QFT has previously been discussed in \cite{Faleiro2020}, but one issue they raise is that some QFT processes would require a variable number of operations (e.g. radiative beta decay has a variable number of outgoing photons). This might be less of a problem for the causal box framework as it does not assume that each agent acts exactly once.

We also showed that process boxes admit a unitary extension by adding a single additional wire each in the global past and the global future of all positions. This is a somewhat expected result as the authors of \cite{Ara_jo_2017} argue that physical processes admit unitary extensions due to the connection of unitarity and reversibility of quantum mechanics \cite{Chiribella_2010}. Our result therefore gives support to their argument. 

In our attempt to find a mapping from process boxes to QC-QCs, we argued that process boxes can only realize controlled superpositions of order due to the orthogonality of the position labels. This can be viewed as a consequence of the principle of relativistic causality. This result thus suggests that QC-QCs are singled out by relativistic causality in a fixed, acyclic spacetime (up to the assumptions we imposed on process boxes and thus also QC-QCs). We also identified the potentially very general structure of the set of positions $\C{T}$ as a major hurdle for constructing a mapping from process boxes to QC-QCs. We showed that only the order relations, and not the position labels themselves, matter and conjectured that each process box can be reduced to a simpler, but operationally equivalent process box, where only the time information is relevant with a few more restriction. Based on this conjecture, we constructed a mapping from process boxes to QC-QCs. Should our conjecture not hold, our proof may still serve as a template for a future construction that is not based on the conjecture.

We note that our mapping from process boxes to QC-QCs only holds if we impose the constraint of fully active agents. Instead of imposing this constraint on process boxes, one could instead extend the QC-QC framework such that it allows for passive agents. This would likely include introducing an explicit vacuum state into the framework. 

Finally, let us briefly discuss what our results suggest for the broader open question of processes in indefinite spacetime. A key difference between fixed and indefinite spacetime processes is that in the latter whether states are localized at a single spacetime point or delocalized over several spacetime points is no longer an agent-independent notion \cite{Zych_2019,Paunkovi__2020}. Each agent's perspective would then correspond to a different partially ordered set. For example, the gravitational quantum switch \cite{Zych_2019} appears to Alice like a process box where she acts at a fixed time while Bob acts either before or after her, while Bob would describe the process as a process box where he acts at a fixed time and Alice is delocalized. It could then be that once the perspective of a single agent is fixed, a process in an indefinite spacetime can always be modeled by a process box. The spacetime could also depend dynamically on the action of the agents, for example because their messages are or interact with massive systems. The question is then if that can be understood as just a different flavor of dynamical control of causal order, which both QC-QCs and process boxes are capable of. We also saw that a single QC-QC can have multiple operationally equivalent process boxes, depending on the exact implementation. This hints at the possibility of more general process box and causal box frameworks that can model indefinite spacetimes directly. All in all, this suggests that our results and techniques should still be relevant in a broad class of indefinite spacetime structures, but whether the impossibility of violating causal inequalities generalizes to physical implementations in indefinite spacetimes remains to be seen.


\newpage

\printbibliography

@article{Wechs_2021,
	doi = {10.1103/prxquantum.2.030335},
  
	url = {https://doi.org/10.1103\%2Fprxquantum.2.030335},
  
	year = 2021,
	month = {8},
  
	publisher = {American Physical Society ({APS})},
  
	volume = {2},
  
	number = {3},
  
	author = {Julian Wechs and Hippolyte Dourdent and Alastair A. Abbott and Cyril Branciard},
  
	title = {Quantum Circuits with Classical Versus Quantum Control of Causal Order},
  
	journal = {{PRX} Quantum}
}

@article{Costa_2022,
	doi = {10.22331/q-2022-03-01-663},
  
	url = {https://doi.org/10.22331%2Fq-2022-03-01-663},
  
	year = 2022,
	month = {mar},
  
	publisher = {Verein zur Forderung des Open Access Publizierens in den Quantenwissenschaften},
  
	volume = {6},
  
	pages = {663},
  
	author = {Fabio Costa},
  
	title = {A no-go theorem for superpositions of causal orders},
  
	journal = {Quantum}
}

@article{Ara_jo_2017,
	doi = {10.22331/q-2017-04-26-10},
  
	url = {https://doi.org/10.22331%2Fq-2017-04-26-10},
  
	year = 2017,
	month = {apr},
  
	publisher = {Verein zur Forderung des Open Access Publizierens in den Quantenwissenschaften},
  
	volume = {1},
  
	pages = {10},
  
	author = {Mateus Ara{\'{u}}jo and Adrien Feix and Miguel Navascu{\'{e}}s and {\v{C}}aslav Brukner},
  
	title = {A purification postulate for quantum mechanics with indefinite causal order},
  
	journal = {Quantum}
}

@article{Paunkovi__2020,
	doi = {10.22331/q-2020-05-28-275},
  
	url = {https://doi.org/10.22331%2Fq-2020-05-28-275},
  
	year = 2020,
	month = {may},
  
	publisher = {Verein zur Forderung des Open Access Publizierens in den Quantenwissenschaften},
  
	volume = {4},
  
	pages = {275},
  
	author = {Nikola Paunkovi{\'{c}} and Marko Vojinovi{\'{c}}},
  
	title = {Causal orders, quantum circuits and spacetime: distinguishing between definite and superposed causal orders},
  
	journal = {Quantum}
}

@article{Portmann_2017,
	doi = {10.1109/tit.2017.2676805},
  
	url = {https://doi.org/10.1109%2Ftit.2017.2676805},
  
	year = 2017,
	publisher = {Institute of Electrical and Electronics Engineers ({IEEE})},
  
	pages = {1--1},
  
	author = {Christopher Portmann and Christian Matt and Ueli Maurer and Renato Renner and Bjorn Tackmann},
  
	title = {Causal Boxes: Quantum Information-Processing Systems Closed under Composition},
  
	journal = {{IEEE} Transactions on Information Theory}
}

@article{Gu_rin_2019,
	doi = {10.1088/1367-2630/aafef7},
  
	url = {https://doi.org/10.1088%2F1367-2630%2Faafef7},
  
	year = 2019,
	month = {jan},
  
	publisher = {{IOP} Publishing},
  
	volume = {21},
  
	number = {1},
  
	pages = {012001},
  
	author = {Philippe Allard Gu{\'{e}}rin and Marius Krumm and Costantino Budroni and {\v{C}}aslav Brukner},
  
	title = {Composition rules for quantum processes: a no-go theorem},
  
	journal = {New Journal of Physics}
}

@article{Barrett_2021,
	doi = {10.1038/s41467-020-20456-x},
  
	url = {https://doi.org/10.1038%2Fs41467-020-20456-x},
  
	year = 2021,
	month = {feb},
  
	publisher = {Springer Science and Business Media {LLC}
},
  
	volume = {12},
  
	number = {1},
  
	author = {Jonathan Barrett and Robin Lorenz and Ognyan Oreshkov},
  
	title = {Cyclic quantum causal models},
  
	journal = {Nature Communications}
}

@misc{Vilasini_2022,
  doi = {10.48550/ARXIV.2203.11245},
  
  url = {https://arxiv.org/abs/2203.11245},
  
  author = {Vilasini, V. and Renner, Renato},
  
  keywords = {Quantum Physics (quant-ph), FOS: Physical sciences, FOS: Physical sciences},
  
  title = {Embedding cyclic causal structures in acyclic spacetimes: no-go results for process matrices},
  
  publisher = {arXiv},
  
  year = {2022},
  
  copyright = {Creative Commons Attribution 4.0 International}
}

@article{Oreshkov_2012,
	doi = {10.1038/ncomms2076},
  
	url = {https://doi.org/10.1038%2Fncomms2076},
  
	year = 2012,
	month = {jan},
  
	publisher = {Springer Science and Business Media {LLC}
},
  
	volume = {3},
  
	number = {1},
  
	author = {Ognyan Oreshkov and Fabio Costa and {\v{C}}aslav Brukner},
  
	title = {Quantum correlations with no causal order},
  
	journal = {Nature Communications}
}

@article{Purves_2021,
	doi = {10.1103/physrevlett.127.110402},
  
	url = {https://doi.org/10.1103%2Fphysrevlett.127.110402},
  
	year = 2021,
	month = {sep},
  
	publisher = {American Physical Society ({APS})},
  
	volume = {127},
  
	number = {11},
  
	author = {Tom Purves and Anthony J. Short},
  
	title = {Quantum Theory Cannot Violate a Causal Inequality},
  
	journal = {Physical Review Letters}
}

@article{Oreshkov_2016,
	doi = {10.1088/1367-2630/18/9/093020},
  
	url = {https://doi.org/10.1088%2F1367-2630%2F18%2F9%2F093020},
  
	year = 2016,
	month = {sep},
  
	publisher = {{IOP} Publishing},
  
	volume = {18},
  
	number = {9},
  
	pages = {093020},
  
	author = {Ognyan Oreshkov and Christina Giarmatzi},
  
	title = {Causal and causally separable processes},
  
	journal = {New Journal of Physics}
}

@misc{Vilasini_2020,
	%doi = {?},
  
	%url = {?},
  
	year = 2020,
	%month = {?},
  
	%publisher = {?},
  
	%volume = {?},
  
	%number = {?},
  
	%pages = {??},
  
	author = {V. Vilasini, Lídia Del Rio and Renato Renner.},
  
	title = {Causality in definite and indefinite space-times},
    
    howpublished={Extended abstract from QPL 2020, \url{https://wdi.centralesupelec.fr/users/valiron/qplmfps/papers/qs01t3.pdf}, talk version at \url{https://www.youtube.com/watch?v=iBCntE0zxg8}}
    
	%journal = {?}
}

@article{CHOI1975285,
title = {Completely positive linear maps on complex matrices},
journal = {Linear Algebra and its Applications},
volume = {10},
number = {3},
pages = {285-290},
year = {1975},
issn = {0024-3795},
doi = {https://doi.org/10.1016/0024-3795(75)90075-0},
url = {https://www.sciencedirect.com/science/article/pii/0024379575900750},
author = {Man-Duen Choi},
abstract = {A linear map Φ from Mn to Mm is completely positive iff it admits an expression Φ(A)=ΣiV∗iAVi where Vi are n×m matrices.}
}

@article{Chiribella_2008,
	doi = {10.1103/physrevlett.101.060401},
  
	url = {https://doi.org/10.1103%2Fphysrevlett.101.060401},
  
	year = 2008,
	month = {8},
  
	publisher = {American Physical Society ({APS})},
  
	volume = {101},
  
	number = {6},
  
	author = {G. Chiribella and G. M. D'Ariano and P. Perinotti},
  
	title = {Quantum Circuit Architecture},
  
	journal = {Physical Review Letters}
}

@article{Chiribella_2009,
	doi = {10.1103/physreva.80.022339},
  
	url = {https://doi.org/10.1103%2Fphysreva.80.022339},
  
	year = 2009,
	month = {8},
  
	publisher = {American Physical Society ({APS})},
  
	volume = {80},
  
	number = {2},
  
	author = {Giulio Chiribella and Giacomo Mauro D'Ariano and Paolo Perinotti},
  
	title = {Theoretical framework for quantum networks},
  
	journal = {Physical Review A}
}

@article{Ara_jo_2015,
	doi = {10.1088/1367-2630/17/10/102001},
	url = {https://doi.org/10.1088/1367-2630/17/10/102001},
	year = 2015,
	month = {oct},
	publisher = {{IOP} Publishing},
	volume = {17},
	number = {10},
	pages = {102001},
	author = {Mateus Ara{\'{u}}jo and Cyril Branciard and Fabio Costa and Adrien Feix and Christina Giarmatzi and {\v{C}}aslav Brukner},
	title = {Witnessing causal nonseparability},
	journal = {New Journal of Physics},
	abstract = {Our common understanding of the physical world deeply relies on the notion that events are ordered with respect to some time parameter, with past events serving as causes for future ones. Nonetheless, it was recently found that it is possible to formulate quantum mechanics without any reference to a global time or causal structure. The resulting framework includes new kinds of quantum resources that allow performing tasks—in particular, the violation of causal inequalities—which are impossible for events ordered according to a global causal order. However, no physical implementation of such resources is known. Here we show that a recently demonstrated resource for quantum computation—the quantum switch—is a genuine example of ‘indefinite causal order’. We do this by introducing a new tool—the causal witness—which can detect the causal nonseparability of any quantum resource that is incompatible with a definite causal order. We show however that the quantum switch does not violate any causal inequality.}
}

@article{Chiribella_2008_Supermaps,
	doi = {10.1209/0295-5075/83/30004},
  
	url = {https://doi.org/10.1209%2F0295-5075%2F83%2F30004},
  
	year = 2008,
	month = {jul},
  
	publisher = {{IOP} Publishing},
  
	volume = {83},
  
	number = {3},
  
	pages = {30004},
  
	author = {G. Chiribella and G. M. D{\textquotesingle}Ariano and P. Perinotti},
  
	title = {Transforming quantum operations: Quantum supermaps},
  
	journal = {{EPL} (Europhysics Letters)}
}

@article{Bell,
  title = {On the Einstein Podolsky Rosen paradox},
  author = {Bell, J. S.},
  journal = {Physics Physique Fizika},
  volume = {1},
  issue = {3},
  pages = {195--200},
  numpages = {6},
  year = {1964},
  month = {Nov},
  publisher = {American Physical Society},
  doi = {10.1103/PhysicsPhysiqueFizika.1.195},
  url = {https://link.aps.org/doi/10.1103/PhysicsPhysiqueFizika.1.195}
}

@article{CHSH,
  title = {Proposed Experiment to Test Local Hidden-Variable Theories},
  author = {Clauser, John F. and Horne, Michael A. and Shimony, Abner and Holt, Richard A.},
  journal = {Phys. Rev. Lett.},
  volume = {23},
  issue = {15},
  pages = {880--884},
  numpages = {0},
  year = {1969},
  month = {Oct},
  publisher = {American Physical Society},
  doi = {10.1103/PhysRevLett.23.880},
  url = {https://link.aps.org/doi/10.1103/PhysRevLett.23.880}
}

@article{Werner,
  title = {Quantum states with Einstein-Podolsky-Rosen correlations admitting a hidden-variable model},
  author = {Werner, Reinhard F.},
  journal = {Phys. Rev. A},
  volume = {40},
  issue = {8},
  pages = {4277--4281},
  numpages = {0},
  year = {1989},
  month = {Oct},
  publisher = {American Physical Society},
  doi = {10.1103/PhysRevA.40.4277},
  url = {https://link.aps.org/doi/10.1103/PhysRevA.40.4277}
}

@article{Feix_2016,
	doi = {10.1088/1367-2630/18/8/083040},
  
	url = {https://doi.org/10.1088%2F1367-2630%2F18%2F8%2F083040},
  
	year = 2016,
	month = {aug},
  
	publisher = {{IOP} Publishing},
  
	volume = {18},
  
	number = {8},
  
	pages = {083040},
  
	author = {Adrien Feix and Mateus Ara{\'{u}}jo and {\v{C}}aslav Brukner},
  
	title = {Causally nonseparable processes admitting a causal model},
  
	journal = {New Journal of Physics}
}

@article{Friis_2014,
  title = {Implementing quantum control for unknown subroutines},
  author = {Friis, Nicolai and Dunjko, Vedran and D\"ur, Wolfgang and Briegel, Hans J.},
  journal = {Phys. Rev. A},
  volume = {89},
  issue = {3},
  pages = {030303},
  numpages = {5},
  year = {2014},
  month = {Mar},
  publisher = {American Physical Society},
  doi = {10.1103/PhysRevA.89.030303},
  url = {https://link.aps.org/doi/10.1103/PhysRevA.89.030303}
}

@article{Zhou_2011,
	doi = {10.1038/ncomms1392},
  
	url = {https://doi.org/10.1038%2Fncomms1392},
  
	year = 2011,
	month = {aug},
  
	publisher = {Springer Science and Business Media {LLC}
},
  
	volume = {2},
  
	number = {1},
  
	author = {Xiao-Qi Zhou and Timothy C. Ralph and Pruet Kalasuwan and Mian Zhang and Alberto Peruzzo and Benjamin P. Lanyon and Jeremy L. O{\textquotesingle}Brien},
  
	title = {Adding control to arbitrary unknown quantum operations},
  
	journal = {Nature Communications}
}

@book{bhatia1996matrix,
  title={Matrix Analysis},
  author={Bhatia, R.},
  isbn={9780387948461},
  lccn={96032217},
  series={Graduate Texts in Mathematics},
  %url={https://books.google.ch/books?id=F4hRy1F1M6QC},
  year={1996},
  publisher={Springer New York}
}

@article{Stinespring_1955,
 ISSN = {00029939, 10886826},
 URL = {http://www.jstor.org/stable/2032342},
 author = {W. Forrest Stinespring},
 journal = {Proceedings of the American Mathematical Society},
 number = {2},
 pages = {211--216},
 publisher = {American Mathematical Society},
 title = {Positive Functions on C*-Algebras},
 volume = {6},
 year = {1955}
}

@article{Procopio_2015,
	doi = {10.1038/ncomms8913},
  
	url = {https://doi.org/10.1038%2Fncomms8913},
  
	year = 2015,
	month = {aug},
  
	publisher = {Springer Science and Business Media {LLC}
},
  
	volume = {6},
  
	number = {1},
  
	author = {Lorenzo M. Procopio and Amir Moqanaki and Mateus Ara{\'{u}}jo and Fabio Costa and Irati Alonso Calafell and Emma G. Dowd and Deny R. Hamel and Lee A. Rozema and {\v{C}}aslav Brukner and Philip Walther},
  
	title = {Experimental superposition of orders of quantum gates},
  
	journal = {Nature Communications}
}

@article{Rubino_2017,
	doi = {10.1126/sciadv.1602589},
  
	url = {https://doi.org/10.1126%2Fsciadv.1602589},
  
	year = 2017,
	month = {mar},
  
	publisher = {American Association for the Advancement of Science ({AAAS})},
  
	volume = {3},
  
	number = {3},
  
	author = {Giulia Rubino and Lee A. Rozema and Adrien Feix and Mateus Ara{\'{u}}jo and Jonas M. Zeuner and Lorenzo M. Procopio and {\v{C}}aslav Brukner and Philip Walther},
  
	title = {Experimental verification of an indefinite causal order},
  
	journal = {Science Advances}
}

@article{Chiribella_2010,
	doi = {10.1103/physreva.81.062348},
  
	url = {https://doi.org/10.1103%2Fphysreva.81.062348},
  
	year = 2010,
	month = {jun},
  
	publisher = {American Physical Society ({APS})},
  
	volume = {81},
  
	number = {6},
  
	author = {Giulio Chiribella and Giacomo Mauro D'Ariano and Paolo Perinotti},
  
	title = {Probabilistic theories with purification},
  
	journal = {Physical Review A}
}

@article{Zych_2019,
	doi = {10.1038/s41467-019-11579-x},
  
	url = {https://doi.org/10.1038%2Fs41467-019-11579-x},
  
	year = 2019,
	month = {aug},
  
	publisher = {Springer Science and Business Media {LLC}
},
  
	volume = {10},
  
	number = {1},
  
	author = {Magdalena Zych and Fabio Costa and Igor Pikovski and {\v{C}}aslav Brukner},
  
	title = {Bell's theorem for temporal order},
  
	journal = {Nature Communications}
}

@article{Taddei_2021,
	doi = {10.1103/prxquantum.2.010320},
  
	url = {https://doi.org/10.1103%2Fprxquantum.2.010320},
  
	year = 2021,
	month = {feb},
  
	publisher = {American Physical Society ({APS})},
  
	volume = {2},
  
	number = {1},
  
	author = {M{\'{a}}rcio M. Taddei and Jaime Cari{\~{n}}e and Daniel Mart{\'{\i}}nez and Tania Garc{\'{\i}}a and Nayda Guerrero and Alastair A. Abbott and Mateus Ara{\'{u}}jo and Cyril Branciard and Esteban S. G{\'{o}}mez and Stephen P. Walborn and Leandro Aolita and Gustavo Lima},
  
	title = {Computational Advantage from the Quantum Superposition of Multiple Temporal Orders of Photonic Gates},
  
	journal = {{PRX} Quantum}
}

@book{axler2014linear,
  title={Linear Algebra Done Right},
  author={Axler, S.},
  isbn={9783319110806},
  series={Undergraduate Texts in Mathematics},
  %url={https://books.google.ch/books?id=5qYxBQAAQBAJ},
  year={2014},
  publisher={Springer International Publishing}
}

@article{Allen_2017,
	doi = {10.1103/physrevx.7.031021},
  
	url = {https://doi.org/10.1103%2Fphysrevx.7.031021},
  
	year = 2017,
	month = {jul},
  
	publisher = {American Physical Society ({APS})},
  
	volume = {7},
  
	number = {3},
  
	author = {John-Mark A. Allen and Jonathan Barrett and Dominic C. Horsman and Ciar{\'{a}
}n M. Lee and Robert W. Spekkens},
  
	title = {Quantum Common Causes and Quantum Causal Models},
  
	journal = {Physical Review X}
}

@book{reichenbach1991direction,
  title={The Direction of Time},
  author={Reichenbach, H. and Reichenbach, M.},
  isbn={9780520074149},
  lccn={91009335},
  %url={https://books.google.ch/books?id=LTaU5JTj3mUC},
  year={1991},
  publisher={University of California Press}
}

@article{Brunner2014,
  title = {Bell nonlocality},
  author = {Brunner, Nicolas and Cavalcanti, Daniel and Pironio, Stefano and Scarani, Valerio and Wehner, Stephanie},
  journal = {Rev. Mod. Phys.},
  volume = {86},
  issue = {2},
  pages = {419--478},
  numpages = {60},
  year = {2014},
  month = {Apr},
  publisher = {American Physical Society},
  doi = {10.1103/RevModPhys.86.419},
  url = {https://link.aps.org/doi/10.1103/RevModPhys.86.419}
}

@article{Henson_2014,
	doi = {10.1088/1367-2630/16/11/113043},
  
	url = {https://doi.org/10.1088%2F1367-2630%2F16%2F11%2F113043},
  
	year = 2014,
	month = {nov},
  
	publisher = {{IOP} Publishing},
  
	volume = {16},
  
	number = {11},
  
	pages = {113043},
  
	author = {Joe Henson and Raymond Lal and Matthew F Pusey},
  
	title = {Theory-independent limits on correlations from generalized Bayesian networks},
  
	journal = {New Journal of Physics}
}

@misc{Barrett2019,
  doi = {10.48550/ARXIV.1906.10726},
  
  url = {https://arxiv.org/abs/1906.10726},
  
  author = {Barrett, Jonathan and Lorenz, Robin and Oreshkov, Ognyan},
  
  keywords = {Quantum Physics (quant-ph), FOS: Physical sciences, FOS: Physical sciences},
  
  title = {Quantum Causal Models},
  
  publisher = {arXiv},
  
  year = {2019},
  
  copyright = {arXiv.org perpetual, non-exclusive license}
}

@misc{Hardy2005,
  doi = {10.48550/ARXIV.GR-QC/0509120},
  
  url = {https://arxiv.org/abs/gr-qc/0509120},
  
  author = {Hardy, Lucien},
  
  keywords = {General Relativity and Quantum Cosmology (gr-qc), Quantum Physics (quant-ph), FOS: Physical sciences, FOS: Physical sciences},
  
  title = {Probability Theories with Dynamic Causal Structure: A New Framework for Quantum Gravity},
  
  publisher = {arXiv},
  
  year = {2005},
  
  copyright = {Assumed arXiv.org perpetual, non-exclusive license to distribute this article for submissions made before January 2004}
}

@article{Chiribella2013,
  title = {Quantum computations without definite causal structure},
  author = {Chiribella, Giulio and D'Ariano, Giacomo Mauro and Perinotti, Paolo and Valiron, Benoit},
  journal = {Phys. Rev. A},
  volume = {88},
  issue = {2},
  pages = {022318},
  numpages = {15},
  year = {2013},
  month = {Aug},
  publisher = {American Physical Society},
  doi = {10.1103/PhysRevA.88.022318},
  url = {https://link.aps.org/doi/10.1103/PhysRevA.88.022318}
}

@article{Goswami_2018,
	doi = {10.1103/physrevlett.121.090503},
  
	url = {https://doi.org/10.1103%2Fphysrevlett.121.090503},
  
	year = 2018,
	month = {aug},
  
	publisher = {American Physical Society ({APS})},
  
	volume = {121},
  
	number = {9},
  
	author = {K. Goswami and C. Giarmatzi and M. Kewming and F. Costa and C. Branciard and J. Romero and A.{\hspace{0.167em}
}G. White},
  
	title = {Indefinite Causal Order in a Quantum Switch},
  
	journal = {Physical Review Letters}
}

@article{Colnaghi_2012,
	doi = {10.1016/j.physleta.2012.08.028},
  
	url = {https://doi.org/10.1016%2Fj.physleta.2012.08.028},
  
	year = 2012,
	month = {oct},
  
	publisher = {Elsevier {BV}
},
  
	volume = {376},
  
	number = {45},
  
	pages = {2940--2943},
  
	author = {Timoteo Colnaghi and Giacomo Mauro D'Ariano and Stefano Facchini and Paolo Perinotti},
  
	title = {Quantum computation with programmable connections between gates},
  
	journal = {Physics Letters A}
}

@article{Ara_jo_2014,
	doi = {10.1103/physrevlett.113.250402},
  
	url = {https://doi.org/10.1103%2Fphysrevlett.113.250402},
  
	year = 2014,
	month = {dec},
  
	publisher = {American Physical Society ({APS})},
  
	volume = {113},
  
	number = {25},
  
	author = {Mateus Ara{\'{u}}jo and Fabio Costa and {\v{C}}aslav Brukner},
  
	title = {Computational Advantage from Quantum-Controlled Ordering of Gates},
  
	journal = {Physical Review Letters}
}

@article{Chiribella_2012,
	doi = {10.1103/physreva.86.040301},
  
	url = {https://doi.org/10.1103%2Fphysreva.86.040301},
  
	year = 2012,
	month = {oct},
  
	publisher = {American Physical Society ({APS})},
  
	volume = {86},
  
	number = {4},
  
	author = {Giulio Chiribella},
  
	title = {Perfect discrimination of no-signalling channels via quantum superposition of causal structures},
  
	journal = {Physical Review A}
}

@article{Branciard_2015,
	doi = {10.1088/1367-2630/18/1/013008},
  
	url = {https://doi.org/10.1088%2F1367-2630%2F18%2F1%2F013008},
  
	year = 2015,
	month = {dec},
  
	publisher = {{IOP} Publishing},
  
	volume = {18},
  
	number = {1},
  
	pages = {013008},
  
	author = {Cyril Branciard and Mateus Ara{\'{u}}jo and Adrien Feix and Fabio Costa and {\v{C}}aslav Brukner},
  
	title = {The simplest causal inequalities and their violation},
  
	journal = {New Journal of Physics}
}

@article{Wechs_2019,
	doi = {10.1088/1367-2630/aaf352},
  
	url = {https://doi.org/10.1088%2F1367-2630%2Faaf352},
  
	year = 2019,
	month = {jan},
  
	publisher = {{IOP} Publishing},
  
	volume = {21},
  
	number = {1},
  
	pages = {013027},
  
	author = {Julian Wechs and Alastair A Abbott and Cyril Branciard},
  
	title = {On the definition and characterisation of multipartite causal (non)separability},
  
	journal = {New Journal of Physics}
}

@article{Vilasini_2019,
	doi = {10.1088/1367-2630/ab0e3b},
  
	url = {https://doi.org/10.1088%2F1367-2630%2Fab0e3b},
  
	year = 2019,
	month = {apr},
  
	publisher = {{IOP} Publishing},
  
	volume = {21},
  
	number = {4},
  
	pages = {043057},
  
	author = {V Vilasini and Christopher Portmann and Lídia del Rio},
  
	title = {Composable security in relativistic quantum cryptography},
  
	journal = {New Journal of Physics}
}

@misc{Laneve2021,
  doi = {10.48550/ARXIV.2106.11200},
  
  url = {https://arxiv.org/abs/2106.11200},
  
  author = {Laneve, Lorenzo and del Rio, Lidia},
  
  keywords = {Quantum Physics (quant-ph), Cryptography and Security (cs.CR), FOS: Physical sciences, FOS: Physical sciences, FOS: Computer and information sciences, FOS: Computer and information sciences},
  
  title = {Impossibility of composable Oblivious Transfer in relativistic quantum cryptography},
  
  publisher = {arXiv},
  
  year = {2021},
  
  copyright = {Creative Commons Attribution Non Commercial Share Alike 4.0 International}
}

@article{Hoffreumon_2021,
	doi = {10.22331/q-2021-01-20-384},
  
	url = {https://doi.org/10.22331%2Fq-2021-01-20-384},
  
	year = 2021,
	month = {jan},
  
	publisher = {Verein zur Forderung des Open Access Publizierens in den Quantenwissenschaften},
  
	volume = {5},
  
	pages = {384},
  
	author = {Timoth{\'{e}}e Hoffreumon and Ognyan Oreshkov},
  
	title = {The Multi-round Process Matrix},
  
	journal = {Quantum}
}

@misc{Faleiro2020,
  doi = {10.48550/ARXIV.2010.16042},
  
  url = {https://arxiv.org/abs/2010.16042},
  
  author = {Faleiro, Ricardo and Paunkovic, Nikola and Vojinovic, Marko},
  
  keywords = {Quantum Physics (quant-ph), Higha Energy Physics - Theory (hep-th), FOS: Physical sciences, FOS: Physical sciences},
  
  title = {Operational interpretation of the vacuum and process matrices for identical particles},
  
  publisher = {arXiv},
  
  year = {2020},
  
  copyright = {arXiv.org perpetual, non-exclusive license}
}

@article{Aharonov1964,
  title = {Time Symmetry in the Quantum Process of Measurement},
  author = {Aharonov, Yakir and Bergmann, Peter G. and Lebowitz, Joel L.},
  journal = {Phys. Rev.},
  volume = {134},
  issue = {6B},
  pages = {B1410--B1416},
  numpages = {0},
  year = {1964},
  month = {Jun},
  publisher = {American Physical Society},
  doi = {10.1103/PhysRev.134.B1410},
  url = {https://link.aps.org/doi/10.1103/PhysRev.134.B1410}
}

@article{Silva_2017,
	doi = {10.1088/1367-2630/aa84fe},
  
	url = {https://doi.org/10.1088%2F1367-2630%2Faa84fe},
  
	year = 2017,
	month = {oct},
  
	publisher = {{IOP} Publishing},
  
	volume = {19},
  
	number = {10},
  
	pages = {103022},
  
	author = {Ralph Silva and Yelena Guryanova and Anthony J Short and Paul Skrzypczyk and Nicolas Brunner and Sandu Popescu},
  
	title = {Connecting processes with indefinite causal order and multi-time quantum states},
  
	journal = {New Journal of Physics}
}

@article{Royer1991,
  title = {Wigner function in Liouville space: A canonical formalism},
  author = {Royer, Antoine},
  journal = {Phys. Rev. A},
  volume = {43},
  issue = {1},
  pages = {44--56},
  numpages = {0},
  year = {1991},
  month = {Jan},
  publisher = {American Physical Society},
  doi = {10.1103/PhysRevA.43.44},
  url = {https://link.aps.org/doi/10.1103/PhysRevA.43.44}
}

@article{davies1970operational,
  title={An operational approach to quantum probability},
  author={Davies, E Brian and Lewis, John T},
  journal={Communications in Mathematical Physics},
  volume={17},
  number={3},
  pages={239--260},
  year={1970},
  publisher={Springer}
}

@article{Lloyd_2011,
	doi = {10.1103/physrevlett.106.040403},
  
	url = {https://doi.org/10.1103%2Fphysrevlett.106.040403},
  
	year = 2011,
	month = {jan},
  
	publisher = {American Physical Society ({APS})},
  
	volume = {106},
  
	number = {4},
  
	author = {Seth Lloyd and Lorenzo Maccone and Raul Garcia-Patron and Vittorio Giovannetti and Yutaka Shikano and Stefano Pirandola and Lee A. Rozema and Ardavan Darabi and Yasaman Soudagar and Lynden K. Shalm and Aephraim M. Steinberg},
  
	title = {Closed Timelike Curves via Postselection: Theory and Experimental Test of Consistency},
  
	journal = {Physical Review Letters}
}

@article{Ara_jo_2017_CTC,
	doi = {10.1103/physreva.96.052315},
  
	url = {https://doi.org/10.1103%2Fphysreva.96.052315},
  
	year = 2017,
	month = {nov},
  
	publisher = {American Physical Society ({APS})},
  
	volume = {96},
  
	number = {5},
  
	author = {Mateus Ara{\'{u}}jo and Philippe Allard Gu{\'{e}}rin and Ämin Baumeler},
  
	title = {Quantum computation with indefinite causal structures},
  
	journal = {Physical Review A}
}

@article{Oreshkov_2019,
	doi = {10.22331/q-2019-12-02-206},
  
	url = {https://doi.org/10.22331%2Fq-2019-12-02-206},
  
	year = 2019,
	month = {dec},
  
	publisher = {Verein zur Forderung des Open Access Publizierens in den Quantenwissenschaften},
  
	volume = {3},
  
	pages = {206},
  
	author = {Ognyan Oreshkov},
  
	title = {Time-delocalized quantum subsystems and operations: on the existence of processes with indefinite causal structure in quantum mechanics},
  
	journal = {Quantum}
}

@article{szpilrajn1930extension,
  doi={10.4064/fm-16-1-386-389},
  title={Sur l'extension de l'ordre partiel},
  author={Szpilrajn, Edward},
  journal={Fundamenta mathematicae},
  volume={16},
  number={1},
  pages={386--389},
  year={1930},
  publisher={Institute of Mathematics Polish Academy of Sciences}
}

@article{Castro_Ruiz_2020,
	doi = {10.1038/s41467-020-16013-1},
  
	url = {https://doi.org/10.1038%2Fs41467-020-16013-1},
  
	year = 2020,
	month = {may},
  
	publisher = {Springer Science and Business Media {LLC}
},
  
	volume = {11},
  
	number = {1},
  
	author = {Esteban Castro-Ruiz and Flaminia Giacomini and Alessio Belenchia and {\v{C}}aslav Brukner},
  
	title = {Quantum clocks and the temporal localisability of events in the presence of gravitating quantum systems},
  
	journal = {Nature Communications}
}

@article{Baumann_2022,
	doi = {10.1103/physrevresearch.4.013180},
  
	url = {https://doi.org/10.1103%2Fphysrevresearch.4.013180},
  
	year = 2022,
	month = {mar},
  
	publisher = {American Physical Society ({APS})},
  
	volume = {4},
  
	number = {1},
  
	author = {Veronika Baumann and Marius Krumm and Philippe Allard Gu{\'{e}}rin and {\v{C}}aslav Brukner},
  
	title = {Noncausal Page-Wootters circuits},
  
	journal = {Physical Review Research}
}

@misc{https://doi.org/10.48550/arxiv.2201.11832,
  doi = {10.48550/ARXIV.2201.11832},
  
  url = {https://arxiv.org/abs/2201.11832},
  
  author = {Wechs, Julian and Branciard, Cyril and Oreshkov, Ognyan},
  
  keywords = {Quantum Physics (quant-ph), FOS: Physical sciences, FOS: Physical sciences},
  
  title = {Existence of processes violating causal inequalities on time-delocalised subsystems},
  
  publisher = {arXiv},
  
  year = {2022},
  
  copyright = {arXiv.org perpetual, non-exclusive license}
}

\appendix

\part{Appendix}

\section{Isometries}\label{sec:isometries}

In the results section, isometries showed up in several propositions. In order to make it easier to follow the proofs of these propositions in \cref{sec:proofs}, we quickly review isometries here.

\begin{defi}[Isometries]
Let $W^I$ and $W^O$ be inner product spaces. Let $V: W^I \rightarrow W^O$ be a linear map. We call $V$ an isometry if for all $v, w \in W^I$, it holds that $v^\dagger V^\dagger V w = v^\dagger w$, i.e. $V$ conserves the inner product.
\end{defi}

It can be difficult to check that a map $V$ conserves the inner product. Luckily, there are three equivalent conditions which we can use instead.

\begin{lemma}[Characterization of isometries]
\label{lemma:isochar}
Let $V: W^I \rightarrow W^O$ be a linear map. The following four conditions are equivalent.

\begin{enumerate}
    \item $V$ is an isometry.
    \item $V$ conserves the norm, $w^\dagger V^\dagger V w = w^\dagger w, \forall w \in W^I$.
    \item If $w_1,...,w_N$ is a basis of $W^I$, then $w_i^\dagger V^\dagger V w_j = w_i^\dagger w_j, \forall i,j = 1,...,N$.
    \item If $W^I = \bigoplus_{i=1}^M W^I_i$ where $W^I_1,...,W^I_M$ are orthogonal subspaces that are mapped to orthogonal subspaces by $V$, then the restriction of $V$ to each $W^I_i$, $V|_{W^I_i}$ is an isometry.
\end{enumerate}

\end{lemma}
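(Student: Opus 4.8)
The plan is to prove the four conditions equivalent by establishing the three separate equivalences $1 \Leftrightarrow 2$, $1 \Leftrightarrow 3$, and $1 \Leftrightarrow 4$, which together yield the full statement. Several of the constituent implications are immediate: $1 \Rightarrow 2$ follows by setting $v = w$ in the definition of an isometry, and $1 \Rightarrow 3$ follows by restricting the isometry condition to pairs of basis vectors $w_i, w_j$. So the substance lies in the three reverse directions, which I would treat in turn.

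For $2 \Rightarrow 1$, the key tool is the polarization identity, which recovers the inner product of a complex inner product space from its induced norm. Applying it on $W^O$ to the images $Vv, Vw$ and on $W^I$ to $v, w$, and then invoking condition $2$ to equate the corresponding norms $\|V(v + i^k w)\| = \|v + i^k w\|$ (using linearity of $V$), one obtains $(Vv)^\dagger (Vw) = v^\dagger w$ for all $v, w \in W^I$. I expect this to be the main obstacle, in the sense that it is the only step that genuinely uses the structure of the scalar field: the full four-term polarization identity requires the spaces to be complex, so I would note explicitly that the inner product spaces here are complex (as is the case throughout, since they are Hilbert spaces) rather than real, where only the two-term identity holds and a separate symmetry argument would be needed.

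For $3 \Rightarrow 1$, I would expand arbitrary vectors $v = \sum_i a_i w_i$ and $w = \sum_j b_j w_j$ in the given basis and use sesquilinearity of the inner product, so that $v^\dagger V^\dagger V w = \sum_{i,j} \overline{a_i} b_j\, w_i^\dagger V^\dagger V w_j = \sum_{i,j} \overline{a_i} b_j\, w_i^\dagger w_j = v^\dagger w$, where the middle equality is exactly condition $3$. This is a routine bilinear expansion with no real difficulty.

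For $1 \Leftrightarrow 4$, the forward direction is immediate, since the restriction of a global isometry to any subspace is again an isometry. For the reverse, I would write $v = \sum_i v_i$ and $w = \sum_j w_j$ with $v_i, w_j$ in the summands $W^I_i, W^I_j$ and compute $v^\dagger V^\dagger V w = \sum_{i,j} (V v_i)^\dagger (V w_j)$. The off-diagonal terms vanish because $V$ sends the $W^I_i$ into mutually orthogonal subspaces of $W^O$, while each diagonal term equals $v_i^\dagger w_i$ because $V|_{W^I_i}$ is an isometry; summing then gives $\sum_i v_i^\dagger w_i = v^\dagger w$, the final equality using orthogonality of the decomposition $W^I = \bigoplus_i W^I_i$. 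Assembling these implications closes all four equivalences.
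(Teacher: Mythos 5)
Your proposal is correct and follows essentially the same route as the paper's proof: polarization for $2 \Rightarrow 1$, basis expansion with sesquilinearity for $3 \Rightarrow 1$, and expansion over the orthogonal decomposition with vanishing cross terms for $4 \Rightarrow 1$. If anything, your use of the full four-term polarization identity with $v + i^k w$ is more careful than the paper's treatment of $2 \Rightarrow 1$, which claims to recover the imaginary part by substituting $v-w$ for $v+w$ — a substitution that only reproduces the real part again, where $v + iw$ is actually needed.
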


The fourth condition may seem a bit strange at first sight. It essentially states that if a map maps orthogonal subspaces to orthogonal subspaces, it is enough to check that it is an isometry on each of these subspaces separately. Among other things, this condition allows us to consider each space with a fixed number of messages separately in the proofs of \cref{sec:qcqctopb}.

The second condition is frequently used as the definition of an isometry in place of the definition we used. Slightly weaker versions of the third and fourth condition, stating that $V$ is an isometry iff it maps orthonormal bases to orthonormal bases or iff there exists an orthonormal basis that is mapped to an orthonormal basis, are also often included in standard textbooks (cf. \cite{axler2014linear}).

\begin{proof}

``$1 \implies 2"$: The norm (squared) is an inner product, thus if all inner products are conserved, then in particular all norms are conserved.

``$2 \implies 1$": Let $v, w \in W^I$. Then $|v+w|^2 = |V(v+w)|^2 = |Vv|^2 + |Vw|^2 + v^\dagger V^\dagger V w + w^\dagger V^\dagger V v = |v|^2 + |w|^2 + 2 \text{Re}(v^\dagger V^\dagger V w)$ and also $|v+w|^2 = |v|^2 + |w|^2 + 2 \text{Re}(v^\dagger V^\dagger V w)$ where $\text{Re}$ gives the real part of a complex number. Thus, the real part of the inner product $v^\dagger w$ is conserved and repeating the previous calculation with $v-w$ in place of $v+w$ shows that the imaginary part is also conserved. 

``$1 \implies 3$'': If $V$ conserves the inner product between all vectors then in particular it conserves the inner product between basis vectors as these are a subset of all vectors.

``$3 \implies 1$'': Condition 1 is equivalent to condition 2 so we can consider the norm of a vector $w \in W^I$. We can write $w = \sum_{i=1}^N \lambda_i w_i$ for some $\lambda_i \in \mathbb{C}$ as the vectors $w_i$ form a basis. Then $w^\dagger V^\dagger V w = \sum_{ij} \overline{\lambda}_i \lambda_j w^{\dagger}_i V^\dagger V w_j = \sum_{ij} \overline{\lambda}_i \lambda_j w^{\dagger}_i w_j = w^\dagger w$ where we used condition 3 in the second equality.

``$1 \implies 4$'': If $V$ is an isometry on the whole space $W^I$, it is in particular an isometry on any subspace.

``$4 \implies 1$'': We consider again the norm of an arbitrary vector $w \in W^I$. As $W = \bigoplus_{i=1}^M W^I_i$, we can write $w = \sum_{i=1}^M w^I_i$ for vectors $w^I_i \in W^I_i$. Then, $w^\dagger V^\dagger V w = \sum_{ij} w^{I \dagger}_i V^\dagger V w^I_j = \sum_i w^{I \dagger}_i V^\dagger V w^I_i =  \sum_i w^{I \dagger}_i w^I_i = w^\dagger w$. In the second equality, we used that $V$ conserves the orthogonality of the subspaces $W^I_i$ which causes cross terms to vanish, $w^{I \dagger}_i V^\dagger V w^I_j = 0$ for $i \neq j$, and in the third equality, we used that $V$ is an isometry when restricted to each subspace $W^I_i$.
\end{proof}

Of course, we can also mix and match these conditions. For example, if we have split our input space into orthogonal subspaces as in condition 4, we can then use condition 2 to show that our map is an isometry when restricted to each of the subspaces.

\section{Proofs of results}\label{sec:proofs}

\subsection{Proofs for \texorpdfstring{\Cref{sec:qcqctopb}}{Section 5}}

\compislink*

\begin{proof}
Using the definition of composition \cref{def:comp}, we can write

\begin{gather}
\begin{aligned}
    \C{M}^{C\hookrightarrow B} (\ket{\psi}^A \bra{\phi}^A) &= \sum_{k,l} \bra{k}^C\C{M}(\ket{\psi}^A\ket{k}^B\bra{l}^B\bra{\phi}^A)\ket{l}^C \\
    &= \sum_{k,l} \bra{k}^C(\C{M}_A \otimes \C{M}_B)(\ket{\psi}^A\ket{k}^B\bra{l}^B\bra{\phi}^A)\ket{l}^C \\
    &= \sum_{k, l} \bra{k}^C (M_A \otimes M_B)*(\ket{\psi}^A\ket{k}^B\bra{l}^B\bra{\phi}^A)\ket{l}^C \\
    &= \sum_{k, l} \underbrace{\bra{k}^C M_A \ket{l}^C}_{(M_A)_{kl}} * (\ket{\psi}^A \bra{\phi}^A) \otimes \sum_{i, j} \underbrace{\bra{i}^B M_B \ket{j}^B}_{(M_B)_{ij}} \underbrace{\braket{i|k}^B\braket{l|j}^B}_{\delta_{ik} \delta_{lj}} \\
    &= \sum_{k,l} (M_A)_{kl} \otimes (M_B)_{kl} * (\ket{\psi}^A \bra{\phi}^A) \\
    &\stackrel{\text{\cref{def:linkmat}}}{=} (M_A*M_B) * (\ket{\psi}^A \bra{\phi}^A).
\end{aligned}
\end{gather}

In the third line, we used that $\C{M}_{A/B}(\rho) = M_{A/B}*\rho$ and in the fourth line we used commutativity of the link product and the second sum is the explicit form of the link product $M_B*(\ket{k}^B \bra{l}^B)$. Finally, in the last line, we used the definition of the link product, taking $\C{H}^B$ and $\C{H}^C$ to be the same Hilbert space. We can then also write $\C{M}^{C\hookrightarrow B} (\ket{\psi}^A \bra{\phi}^A) = M^{C\hookrightarrow B} * \ket{\psi}^A \bra{\phi}^A$. As these equations must hold for all $\ket{\psi}^A, \ket{\phi}^A \in \C{H}^A$, we can drop these states and obtain

\begin{equation}
    M^{C\hookrightarrow B} = M_A*M_B.
\end{equation}

This shows that sequential composition can be expressed as the link product, at least in the case of finite dimensional Hilbert spaces.

\end{proof}

\switchiso*

\begin{proof}

It suffices to show the statement separately for states with zero, one, two or three photons on the output wires of the agents. This is because the operators $\bar{V}_1, \bar{V}_2, \bar{V}_3, \bar{V}_4$ conserve the number of non-vacuum states and thus the spaces of zero, one, two and three messages remain orthogonal under application of the operators. We can thus use condition 4 of \cref{lemma:isochar}. The proposition follows immediately for the case of zero messages from how we defined the action of the internal operations on the vacuum, see \cref{eq:isovac}. For a single message, the proposition follows from the fact that $V_1,V_2,V_3,V_4$ are isometries in the QC-QC picture, or more formally from our definition of the action on single message states \cref{eq:novel_single} 

\begin{gather}
\begin{aligned}
    \bra{\psi, t=2n+1}^{A^O_{k_n}}& \bra{\C{K}_{n-1}, k_n} \bar{V}^{\dagger}_{n+1} \bar{V}_{n+1} \ket{\phi, t=2n+1}^{A^O_{l_n}} \ket{\C{L}_{n-1}, l_n} \\
    &= \bra{\psi}^{A^O_{k_n}} \bra{\C{K}_{n-1}, k_n} V^{\dagger}_{n+1} V_{n+1} \ket{\phi}^{A^O_{l_n}} \ket{\C{L}_{n-1}, l_n} \\
    &=\bra{\psi}^{A^O_{k_n}} \ket{\phi}^{A^O_{l_n}} \braket{\C{K}_{n-1}, k_n|\C{L}_{n-1}, l_n}
\end{aligned}
\end{gather}
where we did not explicitly write the vacuum state on the other wires.

Therefore, for the remainder of the proof, we consider only states with two or three messages.

Before we continue, note that 

\begin{equation}\label{eq:orthog}
\bra{\psi} V^{\rightarrow k_{n+1} \dagger}_{\C{K}_{n-1}, k_n} V^{\rightarrow k_{n+1}}_{\C{K}'_{n-1}, k'_n} \ket{\phi} = 0
\end{equation}

if $k_n \neq k'_n$ for any $\ket{\psi}, \ket{\phi}$ in the appropriate spaces. This can be directly seen from the explicit form of the operators in \cref{eq:switchkraus}.

Additionally, we can show that states that have differences in one or more controls also remain orthogonal under application of the operators. We consider two states $\bigotimes_{i} \ket{\psi^i, t=2n+1}^{A^O_i} \ket{\C{K}^i_{n-1}, i}$ and $\bigotimes_{i} \ket{\phi^i, t=2n+1}^{A^O_i} \ket{\C{L}^i_{n-1}, i}$ where the vacuum states are implicit according to our convention and such that their controls are not all the same. This means at least one control of one of the two must be different from all controls of the other one. W.l.o.g. we can assume $\ket{\C{K}^1_{n-1}, 1} \neq \ket{\C{L}^i_{n-1}, i}$ for all $i$ (this is because the process is symmetric under cyclical permutation of the agents). Now consider applying $\bar{V}_{n+1}$ to each of the two states and taking the inner product between them. Each term in the result must contain a factor 

\begin{equation}\label{eq:controlortho}
\bra{\psi^1} V^{\rightarrow k_{n+1} \dagger}_{\C{K}^1_{n-1}, 1} V^{\rightarrow k_{n+1}}_{\C{L}^i_{n-1}, i} \ket{\phi^i} \braket{\C{L}^i_{n-1} \cup i, k_{n+1}|\C{K}^1_{n-1} \cup 1, k_{n+1}}
\end{equation} 

However, if $i \neq 1$, then this factor vanishes because of \cref{eq:orthog}. If $i=1$, then $\C{K}^1_{n-1} \neq \C{L}^1_{n-1}$ and thus also $\C{K}^1_{n-1} \cup 1 \neq \C{L}^1_{n-1} \cup 1$ and the inner product between the two controls in \cref{eq:controlortho} vanishes. Thus, we have shown that if the controls of two states are orthogonal to each other, the states remain orthogonal under application of $\bar{V}_1, \bar{V}_2, \bar{V}_3, \bar{V}_4$. This means that the spaces $\text{Span}(\{\bigotimes_{i} \ket{\psi^i, t=2n+1}^{A^O_i} \ket{\C{K}^i_{n-1}, i}: \ket{\psi^i}^{A^O_i} \in \C{H}^{A^O_i}\})$ fulfill the requirements of condition 4 of \cref{lemma:isochar}. Therefore, it is enough to show that $\bar{V}_{n+1}$ is an isometry on each of these subspaces separately. We can do this by using condition 3 of \cref{lemma:isochar}, that is by showing that $\bar{V}_{n+1}$ conserves the inner product between all the elementary tensor products within each such subspace. Additionally, when considering states with two messages it is enough to show that $\bar{V}_{n+1}$ is an isometry on states where the messages are on wire $1$ and wire $2$. The fact that the dynamical switch is invariant under cyclical permutation of the agents then ensures that $\bar{V}_{n+1}$ is also an isometry on all other two-message states.


Applying $\bar{V}_{n+1}$ to a two-message state from the above subspace, we obtain

\begin{gather}\label{eq:Vapplied2}
\begin{aligned}
    \bar{V}_{n+1}&(\bigotimes_{i=1}^{2} \ket{\psi^i, t=2n+1}^{A^O_i} \ket{\C{K}^i_{n-1}, i}) \\
    &= symm(\bigotimes_{i=1}^{2} \sum_{k^i_{n+1} \not \in \C{K}^i_{n-1} \cup i} V^{\rightarrow k^i_{n+1}}_{\C{K}^i_{n-1}, i} \ket{\psi^i} \ket{t=2n+2} \ket{\C{K}^i_{n-1} \cup i, k^i_{n+1}} \\
    &= symm(\sum_{k_{n+1}^1} \sum_{k_{n+1}^2} \bigotimes_{k_{n+1}} (\bigotimes_{i: k^i_{n+1} = k_{n+1}} V^{\rightarrow k_{n+1}}_{\C{K}^i_{n-1}, i} \ket{\psi^i} \ket{t=2n+2} \ket{\C{K}^i_{n-1} \cup i, k_{n+1}}) \\
    &= \sum_{k_{n+1}^1} \sum_{k_{n+1}^2} \bigotimes_{k_{n+1}} (\bigodot_{i: k^i_{n+1} = k_{n+1}} V^{\rightarrow k_{n+1}}_{\C{K}^i_{n-1}, i} \ket{\psi^i} \ket{t=2n+2} \ket{\C{K}^i_{n-1} \cup i, k_{n+1}}).
\end{aligned}
\end{gather}

In the third line, we used that a (tensor) product over sums can always be written as a sum of all the possible ways one can take the (tensor) product with each factor being a term of one of the initial sums. This form also allowed us to order the tensor product by the agent spaces and get rid of the cumbersome operator $symm$. Note that in the sums in the third line $k^i_{n+1}$ only goes over $k^i_{n+1} \in \C{N} \backslash (\C{K}^i_{n-1} \cup i)$. For a three-message state, we obtain analogously

\begin{gather}\label{eq:Vapplied3}
\begin{aligned}
    \bar{V}_{n+1}&(\bigotimes_{i=1}^{3} \ket{\psi^i, t=2n+1}^{A^O_i} \ket{\C{K}^i_{n-1}, i}) \\
    &= \sum_{k_{n+1}^1} \sum_{k_{n+1}^2} \sum_{k_{n+1}^3} \bigotimes_{k_{n+1}} (\bigodot_{i: k^i_{n+1} = k_{n+1}} V^{\rightarrow k_{n+1}}_{\C{K}^i_{n-1}, i} \ket{\psi^i} \ket{t=2n+2} \ket{\C{K}^i_{n-1} \cup i, k_{n+1}}).
\end{aligned}
\end{gather}



If we now take the inner product, we obtain

\begin{gather}\label{eq:norm}
\begin{aligned}
    \sum_{k^1_{n+1}, k^2_{n+1}} \prod_{i=1}^{2} \bra{\psi^i} V^{\rightarrow k^i_{n+1} \dagger}_{\C{K}^i_{n-1}, i} V^{\rightarrow k^i_{n+1}}_{\C{K}^i_{n-1}, i} \ket{\phi^i} \\
    \sum_{k^1_{n+1}, k^2_{n+1}, k^3_{n+1}} \prod_{i=1}^{3} \bra{\psi^i} V^{\rightarrow k^i_{n+1} \dagger}_{\C{K}^i_{n-1}, i} V^{\rightarrow k^i_{n+1}}_{\C{K}^i_{n-1}, i} \ket{\phi^i}
\end{aligned}
\end{gather}
with the top line in case of a two-message states and the bottom line for three-message states.

Let us show how we obtain this form. Firstly, notice the lack of cross terms, by which we mean a term where the bra and the ket correspond to two different distributions of the photons. Let us show that such terms must always vanish. If there are cross terms, there must be a message which is initially with some agent $i$ and is then sent to some other agent $j$ in the bra but not in the ket. There are now two possibilities. The agent $j$ receives no message in the ket in which case the entire term vanishes because the inner product between the vacuum and a non-vacuum state vanishes. Alternatively, agent $j$ receives a message from some agent $k \neq i$. But then the term contains a factor of the form of \cref{eq:orthog} with $k_n=i$ and $k'_n = k$ and thus vanishes as well. Therefore, there are no cross terms in the norm.

Secondly, any term in \cref{eq:Vapplied2} where $k^1_{n+1} \neq k^2_{n+1}$ (i.e. the messages are sent to different agents) contributes $\prod_{i=1}^{2} \bra{\psi^i} V^{\rightarrow k^i_{n+1} \dagger}_{\C{K}^i_{n-1}, i} V^{\rightarrow k^i_{n+1}}_{\C{K}^i_{n-1}, i} \ket{\phi^i}$ to \cref{eq:norm}. Similarly, any term in \cref{eq:Vapplied3} where the messages are all sent to different agents contributes $\prod_{i=1}^{3} \bra{\psi^i} V^{\rightarrow k^i_{n+1} \dagger}_{\C{K}^i_{n-1}, i} V^{\rightarrow k^i_{n+1}}_{\C{K}^i_{n-1}, i} \ket{\phi^i}$.

Lastly, we need to ask what happens with terms from \cref{eq:Vapplied2} and \cref{eq:Vapplied3} where multiple messages are sent to the same agent in which case we have to symmetrize. We can assume that the number of messages that a single agent receives is two because an agent cannot receive a message from themselves, so in a 3-partite scenario, they can at most receive two messages. Additionally, due to the dynamical switch being invariant under cyclic permutation of the agents, we can assume that the messages are coming from agents 1 and 2 and are sent to agent 3. We then calculate explicitly the inner product of such a 2-message states on the output wire of agent 3 assuming the tensor factors have the same form as in \cref{eq:Vapplied2} and \cref{eq:Vapplied3} and using \cref{eq:fock_inner_n} which gives us the expression for the inner product of $n$-message states

\begin{gather}
\begin{aligned}
    (\bra{\psi^1} V^{\rightarrow 3 \dagger}_{\C{K}^1_{n-1}, 1}& \bra{t=2n+2} \bra{\C{K}^1_{n-1} \cup 1, 3} \odot \bra{\psi^2} V^{\rightarrow 3 \dagger}_{\C{K}^2_{n-1}, 2} \bra{t=2n+2} \bra{\C{K}^2_{n-1} \cup 2, 3}) \\
    (V^{\rightarrow 3}_{\C{K}^1_{n-1}, 1} \ket{\phi^1}& \ket{t=2n+2} \ket{\C{K}^1_{n-1} \cup 1, 3} \odot V^{\rightarrow k_{n+1}}_{\C{K}^2_{n-1}, 2} \ket{\phi^2} \ket{t=2n+2} \ket{\C{K}^2_{n-1} \cup 2, 3}) \\
    &= \sum_{\pi \in S^2} \prod_{i=1}^2 \bra{\psi^{i}} V^{\rightarrow 3 \dagger}_{\C{K}^{i}_{n-1}, i} V^{\rightarrow 3}_{\C{K}^{\pi(i)}_{n-1}, \pi(i)} \ket{\phi^{\pi(i)}} \braket{\C{K}^{i}_{n-1} \cup i, k_{n+1}|\C{K}^{\pi(i)}_{n-1} \cup \pi(i), k_{n+1}} \\
    &= \prod_{i=1}^2 \bra{\psi^i} V^{\rightarrow k_{n+1} \dagger}_{\C{K}^i_{n-1}, i} V^{\rightarrow k_{n+1}}_{\C{K}^i_{n-1}, i} \ket{\phi^i}.
\end{aligned}
\end{gather}

Note that $S^2$ consists of two elements, the identity and the transposition. However, due to \cref{eq:orthog} the term corresponding to the transposition vanishes because $\pi(i) \neq i$.

Putting all this together, we obtain \cref{eq:norm}. Exchanging the sum and the product again, we arrive at 

\begin{equation}
    \prod_{i=1}^{2,3} \sum_{k^i_{n+1} \not \in \C{K}^i_{n-1} \cup i} \bra{\psi^i} V^{\rightarrow k^i_{n+1} \dagger}_{\C{K}^i_{n-1}, i} V^{\rightarrow k^i_{n+1}}_{\C{K}^i_{n-1}, i} \ket{\phi^i}.
\end{equation}
where $2,3$ indicates that there can either be two or three factors depending on the number of messages. Now consider that $\sum_{k^i_{n+1} \not \in \C{K}^i_{n-1} \cup i} V^{\rightarrow k^i_{n+1} \dagger}_{\C{K}^i_{n-1}, i} V^{\rightarrow k^i_{n+1}}_{\C{K}^i_{n-1}, i}$ is the identity on $\C{H}^{A^O_i}$ which can be seen by inspection of \cref{eq:switchkraus}. We thus find the norm after applying the operator $\bar{V}_{n+1}$ to be

\begin{equation}
    \prod_{i=1}^{2,3} \braket{\psi^i|\phi^i}.
\end{equation}
which is the inner product between $\bigotimes_{i=1}^{2,3} \ket{\psi^i, t=2n+1}^{A^O_i} \ket{\C{K}^i_{n-1}, i}$ and $\bigotimes_{i=1}^{2,3} \ket{\phi^i, t=2n+1}^{A^O_i} \ket{\C{K}^i_{n-1}, i}$. Thus $\bar{V}_1, \bar{V}_2, \bar{V}_3, \bar{V}_4$ are isometries. 
\end{proof}

\switchequivalence*

\begin{proof}

We begin by showing that the effective Choi vector lies in the space

\begin{gather}
\begin{aligned}\label{eq:effHeff}
    \text{Span}[\{\bigotimes_{n=1}^3 \ket{i_{k_n}, t=2n}^{\bar{A}^I_{k_n}} \otimes \ket{\Omega, \cancel{2n}}^{\bar{A}^I_{k_n}} \otimes  \ket{j_{k_n}, t=2n+1}^{\bar{A}^O_{k_n}} \otimes \ket{\Omega, \cancel{2n+1}}^{\bar{A}^O_{k_n}}
    \otimes \ket{i_F, t=8}^{\bar{F}}\}_{\substack{i_{k_1} \neq \Omega; \\i_{k_n}, j_{k_n}: \\ i_{k_n} = \Omega \\ \implies j_{k_n} = \Omega}}.
\end{aligned}
\end{gather}

This space is obtained by imposing two additional constraints on the space $\C{H}^{\text{eff}}$ from \cref{eq:Heff}: the first constraint is that at any time there is a message on at most a single wire. The second constraint is that at $t=2$, there cannot be vacuum on all input wires. The second constraint follows immediately from how we defined $V_1$ as it always prepares a non-vacuum state on one of the wires which means that $i_{k_1} \neq \Omega$. For the first constraint, note that in particular the basis defined above is a subset of the basis that spans $\C{H}^{\text{eff}}$. We can therefore consider an expansion of the effective Choi vector in terms of the basis of $\C{H}^{\text{eff}}$ and show that only basis states of the basis from \cref{eq:effHeff} have non-vanishing coefficients. Assume there is a term with a non-vacuum state on two output wires at some time $t$. W.l.o.g. we can assume these wires to be wires 1 and 2. The properties of the effective Choi representation then demand that at $t-1$ there are non-vacuum states on input wires 1 and 2 and that for all other times $t' \neq t-1$ there is vacuum on the input wires 1 and 2. 

There are now two possibilities. The first is that both messages on the output wires are sent to the same agent. This can be either the third agent or the global future depending on $t$. However, the effective Choi vector does not contain terms where one agent receives multiple messages. The second possibility is that the state from output wire 2 is sent to the input wire 1 (or vice versa). However, this contradicts what we said earlier about the vacuum being on the input wires 1 and 2 for all times $t' \neq t-1$. In either case, the coefficient must vanish.

Assume now there is a term with a non-vacuum state on two input wires at some time $t$. We defined the isometries $\bar{V}_1, \bar{V}_2, \bar{V}_3, \bar{V}_4$ such that they conserve the number of non-vacuum states. Therefore, there must have been two non-vacuum states on the output wires at $t-1$. They could not have been on the same wire as this would not be a term of the effective Choi vector. But if they were on different output wires, then the coefficient must vanish as we already argued.

Let us now consider a basis state from \cref{eq:effHeff} and see what happens when we compose it with the pure local operations from \cref{def:local_equivalence}

\begin{gather}\label{eq:basiscomp}
\begin{aligned}
    (\dket{\bar{A}_1} &\otimes \dket{\bar{A}_2} \otimes \dket{\bar{A}_3}) * \\
    &\bigotimes_{n=1}^3 \ket{i_{k_n}, t=2n}^{\bar{A}^I_{k_n}} \ket{\Omega, \cancel{2n}}^{\bar{A}^I_{k_n}} \ket{j_{k_n}, t=2n+1}^{\bar{A}^O_{k_n}} \ket{\Omega, \cancel{2n+1}}^{\bar{A}^O_{k_n}} \ket{i_F, t=8}^{\bar{F}} \\
    =& (\bigotimes_{n=1}^3 \bigotimes_{t \in \C{T}^I} (\dket{A_n} + \ket{\Omega}^{\bar{A}^I_{k_n}} \ket{\Omega}^{\bar{A}^O_{k_n}}) \otimes \ket{t+1} \ket{t}) * \\
    &\bigotimes_{n=1}^3 \ket{i_{k_n}, t=2n}^{\bar{A}^I_{k_n}} \ket{\Omega, \cancel{2n}}^{\bar{A}^I_{k_n}}  \ket{j_{k_n}, t=2n+1}^{\bar{A}^O_{k_n}} \ket{\Omega, \cancel{2n+1}}^{\bar{A}^O_{k_n}}
    \ket{i_F, t=8}^{\bar{F}} \\
    =& \prod_{n=1}^3 (\dket{A_{k_n}} + \ket{\Omega}^{\bar{A}^I_{k_n}} \ket{\Omega}^{\bar{A}^O_{k_n}})\\
    &* (\ket{i_{k_n}}^{\bar{A}^I_{k_n}} \ket{j_{k_n}}^{\bar{A}^O_{k_n}}) ((\ket{\Omega}^{\bar{A}^I_{k_n}} \ket{\Omega}^{\bar{A}^O_{k_n}}) * (\ket{\Omega}^{\bar{A}^I_{k_n}} \ket{\Omega}^{\bar{A}^O_{k_n}}))^3 \ket{i_F, t=8} \\
    =& \begin{cases}
    \prod_{n=1}^3 \dket{A_{k_n}} * (\ket{i_{k_n}}^{A^I_{k_n}} \ket{j_{k_n}}^{A^O_{k_n}}) \ket{i_F, t=8}, &\text{ if } i_{k_n}, j_{k_n} \neq \Omega, \forall n \\ 
    0, &\text{ else} 
    \end{cases}
\end{aligned}
\end{gather}

In the first equality, we used the definition of equivalent local operations. In the second equality, we contracted the time stamps and also used that $\dket{A_{k_n}} * (\ket{\Omega}^{\bar{A}^I_{k_n}} \otimes \ket{\Omega}^{\bar{A}^O_{k_n}}) = 0$ for all $k_n$. Finally, we used that both the local and internal operations map non-vacuum states to non-vacuum states which comes from FAA and the definition of the internal operations, respectively. The contraction thus vanishes unless all $i_{k_n}, j_{k_n}$ are either the vacuum or the non-vacuum. However, as we stated earlier, $i_{k_1} \neq \Omega$ and so only the latter is actually possible.

Additionally, note that there is an obvious isomorphism between the global future state in the above equation and the global future in the QC-QC, namely $\ket{i_F, t=8} \cong \ket{i_F}$. We will thus drop the time stamp and the vacuum states of the global future from here on out.

The coefficient of the basis state in \cref{eq:basiscomp} can be found using the definition of the Choi vector. The Choi vector of some operator $T: \C{H}^X \rightarrow \C{H}^Y$ is $\mathbb{1}^X \otimes T \dket{\mathbb{1}}^{XX} = \sum_i \ket{i}^X T \ket{i}^X = \sum_{ij} \ket{i}^X \ket{j}^Y \bra{j}^Y T \ket{i}^X = \sum_{ij} \bra{j} T \ket{i} \ket{i}^X \ket{j}^Y$. Applied to our basis state and with the help of \cref{eq:novel_single}, we find the coefficient to be

\begin{equation}\label{eq:coeff}
    \bra{i_F} V^{\rightarrow F}_{\{k_1,k_2\}, k_3} \ket{j_{k_3}} * \bra{i_{k_3}} V^{\rightarrow k_3}_{\{k_1\}, k_2} \ket{j_{k_2}} * \bra{i_{k_2}} V^{\rightarrow k_2}_{\emptyset, k_1} \ket{j_{k_1}} * \bra{i_{k_1}} V^{\rightarrow k_1}_{\emptyset, \emptyset}.
\end{equation}

With \cref{eq:basiscomp} and \cref{eq:coeff}, we can now calculate the composition $(\dket{\bar{A}_1} \otimes \dket{\bar{A}_2} \otimes \dket{\bar{A}_3}) * \dket{\bar{V}}$. 

\begin{gather}
\begin{aligned}
    (\dket{\bar{A}_1} &\otimes \dket{\bar{A}_2} \otimes \dket{\bar{A}_3})*\dket{\bar{V}} \\
    \cong & \sum_{(k_1,k_2,k_3)} \sum_{\substack{i_1, i_2, i_3, \\j_1, j_2, j_3, i_F}} \bra{i_F} V^{\rightarrow F}_{\{k_1,k_2\}, k_3} \ket{j_{k_3}} * \bra{i_{k_3}} V^{\rightarrow k_3}_{\{k_1\}, k_2} \ket{j_{k_2}} * \bra{i_{k_2}} V^{\rightarrow k_2}_{\emptyset, k_1} \ket{j_{k_1}} * \bra{i_{k_1}} V^{\rightarrow k_1}_{\emptyset, \emptyset} \\
    &\prod^{3}_{n=1} \dket{A_{k_n}} * (\ket{i_{k_n}}^{A^I_{k_n}} \ket{j_{k_n}}^{A^O_{k_n}}) \ket{i_F} \\
    =& \sum_{(k_1,k_2,k_3)} \prod_{n=1}^{3*} \dket{A_{k_n}} * \sum_{i_{k_{n+1}}, j_{k_n}} \bra{i_{k_{n+1}}} V^{k_{n+1}}_{\C{K}_{n-1}, k_n} \ket{j_{k_n}} (\ket{i_{k_n}}^{A^I_{k_n}} \ket{j_{k_n}}^{A^O_{k_n}}) \\
    =& \sum_{(k_1,k_2,k_3)} \prod_{n=1}^{3*} \dket{A_{k_n}} * \dket{V^{k_{n+1}}_{\C{K}_{n-1}, k_n}} \\
    =& (\dket{A_1} \otimes \dket{A_2} \otimes \dket{A_3}) * \ket{w}.
\end{aligned}
\end{gather}

In the above equation, one should interpret $k_4 = F$ and $\ket{j_{k_0}} = 1$. The star in the upper limit of the products in the second and third equality indicate that these products are actually link products (as we have to contract over the ancillaries). In the third equality, we used again that $\dket{T} = \sum_{ij} \bra{i} T \ket{j} \ket{i}^X \ket{j}^Y$. In the last equality, we used commutativity of the link product and the definition of the process vector of the QC-QC. 

This shows that the QC-QC and process box description of the dynamical switch are equivalent. 

\end{proof}

\symmiso*

\begin{proof}
The proof is somewhat similar to the one we gave for the analogous statement in the case of the dynamical switch. We begin by noting that just like for the dynamical switch the maps as we defined them here conserve the number of messages and thus it is enough to consider what they do to the inner product of two states with the same number of messages. However, it is not immediately clear that $\bar{V}_{n+1}$ conserves orthogonality between states with differing controls. In the proof for the dynamical switch, we used that $V^{\rightarrow k_{n+1} \dagger}_{\C{K}_{n-1}, k_n} V^{\rightarrow k_{n+1}}_{\C{L}_{n-1}, l_n} = 0$ if $\C{K}_{n-1} \cup k_n = \C{L}_{n-1} \cup l_n$ but $k_n \neq l_n$. This does not hold in the general $N$-partite case unless one also sums over all allowed values for $k_{n+1}$ as noted in \cref{lemma:krausiso},

\begin{equation}\label{eq:krausiso}
    \sum_{k_{n+1}} V^{\rightarrow k_{n+1} \dagger}_{\C{K}_{n-1}, k_n} V^{\rightarrow k_{n+1}}_{\C{L}_{n-1}, l_n} = \mathbb{1}^{A^O_{k_{n}}} \delta_{\C{K}_{n-1}, \C{L}_{n-1}} \delta_{k_n, l_n}.
\end{equation}

We therefore have to check that the inner product of two states of the form $\bigotimes_{k_n} \ket{\psi^{k_n}, t=2n+1} \ket{\C{K}^{k_n}, k_n}$ and $\bigotimes_{l_n} \ket{\phi^{l_n}, t=2n+1} \ket{\C{L}^{l_n},l_n}$ with equal number of messages is conserved under application of $\bar{V}_{n+1}$. States of this form span the space of states with a specific number of messages and therefore this is enough by condition 3 of \cref{lemma:isochar}. To do this, we first expand \cref{eq:pb_ext} and write the tensor product of sums as a sum of tensor products. We did the same in the proof of \cref{prop:3switchiso} in \cref{eq:Vapplied2} and \cref{eq:Vapplied3} and the idea is exactly the same. However, for a general process we cannot assume w.l.o.g. which $m$ wires carry messages in an $m$-message state. This makes it more difficult to write down what we sum over. Having $m$ sums where each sum corresponds to one message in $\bigotimes_{k_n} \ket{\psi^{k_n}, t=2n+1} \ket{\C{K}^{k_n}, k_n}$ and $\bigotimes_{l_n} \ket{\phi^{l_n}, t=2n+1} \ket{\C{L}^{l_n},l_n}$ and goes over all agents that this message can be sent to essentially means that we are summing over all the possible ways all messages can be distributed to new agents. We can denote a particular distribution with the help of the Cartesian product $\bigtimes$ as $\bigtimes_{k_n} k_{n+1}^{k_n} \in \bigtimes_{k_n} \C{N}^{k_n}$ where $\C{N}^{k_n} = \C{N} \backslash (\C{K}^{k_n}_{n-1} \cup k_n)$ is the set of agents that could potentially receive a message from agent $k_n$. We can then combine the $n$ sums into a single sum going over $\bigtimes_{k_n} k_{n+1}^{k_n} \in \bigtimes_{k_n} \C{N}^{k_n}$ where $\C{N}^{k_n} = \C{N} \backslash (\C{K}^{k_n}_{n-1} \cup k_n)$. Note that this is exactly what we did implicitly when writing the two sums from \cref{eq:Vapplied2} and the three sums from \cref{eq:Vapplied3} as a single sum in \cref{eq:norm}. We simply did not write it as a Cartesian product. We can thus write

\begin{gather}
\begin{aligned}
    symm&(\bigotimes_{k_n} \sum_{k^{k_n}_{n+1}} V^{\rightarrow k^{k_n}_{n+1}}_{\C{K}^{k_n}_{n-1}, k_n} \ket{\psi^{k_n}}^{A^O_{k_n}} \ket{t=2n+2} \ket{\C{K}^{k_n}_{n-1}, k_n})) \\
    &= symm(\sum_{\bigtimes_{k_n} k^{k_n}_{n+1}} \bigotimes_{k_{n+1}} (\bigotimes_{k^{k_n}_{n+1}: k^{k_n}_{n+1} = k_{n+1}} V^{\rightarrow k_{n+1}}_{\C{K}^{k_n}_{n-1}, k_n} \ket{\psi^{k_n}}^{A^O_{k_n}} \ket{t=2n+2} \ket{\C{K}^{k_n}_{n-1} \cup k_n, k_{n+1}^{k_n}})) \\
    &= \sum_{\bigtimes_{k_n} k^{k_n}_{n+1}} \bigotimes_{k_{n+1}} (\bigodot_{k^{k_n}_{n+1}: k^{k_n}_{n+1} = k_{n+1}} V^{\rightarrow k_{n+1}}_{\C{K}^{k_n}_{n-1}, k_n} \ket{\psi^{k_n}}^{A^O_{k_n}} \ket{t=2n+2} \ket{\C{K}^{k_n}_{n-1} \cup k_n, k_{n+1}^{k_n}}).
\end{aligned}
\end{gather}

Note that the sum over $\bigtimes_{k_n} k^{k_n}_{n+1}$ should be interpreted as a sum over $\bigtimes_{k_n} k^{k_n}_{n+1} \in \bigtimes_{k_n} \C{N}^{k_n}$, essentially following our convention regarding sums over $k_{n+1}$. With this form, we obtain the following expression for the inner product

\begin{gather}\label{eq:inner}
\begin{aligned}
    \sum_{\bigtimes_{k_n} k^{k_n}_{n+1}} & \bigotimes_{k_{n+1}} (\bigodot_{k^{k_n}_{n+1}: k^{k_n}_{n+1} = k_{n+1}} \bra{\psi^{k_n}}^{A^O_{k_n}}  V^{\rightarrow k_{n+1} \dagger}_{\C{K}^{k_n}_{n-1}, k_n} \bra{t=2n+2} \bra{\C{K}^{k_n}_{n-1} \cup k_n, k_{n+1}^{k_n}} \\
    \sum_{\bigtimes_{l_n} l^{l_n}_{n+1}}& \bigotimes_{l_{n+1}} (\bigodot_{l^{k_n}_{n+1}: l^{l_n}_{n+1} = l_{n+1}} V^{\rightarrow l_{n+1}}_{\C{L}^{l_n}_{n-1}, l_n} \ket{\phi^{l_n}}^{A^O_{l_n}} \ket{t=2n+2} \ket{\C{L}^{l_n}_{n-1} \cup l_n, l_{n+1}^{l_n}} \\
    =& \sum_{\substack{\bigtimes_{k_n} k^{k_n}_{n+1}, \\ \bigtimes_{l_n} l^{l_n}_{n+1}, \\ \text{compatible}}} \prod_{k_{n+1}} (\bigodot_{k^{k_n}_{n+1} = k_{n+1}} \bra{\psi^{k_n}}^{A^O_{k_n}}  V^{\rightarrow k_{n+1} \dagger}_{\C{K}^{k_n}_{n-1}, k_n} \bra{\C{K}^{k_n}_{n}, k_{n+1}^{k_n}}) (\bigodot_{l^{l_n}_{n+1} = l_{n+1}} V^{\rightarrow l_{n+1}}_{\C{L}^{l_n}_{n-1}, l_n} \ket{\phi^{l_n}}^{A^O_{l_n}} \ket{\C{L}^{l_n}_{n}, l_{n+1}^{l_n}}.
\end{aligned}
\end{gather}
where we abbreviated the subscript under the symmetric tensor products in the last line and $\C{K}^{k_n}_n = \C{K}^{k_n}_{n-1} \cup k_n$ and $\C{L}^{k_n}_n = \C{L}^{k_n}_{n-1} \cup l_n$. The word ``compatible" in the subscript of the sum in the last line denotes that we only have to consider $\bigtimes_{k_n} k^{k_n}_{n+1}, \bigtimes_{l_n} l^{l_n}_{n+1}$ where for each $k_{n+1}$ the number of times $k_{n+1}$ appears in each Cartesian product is the same. This is because if these numbers differ for some $k_{n+1}$, this term will vanish as the inner product between two states with different numbers of messages on the input wire of $k_{n+1}$ always vanishes.

Let us now look at just the part coming after the product over $k_{n+1}$, i.e. the inner product of what is on the input wire of $k_{n+1}$.



We can use \cref{eq:fock_inner_n} to find this inner product. However, as the messages are not necessarily labeled with $\{1,...,m\}$ use of the permutation group $S^m$ is not directly possible. However, note that $S^m$ is just the set of bijections from $\{1,...,m\}$ to itself and that it therefore has a one-to-one correspondence to any other set of bijections from an $m$-element set to another (not necessarily the same) $m$-element set. 

Going back to \cref{eq:inner}, we therefore define for a given choice of $\bigtimes_{k_n} k^{k_n}_{n+1}, \bigtimes_{l_n} l^{l_n}_{n+1}$ the set $S^{k_{n+1}}$ as the set of bijections from $\{k_n|k^{k_n}_{n+1} = k_{n+1}\}$ to $\{l_n|l^{l_n}_{n+1} = l_{n+1}\}$. We can then write the inner product as

\begin{equation}
    \sum_{\substack{\bigtimes_{k_n} k^{k_n}_{n+1},\\ \bigtimes_{l_n} l^{l_n}_{n+1}, \\ \text{compatible}}} \prod_{k_{n+1}} \sum_{\pi \in S^{k_{n+1}}} \prod_{k_n: k^{k_n}_{n+1} = k_{n+1}} \bra{\psi^{k_n}}^{A^O_{k_n}}  V^{\rightarrow k_{n+1} \dagger}_{\C{K}^{k_n}_{n-1}, k_n} V^{\rightarrow k_{n+1}}_{\C{L}^{\pi(k_n)}_{n-1}, \pi(k_n)} \ket{\phi^{\pi(k_n)}}^{A^O_{\pi(k_n)}} \braket{\C{K}^{k_n}_{n}, k_{n+1}^{k_n}|\C{L}^{\pi(k_n)}_{n}, l_{n+1}^{\pi(k_n)}})
\end{equation}

What we would now like to do is pulling the first sum inside the products again and then use \cref{eq:krausiso} to turn the operators into identities. However, the $k_n$ over which the innermost product goes depends on our choice of $\bigtimes_{k_n} k^{k_n}_{n+1}, \bigtimes_{l_n} l^{l_n}_{n+1}$, which means we cannot exchange the sum and the product as is. We thus need to find some way to get rid of this dependence.

We first exchange the first product with the second sum by using the same expansion trick that we used before $\prod_{k_{n+1}} \sum_{\pi \in S^{k_{n+1}}} = \sum_{\bigtimes_{k_{n+1}} \pi^{k_{n+1}} \in \bigtimes S^{k_{n+1}}} \prod_{k_{n+1}}$. We can then combine the two products to obtain

\begin{gather}
\begin{aligned}
    \sum_{\substack{\bigtimes_{k_n} k^{k_n}_{n+1},\\ \bigtimes_{l_n} l^{l_n}_{n+1}, \\ \text{compatible}}} \sum_{\bigtimes_{k_{n+1}} \pi^{k_{n+1}}} \prod_{k_n} \bra{\psi^{k_n}}^{A^O_{k_n}}  V^{\rightarrow k_{n+1} \dagger}_{\C{K}^{k_n}_{n-1}, k_n} V^{\rightarrow k_{n+1}}_{\C{L}^{\pi^{k^{k_n}_{n+1}}(k_n)}_{n-1}, \pi^{k^{k_n}_{n+1}}(k_n)} \ket{\phi^{\pi^{k^{k_n}_{n+1}}(k_n)}}^{A^O_{\pi^{k^{k_n}_{n+1}}(k_n)}} \\ \cdot \braket{\C{K}^{k_n}_{n}, k_{n+1}^{k_n}|\C{L}^{\pi^{k^{k_n}_{n+1}}(k_n)}_{n}, l_{n+1}^{\pi^{k^{k_n}_{n+1}}(k_n)}})
\end{aligned}
\end{gather}

Now notice that $\bigtimes S^{k_{n+1}}$ is a subset of all the bijections from $\{k_n\}$ to $\{l_n\}$ where $\{k_n\}$ is the set of agents with a non-vacuum state for one vector and $\{l_n\}$ is the analogous set for the other vector. We shall denote the set of all such bijections as $S$. Additionally, if $\pi \in S \backslash \bigtimes S^{k_{n+1}}$, then there exists at least one $k_n$ such that $k^{k_n}_{n+1} \neq l^{\pi(k_n)}_{n+1}$. What this means is that we can replace the sum over $\bigtimes_{k_{n+1}} \pi^{k_{n+1}} \in \bigtimes S^{k_{n+1}}$ with a sum over $\pi \in S$ because any term corresponding to $\pi \not \in \bigtimes S^{k_{n+1}}$ will vanish as there is a factor where $\braket{\C{K}^{k_n}_{n-1} \cup k_n, k_{n+1}^{k_n}|\C{L}^{\pi(k_n)}_{n-1} \cup \pi(k_n), l_{n+1}^{\pi(k_n)}}) = 0$ because $k^{k_n}_{n+1} \neq l^{\pi(k_n)}_{n+1}$. Additionally, we can also drop the requirement that $\bigtimes_{k_n} k^{k_n}_{n+1}, \bigtimes_{l_n} l^{l_n}_{n+1}$ are compatible as terms where this is not the case vanish anyway. We thus have

\begin{gather}
\begin{aligned}
    \sum_{\substack{\bigtimes_{k_n} k^{k_n}_{n+1}, \\ \bigtimes_{l_n} l^{l_n}_{n+1}}} & \sum_{\pi \in S} \prod_{k_n} \bra{\psi^{k_n}}^{A^O_{k_n}}  V^{\rightarrow k^{k_n}_{n+1} \dagger}_{\C{K}^{k_n}_{n-1}, k_n} V^{\rightarrow l^{\pi(k_n)}_{n+1}}_{\C{L}^{\pi(k_n)}_{n-1}, \pi(k_n)} \ket{\phi^{\pi(k_n)}}^{A^O_{\pi(k_n)}} \braket{\C{K}^{k_n}_{n}, k_{n+1}^{k_n}|\C{L}^{\pi(k_n)}_{n}, l_{n+1}^{\pi(k_n)}}) \\
    &= \sum_{\pi \in S} \prod_{k_n} \sum_{k^{k_n}_{n+1}, l^{\pi(k_n)}_{n+1}} \bra{\psi^{k_n}}^{A^O_{k_n}}  V^{\rightarrow k^{k_n}_{n+1} \dagger}_{\C{K}^{k_n}_{n-1}, k_n} V^{\rightarrow l^{\pi(k_n)}_{n+1}}_{\C{L}^{\pi(k_n)}_{n-1}, \pi(k_n)} \ket{\phi^{\pi(k_n)}}^{A^O_{\pi(k_n)}} \braket{\C{K}^{k_n}_{n}, k_{n+1}^{k_n}|\C{L}^{\pi(k_n)}_{n}, l_{n+1}^{\pi(k_n)}}) \\
    &= \sum_{\pi \in S} \prod_{k_n} \sum_{k^{k_n}_{n+1}} \bra{\psi^{k_n}}^{A^O_{k_n}}  V^{\rightarrow k^{k_n}_{n+1} \dagger}_{\C{K}^{k_n}_{n-1}, k_n} V^{\rightarrow k^{k_n}_{n+1}}_{\C{L}^{\pi(k_n)}_{n-1}, \pi(k_n)} \ket{\phi^{\pi(k_n)}}^{A^O_{\pi(k_n)}} \delta_{\C{K}^{k_n}_{n-1} \cup k_n, \C{L}^{\pi(k_n)}_{n-1} \cup \pi(k_n)} \\
    &= \sum_{\pi \in S} \prod_{k_n} \braket{\psi^{k_n}|\phi^{\pi(k_n)}} \delta_{\C{K}^{k_n}_{n-1}, \C{L}^{\pi(k_n)}_{n-1}} \delta_{k_n, \pi(k_n)}
\end{aligned}
\end{gather}

In the last equality, we used \cref{lemma:krausiso}. Now notice that in the last equality the term corresponding to a specific $\pi \in S$ vanishes unless $k_n = \pi(k_n)$ for all $k_n$. However, as $\pi$ is a bijection this is only possible if its domain and codomain are the same and $\pi$ is the identity. Thus, the inner product vanishes if $\{k_n\} \neq \{l_n\}$. We see that $\bar{V}_{n+1}$ conserves the inner product in this case.

Now let us assume that $\{k_n\} = \{l_n\}$. In that case, the inner product becomes

\begin{gather}
\begin{aligned}
    \prod_{k_n}& \braket{\psi^{k_n}|\phi^{k_n}} \delta_{\C{K}^{k_n}_{n-1}, \C{L}^{k_n}_{n-1}} \\
    &= \prod_{k_n} \braket{\psi^{k_n}|\phi^{k_n}} \braket{\C{K}^{k_n}_{n-1}, k_n| \C{L}^{k_n}_{n-1}, k_n} \\
    &= \bigotimes_{k_n} \bra{\psi^{k_n}}^{A^O_{k_n}} \bra{\C{K}^{k_n}_{n-1}, k_n} \bigotimes_{l_n} \ket{\phi^{k_n}}^{A^O_{l_n}} \ket{\C{L}^{l_n}_{n-1}, l_n}.
\end{aligned}
\end{gather}

The inner product is conserved and therefore $\bar{V}_{n+1}$ is an isometry.

\end{proof}

\krausiso*

\begin{proof}
The lemma is a direct consequence of the maps $V_{n+1}: \C{H}^{A^O \alpha_n C_{n}} \rightarrow \C{H}^{A^I \alpha_{n+1} C_{n+1}}$ in the QC-QC picture being isometries, 

\begin{equation}\label{eq:QCQCiso}
V_{n+1}^\dagger V_{n+1} = \mathbb{1}^{A^O \alpha_n C_{n}} = \sum_{\C{K}_{n-1}, k_n} \mathbb{1}^{A^O_{k_n} \alpha_n} \otimes \ket{\C{K}_{n-1}, k_n} \bra{\C{K}_{n-1}, k_n}.
\end{equation}

Using the definition of $V_{n+1}$ given in \cref{eq:defiso},

\begin{gather}
\begin{aligned}
    V_{n+1}^\dagger V_{n+1} &= \sum_{\substack{\C{K}_{n-1},\\ k_n, k_{n+1}}} \sum_{\substack{\C{L}_{n-1},\\ l_n, l_{n+1}}} V^{\rightarrow k_{n+1} \dagger}_{\C{K}_{n-1}, k_n} V^{\rightarrow l_{n+1}}_{\C{L}_{n-1}, l_n} \otimes \ket{\C{K}_{n-1}, k_n} \braket{\C{K}_{n-1} \cup k_n, k_{n+1}|\C{L}_{n-1} \cup l_n, l_{n+1}} \bra{\C{L}_{n-1}, l_n} \\
    &= \sum_{\C{K}_n} \sum_{k_n, l_n \in \C{K}_n} \sum_{k_{n+1}} V^{\rightarrow k_{n+1} \dagger}_{\C{K}_n \backslash k_n, k_n} V^{\rightarrow k_{n+1}}_{\C{K}_n \backslash l_n, l_n} \otimes \ket{\C{K}_n \backslash k_n, k_n} \bra{\C{K}_n \backslash l_n, l_n}.
\end{aligned}
\end{gather}

Comparing this to \cref{eq:QCQCiso}, we notice that terms with an off-diagonal projector over the control $\ket{\C{K}_n \backslash k_n, k_n} \bra{\C{K}_n \backslash l_n, l_n}$, $k_n \neq l_n$, must vanish. This is only possible if $\sum_{k_{n+1}} V^{\rightarrow k_{n+1} \dagger}_{\C{K}_n \backslash k_n, k_n} V^{\rightarrow k_{n+1}}_{\C{K}_n \backslash l_n, l_n} = 0$ for $k_n \neq l_n$. Meanwhile, terms where $k_n = l_n$ must be the identity which implies $\sum_{k_{n+1}} V^{\rightarrow k_{n+1} \dagger}_{\C{K}_n \backslash k_n, k_n} V^{\rightarrow k_{n+1}}_{\C{K}_n \backslash k_n, k_n} = \mathbb{1}^{A^O_{k_n} \alpha_n}$. Combining these, we obtain the statement of the lemma.

\end{proof}

\accept*

\begin{proof}
To prove the proposition, we can show that before each projective measurement the state is already in the respective subspace $\C{H}^n_{accept}$ by induction. The base case is trivial as the global past is the only agent that sends something to the process box during the first time step. It is either trivial or sends exactly one message due to FAA and WSR.

Assume now the projective measurements up to and just before the application of $V_{n+1}$ all yield $accept$. The state at this point is thus in $\C{H}^n_{accept}$ by assumption. If we then apply $V_{n+1}$, the resulting state will be in $\C{H}^{n+1}_{accepted}$.



Next, the local operations act on this state. Due to the LO assumption and the fact that $\C{H}^{n+1}_{accepted}$ only contains zero- and one-message states, at most one agent $k_{n+1}$ will send a non-vacuum state (more precisely, the state can be a superposition and in each term of the superposition only one agent sends a non-vacuum state). This non-vacuum state must be a single message because of the WSR assumption. Finally, the time stamp of all wires is $t=2(n+1)+1$ due to the OSR assumption. Thus, the resulting state is in $\C{H}^{n+1}_{accept}$ and the projective measurement yields $accept$.

\end{proof}

\abortequivalence*

\begin{proof}
Due to \cref{lemma:accept}, the effective Choi matrix is the Choi matrix restricted to $\bigotimes_n (\C{H}^{n}_{accept} \otimes \C{H}^{n}_{accepted})$. This also means that one does not have to consider the projective measurements to find the effective Choi representation as they act as identities. We calculate the Choi vector

\begin{gather}
\begin{aligned}\label{eq:choiabort}
    \dket{V_{N+1}... V_{n+1} ... V_1} = \dket{V_{N+1}} * &... * \dket{V_{n+1}} * ...\dket{V_1} \\
    = \sum_{(k_1,...,k_N)} ((\dket{V^{\rightarrow F}_{\{k_1,...,k_{N-1}\}, k_N}}& \otimes \ket{t=2N+1} \ket{t=2N+2}) \otimes \bigotimes_{i \in \C{N}\backslash k_N} \ket{\Omega, t=2N+1}^{\bar{A}^O_i}) * \\
    ... * ((\dket{V^{\rightarrow k_{n+1}}_{\{k_1,...,k_{n-1}\}, k_n}}& \otimes \ket{t=2n+1} \ket{t=2n+2}) \otimes \bigotimes_{\substack{i \in \C{N}\backslash k_n \\ j \in \C{N}\backslash k_{n+1}}} \ket{\Omega, t=2n+1}^{\bar{A}^O_i} \ket{\Omega, t=2n+2}^{A^I_j}) * \\
    ... * ((\dket{V^{\rightarrow k_1}_{\emptyset, \emptyset}} \otimes \ket{t=1}& \ket{t=2}) \otimes \bigotimes_{i \in \C{N}\backslash k_1} \ket{\Omega, t=2}^{\bar{A}^O_i}).
\end{aligned}
\end{gather}
where we used linearity of the Choi isomorphism to pull out the sums over $k_1,...,k_N$ which then turned into a single sum over the orders $(k_1,...,k_N)$ due to the link product contracting the control system. 

Next, consider a set of equivalent local operations as in \cref{def:local_equivalence}. Composing the process box local operations with the Choi vector \cref{eq:choiabort} yields

\begin{gather}
\begin{aligned}
    \bigotimes_{i=1}^N &\dket{\bar{A}_i} * \dket{V_{N+1}... V_{n+1} ... V_1} \\
    &= \bigotimes_{i=1}^N \bigotimes_{t \in \C{T}^I} (\dket{A_i} + \ket{\Omega}^{\bar{A}^I_i} \ket{\Omega}^{\bar{A}^O_i}) \otimes \ket{t+1} \ket{t}) * \\
    & \sum_{(k_1,...,k_N)} ((\dket{V^{\rightarrow F}_{\{k_1,...,k_{N-1}\}, k_N}} \otimes \ket{t=2N+1} \ket{t=2N+2}) \otimes \bigotimes_{i \in \C{N}\backslash k_N} \ket{\Omega, t=2N+1}^{\bar{A}^O_i}) * \\
    &... * ((\dket{V^{\rightarrow k_{n+1}}_{\{k_1,...,k_{n-1}\}, k_n}} \otimes \ket{t=2n+1} \ket{t=2n+2}) \otimes \bigotimes_{\substack{i \in \C{N}\backslash k_n \\ j \in \C{N}\backslash k_{n+1}}} \ket{\Omega, t=2n+1}^{\bar{A}^O_i} \ket{\Omega, t=2n+2}^{A^I_j}) * \\
    &... * ((\dket{V^{\rightarrow k_1}_{\emptyset, \emptyset}} \otimes \ket{t=1} \ket{t=2}) \otimes \bigotimes_{i \in \C{N}\backslash k_1} \ket{\Omega, t=2}^{\bar{A}^O_i}) \\
    &= \sum_{(k_1,...,k_N)} \bigotimes_{i=1}^N \dket{A_i} * (\dket{V^{\rightarrow F}_{\{k_1,...,k_{N-1}\}, k_N}} * ... * \dket{V^{\rightarrow k_{n+1}}_{\{k_1,...,k_{n-1}\}, k_n}} * ... * \dket{V^{\rightarrow k_1}_{\emptyset, \emptyset}}) \otimes \ket{t=2N+2} \\
    &= \bigotimes_{i=1}^N \dket{A_i} * \ket{w_{\C{N}, F}} \otimes \ket{t=2N+2} \\
    &\cong \bigotimes_{i=1}^N \dket{A_i} * \ket{w_{\C{N}, F}}.
\end{aligned}
\end{gather}

The process box and the QC-QC are thus equivalent.
\end{proof}

\subsection{Proofs for \texorpdfstring{\Cref{sec:characterizing}}{Section 6}}

\PBunitary*

\begin{proof}

We will drop the primes on the additional global past and future, as the global past and future of the initial process box are not relevant to the proof.

As a process box is a causal box, there exists a sequence representation $\bar{V}_1,...,\bar{V}_N$. We then have isometries $\bar{V}_n: \C{H}^{\alpha_{n-1}} \otimes \C{F}^{\C{T}_n}_X \rightarrow \C{F}^{\C{T}_{n-1}}_Y \otimes \C{H}^{\alpha_n}$ for all $n=1,...,N$. Note that $\C{T}_1 = \emptyset$ as $\bigcap_{n=1}^{N+1} \C{T}_n = \emptyset$. The wire $\alpha_N$ is an additional wire compared to the original causal box. We take this wire to be the input wire of one or more global future parties. As it is connected to the last isometry and since it has trivial output space, we can assume the set of input positions $\C{T}^I_F$ to be strictly greater than $\C{T}$. If there are multiple input positions, we can always consider the input position to be part of the message and thus identify the combined input space of multiple global future agents with multiple positions with a larger input space of a single party with a single input position $t_f$. Furthermore, because each map $\bar{V}_n$ is an isometry, it admits a unitary extension to some $\bar{U}_n: \C{H}^{\alpha_{n-1}} \otimes \C{F}^{\C{T}_n}_X \otimes \C{H}^{P_n} \rightarrow \C{F}^{\C{T}_{n-1}}_Y \otimes \C{H}^{\alpha_n}$. The wires $P_n$ are connected to $\bar{U}_n$, which means their output time stamps must be in $\C{T}_n$. They are therefore not a global past.

Our goal now is to change the unitaries $\bar{U}_n$ such that all $P_n$ are connected to $\bar{V}_1$ and their output stamps are therefore $t=t_i$. The proof idea is depicted in \cref{fig:globalpast}.

\begin{figure}
    \centering
    \includegraphics[width=\textwidth]{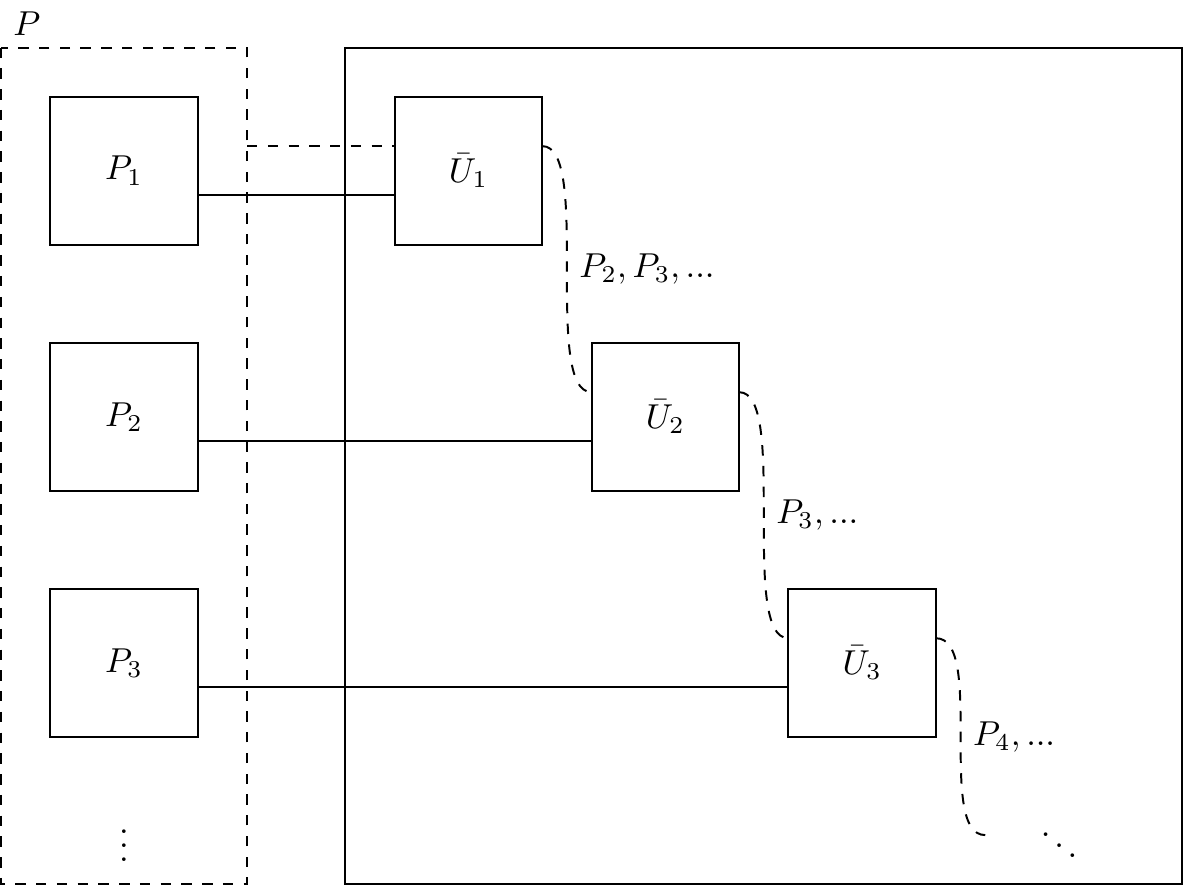}
    \caption{Instead of considering $N$ different $P_n$, each of which sends to $\bar{U}_n$ (solid lines), we can equivalently consolidate them into a single global past $P = \bigotimes_{i=1}^N P_n$ which is only connected to $U_1$ and replace the solid wires with the dashed wires in the figure. The isometry $U_1$ then acts on the part of the message that corresponds to $P_1$ and forwards everything else ($P_2, P_3,...$) along the dashed internal wire to $U_2$ which acts on $P_2$ and so on.}
    \label{fig:globalpast}
\end{figure}

Formally, we define

\begin{gather}
\begin{aligned}
    \bar{U}'_n&: \C{H}^{\alpha_{n-1}} \otimes \C{F}^{\C{T}_n}_X \otimes \C{H}^{P_n} \otimes \bigotimes_{m=i+1}^{N} \C{H}^{P_m} \rightarrow \C{F}^{\C{T}_{n+1}}_Y \otimes \C{H}^{\alpha_i} \otimes \bigotimes_{m=i+1}^{N} \C{H}^{P_m} \\
    \bar{U}'_n &:= \bar{U}_n \otimes \mathbb{1}^{P_{n+1}...P_N}.
\end{aligned}
\end{gather}

As a tensor product of unitaries $\bar{U}'_n$ is again a unitary. Furthermore, note that for the sequence representation defined by $\bar{U}'_1,...,\bar{U}'_N$, the output wires of $P_1,...,P_N$ are connected only to $\bar{U}'_1$ (while the other unitaries have input and output Hilbert spaces $\C{H}^{P_n}$, these correspond to internal wires. They are not the output wires of the agents $P_n$). Hence, we can combine these agents into a single one $\C{H}^{P} := \bigotimes_{n=1}^N \C{H}^{P_n}$ and assign the time stamp $t=t_i$ to them. 

Any sequence representation describes a causal box, but it is not clear if the sequence representation given by $\bar{U}'_1,...,\bar{U}'_N$ also describes a process box. We have to check WSR, LO, OSR and FAA for the global past and future (we have made no changes to the other agents and thus they still fulfill the requirements). 

The trivial input space of the global past and the trivial output space of the global future mean that LO is trivially fulfilled. The global past can only send during $t=1$ while the global future can only receive during the last time step. This means OSR is automatically fulfilled for any local operation. Additionally, WSR and FAA is fulfilled as long as the agents receive and send exactly one message during these time steps. This is the case by definition of the global past's output space and the global future's input space as one-message spaces.

Lastly, we need to show that we actually have a unitary extension of the initial process box. It suffices to consider a pure state $\rho^X = \ket{\psi}^X\bra{\psi}^X$ on the output wires of the agents excluding the global past. We then have

\begin{gather}
\begin{aligned}
    \dket{\bar{U}'} \dbra{\bar{U}'} * (\rho^X \otimes \ket{0}\bra{0}^P) &= \bar{U}' (\ket{\psi}^X\bra{\psi}^X \otimes \ket{0} \bra{0}^P) \bar{U}'^\dagger \\
    &= \bar{U}' (\ket{\psi}^X \otimes \ket{0}^P) (\bra{\psi}^X \otimes \ket{0}^P) \bar{U}'^\dagger \\
    &= \bar{U}_N(\bar{U}_{N-1}(...\bar{U}_1(\ket{\psi}^X \otimes \ket{0}^{P_1}) \otimes \ket{0}^{P_2})... \otimes \ket{0}^{P_{N-1}}) \otimes \ket{0}^{P_N}) (...)\bar{U}_N^\dagger \\
    &= \bar{V}_N(\bar{V}_{N-1}(...\bar{V}_1(\ket{\psi}^X))) (...)\bar{V}_N^\dagger \\
    &= \bar{V} \ket{\psi}^X \bra{\psi}^X \bar{V} \\
    &= \dket{\bar{V}} \dbra{\bar{V}} * \rho^X.
\end{aligned}
\end{gather}

In the first line we used that for any map $\C{M}$, $\rho*M = \C{M}(\rho)$. In the fourth line we used that $\bar{U}_n$ is a unitary extension of $V_i$, i.e. $\bar{U}_n(\ket{\psi} \ket{0}^{P_n}) = V_i \ket{\psi}$. As the above equation holds for any $\rho$, we have $\dket{\tilde{U}} \dbra{\tilde{U}} * \ket{0}\bra{0}^P = \dket{\bar{V}} \dbra{\bar{V}}$. Taking the link product with $\mathbb{1}^F$ is equivalent to tracing out the global future. This shows together with $V$ being a Stinespring representation that $\dket{\bar{U}} \dbra{\bar{U}}$ is in fact a unitary extension of the initial process box.

\end{proof}

\unitaryequivalence*

\begin{proof}

As $\ket{w_{\C{N}, F}}$ is pure, we can assume that $F'$ is trivial. The same is the case for $\dket{\bar{V}}$ and a purification $\dket{\bar{U}}$. Thus, the global future spaces of $\dket{U}$ and $\dket{\bar{U}}$ are isomorphic due to $\ket{w_{\C{N}, F}}$ and $\dket{\bar{V}}$ being equivalent. Assuming then w.l.o.g. that the input and output dimensions of the agents are equal, we then have that the dimension of the global past must be equal to the dimension of the global future as a unitary's domain and codomain have equal dimension. Therefore, the global past spaces used to unitarize $\ket{w_{\C{N}, F}}$ and $\dket{\bar{V}}$ must be isomorphic because the global future spaces are isomorphic.

That $\dket{\bar{U}}$ and $\dket{U}$ are equivalent, then follows from \cref{def:pb_purification} and \cref{def:pm_purification} and equivalence of $\ket{w_{\C{N}, F}}$ and $\dket{\bar{V}}$.

\end{proof}

\subsection{Proofs for \texorpdfstring{\Cref{sec:pbtoqcqc}}{Section 7}}

\relabeling*

\begin{proof}

The fact that $\C{R}$ respects the order relations of $\C{T}$ implies that $\hat{\Phi}'$ is still a valid process box that respects a causality function that is induced by the causality function of $\hat{\Phi}$ via $\C{R}$.


We then define the equivalent local operation to some $\C{M}_{\bar{A_i}}$ as $\C{M}'_{\bar{A_i}} = \C{R} \circ \C{M}_{\bar{A_i}} \circ \C{R}^{-1}$. 

The Choi matrices of the local operations of the agents and the process box are then simply

\begin{gather}
\begin{aligned}
    M'_{\bar{A_i}} = \C{R}(M_{\bar{A_i}}) \\
    \Phi' = \C{R}(\Phi).
\end{aligned}
\end{gather}

The map $\C{R}$ is essentially the identity between the Fock space with set $\C{T}$ and the Fock space with set $\C{T'}$. As can be seen from the definition of the link product \cref{def:linkmat}, applying such a map to both factors has no influence on their link product beyond having to apply the same map to the final result

\begin{gather}
\begin{aligned}
    \bigotimes_{i} M'_{\bar{A_i}} * \Phi' &= \C{R}(\bigotimes_i M_{\bar{A_i}}) * \C{R}(\Phi)\\ 
    &= \C{R}(\bigotimes_i M_{\bar{A_i}} * \Phi).
\end{aligned}
\end{gather}

The process boxes $\Phi$ and $\Phi'$ are thus operationally equivalent. Note, in particular, that if the global past and future are trivial, the link product yields a scalar in which case we have equality in the above equation even without $\C{R}$ in the last line.

\end{proof}

\stretching*

\begin{proof}

Consider an isometry $\bar{V}_n$ of the sequence representation that receives messages at $t=2n-1$ and produces outputs at $t=2n$. If the number of messages sent by $\bar{V}_n$ is at most one, regardless of the input to $\bar{V}_n$, we do nothing and pick another isometry. If there exist inputs for which the isometry sends more than one message, we proceed as follows. First, we relabel all time stamps with $t > 2n$ according to $t \mapsto t+2N-2$ where $N$ is as always the number of agents. Note that now $\C{O}_k(2n) = 2n+2N-1$ for all agents. The resulting process box is still operationally equivalent to the one we started with due to \cref{lemma:relabeling}. 

We then reintroduce the now missing input time stamps $2n+2, 2n+4,..., 2n+2N-2$ and define $\C{D}^I_k: \C{H}^{\bar{A}^I_k} \otimes \ket{t=2n} \rightarrow \C{H}^{\bar{A}^I_k} \otimes \ket{t=2n+2k-2}$ as the identity between these spaces, i.e. $\C{D}^I_k \ket{\psi, t=2n}^{\bar{A}^I_k} = \ket{\psi, t=2n+2k-2}^{\bar{A}^I_k}$. We then define $\bar{V}'_n = \bigotimes_{k=1}^N \C{D}^I_k \circ \bar{V}_n$ which is still an isometry that respects causality. Replacing $\bar{V}_n$ with $\bar{V}'_n$ in the sequence representation thus still yields a valid process box. Additionally, if we define the equivalent local operation of this new process box by applying $\C{D}^I_k$ to the original operation of $A_k$ (which simply replaces the input time $2n$ of $A_k$ with $2n+2k-2$), then the original and the new process box are operationally equivalent. This is because from a mathematical point of view we only relabeled all Hilbert spaces $\C{H}^{\bar{A}^I_k} \otimes \ket{t=2n}$ as $\C{H}^{\bar{A}^I_k} \otimes \ket{t=2n+2k-2}$ which does not change the link product (cf. the proof of \cref{lemma:relabeling}).

Next, we reintroduce the missing output time stamps $2n+1, 2n+3,...,2n+2N-1$ and redefine $\C{O}_k(2n+2k-2) = 2n+2k-1$. We replace $\bar{V}'_n$ with $\bar{V}''_n = \bar{V}'_n \otimes \mathbb{1}^{\bigotimes_{k=1}^{N-1} A^{O, t=2n+2k-1}_k \rightarrow \alpha}$ where $\mathbb{1}^{\bigotimes_{k=1}^{N-1} A^{O, t=2n+2k-1}_k \rightarrow \alpha}: \bigotimes_{k=1}^{N-1} \C{H}^{\bar{A}^O_k} \otimes \ket{t=2n+2k-1} \rightarrow \C{H}^{\alpha}$ is the identity between these spaces and $\alpha$ is the Hilbert space of a new internal wire from $\bar{V}'_n$ to $\bar{V}'_{n+1}$, where $\bar{V}'_{n+1}$ replaces $\bar{V}_{n+1}$. We define the action of $\bar{V}'_{n+1}$ to essentially be the action of $\bar{V}_{n+1}$, i.e. $\bar{V}'_{n+1} = \bar{V}_{n+1} \circ (\mathbb{1}^{A^O_N, t=2n+2N+1} \otimes \mathbb{1}^{\alpha \rightarrow \bigotimes_{k=1}^{N-1} A^{O, t=2n+2N-1}_k})$. Putting all this together means that replacing $\bar{V}'_n$ with $\bar{V}''_n$,  $\bar{V}_{n+1}$ with $\bar{V}'_{n+1}$ and $\C{O}_k(2n+2k-2) = 2n+2N$ with $\C{O}_k(2n+2k-2) = 2n+2k-1$ amounts to relabeling each $\C{H}^{\bar{A}^O_k} \otimes \ket{2n+2N-1}$ as $\C{H}^{\bar{A}^O_k} \otimes \ket{2n+2k-1}$, which does not change the link product. The process box is thus still operationally equivalent to the original.

Finally, we can add all input times $2n+2l-2$ to $\C{T}^I_k$ and all output times $2n+2l-1$ to $\C{T}^O_k$ with $l \neq k$ and define $\C{O}_k(2n+2l-2) = 2n+2l-1$. By construction, the agents can only receive the vacuum during these input times (and thus also only send the vacuum during these output times) which means that adding them does not change the link product.

The final result is that $\bar{V}''_{n}$ can only send a single message during each time step while the overall process box still has the form of \cref{conjecture:simplifying}. We can now use that $\bar{V}''_n$ itself defines a causal box and we can thus find a sequence representation of it. In particular, since $\chi(\C{T}^{\leq t})=\C{T}^{\leq t-1}$ is always a valid causality function for causal boxes on $\C{T} = \{1,...,M\}$, we can find a sequence representation such that each isometry receives messages at a single odd time and sends messages at the subsequent even time. Additionally, since $\bar{V}''_n$ is already pure, we do not need to add any new wires.

We then replace $\bar{V}''_n$ with its sequence representation. This decreases the number of isometries that can send multiple messages by one and as such if we repeat the above procedure for each $n$, we end up with a process box that sends at most a single message during each time step. Due to LO, this then also means that the process box receives only a single message during each time step.

\end{proof}

Before proving \cref{prop:pbtoqcqc}, we prove that we can add the control as discussed in \cref{sec:pbtoqcqc}. This is formalized in the following lemma.

\begin{restatable}[Adding the control]{lemma}{addcontrol}
\label{lemma:addcontrol}
Let $\Phi$ be a process box that is of the form of \cref{lemma:stretching}. Then there exists a sequence representation $\bar{V}_1,...,\bar{V}_M$ of $\Phi$ with the following properties:

\begin{enumerate}
    \item $\bar{V}_{n+1}$ has an incoming internal wire $\C{H}^{C_{\leq n}} = \bigoplus_{m=0}^n \C{H}^{C_m}$ for all $n>0$ and an outgoing internal wire $\C{H}^{C_{\leq {n+1}}} = \bigoplus_{m=0}^{n+1} \C{H}^{C_m}$ for all $n < M-1$ where $\C{H}^{C_n}$ is the space of control systems of length $n$ as defined for QC-QCs.
    \item Given a state $\ket{\psi, t=2n+1}^{\bar{A}^O_{k_m} \alpha_n} \ket{\C{K}_{m-1}, k_m}^{C_m}$, the action of $\bar{V_{n+1}}$ is
    \begin{equation}
        \bar{V}_{n+1} \ket{\psi, t=2n+1}^{\bar{A}^O_{k_m} \alpha_n} \ket{\C{K}_{m-1}, k_m}^{C_m} = \sum_{k_{m+1}} \ket{\phi^{k_{m+1}}, t=2n+2}^{A^O_{k_{m+1}} \alpha_{n+1}} \ket{\C{K}_{m-1} \cup k_m, k_{m+1}}^{C_{m+1}}
    \end{equation}
    for some bipartite $\ket{\phi^{k_{m+1}}}^{\bar{A}^O_{k_m} \alpha_n}$ or 
    \begin{equation}
    \bar{V}_{n+1} \ket{\psi, t=2n+1}^{\bar{A}^O_{k_m} \alpha_n} \ket{\C{K}_{m-1}, k_m}^{C_m} = \ket{\Omega, t=2n+2} \ket{\alpha}^{\alpha_{n+1}} \ket{\C{K}_{m-1}, k_m}^{C_m}
    \end{equation}
    for some $\ket{\alpha}^{\alpha_n} \in \C{H}^{\alpha_n}$ where $\alpha_n$ denotes some additional internal wire.
\end{enumerate}
\end{restatable}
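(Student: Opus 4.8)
The plan is to construct the augmented sequence representation inductively, starting from a sequence representation of $\Phi$ in which each internal isometry sends and receives at most one message per time step, which is exactly what \cref{lemma:stretching} supplies. After relabelling the positions as in \cref{conjecture:simplifying}, the $(n+1)$-th internal isometry $\bar{V}^{\circ}_{n+1}$ maps the agent outputs at $t=2n+1$ (together with the existing internal ancilla $\alpha_n$) to the agent inputs at $t=2n+2$ (together with $\alpha_{n+1}$). Since the control wire carries no physical message and is never routed to an agent, making it explicit is a purely internal modification: the target is, for each $n$, an augmented isometry $\bar{V}_{n+1}$ that acts as $\bar{V}^{\circ}_{n+1}$ on the message-and-ancilla part while, in parallel, maintaining a control register by the QC-QC update rule. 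I take the position, as argued in \cref{sec:role,sec:timestamps}, that this control is already implicitly present; the work is to realise it coherently as an internal wire.

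Concretely, I would decompose the codomain of $\bar{V}^{\circ}_{n+1}$ into the mutually orthogonal subspaces $\C{H}^{\Omega}$ (no agent receives a message) and $\C{H}^{(k)}$ (the unique message goes to agent $k$), with projectors $P_\Omega$ and $P_k$; this decomposition exists precisely because \cref{lemma:stretching} guarantees at most one outgoing message. On the block carrying an incoming control $\ket{\C{K}_{m-1},k_m}^{C_m}$, I define $\bar{V}_{n+1}$ to leave the control unchanged on the $P_\Omega$ branch and to update it to $\ket{\C{K}_{m-1}\cup k_m,k_{m+1}}^{C_{m+1}}$ on the $P_{k_{m+1}}$ branch, with the base case $m=0$ sending $\ket{\emptyset}^{C_0}$ to $\ket{\emptyset,k_1}^{C_1}$. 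Taking the incoming control wire to be $\bigoplus_{m=0}^{n}\C{H}^{C_m}$ and the outgoing one $\bigoplus_{m=0}^{n+1}\C{H}^{C_m}$ gives property~1, and reading off the action on a single-message input gives property~2; the resulting wire diagram is the one in \cref{fig:addcontrol}. Throughout I maintain the inductive invariant that a length-$m$ control $\ket{\C{K}_{m-1},k_m}$ is attached only to states whose set of already-active agents is $\C{K}_{m-1}\cup k_m$, with WSR forbidding repeats so that the control length equals the number of messages sent so far.

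Two properties then remain to be verified. First, that $\bar{V}_1,\dots,\bar{V}_M$ still represents $\Phi$: the control wire is internal, is initialised deterministically by $\bar{V}_1$, and is correlated with but never alters the message-and-ancilla evolution, so tracing it out returns the composition $\bar{V}^{\circ}_M\cdots\bar{V}^{\circ}_1$ and the represented causal box is unchanged; since any family of isometries of the required signature is the sequence representation of a causal box, it suffices to exhibit isometries. Second, that each $\bar{V}_{n+1}$ is an isometry, for which I would invoke condition~4 of \cref{lemma:isochar}, splitting the domain into blocks of fixed message number and then into the reachable control sectors, and using that the underlying $\bar{V}^{\circ}$, being part of a causal box, preserves orthogonality of intermediate states.

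The hard part will be exactly this isometry check, because the QC-QC control is order-forgetting: two reachable branches sharing the same active set $S=\C{K}_{m-1}\cup k_m$ but differing in their most-recent element carry orthogonal controls before the step, yet are merged to a common control $\ket{S,k_{m+1}}$ whenever both route their message to the same new agent $k_{m+1}$. A short computation reduces the preservation of the inner product of two such branches $\ket{\chi_A}\ket{c_A}$ and $\ket{\chi_B}\ket{c_B}$ to the vanishing of $\sum_k\langle P_k\bar{V}^{\circ}\chi_A|P_k\bar{V}^{\circ}\chi_B\rangle$, so the orthogonality of the inputs must be carried entirely by the message-and-ancilla factors once the controls coincide. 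This is precisely the kind of cross-term cancellation that, in the double-quantum-switch example, was enforced by the phase flip in $V^{\rightarrow 3}_{\{1\},2}$, which signals that the needed cancellations are not automatic but are forced by the global isometricity of the process box together with the invariant correlating the control with the physical branch. The plan is therefore to prove these cancellations on the reachable subspace by exploiting that every partial composition of the $\bar{V}^{\circ}$ preserves orthogonality, and only afterwards to extend each $\bar{V}_{n+1}$ to a genuine isometry on its full domain, which is always possible once the map is isometric on the reachable sector. Establishing this cancellation rigorously is the crux of the argument.
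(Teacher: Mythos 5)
Your construction is the same as the paper's: append the control as an internal wire initialised by $\bar{V}_1$, update it sector-by-sector according to the QC-QC rule, and maintain the invariant that a length-$m$ control records exactly the set of already-active agents (WSR forbidding repeats). You have also correctly located the crux: the augmented step merges previously orthogonal controls, so its isometry is not automatic. But the proposal stops precisely there — you reduce the problem to the vanishing of $\sum_k\langle P_k\bar{V}^{\circ}\chi_A|P_k\bar{V}^{\circ}\chi_B\rangle$ and then defer its proof to an unspecified appeal to ``global isometricity'' of partial compositions. That vanishing statement \emph{is} the content of the lemma, so as written the proposal has a genuine gap rather than a complete argument.

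The paper closes this gap by a different reduction. Instead of verifying the isometry of the augmented step directly, it shows that the control-assignment map $C$, which appends $\ket{\C{K}_{m-1},k_m}$ to each reachable branch, is itself an isometry — equivalently, that two reachable branches which would receive \emph{different} controls are already orthogonal in the un-augmented picture. The augmented isometries are then simply $C\circ\bar{V}^{\circ}_{n+1}\circ C^{-1}$ and no cancellation needs to be computed. The orthogonality claim is proved by looking forward in time: non-orthogonal branches cannot be perfectly distinguished, so some part of their futures must coincide; by WSR and FAA the future path visits exactly the not-yet-visited agents, forcing the visited sets to agree, and since states in different labs are orthogonal the current labs agree too, hence the controls agree. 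I would also flag that your diagnosis of where the difficulty sits is slightly misplaced: once input branches with different controls are known to be orthogonal, the cancellation over the message-emitting sectors follows mechanically from the single-step isometry of $\bar{V}^{\circ}_{n+1}$ plus the orthogonality of distinct output wires — the double-quantum-switch phase flip is what makes $\bar{V}^{\circ}$ an isometry in the first place, not an extra ingredient needed for adding the control. The term that genuinely requires the global, forward-looking argument is $\langle P_\Omega\bar{V}^{\circ}\chi_A|P_\Omega\bar{V}^{\circ}\chi_B\rangle$: branches whose message is held internally live on the same wire while carrying different controls, and their orthogonality cannot be read off from a single time step.
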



\begin{proof}
Let $\bar{V}_1,...,\bar{V}_M$ be an arbitrary sequence representation of the process box such that $\bar{V}_n$ takes inputs during $t=2n-1$ and produces outputs during $t=2n$. We wish to add the internal wire $\C{H}^{C\leq n}$ between $\bar{V}_n$ and $\bar{V}_{n+1}$. Doing so requires us to define the function $C$ that adds the control $\ket{\C{K}_{n-1}, k_n}$ to a state if that state was in those respective labs. Adding these internal wires must not change the fact that the maps $\bar{V}_n$ are isometries. A sufficient condition for this is that $C$ itself is an isometry. As the map is then invertible on its image, we can define the new isometries of the sequence representations to be $C \circ \bar{V}_n \circ C^{-1}$, which shows that the resulting process box is still the same. On the other hand, due to the form of $C$, it is enough to check that it does not turn non-orthogonal states into orthogonal states to ensure it is an isometry.

We use \cref{lemma:stretching} so that we can assume that during each time step the appropriate isometry $\bar{V}_n$ receives or sends either 0 or 1 messages. We prove by induction over the isometries: In the first time step we essentially just add the lab (or the vacuum if the first isometry does not send anything to the agents). States in different labs are orthogonal already (and so is the vacuum to one-message states) so this is fine. Assume now that we can define $C$ up to the outputs of $\bar{V}_n$. The possible inputs to $\bar{V}_{n+1}$ are then spanned by the vectors $\ket{\psi}^{\bar{A}^O_{k_m} \alpha_n} \ket{\C{K}_{m-1}, k_m}$ where $\alpha_n$ denotes the internal wires of the arbitrary sequence representation we started with and $m \leq n$ with equality if all of the previous isometries produced a non-vacuum output and we dropped the time stamp as it is known by knowing $n$. The isometry outputs either 0 or 1 messages and therefore the output is $\sum_{k_{m+1}} \ket{\phi^{k_{m+1}}}^{A^I_{k_{m+1}} \alpha_{n+1}} + \lambda \ket{\Omega} \ket{\alpha_{n+1}^\Omega}^{\alpha_{n+1}}$ for some $\ket{\phi^{k_{m+1}}}^{A^I_{k_{m+1}} \alpha_{n+1}} \in \C{H}^{A^I_{k_{m+1}} \alpha_{n+1}}, \lambda \in \mathbb{C}, \ket{\alpha_{n+1}^\Omega}^{\alpha_{n+1}} \in \C{H}^{\alpha_{n+1}}$ with $|\lambda|^2 + \sum_{k_{m+1}} \braket{\phi^{k_{m+1}}|\phi^{k_{m+1}}} = 1$ (i.e. the overall state is normalized). Note that $k_{m+1} \in \C{N} \backslash \C{K}_{m-1} \cup k_m$. Otherwise, the process box would violate WSR as our induction assumption implies that the agents given by $\C{K}_{m-1} \cup k_m$ already received a non-vacuum input. We now add the control (i.e. define $C$) 

\begin{gather}
\begin{aligned}
    C(\ket{\phi^{k_{m+1}}}^{A^I_{k_{m+1}} \alpha_{n+1}}) = \ket{\phi^{k_{m+1}}}^{A^I_{k_{m+1}} \alpha_{n+1}} \ket{\C{K}_{m}, k_{m+1}} \\
    C(\ket{\Omega} \ket{\alpha_{m+1}^\Omega}^{\alpha_{m+1}}) = \ket{\Omega} \ket{\alpha_{m+1}^\Omega}^{\alpha_{m+1}} \ket{\C{K}_{m-1}, k_m}.
\end{aligned}
\end{gather}

We now need to check that if two states are assigned different controls, they were already orthogonal. Assume two states were not orthogonal. Then they are not perfectly distinguishable. This means that in some branch of the superposition the following isometries will send some part of them along the same path.\footnote{Note that there can be no destructive interference because the internal operations are isometries. The first state is the sum of a multiple of the second state and something orthogonal to the second state. These two terms cannot cancel each other because the isometries keep them orthogonal} Due to WSR and FAA, the path must contain exactly the agents which have not acted on the systems yet. By induction assumption and how we defined $C$ for the current time step, these agents are exactly those which are not part of the control yet. Therefore, the controls of the two states must contain the same agents, $\C{K}_n \cup k_{n+1} = \C{L}_n \cup l_{n+1}$. As two states are orthogonal if they belong to different labs, we also have $k_{n+1} = l_{n+1}$ and therefore the controls are the same. 

\end{proof}

\pbtoqcqc*

\begin{proof}

We use \cref{lemma:addcontrol} so that we can assume a sequence representation with internal wires that correspond to the control. The isometries $\bar{V}_{n+1}$ are then defined on spaces as in \cref{eq:isomdirectsum}. Note that the $n+1$ subspaces that make up the direct sum that is the domain of $\bar{V}_{n+1}$ in \cref{eq:isomdirectsum} are orthogonal. This is because controls with different lengths are orthogonal.

We further divide each of these subspaces in two, namely the subspace that is mapped to 0-message states which we shall denote with $\C{H}^{m,0}_{t=2n+1} \subseteq \C{H}^{\bar{A}^O}_{t=2n+1} \otimes \C{H}^{\alpha_n} \otimes \C{H}^{C_m}$ and the subspace that is mapped to 1-message states, $\C{H}^{m,1}_{t=2n+1} \subseteq \C{H}^{\bar{A}^O}_{t=2n+1} \otimes \C{H}^{\alpha_n} \otimes \C{H}^{C_m}$. These subspaces are orthogonal because $\bar{V}_{n+1}$, an isometry, maps them, by definition, to orthogonal subspaces. Additionally, as $\bar{V}_{n+1}$ can only send 0- and 1-message states due to WSR and \cref{lemma:stretching}, we can assume $\C{H}^{m,0}_{t=2n+1}  \oplus \C{H}^{m,1}_{t=2n+1}  = \C{H}^{\bar{A}^O}_{t=2n+1} \otimes \C{H}^{\alpha_n} \otimes \C{H}^{C_m}$.

We then define the restriction of $\bar{V}_{n+1}$ to each of these subspaces

\begin{equation}
    \bar{V}^{m, 0/1}_{n+1} \coloneqq \bar{V}_{n+1}|_{\C{H}^{m,0/1}_{t=2n+1}}.
\end{equation}

Note that $\bar{V}^{m, 0/1}_{n+1}$, as a restriction of an isometry, is itself an isometry, on the corresponding restricted subspace. By definition, we then have

\begin{gather}
\begin{aligned}
    \bar{V}^{m, 0}_{n+1}: \C{H}^{m,0}_{t=2n+1} \rightarrow \ket{\Omega, t=2n+2}^I \otimes \C{H}^{\alpha_{n+1}} \otimes \C{H}^{C_{m}} \\
    \bar{V}^{m, 1}_{n+1}: \C{H}^{m,1}_{t=2n+1} \rightarrow \C{H}^{A^I}_{t=2n+2} \otimes \C{H}^{\alpha_{n+1}} \otimes \C{H}^{C_{m+1}}
\end{aligned}
\end{gather}

where $\C{H}^{A^I}_{t=2n+2} \subsetneq \C{H}^{\bar{A}^I}_{t=2n+2}$ is the space of 1-message states with time stamp $t=2n+2$.

Even if we consider subspaces corresponding to different $n$, we find that the spaces $\C{H}^{m,0/1}_{t=2n+1}$ are all orthogonal to each other. We already argued why the subspaces with different $m$ are orthogonal to each other and why the subspaces that are mapped to 0-message states are orthogonal to those mapped to 1-message states. Subspaces with different $n$ are orthogonal because the time stamps differ. 

Additionally, the images $\bar{V}^{m, 0/1}_{n+1}(\C{H}^{m,0/1}_{t=2n+1})$ are also orthogonal to each other. The image of $\C{H}^{m,0}_t$ is per definition orthogonal to the image of $\C{H}^{m', 1}_{t'}$ regardless of the time stamp and control length. If $n$ differs, the time stamps differ, and the images are also orthogonal. The control of the image will have either length $m$ if the space is mapped to the 0-state space or $m+1$ if it is mapped to the 1-state space. Thus if $m$ differs, the images are also orthogonal. 

This means that any direct sum of any $\bar{V}^{m, 0/1}_{n+1}$ is an isometry because it fulfills condition 4 of \cref{lemma:isochar}.

We now define $V^{m, 0/1}_{n+1}$ by simply dropping the time stamp. We thus have

\begin{equation}
    \bar{V}^{m, 0/1}_{n+1} = V^{m, 0/1}_{n+1} \otimes \ket{t=2n+2} \bra{t=2n+1}.
\end{equation}

The input and output spaces of $V^{m, 0/1}_{n+1}$ are then

\begin{gather}
\begin{aligned}
    V^{m, 0}_{n+1}: \C{H}^{m,0}_{n} \rightarrow \ket{\Omega}^I \otimes \C{H}^{\alpha_{n+1}} \otimes \C{H}^{C_{m}} \\
    V^{m, 1}_{n+1}: \C{H}^{m,1}_{n} \rightarrow \C{H}^{A^I} \otimes \C{H}^{\alpha_{n+1}} \otimes \C{H}^{C_{m+1}}
\end{aligned}
\end{gather}

where $\C{H}^{A^{I/O}}$ are the input/output spaces of the agents in the QC-QC picture and $\C{H}^{m,0/1}_{n} \subseteq (\ket{\Omega}^O \oplus \C{H}^{A^O}) \otimes \C{H}^{C_m}$ such that $\C{H}^{m,0/1}_{t=2n+1} = \C{H}^{m,0/1}_{n} \otimes \ket{t=2n+1}$.

Note that $V^{m, 0/1}_{n+1}$ is still an isometry. While we previously used the orthogonality of the time stamps to argue that $\bar{V}^{m, 0/1}_{n+1}$ and $\bar{V}^{m, 0/1}_{n'+1}$ for $n \neq n'$ are orthogonal, we can now use the internal wires $\alpha_n$. Different Hilbert spaces $\C{H}^{\alpha_n}$ are orthogonal in the same sense that the Hilbert spaces of different agents are orthogonal. It is in principle possible that two subsequent isometries are not internally connected i.e., there are no $\alpha_n$ wires. However, in this case we can add one-dimensional wires $\alpha_n$ and still obtain a sequence representation of the same process box. 

By the same argument as for $\bar{V}^{m, 0/1}_{n+1}$, any direct sum of isometries $V^{m, 0/1}_{n+1}$ is thus also an isometry.

Let us now define the isometries $V_{m+1}$ in the QC-QC picture. This isometry should take as inputs states with a control of length $m$ and should output states with control of length $m+1$. In particular, this means it should never output 0-message states. What $V_{m+1}$ should do then is applying all $V^{m, 1}_{n+1}$ for $n > m$. However, this leaves the action on states that are sent to 0-message states, which need not be 0-message states themselves, undefined. Let us consider a state in $\C{H}^{m, 0}_{n}$ and let us for simplicity assume that it is a separable state $\ket{\psi}^{A^O_{k_m}} \ket{\phi}^{\alpha_n} \ket{\C{K}_{m-1}, k_m}$.\footnote{It should be noted that the subspace $\C{H}^{m, 0}_{n}$ does not necessarily contain a separable state or is spanned by separable states. For full generality, one should thus consider linear combinations of the form $\sum_{\C{K}_{m-1}, k_m} \ket{\psi^{\C{K}_{m-1}, k_m}}^{A^O_{k_m} \alpha_n} \ket{\C{K}_{m-1}, k_m}$, where $\ket{\psi^{\C{K}_{m-1}, k_m}}^{A^O_{k_m} \alpha_n}$ is a bipartite state on the output and ancillary wire. However, the arguments are the same due to linearity.} The only isometry that is well-defined on this state is $V^{m, 0}_{n+1}$, but then the output is a 0-message state, $\ket{\Omega}^I \ket{\phi'}^{\alpha_{n+1}} \ket{\C{K}_{m-1}, k_m}$ for some $\ket{\phi'}^{\alpha_{n+1}} \in \C{H}^{\alpha_{n+1}}$. However, due to LO, we know that the state on the output wires in the next time step will always be $\ket{\Omega}^O \ket{\phi'}^{\alpha_{n+1}} \ket{\C{K}_{m-1}, k_m}$. We can thus just skip this time step and look at what the process box does in the next time step. The state $\ket{\Omega}^O \ket{\phi'}^{\alpha_{n+1}} \ket{\C{K}_{m-1}, k_m}$ is a linear combination of states in $\C{H}^{m, 0}_{n+1}$ and $\C{H}^{m, 1}_{n+1}$ (remember that the process box does not have to map 0-message states to 0-messages because only the local agents need to respect LO). Due to linearity, we can instead consider what should happen if the state is an element of one or the other. If it is an element of $\C{H}^{m, 1}_{n+1}$, we apply $V^{m, 1}_{n+2}$ to it and obtain a 1-message state in $\C{H}^{m+1, 1}_{n+2}$, i.e. the control has length $m+1$, which is precisely what we wanted. If it is an element of $\C{H}^{m, 0}_{n+1}$, we apply $V^{m, 0}_{n+2}$ obtain a 0-message state again. In this case, we can repeat this procedure. If $m<N$ we obtain a 1-message state at some point due to FAA and if $m=N$, then all agents acted already, and the procedure will stop when it reaches the last time step and outputs something to the global future (which we can take to be a 1-message state).

There is therefore some time step during which $V_{m+1}$ receives a one-message state from the agents and another time step, which may be the same or a later one, during which $V_{m+1}$ sends a one-message state to the agents. We can write this idea compactly as an equation

\begin{gather}\label{eq:isomcompact}
\begin{aligned}
    V_{m+1} =  \bigoplus_{n'_m \geq m} \bigoplus_{n_{m+1} \geq n'_m} V^{m, 1}_{n_{m+1}} \ket{\Omega}^O \bra{\Omega}^I V^{m, 0}_{n_{m+1}-1} \ket{\Omega}^O \bra{\Omega}^I ... \ket{\Omega}^O \bra{\Omega}^I V^{m, 0}_{n'_m+1})
\end{aligned}
\end{gather}


Note now that the map in \cref{eq:isomcompact} is still an isometry. Each term in the direct sum is an isometry because compositions of isometries are again isometries. Furthermore, two different terms take orthogonal subspaces as inputs and send them to orthogonal subspaces. Thus, condition 4 of \cref{lemma:isochar} applies.

The isometries $V_{m+1}$ then fulfill the conditions for being internal operations in the QC-QC framework. They take as inputs states $\ket{\psi}^{A^O_{k_m}} \ket{\tilde{\alpha}_m}^{\tilde{\alpha}_m} \ket{\C{K}_{m-1}, k_m}$ (and linear combinations of such states) and send them in an isometric fashion to states of the form $\sum_{k_{m+1}} \ket{\phi^{k_{m+1}}}^{A^I_{k_{m_1}} \tilde{\alpha}_{m+1}} \ket{\C{K}_{m-1} \cup k_m, k_{m+1}}$. Here $\tilde{\alpha}_m$ and $\tilde{\alpha}_{m+1}$ are new internal wires which are constructed from the internal wires $\alpha_n$ from the process box.

Let us now check that this construction is operationally equivalent to the original process box. Using that

\begin{gather}
\begin{aligned}
    \bar{V}_{n+1} &= \bigoplus_{m \leq n} (\bar{V}^{m, 0}_{n+1} \oplus \bar{V}^{m, 1}_{n+1}) \\
\end{aligned}
\end{gather}

by definition on the WSR restricted space, we find that the Choi vector is

\begin{gather}
\begin{aligned}
    \dket{\bar{V}} &= \dket{\bar{V}_1} * ... *\dket{\bar{V}_{|\C{T}^I|+1}} \\
    &= \sum_{(n_1,...,n_N)} \dket{V^{0, 0}_{1}} * ... * \dket{\bar{V}^{0, 1}_{n_1}} * \dket{\bar{V}^{1, 0}_{n_1+1}} *... * \dket{\bar{V}^{N-1, 1}_{n_N}} * \dket{\bar{V}^{N, 0}_{n_N+1}} *...* \dket{\bar{V}^{N, 1}_{|\C{T}^I|+1}}
\end{aligned}
\end{gather}
where $(n_1,...,n_N)$ is an ordered subset of $\{1,..., |\C{T}^I|\}$ such that $n_k < n_{k+1}$. In other words, $(n_1,...,n_N)$ signifies that $\bar{V}_{n_k}$ is the $k$-th isometry to send a one-message state, $k=1,...,N$. 

\begin{gather}
\begin{aligned}
    (\dket{\bar{A}_1} \otimes ... \otimes \dket{\bar{A}_N}) * \dket{\bar{V}} =& \bigotimes_{k=1}^N \bigotimes_{n=1}^{|\C{T}^I|} (\dket{A_k} \otimes \ket{t=2n} \ket{t=2n+1} + \ket{\Omega, t=2n} \ket{\Omega, t=2n+1}) * \\
    & \sum_{(n_1,...,n_N)} \dket{\bar{V}^{0, 0}_{1}} * ... * \dket{\bar{V}^{0, 1}_{n_1}} * \dket{\bar{V}^{1, 0}_{n_1+1}} *... * \dket{\bar{V}^{N-1, 1}_{n_N}} * \dket{\bar{V}^{N, 0}_{n_N+1}} *...* \dket{\bar{V}^{N, 0}_{|\C{T}^I|+1}} \\
    =& \sum_{(n_1,...,n_N)} \bigotimes_{k=1}^N \dket{A_k} 
    * \dket{V^{0, 0}_{1}} * \ket{\Omega}^I \ket{\Omega}^O * ... * \dket{V^{0, 1}_{n_1}} * \dket{V^{1, 0}_{n_1+1}} *\\&... * \dket{V^{N-1, 1}_{n_N}} * \dket{V^{N, 0}_{n_N+1}} * \ket{\Omega}^I \ket{\Omega}^O * ...* \ket{\Omega}^I \ket{\Omega}^O * \dket{V^{N, 0}_{|\C{T}^I|+1}}.
\end{aligned}
\end{gather}

On the other hand, for the process vector of the QC-QC we find

\begin{gather}
\begin{aligned}
    \ket{w} &= \dket{V_{N+1}} * ... * \dket{V_1} \\
    &= \sum_{(n_1, n'_1, ..., n_N, n_N')} \dket{V^{0, 0}_{1}} * \ket{\Omega}^I \ket{\Omega}^O * \\ &... * \dket{V^{0, 1}_{n_1}} * \dket{V^{1, 0}_{n_1'+1}} *... * \dket{V^{N, 1}_{n_N}} * \dket{V^{N+1, 0}_{n_N'+1}} * \ket{\Omega}^I \ket{\Omega}^O * ...* \ket{\Omega}^I \ket{\Omega}^O * \dket{V^{N+1, 0}_{|\C{T}^I|+1}} \\
    &= \sum_{(n_1, ..., n_N)} \dket{V^{0, 0}_{1}} * \ket{\Omega}^I \ket{\Omega}^O * \\ &... * \dket{V^{0, 1}_{n_1}} * \dket{V^{1, 0}_{n_1+1}} *... * \dket{V^{N-1, 1}_{n_N}} * \dket{V^{N, 0}_{n_N+1}} * \ket{\Omega}^I \ket{\Omega}^O * ...* \ket{\Omega}^I \ket{\Omega}^O * \dket{V^{N, 0}_{|\C{T}^I|+1}}
\end{aligned}
\end{gather}
where we used that $\dket{V^{m, 0/1}_{n}}*\dket{V^{m+1, 0/1}_{n'+1}} = \delta_{n,n'} \dket{V^{m, 0/1}_{n}}*\dket{V^{m+1, 0/1}_{n+1}}$.

This shows that the process box and the QC-QC are operationally equivalent.
\end{proof}

\section{Justification of \texorpdfstring{\Cref{conjecture:simplifying}}{Conjecture 1}}\label{sec:justification}

\simplifying*

We give here some more technical justification in addition to what we already stated in \cref{sec:simplifying} before making the conjecture. The arguments here should capture most of the ideas necessary for a complete proof, except for explicitly defining the necessary modifications to the internal operations of the process box and finally showing that the final result is still a process box. 

As we already mentioned, both the set $\C{T}$ and the internal operations of the process box influence which inputs to the process box can influence which outputs of the process box. We can thus introduce new order relationships $t_1 < t_2$ between previously unrelated elements $t_1, t_2 \in \C{T}$ without changing anything. If we change nothing else about the process box, then these new relationships are irrelevant as no message with time stamp $t_1$ can influence a message with time stamp $t_2$. 

It is then always possible to convert a partially ordered set into a totally ordered set by introducing suitable new order relationships between all previously unrelated elements \cite{szpilrajn1930extension}. We then obtain a finite, discrete and totally ordered set which can thus be taken to have the form as in property 1 of \cref{conjecture:simplifying} as relabeling the elements can always be done.

For property 2, we first consider agents with trivial inputs. The idea is similar to what we did in the proof of \cref{prop:PBunitary} when arguing that the agents $P_n$ can be assumed to output before all other agents. Here we have the added complication that the agent potentially has multiple valid output times. Let us assume that the agent can send during $m$ time steps in $\C{T}'$. As their output cannot be influenced by any input from the process box, we can assume that the agent outputs at $t=1$ and outputs an additional $m$-dimensional system that tells the process box when to act on the message. We then modify the isometries $V_{n+1}$ by adding additional internal wires to forward the message to the right isometry. 

This then also allows us to consolidate all of these agents into the global past.

Next, we consider agents with non-trivial inputs and as a first step, we argue that $\C{T}'^I \cap \C{T}'^O = \emptyset$. Assume there was $t \in \C{T}'^I \cap \C{T}'^O$. An agent receiving a message at time $t$ cannot send anything until after $t$ as $\C{O}_i(t) > t$. The first time the agent could send something is $t+1$ (assuming $t+1 \in \C{T}^O$). Therefore, we can relabel $t$ in $\C{T}^I$ with $t+1/2$. We then relabel $t+1/2 \rightarrow t+1$ and also $t' \rightarrow t' + 1$ for all $t' > t+1/2$. 

At this point, $\C{T}'$ consists of sequences of elements in $\C{T}'^I$ alternating with sequences of elements in $\C{T}'^O$. The idea is now to consider each of these sequences as a single time step which gives us property 2. 

However, consolidating multiple time steps into one is potentially problematic due to $\C{O}_i$ being a bijection while also having to fulfill $\C{O}_i(t) > t$. We thus need to redefine it which gives us both property 3 and the part of property 2 that states $\C{T}'^{I/O}_i = \C{T}'^{I/O}$ and $\C{O}_i(t) = \C{O}(t) = t+1$ for all agents. We can use a similar argument as we did when justifying that the local operations of the agents can be assumed to be time-independent. The original sending time becomes a part of the message. We replace input states $\ket{\psi} \ket{t}$ with $\ket{\psi} \ket{t} \ket{t'}$ where the first two states correspond to the message while $t'$ is the time stamp that $t$ is consolidated into. We proceed similarly for the output, replacing an output $\ket{\phi} \ket{\C{O}_i(t)}$ with $\ket{\phi} \ket{\C{O}_i(t)} \ket{t'+1}$. The process box then acts the same as it did before (this can be achieved by using $\ket{\C{O}_i(t)}$ to decide whether a given isometry should act on the message or if it should forward it to the next isometry via an internal wire). 

This then also allows us to assume that $\C{T}'^{I/O}_i = \C{T}'^{I/O}$. We have that $t \not \in \C{T}'^I$ if and only if $t+1 \not \in \C{T}'^O_i$ due to how we defined $\C{O}_i$. But then adding $t$ to the domain of $\C{O}_i$ with $\C{O}_i(t) = t+1$ maintains the properties of $\C{O}_i$. \footnote{Compare this to the following example: $\C{T}^I_i = \{2\}$ and $\C{T}^O_i = \{5\}$ with $\C{O}_i(2) = 5$. Let us say, $\C{T}^I = \{2, 4, 6\}$ while $\C{T}^O = \{3, 5, 7\}$. Then there is no way to extend $\C{T}^I_i$ to all possible input times and $\C{T}^O_i$ to all possible output times without violating $\C{O}_i(t) > t$. Clearly, we have to define $\C{O}_i(6) = 7$, but then the only possible choice for when the agent outputs for a message received at $t=4$ would be $\C{O}_i(4) = 3 < 4$.} Note that during $t$, the agent only ever receives the vacuum, which means they only ever send the vacuum during $t+1$. But there is no issue with considering them to be input/output times for the agent.

\section{License information}

\Cref{fig:QCFO,fig:QCCC,fig:QCQC,fig:new_QCQC} were taken from \cite{Wechs_2021}, which is covered by a Creative Commons Attribution 4.0 International license. The license can be found \href{https://creativecommons.org/licenses/by/4.0/legalcode}{here}. A human-readable version can be found \href{https://creativecommons.org/licenses/by/4.0/}{here}.

\Cref{fig:sequencerep} was taken from \cite{Portmann_2017}. The relevant license was obtained from \href{https://ieeexplore.ieee.org/document/7867830}{here} and is reproduced below.

\begin{figure}[h]
    \centering
    \includegraphics[width=\textwidth]{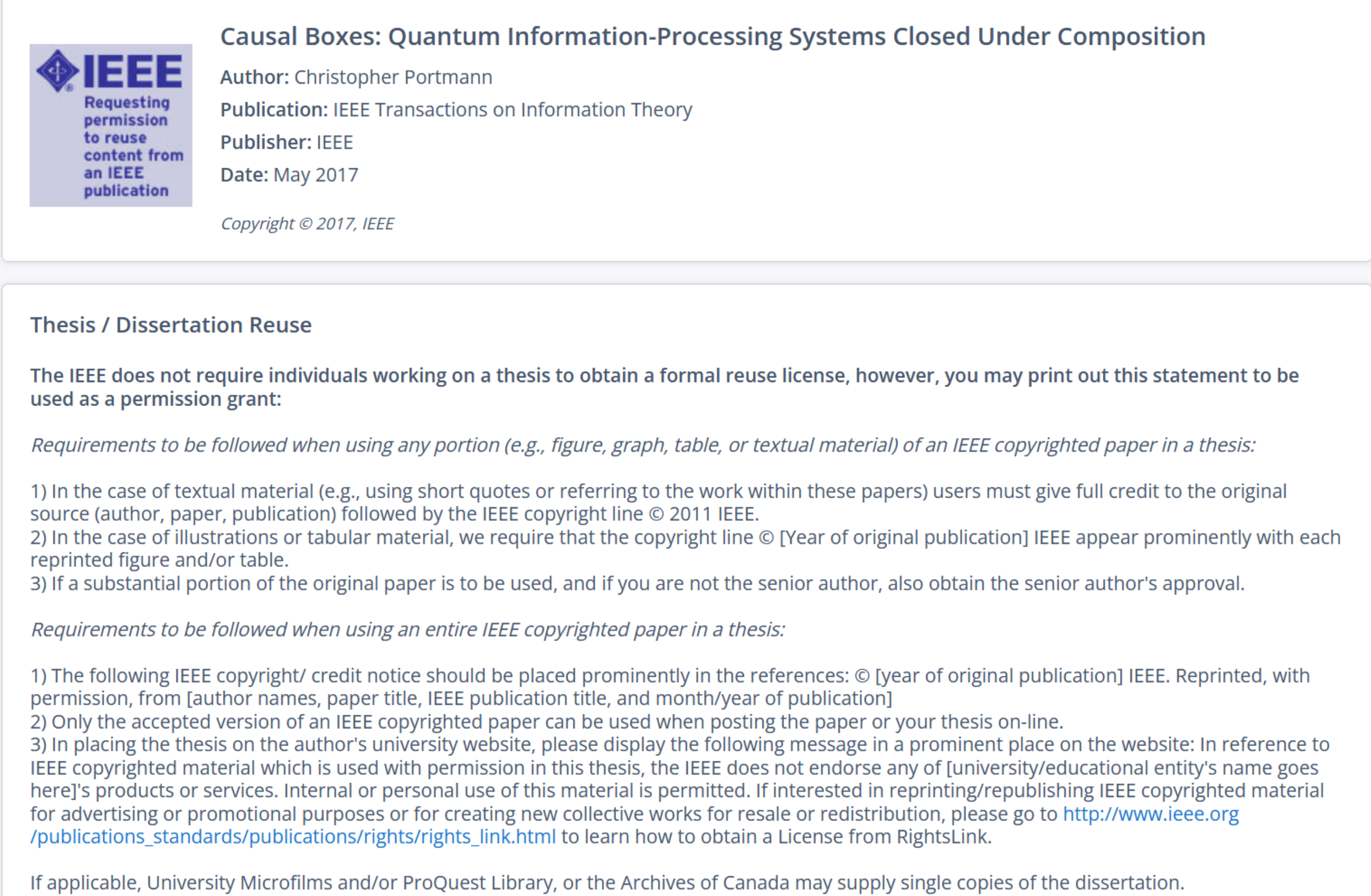}
\end{figure}

\end{document}